\newcommand{\be}{\begin{equation}}
\newcommand{\ee}{\end{equation}}
\newcommand{\bea}{\begin{eqnarray}}
\newcommand{\eea}{\end{eqnarray}}
\newcommand{\ba}{\begin{aligned}}
\newcommand{\ea}{\end{aligned}}
\numberwithin{equation}{section}
\newcounter{thmcounter}
\numberwithin{thmcounter}{section}
\theoremstyle{definition}
\newtheorem{definition}[thmcounter]{Definition}
\newtheorem{remark}[thmcounter]{Remark}
\theoremstyle{plain}
\newtheorem{corollary}[thmcounter]{Corollary}
\newtheorem{lemma}[thmcounter]{Lemma}
\newtheorem{proposition}[thmcounter]{Proposition}
\newtheorem{theorem}[thmcounter]{Theorem}
\def\1{{\boldsymbol 1}}                     %
\def\cH{{\mathcal H}}                       %
\def\cP{{\mathcal P}}                       %
\def\tr{\mathrm{tr}}                        %
\def\diag{\mathrm{diag}}                    %
\def\ri{{\rm i}}                            %
\def\C{\mathbb{C}}                          %
\def\N{\mathbb{N}}                          %
\def\R{\mathbb{R}}                          %
\def\T{\mathbb{T}}                          %
\def\U{{\rm U}}                            %
\def\GL{{\rm GL}(n,\C)}                           %
\def\fH{\mathfrak{H}}                       %
\def\cF{{\mathcal F}}                       %
\def\fP{{\mathfrak{P}}}                     %
\def\reg{\mathrm{reg}}                      %
\def\red{\mathrm{red}}                      %
\def\cS{{\mathcal S}}                       %
\def\Ad{{\mathrm{Ad}}}                      %
\def\id{{\mathrm{id}}}                      %
\def\cA{{\mathcal A}}                       %
\def\dt {\left.\frac{d}{dt}\right|_{t=0}}   %
\def\u{\mathfrak{u}}                        %
\def\b{\mathfrak{b}}                        %
\def\B{\mathrm{B}}                          %
\def\gl{\mathfrak{gl}(n,\C)}                %
\def\cN{{\mathcal N}}                       %
\def\cU{{\mathcal U}}                       %
\def\cM{{\mathcal M}}                       %
\def\cL{{\mathcal L}}                       %
\def\sgn{\mathrm{sgn}}
\def\bb{{\mathbf{b}}}
\def\cG{{\mathcal G}}                       %
\def\fC{{\mathfrak{C}}}                   %
\def\cW{{\mathcal W}}                       %
\def\Dress{\mathrm{Dress}}
\def\dress{\mathrm{dress}}
\def\cV{{\mathcal V}}                       %
\def\cZ{{\mathcal Z}}                       %
\def\cB{{\mathcal B}}                       %
\def\bV{{\mathbf V}}
\def\cE{{\mathcal E}}
\def\cI{{\mathcal I}}
\newcommand\br[1]{\{ #1 \}}
\newcommand\brM[1]{\{ #1 \}_\cM}
\def\ic{{\rm i}}  
\def\ccM{\check{\mathcal M}}
\newcommand{\BE}{\ensuremath{\mathbf{e}}}
\newcommand\wt[1]{\ensuremath{\mathrm{wt}[#1]}}
\def\Z{\mathbb{Z}}
\newcommand{\Mat}{\operatorname{Mat}}
\newcommand{\Ical}{\ensuremath{\mathcal{I}_L}}
\def\dag{{\dagger}}
\def\half{{\textstyle{\frac12}}}
\def\quarter{\textstyle{\frac14}}
\def\k{\mathfrak{u}}
\def\mk{e_k}  \def\mr{e_r}  \def\mn{e_n}
\begin{document}

\begin{center}
 {\Large\bf
Trigonometric real form of the spin RS model of Krichever and Zabrodin
 }
\end{center}

\medskip
\begin{center}
M.~Fairon${}^{a}$, L.~Feh\'er${}^{b,c}$ and I.~Marshall${}^{d}$
\\

\bigskip
${}^a$School of Mathematics and Statistics, University of Glasgow\\
University Place, G12 8QQ Glasgow, United-Kingdom\\
e-mail: maxime.fairon@glasgow.ac.uk

\medskip
${}^b$Department of Theoretical Physics, University of Szeged\\
Tisza Lajos krt 84-86, H-6720 Szeged, Hungary\\
e-mail: lfeher@physx.u-szeged.hu

\medskip
${}^c$Institute for Particle and Nuclear Physics\\
Wigner Research Centre for Physics\\
 H-1525 Budapest, P.O.B.~49, Hungary

\medskip
${}^d$Faculty of Mathematics, Higher School of Economics\\
 National Research University\\
 Usacheva 6, Moscow, Russia\\
 e-mail: imarshall@hse.ru

\end{center}

 \setcounter{tocdepth}{2}

\medskip
\begin{abstract}
We investigate the trigonometric real form of the spin Ruijsenaars--Schneider system introduced,
at the level of equations of motion,
 by Krichever and Zabrodin in 1995.
This pioneering work and all earlier studies of the Hamiltonian interpretation of the system
were performed in complex holomorphic settings; understanding the real forms is a non-trivial problem.
We explain that the trigonometric real form emerges from Hamiltonian reduction of an obviously integrable
`free' system carried by a spin extension of the Heisenberg double of the $\U(n)$ Poisson--Lie group.
The  Poisson structure on the unreduced real phase space $\GL \times \C^{nd}$  is the direct product
of that of the Heisenberg double and $d\geq 2$ copies of a $\U(n)$ covariant  Poisson structure
on $\C^n \simeq \R^{2n}$ found by Zakrzewski, also in 1995.
 We reduce by fixing a group valued moment map to a
 multiple of the identity, and analyze the resulting reduced system in detail.
 In particular, we derive on the reduced phase space the Hamiltonian
 structure of the trigonometric spin Ruijsenaars--Schneider
 system and we prove its degenerate integrability.
\end{abstract}

\newpage

{\linespread{0.7}\tableofcontents}

\newpage
\section{Introduction}
\label{sec:I}

The unbroken interest in integrable many-body systems of Calogero--Moser--Sutherland
\cite{Cal,M,S}
and Ruijsenaars--Schneider (abbreviated RS) \cite{RS} types is due to their ubiquity in physical applications
and rich web of connections to important areas of mathematics \cite{A,E,N,RBanff,vDV}.
The same can be said about spin extensions of these models, which
currently attract attention
\cite{AO,AR,CF2,Fai,F1,F3,KLOZ,Li,P,Res2,Res3,ReSt,SS,Zo}.

Two kinds of spin
many-body models are studied in the literature. Those that feature only `collective spin variables'
belonging to some group theoretic phase space such as a coadjoint orbit, and
those that have `spin-vectors' embodying internal degrees of freedom
of the interacting particles. The former type of models arise rather
naturally in harmonic analysis and its classical mechanical
counterpart  \cite{E,EFK,FP2,FP3,LX,ReSt}.
The latter type of models, built on `individual spins', were
introduced at the non-relativistic level by  Gibbons and Hermsen \cite{GH},
 and their `relativistic' generalization was later put forward by  Krichever and Zabrodin \cite{KZ}.

In fact,
in 1995 Krichever and Zabrodin introduced a family of spin RS models
at the level of equations of motion and posed the question of their
Hamiltonian structure and integrability. These models have rational, trigonometric/hyperbolic
and elliptic versions and are usually studied in the holomorphic category.
The elliptic model encodes the dynamics of the poles of elliptic solutions of the
2D non-Abelian Toda lattice \cite{KZ}, and  a special hyperbolic degeneration is related
to affine Toda solitons \cite{BH}.
The existence of a Hamiltonian structure was established by Krichever \cite{Kri} in the general case
based on a universal construction that is hard to make explicit (see also \cite{Sol}).
The rational case
was treated via Hamiltonian reduction by Arutyunov and Frolov \cite{AF} in 1997, utilizing a `spin extension'
of the holomorphic cotangent bundle of $\GL$.
More than twenty years later, there appeared two different treatments of the holomorphic
trigonometric/hyperbolic  models: by Chalykh and Fairon \cite{CF2} based on double brackets and
quasi-Hamiltonian structures,
and by Arutyunov and Olivucci \cite{AO} based on Hamiltonian reduction of a spin extension
of the  Heisenberg double \cite{STS} of the standard factorizable Poisson--Lie group structure
on $\GL$. In the present paper,  we shall deal with the trigonometric real form
of the models of \cite{KZ}
utilizing the Heisenberg double of the Poisson--Lie group $\U(n)$, which is a natural
generalization of $T^* \U(n)$. The spin extension of the Heisenberg double that we consider is
based on a $\U(n)$ covariant Poisson structure on $\C^n$ introduced by Zakrzewski \cite{Z}.

Although the holomorphic systems are of great interest from several viewpoints, it is not easy
to extract from them the features of the dynamics of the real forms,
which should also be investigated.
For motivation, it perhaps suffices to recall that all pioneering papers of the subject \cite{Cal,M,RS,S}
are devoted to point particles moving along the \emph{real} line or circle.

The $\C$-valued dynamical variables of the Krichever--Zabrodin model are
`particle positions' $x_i$ ($i=1,\dots, n)$ together with $d$-component row vectors $c_i$ and
column vectors
$a_i$.
The composite spin variables $F_{ij}$ are built from these individual spins according to the
rule
\be
F_{ij}:= c_i \cdot a_j := \sum_{\alpha=1}^d c_i^\alpha a_j^\alpha,
\label{I1}\ee
and the equations of motion can be written in first order form as follows:
\be
\dot{x}_i = F_{ii},
\quad
\dot{a}_i^\alpha= \lambda_i a_i^\alpha + \sum_{k\neq i}  V(x_{ik}) a_k^\alpha F_{ki} ,
\quad
\dot{c}_j^\alpha= - \lambda_j c_j^\alpha - \sum_{k\neq j}  V(x_{kj})  c_k^\alpha F_{jk},
\label{I2}\ee
where $x_{ik}:= x_i - x_k$. In the elliptic case the `potential' is given by
$V(x)= \zeta(x) - \zeta(x+ \gamma)$ with the Weierstrass zeta-function and an arbitrary complex
`coupling constant' $\gamma\neq 0$. The model admits hyperbolic/trigonometric degenerations
for which one has $V^{\mathrm{hyp}}(x) = \coth(x) - \coth(x+\gamma)$ and
$V^{\mathrm{rat}}(x) = x^{-1} - (x+\gamma)^{-1}$.
The parameters $\lambda_i$ in \eqref{I2} are arbitrary.
This is a hallmark of  gauge invariance, and thus
it is natural to declare that the `physical observables'  are invariant with respect to arbitrary rescalings
\be
a_i \mapsto \Lambda_i^{-1} a_i, \quad c_i \mapsto \Lambda_i c_i,
\label{I3}\ee
where the $\Lambda_i$ may depend on the dynamical variables as well.
One way to deal with this ambiguity is to impose a gauge fixing condition.
Note also the interesting feature of the model that the spins $a_i, c_i$ are
not purely internal degrees of freedom, since they directly encode the velocities
through the equations of motion $\dot{x}_i = F_{ii}$.

In the trigonometric real form of our interest we put $x_j:= \frac{1}{2}q_j$, where
the $q_j$ are real and are regarded as angles.
In other words, we deal with particles located on the unit circle at the points
$Q_j:= \exp(\ri q_j)$. The spins $c_i$ and $a_i$ are complex conjugates of each other, and
we parametrize them as
\be
c_i^\alpha = v(\alpha)_i, \quad
a_i^\alpha =\overline{ v}(\alpha)_i,
\label{I4}\ee
where $v(-)_i$ is regarded as a $d$-component row vector.
For each $\alpha$, $v(\alpha)$ is also viewed as an $n$-component column vector, and thus
$F = \sum_\alpha v(\alpha) v(\alpha)^\dagger$ is an $n$ by $n$ Hermitian matrix.
The potential $V$ is now chosen to be
\be
V(x) := \cot(x) - \cot(x - \ri \gamma)
\label{I5}\ee
with a real, positive coupling constant $\gamma$. The gauge transformations are given by arbitrary
$\Lambda_i \in \U(1)$ and accordingly we have $\lambda_i \in \ri \R$.
It can be checked that these reality constraints are consistent with the
equations of motion \eqref{I2}.
We remark in passing that they imply the second order equation
\be
\frac{1}{2}\ddot{q}_i = \sum_{j\neq i} F_{ij}F_{ji}
\left[ V\left(\frac{q_{ij}}{2}\right) - V\left(\frac{q_{ji}}{2}\right)
\right]
= \sum_{j\neq i} \vert F_{ij} \vert^2 \frac{2\cot(\frac{q_{ij}}{2})}{1+\sinh^{-2}(\gamma)\,\sin^2(\frac{q_{ij}}{2})}.
\label{I6}\ee

The equations of motion as given above are local in the sense that one does not know on what
phase space their flow is complete, which is required for an integrable system.
Neglecting this issue, let us assume that $\sum_\alpha v(\alpha)_i \neq 0$ for all $i$,
which permits us to impose the conditions
\be
 \cU_i:= \sum_\alpha v(\alpha)_i >0.
 \label{I7}\ee
Note that \eqref{I7} is a gauge fixing in disguise, since it amounts to setting the phase of $\cU_i$ to $0$.
Then, consistency with the requirement $\Im(\cU_i) =0$ can
be used to uniquely determine the $\lambda_i$, and one finds the gauge fixed equations
of motion
\be
\frac{1}{2}\dot{q}_i =  F_{ii},
\qquad
\dot{v}(\alpha)_j = \ri \eta_j v(\alpha)_j
 - \sum_{\ell\neq j} F_{jl} v(\alpha)_l V\left(\frac{q_{lj}}{2}\right),
\label{I8}\ee
with
\be
\ri \eta_j =
\frac{1}{2}\sum_{\ell \neq j} \frac{\cU_l}{\cU_j} \left[ F_{jl} V\left(\frac{q_{lj}}{2}\right)
  + F_{lj} V\left(\frac{q_{jl}}{2}\right)\right].
  \label{I9}\ee

In this paper, we develop a Hamiltonian reduction approach to the real, trigonometric
spin RS model specified above. In particular, this yields a phase space on which
all flows of interest are complete. An open dense subset of the phase space will be associated
with the gauge fixing condition \eqref{I7}, and on this submanifold we shall determine
the explicit form of the Poisson brackets that generate the equations of motion \eqref{I8}
by means of the Hamiltonian
\be
\cH=  \sum_{i} F_{ii}.
\label{I10}\ee
We shall also prove the degenerate integrability of the model by displaying
 $(2nd -n)$ independent, real-analytic integrals of motion that form a polynomial Poisson algebra
  whose $n$-dimensional Poisson center contains $\cH$.
 These results will be derived by using the $q_i$ and gauge fixed versions of the
 `dressed spins' $v(\alpha)$ as coordinates on the reduced phase space obtained from
 Hamiltonian reduction.
 However, we will put forward another remarkable set of variables as well, which
consists of canonical pairs $q_i, p_i$ and `reduced primary spins' $w^\alpha$ that
decouple from $q$ and $p$ under the reduced Poisson bracket.
On the overlap of their dense domains,
the relation between the two sets of variables
can be given explicitly, but the formula is very involved.
The drawback of the variables $q, p, w^\alpha$ is that in terms of them  $\cH$ and
 the equations of motion become
complicated.

Notice that the Newton equations \eqref{I6} imply the conservation of the sum of
the velocities $\dot{q}_i$, which gives the Hamiltonian via \eqref{I8} and \eqref{I10}, and $\dot{q}_i$
is non-negative by \eqref{I8}.
The same features appear in the spinless chiral RS model \cite{RS} defined by the Hamiltonian
\be \label{I:Hplus}
 \cH_{\mathrm{RS}}^+=
\sum_i e^{2\theta_i} \prod_{j\neq i} \left[1+ \frac{\sinh^2\gamma}
{1 + \sin^2\frac{q_i- q_j}{2}}\right]^{\frac{1}{2}},
\ee
with Darboux coordinates\footnote{This form of the chiral RS Hamiltonian is the one found in \cite{RS}.
Different Darboux variables, which avoid the appearance of square roots in the
Hamiltonian, are also often used in the literature.}
$q_i, \theta_j$.
The second order equations of motion for $q_i$ generated by this Hamiltonian reproduce
the $d=1$ special case of \eqref{I6}. To see this, note that
$F_{ij} F_{ji} = F_{ii} F_{jj}$ if $d=1$, and substitute $\dot{q_i}=2 F_{ii}$ from \eqref{I8} into
\eqref{I6}.
In fact, the spinless RS model results from the $d=1$ special case of our Hamiltonian reduction:
in this case $w^1$ becomes gauge equivalent to a constant vector and
one derives the model utilizing also a canonical
 transformation between $q, p$ mentioned above and $q,\theta$ \cite{FK2}.
Thus, the spin RS systems of \cite{KZ}  are generalizations of the \emph{chiral} RS model.
We follow the general practice in dropping `chiral' from their name.

Our result on the degenerate integrability of the system is not surprising, since the
same property holds in the complex holomorphic case \cite{AO,CF2} and it also holds
generically for large families of related spin many-body models obtained
by Hamiltonian reduction \cite{Res1,Res2,Res3}.
Despite these earlier results, the degenerate integrability of our specific real system
can not be obtained directly. Therefore,  it requires a  separate treatment,
and we shall exhibit the desired integrals of motion in explicit form.

Here is an outline of this work and its main results.
In Section \ref{sec:T}, we present the master phase space $\cM$ which is an
extension of the Heisenberg double of $\U(n)$ by a space of primary spins.
The latter space is formed of $d\geq 2$  copies of $\C^n$ endowed with a $\U(n)$ covariant
Poisson bracket and a (Poisson--Lie) moment map, see Proposition \ref{Pr:Zak}.
We also introduce the `free'
degenerate integrable system on $\cM$ that will be reduced.
In Section \ref{sec:H}, we define the Hamiltonian reduction and progress towards the
description of the corresponding reduced phase space $\cM_{\red}$, which is a real-analytic
symplectic manifold of dimension $2nd$.
In particular, we exhibit two models of dense open
subsets of $\cM_{\red}$; the first one is used in the subsequent sections to
 derive the real form of the trigonometric spin RS system described above,
 while the second one allows  us to prove that $\cM_{\red}$ is connected and it
 leads to a concise formula for the reduced symplectic
 form (see Theorem \ref{Th:redsymp} and Corollary \ref{Cor:Dense}).
 The second model will also be used for recovering the Gibbons--Hermsen system through a scaling limit
 (see Remark \ref{Rem:scaling}).
In Section \ref{sec:N}, we characterize the projection of a family of free Hamiltonian vector fields of
$\cM$ onto $\cM_{\red}$, and show in Corollary \ref{Cor:vf}  that one of these
projections reproduces the equations of motion \eqref{I8}.
Then, in Section \ref{sec:RP}, we obtain the reduced Poisson bracket presented in  Theorem \ref{Thm:redPB}.
 This offers an alternative way to derive the equations of motion \eqref{I8}, and
  we also provide a formula
 for the Poisson bracket of the
 Lax matrix that generates the commuting reduced Hamiltonians, see Proposition \ref{Pr:RUform}.
In Section \ref{sec:DIS}, we demonstrate the degenerate integrability of the real trigonometric
spin RS system, with the final
result formulated as Theorem \ref{Th:last}.
Section \ref{sec:Con} concludes this work and gathers open questions. There are four appendices devoted to auxiliary results
and proofs.

\medskip
\noindent
\textbf{Note on conventions.} The sign function $\sgn$ is such that $\sgn(i-k)$ is $+1$ if $i>k$, $-1$ if $i<k$, and
$0$ for $i=k$.  Similarly to Kronecker's delta function,
 we define for any condition $\mathrm{c}$ the symbol $\delta_{\mathrm{c}}$ which equals $+1$
if $\mathrm{c}$ is satisfied, and $0$ otherwise. For example, $\delta_{(j<l\leq k)}$ equals $+1$ if $j<l$ and $l\leq k$,
while it is $0$ if one of those two conditions is not satisfied.

\section{Heisenberg double, primary spins and  `free' integrable system}
\label{sec:T}

Eventually, we shall obtain the real, trigonometric spin RS system by reduction of an
`obviously integrable' system on the phase space $\cM  :=  M \times \C^{n\times d}$,
where $M$ is the Heisenberg double of the Poisson--Lie group $\U(n)$ and $\C^{n\times d}$
is the space of the so-called primary spin variables.  In this section
we present a quick overview of these structures, to be used in the subsequent sections.
More details can be found in the references \cite{F3,FK2,Kli,KS,Lu1,STS} and in Appendix \ref{sec:A}.

\subsection{The Heisenberg double and its models}

Let us start with the real vector space direct sum
\be
\gl = \u(n) + \b(n),
\label{T1}\ee
where $\b(n)$ denotes the Lie algebra of upper triangular complex matrices having real
entries along the diagonal, and the unitary Lie algebra $\u(n)$ consists of the
skew-Hermitian matrices. These are isotropic subalgebras with respect to the non-degenerate,
invariant bilinear form of $\gl$ given by
\be
\langle X, Y \rangle := \Im \tr(XY),
\quad
\forall X,Y \in \gl,
\label{T2}\ee
which means that we have a Manin triple  at hand.
Then define
\be
R:= \frac{1}{2} \left( P_{\u(n)} - P_{\b(n)}\right),
\label{T3}\ee
using the projection operators with ranges $\u(n)$ and $\b(n)$,
associated with the decomposition \eqref{T1}. For any $X\in \gl$, we may write
\be
X= X_{\u(n)} + X_{\b(n)} \quad \hbox{with}\quad X_{\u(n)} = P_{\u(n)}(X),\, X_{\b(n)} = P_{\b(n)}(X).
\label{T4}\ee
As a manifold, $M$ is the real Lie group $\GL$, and for any smooth \emph{real} function
$f\in C^\infty(\GL)$ we introduce the $\gl$-valued derivatives $\nabla f$ and $\nabla'f$ by
\be
\langle \nabla f(K), X\rangle := \dt f(e^{tX} K),
\quad
\langle \nabla' f(K), X\rangle := \dt f(Ke^{tX}),\quad
\forall X\in \gl,
\label{T5}\ee
where $K$ denotes the variable running over $\GL$.
The commutative algebra of smooth real functions, $C^\infty(M)$, carries two natural Poisson brackets
provided by
\be
\{ f, h\}_{\pm} := \langle \nabla f, R \nabla h \rangle \pm  \langle \nabla' f, R \nabla' h \rangle.
\label{T6}\ee
The minus bracket makes $\GL$ into a real Poisson--Lie group, while the plus one
corresponds to a symplectic structure on $M$. The former is called the Drinfeld double Poisson bracket and the latter
the Heisenberg double Poisson bracket \cite{STS}.

The real Poisson brackets can be extended to complex functions by requiring complex
bilinearity. Then the real Poisson brackets can be recovered if we know all  Poisson brackets
between the matrix elements of $K$ and its complex conjugate $\overline{K}$.
In the case of the Drinfeld double, we have
\be
\{ K_{ij}, K_{kl}\}_- =\ri K_{kj} K_{il} \left[ \delta_{ik} + 2 \delta_{(i>k)} - \delta_{lj} - 2 \delta_{(l>j)}\right],
\label{T7}\ee
and
\be
\{ K_{ij}, \overline{K}_{kl} \}_- = \ri K_{ij} \overline{K}_{kl} [ \delta_{ik} - \delta_{jl}]
+ 2 \ri \left[ \delta_{ik} \sum_{\beta > i} K_{\beta j} \overline{K}_{\beta l} - \delta_{jl} \sum_{\alpha <j} K_{i\alpha}
\overline{K}_{k\alpha}\right].
\label{T8}\ee
Consider the subgroup $\B(n)< \GL$ of upper triangular matrices having positive entries along
the diagonal, and the unitary subgroup $\U(n) < \GL$. These subgroups correspond to the subalgebras in \eqref{T1}.
It is well-known that both $\U(n)$ and $\B(n)$ are Poisson submanifolds of the Drinfeld double
$(\GL, \{\ ,\ \}_-)$. We denote their inherited Poisson structures by $\{\ ,\ \}_U$ and $\{\ ,\ \}_B$,
which makes them Poisson--Lie groups.

The Poisson brackets on $C^\infty(\U(n))$ and on $C^\infty(\B(n))$ admit the following description.
For any real function $\phi \in C^\infty(\U(n))$  introduce the
$\b(n)$-valued derivatives $D\phi$ and $D'\phi$ by
\be
\langle D \phi(g), X\rangle := \dt \phi(e^{tX} g),
\quad
\langle D' \phi(g), X\rangle := \dt \phi(ge^{tX}),
\,\, \forall X\in \u(n),
\label{T9}\ee
and for any $\chi \in C^\infty(\B(n))$ similarly introduce the $\u(n)$-valued derivatives $D\chi$ and $D'\chi$.
Then we have
\be
\{\phi_1, \phi_2\}_U(g) = -\left\langle D'\phi_1(g), g^{-1} (D\phi_2(g)) g \right\rangle,
\quad \forall g\in \U(n),
\label{T10}\ee
where the conjugation takes place inside $\GL$. Similarly
\be
\{ \chi_1, \chi_2\}_B(b)= \left\langle D' \chi_1(b), b^{-1} (D\chi_2(b)) b \right\rangle,
\quad \forall b\in \B(n).
\label{T11}\ee
The opposite signs in the last two formulae are due to our conventions.

By the Gram-Schmidt process, every element $K\in \GL$
admits the unique decompositions
\be
K = b_L g_R^{-1} = g_L b_R^{-1}\quad\hbox{with}\quad b_L, b_R \in \B(n),\, g_L, g_R\in \U(n),
\label{T12}\ee
and $K$ can be recovered also from the pairs $(g_L, b_L)$ and $(g_R, b_R)$,
by utilizing the identity
\be
b_L^{-1} g_L = g_R^{-1} b_R.
\label{T13}\ee
These decompositions give rise to the maps $\Lambda_L, \Lambda_R$ into $\B(n)$ and $\Xi_L, \Xi_R$ into $\U(n)$,
\be
\Lambda_L(K):= b_L, \quad \Lambda_R(K):= b_R, \quad \Xi_L(K):= g_L,\quad\Xi_R(K):= g_R.
\label{T14}\ee
Then we obtain the maps from $\GL$ onto $\U(n)\times \B(n)$,
\be
(\Xi_L, \Lambda_R),\quad (\Xi_R, \Lambda_L),\quad (\Xi_L, \Lambda_L),\quad (\Xi_R, \Lambda_R),
\label{T15}\ee
which are all (real-analytic) diffeomorphisms.
In particular we shall use the diffeomorphism
\be
m_1:= (\Xi_R, \Lambda_R): \GL \to \U(n) \times \B(n)
\label{T16}\ee
to transfer the Heisenberg double Poisson bracket to $C^\infty(\U(n) \times \B(n))$.
The formula of the resulting Poisson bracket \cite{F3}, called $\{\ ,\ \}_+^1$, can be written as follows:
\bea
&&\{\cF, \cH\}_+^1(g,b) =\left\langle D_2' \cF, b^{-1} (D_2\cH) b \right\rangle
-\left\langle D'_1\cF, g^{-1} (D_1\cH) g\right\rangle
\nonumber\\
&&\qquad \qquad  +  \left\langle D_1\cF , D_2\cH \right\rangle
-\left\langle D_1 \cH , D_2\cF \right\rangle,
\label{+PB1}\eea
for any $\cF, \cH \in C^\infty(\U(n)\times \B(n))$.
The derivatives on the right-hand side
are   taken at $(g,b)\in \U(n)\times \B(n)$; the subscripts 1 and 2 refer to derivatives with respect to the
first and second arguments.
As an application, one can determine the Poisson brackets between the matrix elements of $(g,b):= (g_R, b_R)$
on the Heisenberg double,  which gives
\be
\{ g_{lm}, b_{jk} \}_+ = \ri \delta_{jl} g_{lm} b_{jk} + 2\ri \delta_{(j<l\leq k)}\, g_{jm} b_{lk},
\label{T18}\ee
and
\be
\{ g_{lm}, \overline{b}_{jk} \}_+ = \ri \delta_{jl} g_{lm} \overline{b}_{jk} +
2\ri \delta_{jl}  \sum_{j<\beta\leq k}\, g_{\beta m} \overline{b}_{\beta k}.
\label{T19}\ee
The same formulae are valid w.r.t. $\{\ ,\ \}_+^1$, and this was used for the computation.

Observe from the formula \eqref{+PB1} that both $\Xi_R$ and $\Lambda_R$ are Poisson maps w.r.t. the
Heisenberg double Poisson bracket and the Poisson brackets $\{\ ,\ \}_U$ and $\{\ ,\ \}_B$, respectively.
The same is true regarding the maps $\Xi_L$ and $\Lambda_L$. A further property that we use later is that
\be
\{ \Lambda_L^*(\chi_1), \Lambda_R^*(\chi_2)\}_+ = 0,
\quad
\forall \chi_1, \chi_2 \in C^\infty(\B(n)).
\label{T20}\ee
Note, incidentally, that  $\Xi_L$ and $\Xi_R$ enjoy the analogous identity.

The Heisenberg double admits another convenient model as well.
This relies on the diffeomorphism between $\B(n)$ and the manifold $\fP(n) = \exp(\ri \u(n))$
of positive definite Hermitian matrices, defined by $b \mapsto L:=bb^\dagger$. Then we have
the diffeomorphism
\be
m_2: \GL \to \U(n) \times \fP(n),
\quad m_2:= (\Xi_R, \Lambda_R \Lambda_R^\dagger).
\label{T21}\ee
For $\psi \in C^\infty(\fP(n))$, define the $\u(n)$-valued
derivative $d \psi$ by
\be
\langle d \psi(L), X\rangle := \dt \psi(L + t X),
\,\,\forall X\in \ri \u(n).
\label{T22}\ee
This definition makes sense since $(L + t X)\in \fP(n)$ for small $t$.
By using $m_2$, one can transfer the Poisson bracket $\{\ ,\ \}_+$ to $C^\infty(\U(n) \times \fP(n))$.
The resulting Poisson bracket is called $\{\ ,\ \}_+^2$, and is given \cite{F3} by the following explicit
formula:
\bea
&&\{F, H\}_+^2 (g,L)=4\left\langle L d_2F, \left(L d_2 H \right)_{\u(n)}\right\rangle
-\left\langle D'_1 F , g^{-1} (D_1 H) g  \right\rangle\nonumber \\
&&\qquad \qquad  +  2\left\langle D_1F, L d_2H \right\rangle
-2\left\langle D_1 H, L d_2F\right\rangle,
\label{+PB2}\eea
for any $F,H\in C^\infty(\U(n) \times \fP(n))$, where the derivatives
with respect to the first and second arguments are taken at $(g,L)\in \U(n)\times \fP(n)$.
The subscript $\u(n)$ refers to the decomposition defined in \eqref{T4}.

We end this review of the Heisenberg double by recalling  the symplectic form,
denoted $\Omega_M$,   that corresponds to the non-degenerate Poisson structure
 $\{\ ,\ \}_+$. It can  be displayed \cite{AM} as
\be
\Omega_M=\frac{1}{2}\Im\tr(d \Lambda_L \Lambda_L^{-1}\wedge  d \Xi_L \Xi_L^{-1})+
\frac{1}{2}\Im\tr( d\Lambda_R \Lambda_R^{-1}  \wedge d\Xi_R   \Xi_R^{-1}).
\label{T24}\ee
Here, $d\Lambda_L$ collects the exterior derivatives of the components of the matrix valued function $\Lambda_L$.
To be clear about our conventions, we remark that the wedge does not contain $\frac{1}{2}$, and
the Hamiltonian vector field $X_h$ of $h$ satisfies $d h = \Omega_M(\ , X_h)$ and $\{f,h\}_+ = df(X_h) = \Omega_M(X_h, X_f)$.

\subsection{The primary spin variables} \label{ssec:Prim}

We begin by recalling that the real, trigonometric spin Sutherland model of Gibbons--Hermsen \cite{GH} type
 can be derived via Hamiltonian reduction
of $T^*\U(n) \times \C^{n\times d}$, where $\C^{n\times d} \simeq \R^{2n \times d}$ carries its
canonical Poisson structure. In particular, if the elements of $\C^{n\times d}$ are represented as a collection
of $\C^n$ column vectors
\be
w^1, w^2, \dots, w^d,
\label{T25}\ee
then the $d$ different copies pairwise Poisson commute. The symmetry group underlying the reduction is $\U(n)$, which acts
on $\C^n$ in the obvious manner,
\be
\cA^{(n)}: \U(n) \times \C^n\to \C^n \quad\hbox{given by}\quad \cA^{(n)}(g,w):= g w.
\label{T26}\ee
For our generalization, it is natural to require this to be a Poisson action, i.e., $\cA^{(n)}$ should
be a Poisson map with respect to the Poisson structure \eqref{T10} on $\U(n)$ and a suitable
Poisson structure on $\C^n$. A further requirement is that the $\U(n)$-action should
be generated by a moment map.

Specialized to $\U(n)$ with the Poisson structure \eqref{T10}, the notion
of moment map that we use can be summarized as follows\footnote{In full generality, the concept of
Poisson--Lie moment map goes back to Lu \cite{Lu}; our  conventions are
slightly different from hers.}.
Suppose that we have a Poisson manifold $(\cP, \{\ ,\ \}_\cP)$
and a Poisson map $\Lambda: \cP \to \B(n)$, where $\B(n)$ is endowed with the Poisson structure \eqref{T11}.
Then, for any $X\in \u(n)$,  the following formula defines a vector field $X_{\cP}$ on $\cP$:
 \be
 \cL_{X_{\cP}}(\cF) \equiv X_{\cP}[\cF] :=\left\langle X,  \{\cF, \Lambda\}_\cP \Lambda^{-1} \right \rangle,
 \qquad \forall \cF\in C^\infty(\cP),
\label{T27}\ee
where the Poisson bracket is taken with every entry of the matrix $\Lambda$, and $\cL_{X_\cP}$ denotes
Lie derivative along $X_{\cP}$.
The map $X \mapsto X_{\cP}$ is automatically a Lie algebra anti-homomorphism, representing
an infinitesimal left action of $\U(n)$. If it integrates to a global action of $\U(n)$, then
the resulting action is Poisson, i.e., the action map $\cA: \U(n) \times \cP \to \cP$ is Poisson.
In the situation just outlined, $\Lambda$ is called the (Poisson--Lie) moment map
of the corresponding  Poisson action.

In the next proposition we collect the key properties of a Poisson structure on $\C^n$,
which is a special case of the $\U(n)$ covariant Poisson structures
found by Zakrzewski \cite{Z}.

\begin{proposition} \label{Pr:Zak}
 The following formula defines a Poisson structure on $\C^n \simeq \R^{2n}$:
\bea
  &&\{w_i,w_l\}=\ri\,  \sgn(i-l) w_i  w_l, \qquad \forall 1\leq i,l \leq n,
   \label{T29}\\
&&\{w_i,\overline{w}_l\}=\ri\,  \delta_{il} (2+|w|^2) +\ri w_i \overline{w}_l+ \ri\,  \delta_{il} \sum_{r=1}^n \sgn(r-i) |w_r|^2\,.
\label{T30}\eea
These formulae imply
\be
\{ \overline{w}_i, \overline{w}_l\} = \overline{\{w_i, w_l\}},
\label{T32}
\ee
which means that the Poisson bracket of real functions is real.
With respect to this Poisson bracket, the action \eqref{T26} of $\U(n)$ with \eqref{T10} is Poisson, and is generated
by the moment map $\bb: \C^n \to \B(n)$ given by
\be
\bb_{jj}(w) = \sqrt{\cG_j/\cG_{j+1}},\qquad \bb_{ij}(w) = \frac{w_i \overline{w}_j}{ \sqrt{\cG_j \cG_{j+1}}},
\qquad\forall 1\leq i<j\leq n
\label{bb}\ee
with
\be
\cG_j = 1 + \sum_{k=j}^n \vert w_k \vert^2,\quad
\label{T31}\cG_{n+1}:= 1,
\ee
The map $\bb$ satisfies the identity
\be
\bb(w) \bb(w)^\dagger = \1_n + w w^\dagger.
\label{bbdag}\ee
The Poisson structure is non-degenerate, and the corresponding symplectic form is given by
\be
\Omega_{\C^n} =
  \frac\ri2\sum_{k=1}^n\frac1{\cG_k} dw_k\wedge d\overline w_k
+ \frac\ri4\sum_{k=1}^{n-1}\frac1{\cG_k\cG_{k+1}}
d\cG_{k+1}
  \wedge
 \left(\overline w_k dw_k - w_k d\overline w_k\right).
 \label{Omzak1}\ee
\end{proposition}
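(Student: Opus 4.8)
The plan is to verify the four assertions of the Proposition essentially by direct computation, but organised so that the Jacobi identity---the one genuinely laborious point---is obtained for free. First I would check the consistency property \eqref{T32}: taking complex conjugates of \eqref{T29} is immediate, and conjugating \eqref{T30} gives a bracket $\{\overline w_i, w_l\}$ which must be matched with $-\{w_l,\overline w_i\}$ computed again from \eqref{T30}; the symmetric-in-$i,l$ terms cancel against the sign-function sum, and this is a short line. Granting \eqref{T32}, it suffices to treat everything with complexified brackets.

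Next I would establish \eqref{bbdag} as an algebraic identity, since it is purely a statement about the explicit matrix \eqref{bb}: the $(i,j)$ entry of $\bb\bb^\dagger$ for $i\le j$ is $\sum_{k} \bb_{ik}\overline{\bb_{jk}}$, and using $\cG_k/\cG_{k+1}$ on the diagonal together with the off-diagonal formula one telescopes the sum to $\delta_{ij} + w_i\overline w_j$; the key combinatorial input is $\cG_k - \cG_{k+1} = |w_k|^2$. Then the Poisson-map property of $\bb: \C^n \to \B(n)$ reduces, via the characterisation \eqref{T27} of the moment map, to checking that the infinitesimal action generated by $\bb$ through \eqref{T27} coincides with the derivative of \eqref{T26}; concretely, for $X \in \u(n)$ one must show $\langle X, \{w_k, \bb\}\bb^{-1}\rangle = (Xw)_k$, which after using \eqref{bbdag} to write $\{w_k,\bb\}\bb^{-1}$ in terms of $\{w_k, \bb\bb^\dagger\}(\bb\bb^\dagger)^{-1} = \{w_k, ww^\dagger\}(\1+ww^\dagger)^{-1}$ becomes a finite matrix computation using only \eqref{T29}--\eqref{T30}. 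That $\bb$ is a Poisson map into $(\B(n), \{\,,\,\}_B)$ then follows because $\{\,,\,\}_B$ is itself determined by the same moment-map/action data on the Poisson--Lie group side (formula \eqref{T11}), or can alternatively be checked by brute force on the entries $\bb_{ij}$.

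The most efficient route to the remaining two claims---non-degeneracy with the stated symplectic form \eqref{Omzak1}, and the Jacobi identity---is to run the argument backwards: I would take \eqref{Omzak1} as the definition of a $2$-form $\Omega_{\C^n}$, check it is closed (a direct $d$-computation; the second sum's exterior derivative must cancel the variation of the $1/\cG_k$ coefficients in the first sum, again using $d\cG_{k+1} = \sum_{r\ge k+1} d|w_r|^2$), check it is non-degenerate (its top exterior power is a nonzero multiple of $\prod_k dw_k\wedge d\overline w_k$ since the matrix of coefficients is triangular-plus-diagonal with nonvanishing diagonal $\ri/(2\cG_k)$), and then verify that the Poisson bivector inverse to $\Omega_{\C^n}$ reproduces exactly \eqref{T29}--\eqref{T30}. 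Because a $2$-form that is closed and non-degenerate automatically has a Poisson inverse satisfying Jacobi, this simultaneously proves \eqref{T29}--\eqref{T30} define a Poisson structure and that it is symplectic with form \eqref{Omzak1}.

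The main obstacle I anticipate is the last matching step: inverting \eqref{Omzak1} to recover \eqref{T30}. The term $\ri\,\delta_{il}|w|^2 + \ri\,\delta_{il}\sum_r \sgn(r-i)|w_r|^2$ in \eqref{T30} is exactly the kind of expression that makes the inverse bivector awkward, and one must see it emerge from the combination of the diagonal piece $\frac{\ri}{2\cG_k}dw_k\wedge d\overline w_k$ and the ``long-range'' piece involving $d\cG_{k+1}$; I would organise this by computing the Hamiltonian vector field $X_{w_k}$ from $dw_k = \Omega_{\C^n}(\,\cdot\,, X_{w_k})$ with an explicit ansatz $X_{w_k} = \sum_r (a_{kr}\partial_{w_r} + b_{kr}\partial_{\overline w_r})$, solving the triangular linear system for $a_{kr}, b_{kr}$, and then reading off $\{w_k, w_l\}$ and $\{w_k,\overline w_l\}$. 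If preferred, one can instead verify Jacobi for \eqref{T29}--\eqref{T30} directly: by \eqref{T32} it is enough to check the two types $(w_i,w_j,w_k)$ and $(w_i,w_j,\overline w_k)$, and the computation, while lengthy, is elementary bookkeeping with $\sgn$ and $\delta$ symbols. I would present the closed-nondegenerate route as the main argument and relegate the explicit inversion to a short appendix-style computation.
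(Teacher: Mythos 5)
Your proposal is viable but takes a genuinely different route from the paper's Appendix~\ref{sec:A}. The paper first introduces a coordinate-free formula \eqref{pb1} for the bracket in terms of a symplectic gradient, proves the Jacobi identity directly by reducing to linear functions $F_\xi$ and grinding through the $\u(n)$/$\b(n)$ projection algebra, and only \emph{afterwards} derives the symplectic form (Proposition~\ref{Pr:A5}) by passing to polar variables $(\varphi_j,\ln\cG_k)$, in which the Poisson matrix becomes constant and hence trivially invertible. You instead propose to take $\Omega_{\C^n}$ \eqref{Omzak1} as primitive, check $d\Omega_{\C^n}=0$ and non-degeneracy, and obtain Jacobi for free. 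This is legitimate and the non-degeneracy observation (block-triangular coefficient matrix with diagonal $\ri/(2\cG_k)$) is sound; what it buys you is avoiding the long direct Jacobi computation, but what it costs is an extra verification of closedness that the paper never needs, plus the inversion $\Omega_{\C^n}\mapsto\{\ ,\ \}$, which is essentially Proposition~\ref{Pr:A5} run in reverse and is no lighter. A further difference: the paper's intrinsic formula \eqref{pb1} is reused for the Poisson-action check, for Proposition~\ref{Pr:A2} ($\Phi:w\mapsto \1_n+ww^\dagger$ is Poisson into $\fP(n)$) and for Proposition~\ref{Pr:A3} (the moment map identity); your route forgoes this unifying tool.

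The one genuinely murky spot is your treatment of the assertion that $\bb$ is a Poisson map. You write that the Poisson-map property ``reduces, via the characterisation \eqref{T27}, to checking that the infinitesimal action generated by $\bb$ coincides with the derivative of \eqref{T26},'' but this conflates two logically distinct facts: equation \eqref{T27} \emph{presupposes} that $\Lambda$ is a Poisson map into $(\B(n),\{\ ,\ \}_B)$, and verifying the infinitesimal-generator condition does not by itself establish that the brackets $\{\bb_{ij},\bb_{kl}\}$ computed in $\C^n$ agree with the $\B(n)$ Poisson--Lie brackets. The follow-up sentence claiming that the Poisson-map property ``follows because $\{\ ,\ \}_B$ is itself determined by the same moment-map/action data'' is not justified as stated. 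The paper handles this cleanly: Proposition~\ref{Pr:A2} shows the map $\Phi=\bb\bb^\dagger=\1_n+ww^\dagger$ is Poisson into $\fP(n)$ by a short computation with the linear functions $f_X(L)=\langle X,L\rangle$, and since $b\mapsto bb^\dagger$ is a Poisson diffeomorphism $\B(n)\to\fP(n)$, the Poisson-map property of $\bb$ follows. Your offered ``brute force on the entries $\bb_{ij}$'' fallback would work, but you should replace the vague primary argument by either the paper's $\Phi$-trick or the brute-force check, rather than suggesting the moment-map identity alone suffices.

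A small further remark: the paper also verifies that the $\U(n)$-action \eqref{T26} is Poisson by an \emph{independent} direct computation (the identity $\{F_\xi,F_\eta\}(gw)=\{F_\xi(g\,\cdot\,),F_\eta(g\,\cdot\,)\}(w)+\{F_\xi(\,\cdot\,w),F_\eta(\,\cdot\,w)\}_U(g)$), even though this also follows from the moment-map framework once $\bb$ is known to be a Poisson map. Your proposal relies solely on the latter implication, which is fine provided the Poisson-map step is repaired as above. Finally, your verification of \eqref{bbdag} from the explicit formula \eqref{bb} by telescoping $\cG_k-\cG_{k+1}=|w_k|^2$ matches the paper's passing remark; the paper itself \emph{defines} $\bb$ by the factorization $\Phi=\bb\bb^\dagger$ and treats the explicit formula \eqref{bb} as something the reader may check.
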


\medskip

A variant of the factorization formula \eqref{bbdag}
(without connection to Zakrzewski's Poisson bracket)
 was found earlier by Klim\v c\'\i k, as
presented in an unpublished initial version of \cite{FK2}.
 For convenience, we give a self-contained proof of the proposition in Appendix \ref{sec:A}.

\begin{definition}
The pairwise Poisson commuting $w^1,\dots, w^d$ with each copy subject to the
Poisson brackets \eqref{T29}, \eqref{T30} are called \emph{primary spin  variables}.
The Poisson space obtained in this manner is denoted $\left(\C^{n\times d}, \{\ ,\ \}_\cW\right)$,
and we shall also use the notation
\be
W:= (w^1, \dots, w^d).
\label{Wdef}\ee
The corresponding symplectic form, $\Omega_\cW$, is the sum of $d$-copies of
$\Omega_{\C^n}$ \eqref{Omzak1}, one for each variable $w^\alpha$, $\alpha=1,\dots, d$.
\end{definition}

\subsection{The unreduced `free' integrable system} \label{ss:unredIS}

Let $\fH$ be an Abelian Poisson subalgebra of the Poisson algebra of
(smooth, real-analytic, etc.) functions
on a symplectic manifold $\cM$ of dimension $2N$, such that all elements of $\fH$ generate complete
Hamiltonian flows.  Assume that the functional dimension
of $\fH$ is $r$\footnote{This means that the exterior derivatives
of the elements of $\fH$ span an $r$-dimensional subspace of the cotangent space for generic points of $\cM$,
which form a dense open submanifold.},
and that there exists also a Poisson subalgebra $\fC$ of the functions on $\cM$
 whose functional dimensions is $(2N -r)$ and its center contains  $\fH$.
Then $\fH$ is a called an \emph{integrable system} with Hamiltonians $\fH$
 and algebra of constants
of motion $\fC$.  Liouville integrability is the $r=N$,  $\fC = \fH$, special case.
One calls the system \emph{degenerate integrable} (or non-commutative integrable, or superintegrable)   if $r<N$.
In the degenerate case, similarly to Liouville integrability, the flows of the Hamiltonians belonging to $\fH$
are linear in suitable coordinate systems on the joint level surfaces of $\fC$.
For further details of this notion and its variants, and for the generalization
of the Liouville--Arnold theorem, we refer to the papers \cite{J,MF,Nekh,Res2} and to
Section 11.8 of the book \cite{Rud}.

Consider the Heisenberg double $(M, \{\ ,\ \}_+)$ and the space of primary spins $(\C^{n\times d}, \{\ ,\ \}_{\cW})$
introduced in \S\ref{ssec:Prim}.
Define
\be
\cM:= M \times \C^{n\times d}
\ee
 and equip it with the product Poisson structure,
$\{\ ,\ \}_\cM$, which comes from the symplectic form
\be
\Omega_{\cM} = \Omega_M + \Omega_{\cW}.
\label{OmcM}\ee
Let $C^\infty(\B(n))^{\U(n)}$ denote those functions on $\B(n)$ that are invariant with respect to the dressing action of
$\U(n)$ on $\B(n)$, operating as
\be
\Dress_g(b):= \Lambda_L( g b), \qquad
\forall (g,b) \in \U(n) \times \B(n).
\label{Dress}\ee
It is well-known that these invariant functions form the center of the Poisson bracket $\{\ ,\ \}_B$ \eqref{T11}.
 Extend all maps displayed in \eqref{T14} to $\cM$ in the trivial manner, for example by setting
\be
\Lambda_R(K,W):= \Lambda_R(K)
\quad \hbox{with}\quad (K,W) \in M \times \C^{n\times d}.
\label{T33}\ee
Then
\be
\fH:= \Lambda_R^*\bigl( C^\infty\bigl(\B(n)\bigr)^{\U(n)}\bigr)
\label{T34}\ee
is an Abelian Poisson subalgebra of $C^\infty(\cM)$. We call the elements of $\fH$ `free Hamiltonians' since
their flows are easily written down explicitly.  Indeed, for $H= \Lambda_R^*(h)$ the flow
sends the initial value $(K(0), W(0))$ to
\be
(K(t), W(t)) = \left( K(0) \exp\left( - t Dh(b_R(0))\right), W(0)\right).
\label{T35}\ee
It follows that $b_R$ and $b_L$ are constants along the flow and we have the `free motion' on $\U(n)$
given by
\be
g_R(t) = \exp\left(t Dh(b_R(0))\right) g_R(0).
\label{T36}\ee
The functional dimension of $\fH$ is $n$, and for independent generators one may take
\be
H_k:= \Lambda_R^*(h_k) \quad \hbox{with}\quad h_k(b) = \frac{1}{2k}\tr (bb^\dagger)^k, \quad k=1,\dots, n.
\label{T37}\ee
The invariance of these functions follows from the useful identity
\be
 (\Dress_g(b)) (\Dress_g(b))^\dagger = g (bb^\dagger) g^{-1}.
 \label{Dressconj}\ee
The system is degenerate integrable, with $\fC$ taken to be the algebra of all
constants of motion, which are provided by arbitrary smooth functions depending on $b_L, b_R$ and $W$.
From the decomposition \eqref{T12}, we get
\be
b_R b_R^\dagger = g_R \left(b_L^{-1} (b_L^{-1})^\dagger\right) g_R^{-1},
\label{R38}
\ee
and this entails $n$ relations between the functions of $b_L$ and $b_R$. Thus the functional dimension of $\fC$ is
$2N - n$, with $N= n^2 + nd$, as required.
It is worth noting that the joint level surfaces of $\fC$ are compact, since they can be viewed
as closed subsets of $\U(n)$.

There are several ways to enlarge $\fH$ into an Abelian Poisson algebra of functional dimension $N$, i.e.,
to obtain Liouville integrability of the Hamiltonians in $\fH$. However, there is no canonical way to do so.
Degenerate integrability is a stronger property than Liouville integrability,
since it restricts the flows of the Hamiltonians to smaller level surfaces.
For these reasons, we shall not pay
attention to Liouville integrability in this paper.

\section{Defining the reduction and solving the moment map constraint}
\label{sec:H}

We first describe a Poisson action of $\U(n)$ on $\cM$ and use it for defining the reduction
of the free integrable system.
Then we shall deal with two parametrizations of the `constraint surface', which is obtained
by imposing the moment map constraint of equation \eqref{momconst} below.

\subsection{Definition of the reduction}
\label{ssec:H}

Let us start by introducing the following Poisson map $\Lambda:\cM \to \B(n)$,
\be
\Lambda(K,W) = \Lambda_L(K) \Lambda_R(K) \bb(w^1) \bb(w^2) \cdots \bb(w^d),
\label{H1}\ee
using the notations \eqref{T14}, \eqref{bb}.
The Poisson property of $\Lambda$ holds since all the factors are
separately Poisson maps, and their matrix elements mutually Poisson commute.
The infinitesimal action generated by $\Lambda$, via
 the formula \eqref{T27}, integrates to a global Poisson action of $\U(n)$ on $\cM$.
This action turns out to have a nice form in terms of the new variables on $\cM$
given  below.

\begin{definition} \label{Def:31}
  For $\alpha=1,\dots, d$, introduce
\be
b_\alpha:= \bb(w^\alpha), \qquad
B_\alpha:= b_R b_1 b_2 \cdots b_\alpha, \quad B_0:=b_R,
\label{H2}\ee
and define the
\emph{dressed spins} $v(\alpha)$ and the \emph{half-dressed spins} $v^\alpha$ by the equalities
\be
v(\alpha):= B_{\alpha -1} w^\alpha =: b_R v^\alpha.
\label{H3}\ee
\end{definition}

\begin{lemma} \label{L:NewVar}
The new variables on $\cM$ given by
\be
g_R, b_R, v(1),\dots, v(d)
\label{H4}\ee
are related by a diffeomorphism of $\U(n) \times \B(n) \times \C^{n\times d}$ to the variables
\be
g_R, b_R, w^1,\dots, w^d.
\label{H5}\ee
\end{lemma}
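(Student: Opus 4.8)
The plan is to exhibit the change of variables $(g_R, b_R, w^1,\dots,w^d) \mapsto (g_R, b_R, v(1),\dots,v(d))$ and a manifest inverse, and check both are real-analytic. First I would observe that the map is ``triangular'' in the spin index: by Definition \ref{Def:31}, each $v(\alpha) = b_R\, b_1 \cdots b_{\alpha-1}\, w^\alpha$ depends only on $b_R$ and on $w^1,\dots,w^\alpha$ (through $b_\beta = \bb(w^\beta)$ for $\beta<\alpha$), while $g_R$ is untouched. Hence it suffices to treat the $b_R$ and the $W$ blocks, and within $W$ to argue recursively on $\alpha$.

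The key step is invertibility of $w \mapsto \bb(w)$-twisted maps. For this I would use the factorization identity \eqref{bbdag}, namely $\bb(w)\bb(w)^\dagger = \1_n + w w^\dagger$, together with the fact (Proposition \ref{Pr:Zak}) that $\bb(w) \in \B(n)$ and that $b \mapsto bb^\dagger$ is a diffeomorphism from $\B(n)$ onto $\fP(n)$. Concretely: given $b_R$ and the dressed spins $v(1),\dots,v(d)$, I recover $w^1,\dots,w^d$ and the $b_\alpha$ inductively. For $\alpha=1$: $v(1) = b_R w^1$, so $w^1 = b_R^{-1} v(1)$, and then $b_1 = \bb(w^1)$ is determined, hence $B_1 = b_R b_1$. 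Suppose $w^1,\dots,w^{\alpha-1}$, hence $B_{\alpha-1} = b_R b_1 \cdots b_{\alpha-1}$, have been reconstructed; then $w^\alpha = B_{\alpha-1}^{-1} v(\alpha)$ and $b_\alpha = \bb(w^\alpha)$, so $B_\alpha$ is determined and the induction proceeds. This shows the map is a bijection with explicit inverse; real-analyticity of both directions follows because $\bb$ is a real-analytic $\B(n)$-valued map of $w$ (its entries \eqref{bb} are real-analytic since $\cG_j \geq 1 > 0$), matrix multiplication is polynomial, and matrix inversion is real-analytic on $\GL$ (with $b_R, B_{\alpha-1} \in \B(n) \subset \GL$ invertible throughout). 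The Jacobian is then automatically invertible, being block lower-triangular in the $(b_R, w^1,\dots,w^d)$ ordering with an identity block for $b_R$ and invertible diagonal blocks $\partial v(\alpha)/\partial w^\alpha = B_{\alpha-1}$.

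I do not anticipate a serious obstacle here; the statement is essentially bookkeeping once one notices the triangular structure in the spin index and invokes \eqref{bbdag}. The only point requiring a little care is to make explicit that each $B_{\alpha-1}$ in the recursion is already expressed in terms of the target data $(b_R, v(1),\dots,v(\alpha-1))$ before it is used to extract $w^\alpha$ — i.e. that the recursion is well-founded — and that all matrices being inverted lie in $\B(n)$, hence in $\GL$, so the formulas stay within the real-analytic category. One could alternatively phrase the argument as a composition of $d$ elementary diffeomorphisms, the $\alpha$-th of which replaces $w^\alpha$ by $v(\alpha) = B_{\alpha-1} w^\alpha$ while fixing all other coordinates, each being a diffeomorphism because $w \mapsto Aw$ is a linear isomorphism for fixed $A \in \GL$ depending real-analytically on the already-fixed coordinates; composing these gives the claim.
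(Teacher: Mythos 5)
Your proposal is correct and follows the same route as the paper: the paper's proof simply writes out the recursive inverse $w^1 = b_R^{-1} v(1)$, $w^2 = \bb(w^1)^{-1} b_R^{-1} v(2)$, \dots, exactly the triangular reconstruction you describe. Your version just makes explicit the real-analyticity and the block-triangular Jacobian structure, which the paper leaves implicit.
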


\begin{proof}
We have the relations,
\be
w^1 = b_R^{-1} v(1), \,\, w^2 = \bb(w^1)^{-1} b_R^{-1} v(2),\dots, w^d = \bb(w^{d-1})^{-1} \cdots \bb(w^1)^{-1} b_R^{-1} v(d),
\label{w}\ee
which can be used to reconstruct the variables \eqref{H5} from those in \eqref{H4}.
\end{proof}

The statement analogous to Lemma \ref{L:NewVar} for the variables $(g_R, b_R, v^1,\dots, v^d)$ also holds.
Later in the paper we shall use the following identities enjoyed by the half-dressed spins and the dressed spins,
which are direct consequences of \eqref{bbdag}. These identities and the subsequent proposition
actually motivated the introduction of these variables.

\begin{lemma} \label{L:id1vv}
With the above notations, one has the identities
\be
(b_1 b_2 \dots b_d) (b_1 b_2 \dots b_d)^\dagger = \1_n + \sum_{\alpha =1}^d v^\alpha (v^\alpha)^\dagger,
\quad
B_d B_d^\dagger = b_R b_R^\dagger + \sum_{\alpha=1}^d v(\alpha) v(\alpha)^\dagger.
\label{H7}\ee
\end{lemma}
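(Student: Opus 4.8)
The plan is to derive both identities in \eqref{H7} directly from the single factorization formula \eqref{bbdag}, namely $\bb(w)\bb(w)^\dagger = \1_n + w w^\dagger$, applied to each primary spin $w^\alpha$, together with the definitions \eqref{H2} and \eqref{H3}. First I would establish the first identity by induction on $d$. For the base case $d=1$ one has $b_1 b_1^\dagger = \bb(w^1)\bb(w^1)^\dagger = \1_n + w^1 (w^1)^\dagger = \1_n + v^1 (v^1)^\dagger$, since $v^1 = w^1$ by \eqref{H3} (as $B_0 = b_R$ and $v(1) = b_R w^1$, hence $v^1 = w^1$).

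For the inductive step, I would write $P_d := b_1 b_2 \cdots b_d$, so that $P_d = P_{d-1} b_d$, and compute
\be
P_d P_d^\dagger = P_{d-1} b_d b_d^\dagger P_{d-1}^\dagger
= P_{d-1}\bigl(\1_n + w^d (w^d)^\dagger\bigr) P_{d-1}^\dagger
= P_{d-1} P_{d-1}^\dagger + \bigl(P_{d-1} w^d\bigr)\bigl(P_{d-1} w^d\bigr)^\dagger.
\ee
By the induction hypothesis $P_{d-1} P_{d-1}^\dagger = \1_n + \sum_{\alpha=1}^{d-1} v^\alpha (v^\alpha)^\dagger$, and it remains only to identify $P_{d-1} w^d$ with $v^d$. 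This follows from \eqref{H3}: $v(\alpha) = b_R v^\alpha = B_{\alpha-1} w^\alpha = b_R b_1 \cdots b_{\alpha-1} w^\alpha$, so $v^\alpha = b_1 \cdots b_{\alpha-1} w^\alpha = P_{\alpha-1} w^\alpha$, giving $P_{d-1} w^d = v^d$. This closes the induction.

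The second identity then follows from the first by conjugation with $b_R$. Indeed $B_d = b_R P_d$ by \eqref{H2}, hence
\be
B_d B_d^\dagger = b_R P_d P_d^\dagger b_R^\dagger
= b_R\Bigl(\1_n + \sum_{\alpha=1}^d v^\alpha (v^\alpha)^\dagger\Bigr) b_R^\dagger
= b_R b_R^\dagger + \sum_{\alpha=1}^d \bigl(b_R v^\alpha\bigr)\bigl(b_R v^\alpha\bigr)^\dagger,
\ee
and $b_R v^\alpha = v(\alpha)$ again by \eqref{H3}, which gives exactly $B_d B_d^\dagger = b_R b_R^\dagger + \sum_{\alpha=1}^d v(\alpha) v(\alpha)^\dagger$. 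There is no real obstacle here; the only point requiring care is bookkeeping the relation between the half-dressed spins $v^\alpha$ and the partial products $P_{\alpha-1} = b_1 \cdots b_{\alpha-1}$, and keeping track of which side the dressing by $b_R$ acts on. Everything else is a routine unwinding of the definitions combined with repeated use of \eqref{bbdag}.
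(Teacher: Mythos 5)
Your proof is correct and follows essentially the same route as the paper: the authors state that \eqref{H7} is a direct consequence of \eqref{bbdag}, and in Appendix \ref{A:Glob2} they indicate exactly the induction you carry out, namely $B^\gamma (B^\gamma)^\dagger = \1_n + \sum_{\mu=1}^\gamma v^\mu (v^\mu)^\dagger$ proved inductively via \eqref{bbdag}, with the second identity obtained by conjugating with $b_R$. Your bookkeeping of $v^\alpha = b_1\cdots b_{\alpha-1} w^\alpha$ and $v(\alpha) = b_R v^\alpha$ matches Definition \ref{Def:31}, so there is nothing to add.
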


\begin{proposition}
The moment map $\Lambda: \cM \to \B(n)$ given by \eqref{H1} generates the action
$\cA: \U(n) \times \cM \to \cM$  that operates
as follows:
\be
\cA_\eta: \left(g_R, b_R, v(1),\dots, v(d)\right) \mapsto \left(\tilde\eta g_R \tilde \eta^{-1},
\Dress_{\tilde \eta}(b_R),
\tilde\eta v(1), \dots, \tilde \eta v(d)\right), \quad \forall \eta\in \U(n),
\label{H6}\ee
where $\tilde \eta = \Xi_R(\eta b_L)^{-1}$ with $b_L = \Lambda_L \circ m_1^{-1}(g_R, b_R)$ using \eqref{T16}.
In other words, $b_L = \Lambda_L(K)$ with $K\in M$ parametrized by the pair $(g_R, b_R)$.
\end{proposition}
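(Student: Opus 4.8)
The plan is to verify directly that the infinitesimal action defined by the moment map $\Lambda$ of \eqref{H1} via formula \eqref{T27} coincides with the derivative at $\eta = \1_n$ of the claimed global action \eqref{H6}, and then to note that the latter is indeed a genuine group action, so that it must be the integrated Poisson action whose existence was asserted. First I would fix $X \in \u(n)$ and compute $X_\cM[\cF] = \langle X, \{\cF, \Lambda\}_\cM \Lambda^{-1}\rangle$ by splitting $\Lambda = \Lambda_L \Lambda_R b_1 \cdots b_d$ and using the Leibniz rule: since the five groups of factors ($\Lambda_L$, $\Lambda_R$, and each $b_\alpha = \bb(w^\alpha)$) have mutually Poisson-commuting matrix entries and each is itself a Poisson map to $\B(n)$, the bracket $\{\cF, \Lambda\}_\cM \Lambda^{-1}$ decomposes as a sum of a term coming from differentiating in the $K$-variables against $\Lambda_L \Lambda_R$, and $d$ terms coming from differentiating in $w^\alpha$ against $\bb(w^\alpha)$, each conjugated appropriately by the remaining (right-hand) factors. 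The $w^\alpha$-pieces are governed by the fact (Proposition \ref{Pr:Zak}) that $\bb$ is the moment map for the standard $\U(n)$-action $w \mapsto gw$ on $\C^n$, so the corresponding vector field acts on $w^\alpha$ infinitesimally by $X$; the $K$-piece is governed by the Heisenberg double structure, where $\Lambda_L \Lambda_R$ generates (through \eqref{+PB1} and the Poisson property of $\Xi_R, \Lambda_R, \Lambda_L$) the dressing action on $b_R$ together with the conjugation action on $g_R$.

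Concretely, the key computational step is to translate everything into the variables $(g_R, b_R, v(1), \dots, v(d))$ of Lemma \ref{L:NewVar}. Here the identities $v(\alpha) = B_{\alpha-1} w^\alpha$ and the nesting $B_\alpha = b_R b_1 \cdots b_\alpha$ are exactly designed so that the conjugations by the trailing factors $b_\alpha b_{\alpha+1} \cdots b_d$ that appear when one commutes $\Lambda^{-1}$ past the individual moment-map contributions collapse: differentiating the product $\Lambda$ in the direction of the $X$-generated flow on the $\alpha$-th copy of $\C^n$ produces a left multiplication of the block $b_\alpha \cdots b_d$ by (a $\B(n)$-conjugate of) $X$, and after multiplying by $\Lambda^{-1}$ and pairing with $X$ one recovers precisely the statement that $v(\alpha) \mapsto \tilde\eta v(\alpha)$ infinitesimally, with the \emph{same} $\tilde\eta$ for every $\alpha$ and for $b_R$. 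The element $\tilde\eta = \Xi_R(\eta b_L)^{-1}$ is forced by the structure of the dressing action: $\Dress$ on $\B(n)$ is implemented by writing $\eta b_L = \tilde\eta^{-1} \Dress_\eta(b_L)^{?}$ in the appropriate Iwasawa-type factorization, and one checks its infinitesimal version matches the $K$-contribution to $X_\cM$. I would also record the compatibility identity \eqref{bbdag}/\eqref{H7}, since it is what guarantees that $\Lambda = \Lambda_L \Lambda_R b_1 \cdots b_d$ transforms under $\cA_\eta$ by the dressing action $\Lambda \mapsto \Dress_\eta(\Lambda)$, i.e.\ that $\Lambda$ is genuinely equivariant — this is both a consistency check and the conceptual reason the formula takes the stated form.

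Having matched infinitesimal actions, I would then verify that \eqref{H6} does define a left action of $\U(n)$: this requires the cocycle identity $\widetilde{\eta_1 \eta_2} = \widetilde{\eta_1}\, \widetilde{(\eta_2)}'$ in the appropriate sense, where the tilde on the right is computed at the point already moved by $\cA_{\eta_2}$; this follows from the standard fact that $\Xi_R$ and $\Lambda_L$ satisfy the cocycle relations built into the Iwasawa factorization of $\GL$ relative to $\U(n)$ and $\B(n)$, together with the observation that $b_L$ transforms by $\Dress_\eta$ and $g_R$ by conjugation by $\tilde\eta$ in a mutually consistent way. Since a Poisson action integrating a given complete infinitesimal Poisson action is unique, and since the infinitesimal generator of \eqref{H6} equals $X_\cM$, the two actions coincide, proving the proposition.

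The main obstacle I expect is bookkeeping the conjugations: when one applies Leibniz to $\{\cF, \Lambda_L \Lambda_R b_1 \cdots b_d\}_\cM (\Lambda_L \Lambda_R b_1 \cdots b_d)^{-1}$, each factor's contribution comes sandwiched by the product of the factors to its right (and their inverses), and one must carefully track that these sandwichings, when re-expressed through the dressed variables $v(\alpha)$ and through $\Dress$, all reduce to the single conjugating element $\tilde\eta$. A secondary technical point is justifying that the resulting infinitesimal action integrates — but this is handled for free once \eqref{H6} is exhibited and checked to be a group action, since one can simply differentiate it; the content of the proposition is the explicit formula rather than the integrability, which was already asserted in the text preceding the statement.
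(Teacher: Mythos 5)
Your proposal is correct and follows essentially the same route as the paper: compute the infinitesimal action generated by $\Lambda$ via \eqref{T27}, factor the contributions of $\Lambda_L\Lambda_R$ and the $\bb(w^\alpha)$, and observe that in the variables $(g_R,b_R,v(1),\dots,v(d))$ the various conjugations collapse so that every object is acted on by the single element $(b_L^{-1}Xb_L)_{\u(n)}$ (the infinitesimal version of $\tilde\eta$), after which one checks that \eqref{H6} is a left $\U(n)$-action with this infinitesimal generator. This is precisely the structure of the paper's proof, which likewise leaves the verification that \eqref{H6} is a group action as a straightforward final step.
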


\begin{proof}
In order to avoid clumsy formulae and the introduction of further notations, in what follows we identify the variables
$g_R, b_R,  B_\alpha, w^\alpha$ and so on with
the associated evaluation functions on $M$, $\cM$ and $\C^n$. We shall also use the infinitesimal dressing action
corresponding to \eqref{Dress}, which has the form
\be
\dress_X(b) = b (b^{-1} X b)_{\b(n)},
\qquad \forall X\in \u(n),\,\, b\in \B(n).
\label{dress}\ee
For any $X\in \u(n)$, denote $X_\cM$, $X_M$, $X_{\C^n}$ the vector fields associated
with the moment maps $\Lambda: \cM\to \B(n)$, $\Lambda_L \Lambda_R: M \to \B(n)$ and $\bb: \C^n \to \B(n)$, respectively.
The formula of $X_M$ is known \cite{FK2,Kli} and $X_{\C^n}$ can be read off from \S\ref{ssec:Prim}.
In fact, we have
\be
\cL_{X_M} g_R = [(b_L^{-1} X b_L)_{\u(n)}, g_R],
\quad
\cL_{X_M} b_R = \dress_{(b_L^{-1} X b_L)_{\u(n)}} (b_R)
\label{H8}\ee
and
\be
\cL_{X_{\C^n}} w= X w, \quad \cL_{X_{\C^n}} \bb = \dress_{X}(\bb).
\ee
By using these, application of the definition \eqref{T27}  to the real and imaginary parts of the evaluation functions  gives
\be
\cL_{X_\cM} w^\alpha = \left((b_L B_{\alpha-1})^{-1} X b_L B_{\alpha-1}\right)_{\u(n)}\, w^\alpha,\quad
\cL_{X_\cM} B_\alpha= \dress_{ (b_L^{-1} X b_L)_{\u(n)}}(B_\alpha).
\label{H9}\ee
From the last two equalities, we obtain
\be
 \cL_{X_\cM} v(\alpha) = (b_L^{-1} X b_L)_{\u(n)}\, v(\alpha).
 \label{H10}\ee
In conclusion, we see that the vector field $X_\cM$ is encoded by the formula \eqref{H10} together with
\be
\cL_{X_\cM} g_R = [(b_L^{-1} X b_L)_{\u(n)}, g_R],
\qquad
\cL_{X_\cM} b_R = \dress_{(b_L^{-1} X b_L)_{\u(n)}} (b_R),
\label{H11}\ee
which follow from \eqref{H8} and the structure of $\Lambda$ \eqref{H1}. The completion of the
proof now requires checking
that the formula \eqref{H6} indeed gives a left-action of $\U(n)$ on $\cM$, whose infinitesimal
version reproduces the vector field $\cM$ found above. These last steps require some lines but are
fully straightforward, and thus we omit further details.
\end{proof}

\begin{remark} \label{R:act}
The action \eqref{H6} on $\cM$ is called (extended) quasi-adjoint action, since if we forget the
$v(\alpha)$ then it becomes the quasi-adjoint action on $M$ that goes back to \cite{Kli}.
At any fixed $(b_R, g_R)$, the map $\eta \mapsto \tilde \eta$ that appears in \eqref{H6} is a diffeomorphism
on $\U(n)$, and thus the quasi-adjoint action and the so-called \emph{obvious
action}  have the same orbits. The obvious action, denoted $A: \U(n) \times \cM \to \cM$,
operates as follows:
\be
 A_g(g_R,b_R,v) := (g g_R g^{-1}, \Dress_g(b_R), g v), \qquad \forall g\in \U(n),\, (g_R,b_R,v)\in \cM,
\label{obvA}\ee
where
\be
v:= (v(1), \dots, v(d)) \quad \hbox{and}\quad gv:= (g v(1),\dots, g v(d)).
\label{vdef}\ee
\end{remark}

We are interested in the reduction of $\cM$ defined by imposing the
moment map constraint
\be
\Lambda = e^\gamma \1_n \quad\hbox{with a fixed constant}\quad \gamma >0.
\label{momconst}\ee
The corresponding reduced phase space is
\be
\cM_{\red} = \Lambda^{-1}(e^\gamma \1_n)/\U(n).
\label{Mred}\ee
According to Remark \ref{R:act}, it does not matter whether we use the quasi-adjoint or the obvious
action for taking the quotient.

Denote $C^\infty(\cM)^{\U(n)}$ the $\U(n)$ invariant functions on $\cM$.
We may identify $C^\infty(\cM_\red)$ as the restriction of $C^\infty(\cM)^{\U(n)}$
to the `constraint surface' $\Lambda^{-1}(e^\gamma \1_n)$.
Then $C^\infty(\cM_\red)$ is naturally a Poisson algebra, with bracket denoted $\{\ ,\ \}_\red$.
This is obtained by using that the Poisson bracket of any two invariant functions is again invariant,
and its restriction to $\Lambda^{-1}(e^\gamma \1_n)$ depends only on the restrictions of the two
functions themselves.
One sees this relying on the  first class \cite{HT} character of the constraints
that appear in \eqref{momconst}.

Since the elements of $\fH$ \eqref{T34} are $\U(n)$ invariant,
they give rise to an Abelian Poisson subalgebra, $\fH_\red$, of $C^\infty(\cM_\red)$.
The flows of the elements of $\fH_\red$ on $\cM_\red$ result by projection of the free flows \eqref{T35}, see Section \ref{sec:N}.

\begin{remark} \label{Rem:EmbComplex}
 By using that \eqref{momconst} can be written equivalently
as  $\Lambda \Lambda^\dagger = e^{2 \gamma} \1_n$, it is not difficult to see that the triple $(g_R, b_R, v)$ belongs
to $\Lambda^{-1}(e^\gamma \1_n)$ if and only if it  satisfies
\be
e^{2\gamma} g_R^{-1} (b_R b_R^\dagger)  g_R - (b_R b_R^\dagger)  =
\sum_{\alpha=1}^d v(\alpha) v(\alpha)^\dagger.
\label{momconstdag}\ee
We notice that the set of the triples  $(g_R, b_Rb_R^\dagger, v)$ subject to \eqref{momconstdag} is a subset
of the set $\cM_{n,d,q}^\times$ defined in \cite{CF2}, which
contains the elements  $(X,Z, \cA_1,\ldots, \cA_d, \cB_1,\ldots, \cB_d)$ satisfying
\be
q^{-1} X Z X^{-1} - Z = \sum_{\alpha=1}^d \cA_\alpha \cB_\alpha
\label{CFconst}\ee
and the invertibility conditions
\be
\Bigl(Z + \sum_{\alpha=1}^k \cA_\alpha \cB_\alpha\Bigr)\in \GL, \quad \forall k=1,\dots, d,
\label{invcond}\ee
where $q$ is a nonzero complex constant, $X, Z \in \GL$, $\cA_\alpha \in \C^{n\times 1}$ and $\cB_\alpha\in \C^{1\times n}$,
for $\alpha=1,\dots, d$. (These are equations (4.3) and (4.4) in \cite{CF2}.)
It is known that if $q$ is not a root of unity, then the  action of $\GL$ on $\cM_{n,d,q}^\times$,  defined by
\be
g.(X,Z, \cA_\alpha, \cB_\alpha) := (g X g^{-1}, g Z g^{-1}, g \cA_\alpha, \cB_\alpha g^{-1}), \quad \forall g\in \GL,
\label{CFact}\ee
is  \emph{free}.
As explained in \cite{CF1,CF2}, this goes back to results
in representation theory \cite{CBS}.
Direct comparison of \eqref{momconstdag} and \eqref{CFconst}, and of the corresponding group actions,
shows that the $\U(n)$ action \eqref{obvA}  on our `constraint surface' $\Lambda^{-1}(e^\gamma \1_n)$ is  \emph{free}.
In our case the invertibility conditions \eqref{invcond} hold in consequence of the following identities that generalize \eqref{H7}:
\be
b_R b_R^\dagger + \sum_{\alpha=1}^k v(\alpha) v(\alpha)^\dagger  = \left(b_R \bb(w^1)\cdots \bb(w^k) \right)
\left(b_R \bb(w^1)\cdots \bb(w^k)\right)^\dagger.
\label{invid}\ee
Because $\U(n)$ acts freely on it,  \emph{ $\Lambda^{-1}(e^\gamma \1_n)$ is an embedded submanifold of $\cM$, and $\cM_\red$ \eqref{Mred}
is a smooth symplectic manifold, whose Poisson algebra coincides with
$\left(C^\infty(\cM_\red), \{\ ,\ \}_\red\right)$ presented in the preceding paragraph}.
Furthermore, since $(\cM, \Omega_{\cM})$ is actually a real-analytic symplectic manifold
and the formulae of the $\U(n)$ action and the moment map are all given by
real-analytic functions, $\cM_\red$ is also a \emph{real-analytic} symplectic manifold.
For the underlying general theory, the reader may consult \cite{OR}, and also appendix D in \cite{FG}.
\end{remark}

\begin{remark}
 We will eventually prove the degenerate integrability of the reduced system
by taking advantage of the following functions on $\cM$:
\begin{equation}
 I_{\alpha \beta}^k := \tr\left( v(\alpha)v(\beta)^\dagger L^k \right) = v(\beta)^\dagger L^k v(\alpha) \,, \quad 1\leq \alpha,\beta \leq d,\,\, k\geq 0\,,
\label{Eq:Int}\end{equation}
 where
\be
L := b_R b_R^\dagger.
\label{H17}\ee
 The identity \eqref{Dressconj} shows that $L$ transforms by conjugation,  and therefore
 these integrals of motion are invariant under the $\U(n)$ action \eqref{obvA} on $\cM$.
Their real and imaginary parts  descend to \emph{real-analytic} functions
on the reduced phase space.
\end{remark}

\subsection{Solution of the constraint  in terms of \texorpdfstring{$Q$}{Q} and dressed spins}
\label{ssec:Hbis}

Our fundamental task is to  describe  the set of $\U(n)$ orbits in
the `constraint surface' $\Lambda^{-1}(e^{\gamma} \1_n)$.
For this purpose, it will be convenient to label the points of $\cM$ by $g_R$, $L$  and
$v= (v(1), \dots, v(d))$ using that $L$ is given by \eqref{H17}.
In the various arguments we shall also employ alternative variables.

Since $g_R$ can be diagonalized by conjugation, we see from the form of the $\U(n)$ action \eqref{H6}
(or \eqref{obvA})
that every $\U(n)$ orbit lying in the constraint surface intersects
the set
\be
\cM_0 := \Lambda^{-1}( e^\gamma \1_n) \cap \Xi_R^{-1}(\T^n),
\label{H18}\ee
 where $\T^n$ is the subgroup of diagonal matrices in $\U(n)$.
Below,
\be
Q:= \diag(Q_1,\dots, Q_n)
\label{H19}\ee
stands for an element of  $\T^n$,
and  $\Ad_{Q^{-1}}$ denotes conjugation by $Q^{-1}$.
For any $\gamma \in \R^*$,
$\left(  e^{2\gamma} \Ad_{Q^{-1}} - \id \right)$ is an invertible linear operator on $\gl$,
 which
preserves the subspace of Hermitian matrices.
After this preparation, we present a useful characterization of  $\cM_0$.

\begin{proposition} \label{Pr:Lexp}
If $(Q,L,v) \in \cM_0$ \eqref{H18}, then $L$ can be expressed in terms of $Q$ and $v$ as follows:
\be
L = \left(  e^{2\gamma} \Ad_{Q^{-1}} - \id \right)^{-1} \left(\sum_{\alpha=1}^d v(\alpha) v(\alpha)^\dagger\right).
\label{H20}\ee
For the matrix elements of $L$, this gives
\be
L_{ij} = \frac{F_{ij}}{ e^{2\gamma} Q_j Q_i^{-1} - 1}
\quad\hbox{with}\quad
F:=\sum_{\alpha=1}^d v(\alpha) v(\alpha)^\dagger.
\label{H21}\ee
Conversely, if the Hermitian matrix $L$ given by the formula \eqref{H20} is positive definite,   then
$(Q,L,v)\in \cM_0$.
\end{proposition}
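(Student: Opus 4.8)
The plan is to unpack the moment map constraint \eqref{momconst} in the variables $(Q,L,v)$, using the equivalent Hermitian form of the constraint that was already recorded in Remark \ref{Rem:EmbComplex}. On $\cM_0$ we have $\Xi_R = Q \in \T^n$, so $g_R = Q$ and \eqref{momconstdag} becomes
\be
e^{2\gamma} Q^{-1} L Q - L = \sum_{\alpha=1}^d v(\alpha) v(\alpha)^\dagger,
\ee
where $L = b_R b_R^\dagger$. Rewriting the left-hand side as $\bigl(e^{2\gamma}\Ad_{Q^{-1}} - \id\bigr)(L)$ and inverting the operator — which is legitimate because $Q$ is unitary, hence $\Ad_{Q^{-1}}$ has eigenvalues of modulus one, so $e^{2\gamma}\Ad_{Q^{-1}} - \id$ has eigenvalues $e^{2\gamma}Q_jQ_i^{-1}-1$, all nonzero for $\gamma>0$ (indeed for any $\gamma \in \R^*$) — yields \eqref{H20} at once. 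The matrix-element formula \eqref{H21} follows by diagonalizing $\Ad_{Q^{-1}}$ in the standard basis: applied to the elementary matrix $e_{ij}$ it multiplies by $Q_j Q_i^{-1}$, so the $(i,j)$ entry of the inverse operator applied to $F = \sum_\alpha v(\alpha)v(\alpha)^\dagger$ is $F_{ij}/(e^{2\gamma}Q_jQ_i^{-1}-1)$, with $F_{ij}= \sum_\alpha v(\alpha)_i \overline{v(\alpha)_j}$ as claimed.

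For the converse, I would argue as follows. Suppose $Q\in\T^n$, $v\in\C^{n\times d}$ are given and the Hermitian matrix $L$ defined by \eqref{H20} happens to be positive definite. Then $L$ admits a unique factorization $L = b_R b_R^\dagger$ with $b_R\in\B(n)$ (the Cholesky-type decomposition underlying the diffeomorphism $\B(n)\simeq\fP(n)$, $b\mapsto bb^\dagger$, recalled before \eqref{T21}). Set $g_R := Q$. The point $(g_R,b_R,v)$ is then a bona fide point of $\cM = M\times\C^{n\times d}$, since $M\cong \GL$ and the pair $(g_R,b_R)\in\U(n)\times\B(n)$ determines a unique $K\in\GL$ via $m_1^{-1}$ in \eqref{T16}. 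By construction $L = b_Rb_R^\dagger$ satisfies $e^{2\gamma}g_R^{-1}Lg_R - L = \sum_\alpha v(\alpha)v(\alpha)^\dagger$, which is exactly \eqref{momconstdag}; by the equivalence noted in Remark \ref{Rem:EmbComplex}, $(g_R,b_R,v)\in\Lambda^{-1}(e^\gamma\1_n)$. Since moreover $g_R = Q\in\T^n$, i.e. $\Xi_R(K)\in\T^n$, we conclude $(Q,L,v)\in\cM_0$ as defined in \eqref{H18}. (Strictly one should note that the identification of the point with the triple $(Q,L,v)$ uses $L$ in place of $b_R$, which is harmless since $b_R\leftrightarrow L$ is a bijection $\B(n)\leftrightarrow\fP(n)$.)

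The only real subtlety is the positive-definiteness hypothesis in the converse: the formula \eqref{H20} always produces a Hermitian matrix, but not every $(Q,v)$ gives a positive definite one, so this condition cannot be dropped, and one should not claim more than that \eqref{H20} recovers $L$ whenever $(Q,L,v)\in\cM_0$ and, conversely, that positivity of the right-hand side of \eqref{H20} is sufficient to land in $\cM_0$. I expect no genuine obstacle here; the one point to handle with a little care is to make sure the Hermiticity of $L$ from \eqref{H20} is transparent — this follows because $\bigl(e^{2\gamma}\Ad_{Q^{-1}}-\id\bigr)$ maps Hermitian matrices to Hermitian matrices (as stated in the excerpt just before the proposition) and hence so does its inverse, and the source $\sum_\alpha v(\alpha)v(\alpha)^\dagger$ is manifestly Hermitian.
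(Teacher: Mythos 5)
Your proof is correct and its substance coincides with the paper's, but the two are organized differently: you route both directions through the unproved equivalence stated in Remark \ref{Rem:EmbComplex} (that $(g_R,b_R,v)\in\Lambda^{-1}(e^\gamma\1_n)$ iff \eqref{momconstdag} holds), whereas the paper's proof re-derives that equivalence for the case $g_R=Q\in\T^n$ instead of citing it. Concretely, the paper starts from \eqref{T13} to get $b_L=Q^{-1}b_R^{-1}Q$ on $\cM_0$, substitutes into $\Lambda\Lambda^\dagger=e^{2\gamma}\1_n$, and uses Lemma \ref{L:id1vv} to turn $(b_Rb_1\cdots b_d)(b_Rb_1\cdots b_d)^\dagger$ into $L+F$, arriving at $e^{2\gamma}Q^{-1}LQ-L=F$; for the converse it explicitly reconstructs $w^1,\dots,w^d$ from $b_R$ and $v$ and verifies \eqref{H24} directly. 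Your version is shorter and more modular, at the cost of treating Remark \ref{Rem:EmbComplex} as a black box — note that the remark itself relies on exactly the ingredients (\eqref{T13}, \eqref{bbdag}/\eqref{invid}) that the paper's proof unpacks, so the paper's proof can be read as supplying the missing verification of that remark on $\cM_0$. One small expository gap on your side: to view $(g_R,b_R,v)$ as a point of $\cM=M\times\C^{n\times d}$ you must reconstruct not only $K$ from $(g_R,b_R)$ but also the primary spins $W=(w^1,\dots,w^d)$ from $(b_R,v)$; the latter is the nontrivial half of Lemma \ref{L:NewVar}, and you only mention the $b_R\leftrightarrow L$ bijection and the $(g_R,b_R)\mapsto K$ map. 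This is easily patched by citing \eqref{w}, so it is not a genuine flaw, but it should be said.
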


\begin{proof}
If $g_R= Q \in \T^n$, then  we have
\be
b_L Q^{-1} = Q^{-1} Q b_L Q^{-1} = Q^{-1} b_R^{-1},
\label{H22}\ee
showing that $b_L = Q^{-1} b_R^{-1} Q$.  Therefore, on $\cM_0$ the moment map $\Lambda$ \eqref{H1} reads
\be
\Lambda(Q, L, v) = Q^{-1} b_R^{-1} Q b_R \bb(w^1) \bb(w^2) \cdots \bb(w^d),
\label{H23}\ee
where $b_R$ and the $w^\alpha$ are viewed as functions of $L$ and $v$, given by the invertible relations of  equation
\eqref{H17} and Definition \ref{Def:31}.
We substitute this into the following equivalent form of the moment map constraint \eqref{momconst},
\be
\Lambda(Q, L, v) \Lambda(Q, L, v)^\dagger = e^{2\gamma} \1_n,
\label{H24}\ee
and thus obtain the  requirement
\be
(b_R b_1 b_2 \cdots b_d) ( b_Rb_1 b_2 \cdots b_d)^\dagger = e^{2\gamma} Q^{-1} L Q
\quad\hbox{with}\quad b_\alpha= \bb(w^\alpha).
\label{H25}\ee
By using  Lemma \ref{L:id1vv} and the definitions of $L$ and $F$,    this in turn is equivalent to
\be
 e^{2 \gamma} Q^{-1} L Q - L = F.
 \label{H26}\ee
 It follows that if $(Q,L,v)\in \cM_0$, then $L$ is given by the formula \eqref{H20}.

 To deal with the converse statement, notice that $L$ as given by the formula \eqref{H20}
 is Hermitian and automatically satisfies \eqref{H26}, but its positive definiteness is a non-trivial condition
 on the pair $(Q,v)$.
 Suppose that $L$ \eqref{H20} is positive definite. Then there exists a unique $b_R\in \B(n)$ for which
 $L = b_R b_R^\dagger$. Defining $v^\alpha := b_R^{-1} v(\alpha)$ (cf. \eqref{H3}), we can convert \eqref{H26} into
 \be
 e^{2\gamma} b_R^{-1} Q^{-1} b_R Q   (b_R^{-1} Q^{-1} b_R Q)^\dagger = \1_n + \sum_{\alpha=1}^d v^\alpha (v^\alpha)^\dagger.
 \label{H27}\ee
 Then there exists unique $w^1,\dots, w^d$ from $\C^n$ for which \eqref{H3} holds, and by \eqref{H2} and \eqref{H7}
 these variables satisfy
 \be
 \1_n + \sum_{\alpha=1}^d v^\alpha (v^\alpha)^\dagger = (\bb(w^1)\cdots \bb(w^d)) ( \bb(w^1) \cdots \bb(w^d))^\dagger.
 \label{H28}\ee
 Inserting into \eqref{H27}, and
 using \eqref{H23},
 we see that  \eqref{H27} implies the constraint equation \eqref{H24},
  whereby the proof is complete.
  \end{proof}

 We have the following consequence of Proposition \ref{Pr:Lexp}.

 \begin{corollary}
Let $L(Q,v)$ be given by \eqref{H20} and define the set
\be
\cP_0 := \{ (Q,v) \in \T^n \times \C^{n\times d}\mid L(Q,v) \,\hbox{is positive definite}\,\}.
\label{H29}\ee
The formula \eqref{H20} establishes a bijection between  $\cM_0$ \eqref{H18} and
$\cP_0$, which is an open subset of  $\T^n \times \C^{n\times d}$.
\end{corollary}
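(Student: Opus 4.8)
The statement is essentially a repackaging of Proposition \ref{Pr:Lexp}, so the plan is short. First I would make the claimed bijection explicit. Consider the projection $\pi\colon \cM_0 \to \T^n \times \C^{n\times d}$ sending $(Q,L,v)$ to $(Q,v)$. By the first part of Proposition \ref{Pr:Lexp}, if $(Q,L,v)\in\cM_0$ then $L$ equals $L(Q,v)$ as given by \eqref{H20}; moreover, on $\cM_0$ we have $L=b_Rb_R^\dagger$ with $b_R\in\B(n)$, so $L$ is positive definite and hence $\pi$ takes its values in $\cP_0$. By the converse part of Proposition \ref{Pr:Lexp}, if $(Q,v)\in\cP_0$ then the Hermitian matrix $L(Q,v)$ is positive definite by definition of $\cP_0$, and therefore $(Q,L(Q,v),v)\in\cM_0$. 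Since $L$ is uniquely determined in both directions — as $L(Q,v)$ via \eqref{H20} going one way, and tautologically going the other — the assignments $(Q,L,v)\mapsto(Q,v)$ and $(Q,v)\mapsto(Q,L(Q,v),v)$ are mutually inverse, so $\pi$ restricts to a bijection $\cM_0 \to \cP_0$ realised by \eqref{H20}.

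Second, I would verify that $\cP_0$ is open. The key point, already used in the paragraph preceding the corollary, is that the linear operator $e^{2\gamma}\Ad_{Q^{-1}}-\id$ on $\gl$ is invertible for \emph{every} $Q\in\T^n$: its eigenvalues are the numbers $e^{2\gamma}Q_jQ_i^{-1}-1$, and these never vanish because $\lvert e^{2\gamma}Q_jQ_i^{-1}\rvert = e^{2\gamma}>1$ for $\gamma>0$. Consequently the right-hand side of \eqref{H20} defines a Hermitian-matrix-valued function $(Q,v)\mapsto L(Q,v)$ that is real-analytic, in particular continuous, on all of $\T^n\times\C^{n\times d}$. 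Positive definiteness is an open condition on Hermitian matrices (for instance, the lowest eigenvalue depends continuously on the matrix and $\{\,\text{lowest eigenvalue}>0\,\}$ is open; equivalently, the positive-definite cone is an open subset of the real vector space $\Herm$). Hence $\cP_0$, being the preimage of this open cone under the continuous map $L(\cdot,\cdot)$, is open in $\T^n\times\C^{n\times d}$.

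There is no genuine obstacle here, since all the substance lies in Proposition \ref{Pr:Lexp}; the only point requiring care is the global invertibility of $e^{2\gamma}\Ad_{Q^{-1}}-\id$ over the whole torus, which is exactly where the hypothesis $\gamma>0$ is essential (for $\gamma=0$ the operator degenerates along the diagonal and the construction collapses). If one wishes, one can add the remark that $\pi$ and its inverse are real-analytic, so that $\cM_0$ inherits a real-analytic manifold structure from the open set $\cP_0$; but this is not needed for the statement as phrased.
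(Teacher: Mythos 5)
Your proposal is correct and follows the same route the paper intends: the corollary is stated as an immediate consequence of Proposition \ref{Pr:Lexp}, and your argument simply spells out the two directions of that proposition (with $L=b_Rb_R^\dagger$ forcing positive definiteness on $\cM_0$) plus the openness of $\cP_0$ via continuity of $(Q,v)\mapsto L(Q,v)$, which rests on the invertibility of $e^{2\gamma}\Ad_{Q^{-1}}-\id$ for all $Q\in\T^n$ already noted in the paper before the proposition. No gaps.
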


\medskip

Let us call $Q$  \emph{regular} if it belongs to
 \be
 \T^n_\reg:= \{ Q = \diag(Q_1,\dots, Q_n)\mid Q_i\in \U(1),\,  Q_i \neq Q_j \,\,\, \forall i\neq j\}.
\label{H30}\ee
Define
\be
\cM_0^\reg:= \{ (Q, L(Q,v), v)\in \cM_0 \mid Q \in \T^n_\reg \},
\label{H31}\ee
where $L(Q,v)$ is specified by \eqref{H20}.
Notice that any $g \in \U(n)$ for which $A_g$ \eqref{obvA} maps an element of $\cM_0^\reg$ to $\cM_0^\reg$ must
belong to the normalizer $\cN(n)$ of $\T^n$ inside $\U(n)$. Therefore, the quotient of the
regular part of the constraint surface by $\U(n)$, denoted $\cM^\reg_\red$,
can be identified as
\be
\cM^\reg_\red = \cM_0^\reg/\cN(n),
\label{H32}\ee
where
the quotient refers to the restriction of the obvious $\U(n)$ action \eqref{obvA} to the subgroup $\cN(n) < \U(n)$.

Let us call $Q\in \T^n$ \emph{admissible} if $Q\in \Xi_R^{-1}\left(\Lambda^{-1}(e^\gamma \1_n)\right)$.
In the next section,  we present an alternative procedure for solving the moment map constraint \eqref{momconst}, which will show
that all elements of $\T^n_\reg$ are admissible.

\begin{remark}
Equation \eqref{H26} implies $(e^{2\gamma} -  1) \tr(L) = \tr(F)$, and this can hold for a non-zero real $\gamma$ only
if $\gamma >0$ and $\tr(F)>0$, since $L$ must be positive definite and $\tr(F) \geq 0$ by the definition \eqref{H21}.
  This is why we assumed that $\gamma>0$.
It would be desirable to describe the elements of the set $\cP_0$ \eqref{H29} explicitly.
For $d=1$ the solution of this problem can be read off from \cite{FK2}.
On account of the next two observations, we expect that the structure of $\cP_0$ is  very different for $d<n$ and for $d\geq n$.
First, let us notice that $Q=\1_n$ is not admissible if $d<n$, since in this case the rank of $L$ given by the formula
\eqref{H20} is at most $d$, while the rank of  any positive definite $L$ is $n$.
Second, note that if $d\geq n$, then we can arrange to have $F=\1_n$ by suitable choice of $v$. Let $v_0$ be such a choice.
Then $L(\1_n, v_0)$ is a positive multiple of $\1_n$, and
therefore there is an open neighbourhood of $(\1_n, v_0)$ in $\T^n \times \C^{n\times d}$ that belongs to $\cP_0$.
\end{remark}

\subsection{Solution of the constraint  in terms of \texorpdfstring{$Q, p$}{Q,p} and primary spins}
\label{ssec:prim}

Now we return to using the variables
$g_R, b_R$ and $W$ \eqref{Wdef} for labeling the points of $\cM$.
We can uniquely decompose every element $b\in \B(n)$ as the product of a diagonal matrix, $b_0$, and
an upper triangular matrix, $b_+$, with unit diagonal. Applying this to $b= b_R$ we write
\be
b_R= b_0 b_+,
\label{H34}\ee
and introduce also
\be
S(W):= \bb(w^1) \bb(w^2) \cdots \bb(w^d)=: S_0(W) S_+(W).
\label{H35}\ee
The moment map constraint on $\cM_0$ \eqref{H18} reads
\be
\Lambda(Q,b_R, W) =  Q^{-1} b_R^{-1} Q b_R S(W) =  Q^{-1} b_+^{-1} Q b_+ S_0(W) S_+(W) = e^\gamma \1_n.
\label{H36}\ee
Since $b_0$ drops out from the formula of $\Lambda$, it is left arbitrary, and we parametrize it as
\be
b_0 = e^p
\quad\hbox{with}\quad p= \diag(p_1,\ldots, p_n), \quad p_i \in \R.
\label{H37}\ee
A crucial observation is that \eqref{H36} can be separated according to the diagonal
and strictly upper-triangular parts, since it is equivalent to the two requirements
\be
 S_0(W) = e^\gamma \1_n
\label{H38} \ee
 and
 \be
   Q^{-1} b_+^{-1} Q b_+  S_+(W) = \1_n.
   \label{H39}\ee
  The constraint \eqref{H38} is responsible for a reduction of the primary spin variables.
 Next we make a little detour and
  present a general analysis of such reductions.

Let us introduce the map $\phi: \C^{n\times d} \to \b(n)_0$ by writing
\be
S_0(W):= \exp(\phi(W)),
\label{H47}\ee
and notice from Remark \ref{Rem:Tact} that $\phi$  is the moment map for the ordinary Hamiltonian action
of $\T^n$ on the symplectic manifold $(\C^{n\times d}, \Omega_\cW)$ of the primary spins.
Here, the dual of the Lie algebra of the
torus $\T^n< \U(n)$ is identified with the space $\b(n)_0$ of real diagonal matrices.
The torus action in question is given by
\be
\tau \cdot (w^1,\dots, w^d) = (\tau w^1,\dots, \tau w^d), \qquad \forall \tau \in \T^n.
\label{H48}\ee
 Taking any  moment map value from the range of $\phi$,
\be
\Gamma:= \diag(\gamma_1,\dots, \gamma_n),
\label{H49}\ee
we define the reduced space of primary spins:
\be
\C^{n\times d}_\red(\Gamma):= \phi^{-1}(\Gamma)/\T^n.
\label{H50}\ee

\begin{proposition} \label{Prop:proper}
The moment map $\phi: \C^{n\times d} \to \b(n)_0$ defined by \eqref{H47} with \eqref{H35}
is proper, i.e., the inverse image of any
compact set is compact. Fixing any moment map value $\Gamma$ for which $\gamma_j>0$ for all $j$,
the reduced spin-space \eqref{H50} is a smooth, compact and connected symplectic manifold of dimension
$2n(d-1)$.
\end{proposition}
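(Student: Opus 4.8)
The plan is to prove the four assertions in turn: properness, then compactness, then smoothness together with the dimension count, and finally connectedness; throughout I would use the explicit form of $\phi$ provided by Proposition~\ref{Pr:Zak}. Since the diagonal part of a product of upper triangular matrices is the product of the diagonal parts, combining \eqref{H35}, \eqref{H47} with the values $\bb_{jj}(w)=\sqrt{\cG_j/\cG_{j+1}}$ from \eqref{bb} gives
\be
\phi(W)_{jj}=\tfrac12\sum_{\alpha=1}^d\bigl(\log\cG_j(w^\alpha)-\log\cG_{j+1}(w^\alpha)\bigr),\qquad j=1,\dots,n .
\ee
Summing over $j$ telescopes, and using $\cG_{n+1}\equiv1$ together with $\cG_1(w^\alpha)=1+|w^\alpha|^2$ one obtains $\tr\phi(W)=\tfrac12\sum_{\alpha=1}^d\log(1+|w^\alpha|^2)$, a sum of nonnegative terms. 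Hence on the inverse image of any compact set each $|w^\alpha|$ is bounded; that inverse image is also closed, so it is compact and $\phi$ is proper. In particular $\phi^{-1}(\Gamma)$ is compact, and therefore so is $\C^{n\times d}_\red(\Gamma)=\phi^{-1}(\Gamma)/\T^n$.

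For smoothness I would first observe that the isotropy group of $W$ under the torus action \eqref{H48} is trivial exactly when every row $(w^1_i,\dots,w^d_i)$ of $W$ is nonzero. On $\phi^{-1}(\Gamma)$ the constraint $\phi(W)_{jj}=\gamma_j>0$ reads $\prod_\alpha\cG_j(w^\alpha)/\cG_{j+1}(w^\alpha)=e^{2\gamma_j}>1$, and since each factor is $\geq1$ with equality iff $w^\alpha_j=0$, at least one $w^\alpha_j$ is nonzero, so the $j$-th row of $W$ is nonzero for every $j$. Thus $\T^n$ acts freely on $\phi^{-1}(\Gamma)$, and because the action is Hamiltonian and $\T^n$ is a torus this also forces $\Gamma$ to be a regular value of $\phi$ (the image of $d\phi_W$ is the annihilator of the Lie algebra of the isotropy group, which is now trivial). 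Hence $\phi^{-1}(\Gamma)$ is a submanifold of $\C^{n\times d}$ of codimension $n$ on which $\T^n$ acts freely and properly, and the Marsden--Weinstein theorem makes $\C^{n\times d}_\red(\Gamma)$ a smooth symplectic manifold of dimension $(2nd-n)-n=2n(d-1)$.

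It remains to show that $\phi^{-1}(\Gamma)$ is nonempty and connected. Since $\phi$ depends only on the moduli $|w^\alpha_i|$, the set $\phi^{-1}(\Gamma)$ equals $\mathfrak m^{-1}(B)$, where $\mathfrak m\colon\C^{n\times d}\to\R^{nd}_{\geq0}$ is the squared-modulus map $W\mapsto(|w^\alpha_i|^2)_{i,\alpha}$ and $B$ is cut out by $\prod_\alpha\bigl(1+\sum_{k\geq j}t^\alpha_k\bigr)=e^{2\sigma_j}$, $j=1,\dots,n$, with $\sigma_j:=\gamma_j+\dots+\gamma_n$ (these equations result by telescoping the constraints $\phi(W)_{jj}=\gamma_j$). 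The linear invertible substitution $g^\alpha_j:=1+\sum_{k\geq j}t^\alpha_k$ followed by $u^\alpha_j:=\log g^\alpha_j$ identifies $B$ with $\{(u^\alpha_j)\colon u^\alpha_1\geq\dots\geq u^\alpha_n\geq0\ \forall\alpha,\ \sum_\alpha u^\alpha_j=2\sigma_j\ \forall j\}$, which is the intersection of a convex polyhedral cone with an affine subspace, hence convex; it is nonempty because $u^\alpha_j=2\sigma_j/d$ lies in it (using $\sigma_1>\dots>\sigma_n>0$, valid since all $\gamma_k>0$). So $B$ is nonempty and connected. Now $\mathfrak m$ is the orbit map for the Hamiltonian action of the compact torus $\U(1)^{nd}$ rotating the coordinates, hence a closed quotient map with connected fibres, and the same is true of its restriction to the saturated closed set $\mathfrak m^{-1}(B)$; since a closed surjection with connected fibres over a connected base has connected total space, $\phi^{-1}(\Gamma)$, and hence $\C^{n\times d}_\red(\Gamma)$, is connected. (Alternatively, connectedness of the fibres of $\phi$ follows from the convexity theorem for proper moment maps of Hamiltonian torus actions.)

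I expect the connectedness of $\phi^{-1}(\Gamma)$ to be the only step requiring real thought: the trick is to recognize the constraint set as $\mathfrak m^{-1}$ of a set that becomes convex after the logarithmic change of variables. Properness, freeness of the $\T^n$-action, and the dimension count are all short once the diagonal part of $\phi$ has been written out explicitly.
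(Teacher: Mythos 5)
Your proof is correct, and its first three steps (properness, freeness of the $\T^n$-action, smoothness and dimension count) follow essentially the same route as the paper: the paper also observes that the diagonal of $S(W)$ telescopes so that $\phi_j\geq 0$, uses the $j=1$ case of the product $\prod_\alpha\cG_1(w^\alpha)=e^{2\sum_k\gamma_k}$ to bound $\sum_\alpha|w^\alpha|^2$, and shows that on $\phi^{-1}(\Gamma)$ every row of $W$ has a nonzero entry. (Your use of $\tr\phi$ rather than $\phi_1$ is an equivalent repackaging of the same telescoping.)

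Where you genuinely diverge is in the connectedness of $\phi^{-1}(\Gamma)$. The paper disposes of this by invoking a general theorem for proper moment maps of Hamiltonian torus actions (Theorem 4.1 of Hilgert--Neeb--Plank, cited as \cite{Hi}), which yields connectedness of all fibers. You instead give a self-contained, elementary argument: $\phi$ factors through the squared-modulus map $\mathfrak m$, the image constraint set becomes the intersection of a polyhedral cone with an affine subspace after the change of variables $t^\alpha_j\mapsto u^\alpha_j=\log\bigl(1+\sum_{k\geq j}t^\alpha_k\bigr)$, hence is convex and connected, and then lifting along $\mathfrak m$ (a closed quotient map with connected toral fibers) yields connectedness upstairs. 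Both approaches are valid. The paper's is shorter at the price of a nontrivial external citation; yours is longer but makes the convexity mechanism explicit and avoids the reference entirely, and in fact amounts to a proof by hand of the relevant special case of the convexity/connectedness theorem. Your parenthetical alternative at the end correctly identifies the paper's actual route.
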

\begin{proof}
We first prove that the map $\phi$ is proper.
Since the compact sets of Euclidean  spaces are the  bounded and closed sets,
and since $\phi$ is continuous,
it is enough to show that the inverse image of any bounded subset of $\b(n)_0\simeq \R^n$ is a bounded
subset of $\C^{n\times d}$.
Due to the definition of $\phi$ and equation \eqref{bb},
the formula of $\phi=\diag(\phi_1,\dots, \phi_n)$ is determined by
the equality
\be
\exp(2 \phi_j(W)) = \prod_{\alpha=1}^d \frac{\cG_j(w^\alpha)}{\cG_{j+1}(w^\alpha)}
= \prod_{\alpha=1}^d \left[ 1 + \frac{|w_j^\alpha|^2}{\cG_{j+1}(w^\alpha)}\right].
\label{H51}\ee
The second equality shows that $\phi_j(W)\geq 0$. On the other hand,
using the first equality and that $\cG_{n+1}=1$, we see that
\be
W\in \phi^{-1}(\diag(\gamma_1,\dots, \gamma_n))
\label{H52}\ee
if and only if
\be
\prod_{\alpha=1}^d \cG_j(w^\alpha) = \exp( 2\sum_{k=j}^n \gamma_k),
\qquad \forall j=1,\dots, n.
\label{H53}\ee
Now, if $\diag(\gamma_1,\dots, \gamma_n)$ is from a bounded set, then
$ \sum_{k=1}^n \gamma_k \leq C$
with some constant $C$.
By using this and the $j=1$ special case of \eqref{H53}, we obtain
\be
1 + \sum_{\alpha=1}^d |w^\alpha|^2 \leq \prod_{\alpha=1}^d (1 + |w^\alpha|^2)=
\prod_{\alpha=1}^d \cG_1(w^\alpha)
 \leq e^{2C},
\label{H54}\ee
which implies  that the inverse image of any bounded set is bounded.

If $\gamma_j >0$ for all $j$, then we see from the formula \eqref{H51}
that for any $W\in \Phi^{-1}(\Gamma)$ and for each $1\leq j \leq n$ there
must exist an index $1\leq \alpha(j) \leq d$ such that
\be
w_j^{\alpha(j)}\neq 0.
\ee
This implies immediately that the action \eqref{H48} of $\T^n$ is \emph{free} on $\phi^{-1}(\Gamma)$,
and therefore $\C^{n\times d}_\red(\Gamma)$ \eqref{H50} is a smooth symplectic manifold
of dimension $2n(d-1)$.
Since the moment map $\phi$ is proper, all its fibers $\phi^{-1}(\Gamma)$ are compact,
and by Theorem 4.1 in \cite{Hi} they are also connected. Hence $\C^{n\times d}_\red(\Gamma)$
 is also compact and connected.
\end{proof}

\medskip
\noindent
\begin{remark}
If $\gamma_j >0$ for all $j$, then $\C^{n\times d}_\red(\Gamma)$ is actually a \emph{real-analytic} symplectic manifold.
To cover  it with charts,
for any map $\mu: \{1,\dots, n\} \to \{ 1,\dots, d\}$  we introduce the set
\be
X(\mu):= \{ W \in \phi^{-1}(\Gamma)\mid w_j^{\mu(j)} \neq 0,\,\forall j\}.
\ee
Then the reduced spin-space is the union of the open subsets
$Y(\mu):= X(\mu)/\T^n$, and
a model of $Y(\mu)$ is provided by
\be
Z(\mu):= \{ W \in \phi^{-1}(\Gamma)\mid w_j^{\mu(j)} >0,\,  \forall j\}.
\ee
One can specify coordinates on $Z(\mu)$  by solving the constraints \eqref{H53}
for the $w_j^{\mu(j)}$ in terms of the remaining free variables, the $w_j^\alpha$ with $\alpha \neq \mu(j)$,
which take their values in a certain open subset of $\C^{n(d-1)}$.
It is an interesting exercise to fill out the details, and to also write down the
reduced symplectic form by using these charts.

If $d=1$,  then the reduced spin-space consists of a single point. This is also true
in the trivial case for which $\gamma_j = 0$ for all $j$.
If some of the $\gamma_j$ are zero and the others are positive, then
the moment map constraint $\phi(W)=\Gamma$ leads to a stratified symplectic space.
Finally, note that for the case corresponding to equation \eqref{H38} $\gamma_j = \gamma >0$ for all $j$.
\end{remark}

Now returning to our main problem,
 it is useful to recast \eqref{H39} in the form
 \be
 b_+ S_+(W) = Q^{-1} b_+ Q.
 \label{H40}\ee
 By using the principal gradation of $n\times n$ matrices,
 this equation can be solved recursively for $b_+$ if $S_+(W)$ and $Q$ are given, with $Q$ regular.
 In fact, the following lemma is obtained by a word-by-word application
 of the arguments  of Section 5 in \cite{F1}; hence we omit the proof.

 \begin{lemma}\label{Lm:bexplicit}
Suppose that $S_+=S_+(W)$ and $Q$ are given, with $Q\in \T^n_\reg$. Then equation \eqref{H40} admits
 a unique solution for $b_+$, denoted $b_+(Q,W)$. Using the notation
 \be
\cI^{a, a+ j} = \frac{1}{Q_{a+j} Q_a^{-1} - 1}, \quad a=1,\dots, n-1,
\ee
 and placing the matrix indices in the upstairs position, we have
\be
b_+^{a,a+1} = \cI^{a, a+1} S_+^{a,a+1},
\ee
and for  $k=2,\dots, n-a$ we have
\be
b_+^{a, a+k} = \cI^{a, a+k} S_+^{a, a+k} +
\sum_{\substack{m=2,\dots, k\\ (i_1,\dots, i_m)\in \N^m\\ i_1 + \dots + i_m = k} }
\prod_{\alpha=1}^m
\cI^{a, a + i_1 +\dots + i_\alpha} S_+^{a + i_1 + \dots + i_{\alpha -1}, a+ i_1 + \dots + i_\alpha}.
\label{explicit}\ee
  \end{lemma}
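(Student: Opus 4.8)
The plan is to solve equation \eqref{H40}, namely $b_+ S_+(W) = Q^{-1} b_+ Q$, degree by degree in the principal gradation, following verbatim the argument of Section 5 in \cite{F1}. Write $b_+ = \1_n + \sum_{k\geq 1}(b_+)_{(k)}$ and $S_+ = \1_n + \sum_{k\geq 1}(S_+)_{(k)}$, where $M_{(k)}$ denotes the part of a matrix $M$ supported on the $k$-th upper diagonal, with entries $M^{a,a+k}$. Since $Q$ is diagonal, $(Q^{-1}b_+Q)^{a,a+k} = Q_a^{-1}Q_{a+k}\,b_+^{a,a+k}$, while the degree-$k$ part of $b_+S_+$ is $(b_+)_{(k)} + (S_+)_{(k)} + \sum_{j=1}^{k-1}(b_+)_{(j)}(S_+)_{(k-j)}$, whose $(a,a+k)$ entry equals $b_+^{a,a+k} + S_+^{a,a+k} + \sum_{j=1}^{k-1} b_+^{a,a+j} S_+^{a+j,a+k}$. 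Hence \eqref{H40} is equivalent to the system
\be
(Q_{a+k}Q_a^{-1}-1)\, b_+^{a,a+k} = S_+^{a,a+k} + \sum_{j=1}^{k-1} b_+^{a,a+j}\, S_+^{a+j,a+k}, \qquad 1\leq a,\ \ 1\leq k\leq n-a.
\ee

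Because $Q\in\T^n_\reg$ we have $Q_{a+k}\neq Q_a$, so $(Q_{a+k}Q_a^{-1}-1)$ is invertible and dividing by it introduces the coefficient $\cI^{a,a+k}$. The right-hand side involves only entries $b_+^{a,a+j}$ with $j<k$, so the system determines $b_+^{a,a+k}$ uniquely by recursion on the diagonal index $k$, with base case $b_+^{a,a+1} = \cI^{a,a+1}S_+^{a,a+1}$. This establishes the existence and uniqueness of $b_+(Q,W)$.

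To obtain the closed formula \eqref{explicit} I would argue by induction on $k$. Assuming the stated expression for every $b_+^{a,a+j}$ with $j<k$, substitute it into $\cI^{a,a+k}\bigl(S_+^{a,a+k} + \sum_{j=1}^{k-1} b_+^{a,a+j}S_+^{a+j,a+k}\bigr)$. Each term of $b_+^{a,a+j}$ is labelled by a composition $(i_1,\dots,i_m)$ of $j$; multiplying by $\cI^{a,a+k}S_+^{a+j,a+k}$ appends a final part $i_{m+1}:=k-j$ and the factor $\cI^{a,a+k}S_+^{a+i_1+\dots+i_m,\,a+k}$. Conversely every composition of $k$ with at least two parts arises exactly once this way (from $j$ equal to the sum of all but its last part, together with the induced composition of $j$), and the composition $(k)$ is the lone term $\cI^{a,a+k}S_+^{a,a+k}$. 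Collecting terms reproduces \eqref{explicit}.

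The only delicate points are purely combinatorial: keeping the index range $a+k\leq n$ under control, organizing the induction on the degree $k$ rather than on $a$ so that the dependence on lower entries is genuinely lower-order, and matching compositions of $k$ with the nested recursion. No analytic input is needed beyond the regularity of $Q$, which is exactly what makes the scalar factors $(Q_{a+k}Q_a^{-1}-1)$ invertible at every step — the finite-dimensional analogue of the invertibility of $(e^{2\gamma}\Ad_{Q^{-1}}-\id)$ used in Proposition \ref{Pr:Lexp}. Since every step coincides with the computation in Section 5 of \cite{F1}, a complete write-up would merely transcribe it, which is why we omit the details.
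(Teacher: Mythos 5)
Your proof is correct and follows exactly the route the paper indicates — solving \eqref{H40} recursively in the principal gradation, inverting the scalar $(Q_{a+k}Q_a^{-1}-1)$ at each step thanks to $Q\in\T^n_\reg$, and identifying the nested recursion with a sum over compositions of $k$. The paper itself omits the proof and refers to Section 5 of \cite{F1}; what you have written is precisely the transcription of that argument to the present setting, so it matches the paper's intended proof.
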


Now we restrict ourselves to the regular part of $\cM_0$, stressing that it
is defined without reference to any particular parametrization:
\be
\cM_0^\reg \equiv \Lambda^{-1}( e^\gamma \1_n) \cap \Xi_R^{-1}(\T^n_\reg).
\label{H41}\ee
Any gauge transformation that maps an  element of
$\cM_0^\reg$ to $\cM_0^\reg$ is given by the obvious action \eqref{obvA} of the normalizer
$\cN(n)$ of $\T^n$ inside $\U(n)$.
The normalizer has the normal subgroup $\T^n$, and
the corresponding factor group is the permutation group
\be
S_n = \cN(n)/\T^n.
\ee
Consequently, we have
\be
\cM_\red^\reg = \cM_0^\reg/\cN(n) = (\cM_0^\reg/\T^n)/S_n.
\label{stages1}\ee
It is plain that $\cM_\red^\reg$ is a dense, open subset of the reduced phase space,
and the above consecutive quotients show that $\cM_0^\reg/\T^n$
is an $S_n$ covering space\footnote{More precisely,
 $\cM_0^\reg/\T^n$ is a principal fiber bundle
 with structure group $S_n$ over the base $\cM_\red^\reg$.}
of this dense open subset.

\begin{theorem}\label{Th:redsymp}
By  solving the moment map constraint for $b_R$ in the form
$b_R= e^p b_+(Q,W)$ as explained above,  the manifold $\cM_0^\reg$ \eqref{H41} can
be identified with the model space
\be
 \tilde \cP_0^\reg:=\T^n_\reg \times \b(n)_0 \times \phi^{-1}(\Gamma)=
 \{ (Q, p, W)\mid Q\in \T^n_\reg,\,\, p\in \b(n)_0,\,\,
 W\in \phi^{-1}(\Gamma) \},
 \ee
 where $\Gamma = \gamma \1_n$.
 Utilizing this model, the covering space
$\cM_0^\reg/\T^n$ of the regular part of the reduced phase space becomes identified with
the symplectic manifold
\be
T^*\T^n_\reg \times \C^{n\times d}_\red(\Gamma)
\ee
equipped with its natural product symplectic structure.
\end{theorem}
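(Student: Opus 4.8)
The plan is to treat the two identifications in turn: the first is essentially a bookkeeping of the material built up before the theorem, while the symplectic statement requires a genuine computation with the form \eqref{T24}.

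\emph{Step 1 (the model $\tilde\cP_0^\reg$).} On $\cM_0^\reg$ the constraint \eqref{momconst} splits, after writing $b_R=b_0b_+$ with $b_0=e^p$, $p\in\b(n)_0$, into the decoupled equations \eqref{H38}, i.e. $W\in\phi^{-1}(\Gamma)$ with $\Gamma=\gamma\1_n$, and \eqref{H40}, which by Lemma \ref{Lm:bexplicit} has the unique real-analytic solution $b_+=b_+(Q,W)$. Since $b_0$ is unconstrained, $(Q,p,W)$ are free coordinates and $(Q,p,W)\mapsto(g_R,b_R,v)=(Q,\,e^pb_+(Q,W),\,(B_{\alpha-1}w^\alpha)_{\alpha=1}^d)$ (cf. \eqref{H3}) is a real-analytic bijection onto $\cM_0^\reg$ with real-analytic inverse (recover $Q=\Xi_R$, then $W$ from $v$ via \eqref{w}, then $e^p$ from the diagonal part of $\Lambda_R$), hence a diffeomorphism $\tilde\cP_0^\reg\to\cM_0^\reg$.

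\emph{Step 2 (the residual action).} Next I would determine the obvious action \eqref{obvA} of the residual gauge group on $\tilde\cP_0^\reg$. For $\tau\in\T^n$ and $b\in\B(n)$ one has $\tau b=(\tau b\tau^{-1})\tau$ with $\tau b\tau^{-1}\in\B(n)$, $\tau\in\U(n)$, so $\Dress_\tau(b)=\Lambda_L(\tau b)=\tau b\tau^{-1}$; combined with the equivariance $\bb(\tau w)=\Dress_\tau(\bb(w))=\tau\,\bb(w)\,\tau^{-1}$ of the moment map $\bb$, this gives $S(\tau W)=\tau\,S(W)\,\tau^{-1}$, whence $S_0(\tau W)=S_0(W)$, $S_+(\tau W)=\tau\,S_+(W)\,\tau^{-1}$, and by the uniqueness in Lemma \ref{Lm:bexplicit}, $b_+(Q,\tau W)=\tau\,b_+(Q,W)\,\tau^{-1}$. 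Hence in the coordinates $(Q,p,W)$ the action reads $\tau\cdot(Q,p,W)=(Q,\,p,\,\tau\cdot W)$ with $\tau\cdot W=(\tau w^1,\dots,\tau w^d)$: trivial on $(Q,p)$, and the standard torus action on $W$. Since the quotient by $S_n=\cN(n)/\T^n$ is a covering, it follows that $\cM_0^\reg/\T^n\cong\T^n_\reg\times\b(n)_0\times\bigl(\phi^{-1}(\Gamma)/\T^n\bigr)=\T^n_\reg\times\b(n)_0\times\C^{n\times d}_\red(\Gamma)$ as real-analytic manifolds, which is $T^*\T^n_\reg\times\C^{n\times d}_\red(\Gamma)$ once $\b(n)_0\ni p=\diag(p_1,\dots,p_n)$ is read as the cotangent fibre over $Q=\diag(e^{\ri q_1},\dots,e^{\ri q_n})$.

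\emph{Step 3 (the symplectic form).} By standard symplectic reduction theory — applicable because $\U(n)$ acts freely on $\Lambda^{-1}(e^\gamma\1_n)$ by Remark \ref{Rem:EmbComplex} — the reduced form $\omega_\red$ on $\cM_\red^\reg$ pulls back along $\iota:\cM_0^\reg\hookrightarrow\cM$ to $\iota^*\Omega_\cM$, which is $\T^n$-basic with null distribution precisely the residual $\T^n$-orbit directions; so it is enough to evaluate $\iota^*\Omega_\cM=\iota^*\Omega_M+\iota^*\Omega_\cW$. The term $\iota^*\Omega_\cW$ depends only on $W$, and its restriction to $\phi^{-1}(\Gamma)$ followed by the $\T^n$-quotient is exactly the Marsden--Weinstein reduction of $(\C^{n\times d},\Omega_\cW)$ by the torus action with moment map $\phi$, yielding $\Omega_{\C^{n\times d}_\red(\Gamma)}$ (Proposition \ref{Prop:proper}). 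For $\iota^*\Omega_M$, on $\cM_0$ one has $\Xi_R=Q$, $\Xi_L=Q^{-1}$ (from \eqref{H22} and $K=g_Lb_R^{-1}=b_LQ^{-1}$), $\Lambda_R=b_R$, $\Lambda_L=Q^{-1}b_R^{-1}Q$; with $Q=e^{\ri q}$ the forms $d\Xi_R\,\Xi_R^{-1}=\ri\,dq$ and $d\Xi_L\,\Xi_L^{-1}=-\ri\,dq$ are diagonal, so on substituting into \eqref{T24} only the diagonal entries of $db_R\,b_R^{-1}$, $b_R^{-1}db_R$ and $b_R^{-1}(dq)b_R$ contribute; since $b_R,b_R^{-1}$ are upper triangular with $(b_R)_{jj}=e^{p_j}$, the first two diagonals equal $\diag(dp_1,\dots,dp_n)$ and the third equals $dq$ (so its self-wedge vanishes), and all $W$-dependence — which enters only off the diagonal through $b_+(Q,W)$ — drops out, leaving $\iota^*\Omega_M=\sum_{j=1}^n dp_j\wedge dq_j$, the canonical form of $T^*\T^n_\reg$. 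Since $\iota^*\Omega_M$ involves only $dp,dq$ while $\iota^*\Omega_\cW$ involves only $dW$, there are no cross terms and $\omega_\red$ pulls back to the cover $\cM_0^\reg/\T^n$ as the stated product symplectic form.

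The main obstacle is the explicit evaluation of $\iota^*\Omega_M$ just sketched: one has to check carefully that the $W$-dependence carried by the triangular factor $b_+(Q,W)$, together with all off-diagonal contributions in \eqref{T24}, cancel, leaving the canonical form — the mechanism being that $dq$ is diagonal and the diagonal blocks of the Maurer--Cartan forms of $b_R$ read off exactly $dp$. A secondary point needing care is the justification that $\iota^*\Omega_\cM$ on the partial gauge slice $\cM_0^\reg$ genuinely represents $\omega_\red$, with null foliation precisely the residual $\T^n$-orbits, which rests on the freeness of the $\U(n)$-action on $\Lambda^{-1}(e^\gamma\1_n)$ and standard reduction theory (cf. \cite{OR}).
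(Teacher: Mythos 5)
Your proposal is correct and follows essentially the same route as the paper: solve the split constraint to get the model space $\tilde\cP_0^\reg$, observe that the residual $\T^n$ action reads $(Q,p,W)\mapsto(Q,p,\tau\cdot W)$ (which you justify via $\Dress_\tau(b)=\tau b\tau^{-1}$, the equivariance of $\bb$, and the uniqueness in Lemma \ref{Lm:bexplicit}), and then compute $\xi_1^*\Omega_\cM=\sum_j dp_j\wedge dq_j+\xi_2^*\Omega_\cW$ by noting that only diagonal parts of the Maurer--Cartan forms of $b_R$ survive against the diagonal forms $d\Xi_{L,R}\,\Xi_{L,R}^{-1}=\mp\ri\,dq$, so the $W$-dependence in $b_+(Q,W)$ drops out. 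This is exactly the paper's "simple calculation" (its equation \eqref{xisymp}), with your write-up supplying details the paper leaves implicit.
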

\begin{proof}
The restriction of the action \eqref{obvA} to $\T^n$ translates into the action
\be
A_\tau (Q,p, W) = (Q,p,\tau \cdot W), \qquad \tau \in \T^n,
\ee
on the model space  $\tilde \cP_0^\reg$, from which we obtain the
identification $\cM_0^\reg/\T^n \simeq T^* \T^n_\reg \times \C^{n\times d}_\red(\Gamma)$
at the level of manifolds.
Let $\xi_1: \tilde \cP_0^\reg \to \cM $ and $\xi_2: \phi^{-1}(\Gamma) \to \C^{n\times d}$
denote the natural inclusions, and write $Q_j = e^{\ri q_j}$.  Then a simple calculation gives
\be
\xi_1^* (\Omega_\cM) = \sum_{j=1}^n dp_j \wedge dq_j + \xi_2^* (\Omega_\cW),
\label{xisymp}\ee
which proves the claimed identification at the level of \emph{symplectic} manifolds.
\end{proof}

\begin{corollary} \label{Cor:Dense}
The dense open submanifold $\cM_\red^\reg\subseteq \cM_\red$ is connected, and consequently
$\cM_\red$ is also connected.
\end{corollary}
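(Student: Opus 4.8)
The plan is to exhibit $\cM_\red^\reg$ as the total space of a fiber bundle with connected base and connected fibers, and then to argue that the resulting connected dense open set forces the whole $\cM_\red$ to be connected. From Theorem \ref{Th:redsymp} we know that $\cM_0^\reg/\T^n$ is (symplectomorphic to) $T^*\T^n_\reg \times \C^{n\times d}_\red(\Gamma)$, and from \eqref{stages1} that $\cM_\red^\reg = (\cM_0^\reg/\T^n)/S_n$. So the first step is to check that $T^*\T^n_\reg \times \C^{n\times d}_\red(\Gamma)$ is connected.

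For that product it suffices to handle each factor. The cotangent bundle $T^*\T^n_\reg$ is connected because its base $\T^n_\reg$ is: $\T^n_\reg$ consists of diagonal unitary matrices with pairwise distinct eigenvalues, and writing $Q_j = e^{\ri q_j}$ this is the complement in the $n$-torus of the `diagonal' hyperplanes $q_i \equiv q_j$, which is a connected open subset of $\T^n$ for $n\geq 1$ (indeed it is even path-connected: one can move the angles independently keeping them distinct, since removing finitely many codimension-one subtori from a torus of dimension $\geq 2$ leaves a connected set, and for $n=1$ it is the whole circle). The second factor $\C^{n\times d}_\red(\Gamma)$ with $\Gamma = \gamma\1_n$, $\gamma>0$, is connected by Proposition \ref{Prop:proper}. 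Hence the product, and therefore $\cM_0^\reg/\T^n$, is connected.

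Next, $\cM_\red^\reg = (\cM_0^\reg/\T^n)/S_n$ is a quotient of a connected space by a (finite) group action, hence connected, being the continuous image of a connected space under the quotient map. This gives that the dense open subset $\cM_\red^\reg \subseteq \cM_\red$ is connected. Finally, to pass from $\cM_\red^\reg$ to $\cM_\red$, I would use that $\cM_\red$ is a manifold (established in Remark \ref{Rem:EmbComplex}), hence locally connected, so its connected components are open; since $\cM_\red^\reg$ is dense, it meets every nonempty open set, in particular every connected component, and being itself connected it is contained in a single component; density then forces that component to be all of $\cM_\red$. This completes the argument.

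I do not expect a serious obstacle here; the only point needing a little care is the connectedness of $\T^n_\reg$ (equivalently of the complement of the braid arrangement inside the torus), but this is standard, and the passage from a dense connected subset to the whole space is immediate once one invokes local connectedness of the manifold $\cM_\red$. Everything else is packaging of results already proved in the excerpt, namely Theorem \ref{Th:redsymp}, equation \eqref{stages1}, Proposition \ref{Prop:proper}, and Remark \ref{Rem:EmbComplex}.
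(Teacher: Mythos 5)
There is a genuine error in your argument: $\T^n_\reg$ is \emph{not} connected for $n\geq 3$. The set $\T^n_\reg$ is the ordered configuration space of $n$ distinct points on the circle, and it is a classical fact that this space has $(n-1)!$ connected components, indexed by the cyclic orderings of the points. Your auxiliary claim that removing finitely many codimension-one subtori from $\T^n$ leaves a connected set is false in general: while each individual diagonal subtorus $\{q_i\equiv q_j\}$ does not separate the torus, their union does. For instance for $n=3$, in coordinates $u_1=q_2-q_1,\ u_2=q_3-q_2$, the removed locus is $\{u_1=0\}\cup\{u_2=0\}\cup\{u_1+u_2\equiv 0\}$, and the complement has exactly two components. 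Consequently $T^*\T^n_\reg\times\C^{n\times d}_\red(\Gamma)\simeq \cM_0^\reg/\T^n$ is also disconnected for $n\geq 3$, so your step asserting its connectedness fails.

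Your conclusion that $\cM_\red^\reg$ is connected is nevertheless correct, but the reason is exactly the extra quotient by $S_n$ that you treat as a triviality. The paper's argument is: the connected components of $\T^n_\reg$ (hence of $\tilde\cP_0^\reg$) are permuted \emph{transitively} by $S_n$; therefore the image of a \emph{single} connected component under the quotient map $\tilde\cP_0^\reg\to\cM_\red^\reg$ already covers all of $\cM_\red^\reg$, which is thus a continuous image of a connected set. You should replace "a quotient of a connected space by a finite group is connected" (which you invoke vacuously, since the space is not connected) by "a quotient of a space on whose components the group acts transitively is connected," and justify the transitivity. Your treatment of the remaining steps — invoking Proposition \ref{Prop:proper} for the connectedness of $\C^{n\times d}_\red(\Gamma)$, and passing from the dense connected open set to all of $\cM_\red$ — is correct, though the latter can be done in one line without local connectedness, since the closure of a connected set is connected and $\overline{\cM_\red^\reg}=\cM_\red$.
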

\begin{proof}
Since $\phi^{-1}(\Gamma)$ is connected by Proposition \ref{Prop:proper}, the connected components of
$\tilde \cP_0^\reg$ correspond to the connected components of $\T^n_\reg$.
It is well-known  (see e.g. the appendix in \cite{FA}) that any two connected components of
$\T^n_\reg$ are related by permutations. Thus $\cM_\red^\reg$
is the continuous image of a single connected component of $\tilde \cP_0^\reg$, implying
 its connectedness.
 The proof is finished by recalling that if a dense open subset of a topological
space is connected, then the space itself is connected.
\end{proof}

\begin{remark}\label{Rem:scaling}
In this long remark
we explain how the (trigonometric real form of the) spin Sutherland model of Gibbons and Hermsen \cite{GH}
can be obtained from our construction via a scaling limit. For this, we introduce a positive parameter $\epsilon$ and
replace the variables $p$ by $\epsilon\, p$ and $W$ by $\epsilon^{\frac{1}{2}} W$, while keeping $Q$ unchanged.
The formulae \eqref{bb} imply
\be
\bb( \epsilon^{\frac{1}{2}} w)_{kk}= 1 + \frac{1}{2} \epsilon |w_k|^2 + \mathrm{o}(\epsilon),\,\, \forall k
\quad \hbox{and}\quad
\bb(\epsilon^{\frac{1}{2}} w)_{ij} = \epsilon w_i \overline{w}_j + \mathrm{o}(\epsilon),
\,\,
\forall i<j.
\ee
By using this we see that the matrix $b=b_+$ in \eqref{H34}, given in explicit form by Lemma \ref{Lm:bexplicit},  has the expansion
\be
b(Q, \epsilon^{\frac{1}{2}} W)_{ij} = \epsilon (Q_j Q_i^{-1} -1)^{-1} \sum_{\alpha=1}^d w_i^\alpha \overline{w}_j^\alpha +  \mathrm{o}(\epsilon),
\quad \forall i<j.
\ee
Then, for $L= b_R b_R^\dagger$ with $b_R= \exp({\epsilon p}) b(Q, \epsilon^{\frac{1}{2}} W)$, we find
\be
\tr(L^{\pm 1}) = n \pm 2 \epsilon\, \tr (p) + 2 \epsilon^2 \tr ( p^2) +
\epsilon^2 \sum_{i<j} \frac{ \vert ( w_i^\bullet, w_j^\bullet)\vert^2 }{ \vert Q_j Q_i^{-1} - 1\vert^2}  + \mathrm{o}(\epsilon^2),
\label{Lexpand}\ee
where  $w_i^{\bullet}\in \C^d$ with components $w_i^\alpha$, and
$(w_i^\bullet, w_j^\bullet):= \sum_{\alpha=1}^d  w_i^\alpha \overline{w}_j^\alpha$.
Therefore, we obtain
\be
\lim_{\epsilon \to 0} \frac{1}{8 \epsilon^2}(\tr(L) + \tr(L^{-1}) -2n)
= \frac{1}{2}  \tr ( p^2) + \frac{1}{32}
 \sum_{i\neq j} \frac{\vert ( w_i^\bullet, w_j^\bullet)\vert^2 }{ \sin^2\frac{q_i-q_j}{2}},
\ee
which is just the standard Hamiltonian of the (real, trigonometric) Gibbons--Hermsen model.

Replacing $\gamma$ by $=\epsilon \gamma$ and taking the limit,  the
residual constraint \eqref{H38}  gives
$ (w_j^\bullet, w_j^\bullet) = 2 \gamma$.
Then, rescaling not only the variables but also the symplectic form \eqref{xisymp}, one gets
\be
\lim_{\epsilon\to 0} \epsilon^{-1} \left(\xi_1^*\Omega_{\cM}\right) =  \sum_{j=1}^n dp_j \wedge dq_j +
\frac{\ri}{2} \sum_{j=1}^n \sum_{\alpha=1}^d d w_j^\alpha \wedge d\overline{w}_j^\alpha,
\ee
which reproduces the symplectic form of the Gibbons--Hermsen model.

It is known \cite{FK2} that
 the standard spinless RS  Hamiltonian \cite{RS} can be derived as the reduction of  $\tr(L) + \tr(L^{-1})$  in the $d=1$ case.
 For $d\geq 2$,  we shall
 show (see Corollary \ref{Cor:vf} and Corollary \ref{Cor:vfredPB})  that  the Hamiltonian of the
 (real, trigonometric) spin RS model of Krichever of Zabrodin \cite{KZ} is the reduction of $\tr(L)$.
 As was already discussed in the Introduction,  the term \emph{chiral} spin RS model could have been a more fitting name for
 the model of \cite{KZ}, but we follow the literature in dropping `chiral' in this context.
\end{remark}

\medskip

In this subsection, we have derived an almost complete description of the
reduced system. We have established that $T^* \T^n_\reg \times \C^{n\times d}_\red(\Gamma)$
is an $S_n$ covering space of a dense, open subset of the reduced phase space,
and we can write down the Hamiltonians $\tr(L^k)$ by using the explicit formula $b_R = e^p b_+(Q,W)$.
Why is the paper not finished at this stage?
Well, one reason is that although $Q,p$ and $W$ are very nice variables for presenting
the reduced symplectic form,
 they are not the ones
that feature in the the Krichever--Zabrodin equations of motion, which we
wish to reproduce
in our setting.
In fact, the usage of the dressed spins $v(\alpha)$ \eqref{H3} will turn out indispensable for this purpose.
(Notationwise, we took this into account already in equation \eqref{I8}.)
Another, closely related, reason is that the action of permutations is practically intractable
in terms of the primary spins\footnote{This difficulty evaporates in the scaling limit discussed
in Remark \ref{Rem:scaling}.}. More precisely, the action on the components of $Q$ is the obvious one,
but on $p$ and $W$ it is known only in an implicit manner, via the realization
of these variables as functions of $L=b_R b_R^\dagger $ and the dressed spins, on which the full $\U(n)$ action,
and thus also the permutation action, is governed by the simple formula \eqref{obvA}.
In short, both the formula $b_+(Q,W)$ and the change of variables from $Q,p,W$
to $Q$ and dressed spins $v(\alpha)$  are complicated,
and for some purposes the latter will prove to be more convenient variables.

\section{The reduced equations of motion} \label{sec:N}

Below, we first present a characterization of the projection of the Hamiltonian vector fields
of arbitrary elements of $\fH$ \eqref{T34} to the dense open submanifold $\cM^\reg_\red$ \eqref{H32} of
 the  reduced phase space
$\cM_\red$ \eqref{Mred}.
Then we reproduce the trigonometric real form  of the equations of motion \eqref{I2} of \cite{KZ}
as the simplest special case of the reduced dynamics.

As in \S\ref{ssec:Hbis}, we parametrize the points of $\cM$ by $g_R$,  $v$ and
$L = b_R b_R^\dagger$.
For any $h\in C^\infty(\B(n))^{\U(n)}$ we  put
\be
\cV(L):= D h(b_R).
\label{N1}\ee
Denoting the Hamiltonian vector field of $H= \Lambda_R^*(h)$
by $X_H$ and viewing $g_R$, $v(\alpha)$ and $L$ as evaluation functions,
in correspondence to \eqref{T35}, we have
\be
X_H[g_R] = \cV(L) g_R,\quad
X_H[v(\alpha)]=0,
\quad X_H[L]=0.
\label{N2}\ee
It is clear that $X_H$ admits a well-defined projection on $\cM_\red$, which encodes the reduced dynamics.
Of course, one may add any infinitesimal gauge transformation to the vector field $X_H$ without modifying
its projection on $\cM_\red$, i.e., instead of $X_H$ one may equally well consider any $Y_H$ of the form
\bea
&&Y_H[g_R] = \cV(L) g_R + [Z(g_R, L, v), g_R],\nonumber\\
&& Y_H[v(\alpha)] = Z(g_R, L, v) v(\alpha),
\label{N3} \\
&& Y_H[L] = [Z(g_R, L, v), L],\nonumber
\eea
with arbitrary $Z(g_R, L, v)\in \u(n)$.
It is also clear that
one may use the restriction of $Y_H$ to  $\cM_0$ \eqref{H18} for determining the projection,
and $Z$ can be chosen in such a manner to guarantee the tangency of the restricted vector field to $\cM_0$.

Let us consider the vector space decomposition
\be
\u(n)= \u(n)_0 + \u(n)_\perp,
\label{N4}\ee
where $\u(n)_0$ and $\u(n)_\perp$ consist of diagonal and off-diagonal matrices, respectively.
Accordingly, for any $T \in \u(n)$ we have
\be
T = T_0 + T_\perp,
\qquad T= \u(n)_0,\,\, T_\perp \in \u(n)_\perp.
\label{N5}\ee
Using $\Ad_Q(T)= Q T Q^{-1}$,  the restriction of the
operator $(\Ad_Q - \id)$ to $\u(n)_\perp$ is invertible, and we define
\be
K(Q,L):= \left((\Ad_Q - \id)\vert_{\u(n)_\perp}\right)^{-1} \cV(L)_\perp.
\label{N6}\ee
More explicitly, setting $Q = \exp(\ri q)$, we have $K_{kk}=0$ and
\be
K(Q,L)_{kl} = - \frac{1}{2} \cV(L)_{kl} - \frac{\ri}{2} \cV(L)_{kl} \,\cot\left(\frac{q_k - q_l}{2}\right) ,
\quad \forall k\neq l.
\label{N7}\ee

\begin{proposition}
For any $H\in \fH$ \eqref{T34}, applying the previous notations,
the following formulae yield a  vector field
 $Y_H^0$ on $\cM_0^\reg$ \eqref{H31},
\bea\label{YH0}
&&Y_H^0[Q] = \cV(L)_0 Q ,\nonumber\\
&& Y_H^0[v(\alpha)] =\bigl( K(Q,L) + \cZ(Q, L, v)\bigr) v(\alpha),  \label{N8} \\
&& Y_H^0[L] = [K(Q,L)  + \cZ(Q, L, v), L], \nonumber
\eea
for any $\cZ(Q, L, v)\in \u(n)_0$ \eqref{N4}, with $L=L(Q,v)$ given by \eqref{H20}.
The vector field $Y_H^0$ admits a well-defined projection
on $\cM_\red^\reg \subset \cM_\red$ \eqref{H32}, which coincides with the corresponding restriction of the
projection of the Hamiltonian vector
 field $X_H$ \eqref{N2}
to $\cM_\red$.
\end{proposition}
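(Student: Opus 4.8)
The plan is to produce, from the Hamiltonian vector field $X_H$ \eqref{N2}, a gauge-equivalent vector field of the form \eqref{N3} that is tangent to the submanifold $\cM_0^\reg$ \eqref{H41}, and then to read off its restriction there; the formulas \eqref{N8} will then come out automatically.

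First I would record two facts. \emph{(a)} $X_H$ is tangent to the constraint surface $\Lambda^{-1}(e^\gamma\1_n)$, since along its flow \eqref{T35} the factors $b_L$, $b_R$, $w^1,\dots,w^d$, and hence $b_\alpha=\bb(w^\alpha)$, are constant, so $\Lambda=b_Lb_Rb_1\cdots b_d$ \eqref{H1} is constant and $X_H[\Lambda]=0$. \emph{(b)} Every infinitesimal gauge transformation is tangent to $\Lambda^{-1}(e^\gamma\1_n)$, because the $\U(n)$-action preserves it (Remark \ref{Rem:EmbComplex}): $e^\gamma\1_n$ is a fixed point of the dressing action \eqref{Dress}, as $\Dress_g(e^\gamma\1_n)=\Lambda_L(e^\gamma g)=e^\gamma\1_n$ using $\Lambda_L(g)=\1_n$ for $g\in\U(n)$. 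Thus, for any $\u(n)$-valued $Z$, the vector field $Y_H$ of \eqref{N3} is tangent to $\Lambda^{-1}(e^\gamma\1_n)$ and has the same projection onto $\cM_\red$ as $X_H$, as already observed before \eqref{N3}.

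Next I would choose $Z$ along $\cM_0^\reg$, where $g_R=Q\in\T^n_\reg$ (its values elsewhere being irrelevant to the tangency check). From \eqref{N3},
\[ Y_H[g_R]\,Q^{-1}\;=\;\cV(L)+[Z,Q]\,Q^{-1}\;=\;\cV(L)-(\Ad_Q-\id)(Z). \]
Writing $Z=\cZ+Z_\perp$ with $\cZ\in\u(n)_0$ and using that $\Ad_Q$ fixes $\u(n)_0$ while $(\Ad_Q-\id)$ is invertible on $\u(n)_\perp$, this equals $\cV(L)_0+\bigl(\cV(L)_\perp-(\Ad_Q-\id)(Z_\perp)\bigr)$, so the choice $Z_\perp:=K(Q,L)$ of \eqref{N6} kills the off-diagonal part and leaves $Y_H[g_R]Q^{-1}=\cV(L)_0\in\u(n)_0$; thus $Y_H$ is tangent to $\T^n_\reg$ at $Q$, with $\cZ$ arbitrary. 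Since $\cM_0^\reg=\Lambda^{-1}(e^\gamma\1_n)\cap\Xi_R^{-1}(\T^n_\reg)$ is, by \S\ref{ssec:Hbis}, an embedded submanifold cut out cleanly by the two conditions, tangency to each forces tangency of $Y_H$ to $\cM_0^\reg$, and $Y_H^0:=Y_H|_{\cM_0^\reg}$ is the desired field. Substituting $Z=K(Q,L)+\cZ$ and \eqref{N2} into \eqref{N3} with $g_R=Q$ then reproduces exactly the components $\cV(L)_0Q$, $(K(Q,L)+\cZ)v(\alpha)$, $[K(Q,L)+\cZ,L]$ of \eqref{N8}. (Equivalently, one can define $Y_H^0$ by its action on the coordinates $Q,v(\alpha)$ and check that the induced action on $L=L(Q,v)$ \eqref{H20} is $[K(Q,L)+\cZ,L]$; this check uses $[\cV(L),L]=0$, which holds since $\cV(L)=Dh(b_R)$ with $h$ a function of $b_Rb_R^\dagger$.)

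Finally, for the projection claim: since $H$ is $\U(n)$-invariant, $X_H$ descends to the reduced Hamiltonian vector field on $\cM_\red$, and writing $\pi$ for the projection, for $x\in\cM_0^\reg$ we get $d\pi_x(Y_H^0(x))=d\pi_x(X_H(x))+d\pi_x((Y_H^0-X_H)(x))=d\pi_x(X_H(x))$, because $(Y_H^0-X_H)(x)$ is an infinitesimal gauge transformation, tangent to the $\U(n)$-orbit that $\pi$ collapses. Hence $Y_H^0$ is $\pi$-related to the reduced Hamiltonian vector field along $\cM_\red^\reg$, which gives both the well-definedness of its projection and its agreement with the restriction of the projection of $X_H$. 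The $\cZ\in\u(n)_0$ ambiguity is harmless, being exactly the residual $\T^n$-gauge on $\cM_0^\reg$ (the kernel of $\cN(n)\to S_n$), whose generators send $(Q,v(\alpha),L)\mapsto(0,\cZ v(\alpha),[\cZ,L])$ into the vertical directions of $\pi$. I expect the step needing the most care to be the tangency of $Y_H$ to $\cM_0^\reg$ --- the clean-intersection property used above, equivalently the compatibility $Y_H^0[L]=[K(Q,L)+\cZ,L]$ with \eqref{H20}; the remaining manipulations are short direct computations.
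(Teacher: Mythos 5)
Your proof is correct and follows essentially the same route as the paper's: both hinge on the identity $\cV(L)\,Q + [K(Q,L),Q] = \cV(L)_0\,Q$ (equivalently $(\Ad_Q-\id)K=\cV(L)_\perp$) to recognise $Y_H^0$ as the restriction of a gauge-deformed $Y_H$ with $Z=K(Q,L)+\cZ$ tangent to $\cM_0^\reg$, with the diagonal freedom $\cZ$ accounting for the residual gauge. The paper's own proof is terser and leaves the constraint-surface tangency, the clean-intersection step, and the $\pi$-relatedness argument implicit; your version simply spells these out (and your side check that $Y_H^0[L]=[K+\cZ,L]$ is consistent with \eqref{H20} via $[\cV(L),L]=0$ is a correct, optional cross-check).
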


\begin{proof}
As a consequence of $(\Ad_Q -\id) K(Q,L) = \cV(L)_\perp$, we have the identity
\be
\cV(L) Q + [ K(Q,L) ,Q] = \cV(L)_0 Q.
\label{N9}\ee
This shows that $Y_H^0$ is obtained by restricting $Y_H$ \eqref{N3} to $\cM_0^\reg$, where
\be
Z(Q,L,v) = K(Q,L) + \cZ(Q,L,v).
\label{N10}\ee
This choice of $Z$ guarantees that the restricted vector field is tangent to $\cM_0^\reg$.
The fact that $\cZ$ is left undetermined reflects
the residual $\cN(n)$  gauge transformations acting on $\cM_0^\reg$.
\end{proof}

\begin{remark}\label{Rem:sol}
 Only the first two relations in \eqref{N8} are essential, since
the third one follows from them
via the formula \eqref{H20} of $L=L(Q,v)$.
Now take an initial value $(Q^0, L^0, v^0)\in \cM_0^\reg$ and $\epsilon>0$ ($\epsilon=\infty$ is allowed) such that
\be
g_R(t) = \exp(t \cV(L^0)) Q^0 \in \U(n)_\reg\quad\hbox{for}\quad  -\epsilon <t  < \epsilon,
\ee
where the elements of $\U(n)_\reg$ have $n$ distinct eigenvalues.
Notice from \eqref{T36}  that $g_R(t)$  describes the unreduced solution curve, and that a small enough
$\epsilon$  will certainly do.
Then, for $- \epsilon < t < \epsilon$ there exists  a unique smooth curve $\eta(t) \in \U(n)$ for which
\be
Q(t):= \eta(t) g_R(t) \eta(t)^{-1} \in \T^n_\reg
\quad \hbox{and}\quad \eta(0)=\1_n,\quad \left(\dot{\eta}(t) \eta(t)^{-1}\right)_0 =0.
\label{Qtsol}\ee
It is easy to see that $(Q(t), L(t), v(\alpha)(t))$ given by the above $Q(t)$ and
\be
L(t) = \eta(t) L^0 \eta(t)^{-1},
\quad
v(\alpha)(t) = \eta(t) v(\alpha)^0
\ee
yields the integral curve of the vector field \eqref{YH0} with $\cZ=0$.  We here used
the property $\cV(g L g^{-1}) = g \cV(L) g^{-1}$
($\forall g\in \U(n), L\in \fP(n)$), which  follows from the definition \eqref{N1}.
The auxiliary conditions imposed in \eqref{Qtsol}   fix the ambiguity of the `diagonalizer' $\eta(t)$ of $g_R(t)$.
The reduction approach leads to this solution algorithm naturally, but we should stress that an analogous algorithm
  was found long ago by Ragnisco and Suris  \cite{RaSu} using a direct method.
\end{remark}

\begin{corollary} \label{Cor:vf}
Consider $H\in \fH$ defined by
\be
H= \Lambda_R^*(h)
 \quad\hbox{with}\quad
 h(b):= (e^{2\gamma} -1) \tr (bb^\dagger).
 \label{goodH}\ee
Then the evolution equation on $\cM_0^\reg$ corresponding to the
vector field $Y_H^0$ \eqref{N8} with $\cZ=0$
can be written explicitly as follows:
\bea
&&\frac{1}{2} \dot{q}_j := \frac{1}{2\ri} Y_H^0[Q_j] Q_j^{-1}= F_{jj},
\label{N11} \\
&&\dot{v}(\alpha)_i := Y_H^0[v(\alpha)_i]=-  \sum_{j\neq i} F_{ij} v(\alpha)_j
V\left(\frac{q_j -q_i}{2}\right),
\label{N12}\eea
where $F = \sum_\alpha v(\alpha) v(\alpha)^\dagger$ and the `potential function' $V$ reads
\be
V(x)= \cot x - \cot(x-\ri \gamma).
\label{N13}\ee
These formulae reproduce the spin RS equations of motion \eqref{I2} by
 setting $x_i = q_i/2$ and imposing the  additional reality condition \eqref{I4}.
\end{corollary}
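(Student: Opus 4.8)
The strategy is to make the vector field $Y_H^0$ of \eqref{N8} completely explicit for the particular $h$ in \eqref{goodH}, then read off \eqref{N11}--\eqref{N12} and compare with \eqref{I2}. The first step is to evaluate $\cV(L)=Dh(b_R)$ of \eqref{N1}. Writing $L=b_Rb_R^\dagger$ and using the defining property of the $\u(n)$-valued derivative $D$ on $C^\infty(\B(n))$ (the analogue of \eqref{T9}, so $\langle Dh(b),X\rangle=\dt h(e^{tX}b)$ for $X\in\b(n)$), a one-line computation using the Hermiticity of $L$ gives $\langle Dh(b),X\rangle=2(e^{2\gamma}-1)\Re\tr(LX)=\Im\tr\bigl(2\ri(e^{2\gamma}-1)L\cdot X\bigr)$ for all $X\in\b(n)$. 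Since $\ri L\in\u(n)$ when $L$ is Hermitian, and since $\langle\ ,\ \rangle$ pairs $\u(n)$ perfectly with $\b(n)$, this forces
\be
\cV(L)=Dh(b_R)=2\ri(e^{2\gamma}-1)\,L;
\ee
in particular the covariance $\cV(gLg^{-1})=g\cV(L)g^{-1}$ used in Remark \ref{Rem:sol} is manifest.

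Next I would substitute this into \eqref{N8} on $\cM_0^\reg$, using $L=L(Q,v)$ given by \eqref{H20}, i.e. $L_{ij}=F_{ij}/(e^{2\gamma}Q_jQ_i^{-1}-1)$ with $F=\sum_\alpha v(\alpha)v(\alpha)^\dagger$ as in \eqref{H21}. For the $Q$-component, $L_{jj}=F_{jj}/(e^{2\gamma}-1)$ gives $\cV(L)_0=2\ri\,\diag(F_{11},\dots,F_{nn})$, hence $Y_H^0[Q_j]=2\ri F_{jj}Q_j$, which is \eqref{N11}. For the spin components one needs $K(Q,L)$ from \eqref{N7}: inserting $\cV(L)_{kl}=2\ri(e^{2\gamma}-1)F_{kl}/(e^{2\gamma}Q_lQ_k^{-1}-1)$ and setting $\phi:=q_l-q_k$, the elementary identity $1+\ri\cot\bigl(\tfrac{q_k-q_l}{2}\bigr)=1-\ri\cot\tfrac\phi2=2e^{\ri\phi}/(e^{\ri\phi}-1)$ turns $K(Q,L)_{kl}$ into a scalar multiple of $F_{kl}$. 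Identifying that scalar with $-V(\phi/2)$, with $V$ as in \eqref{N13}, is the one genuine calculation: using $\cot x=\ri(e^{2\ri x}+1)/(e^{2\ri x}-1)$ one checks directly that $V(\phi/2)=\ri\,2e^{\ri\phi}(e^{2\gamma}-1)/\bigl[(e^{\ri\phi}-1)(e^{\ri\phi+2\gamma}-1)\bigr]$, which matches. Since $K(Q,L)$ has zero diagonal and $\cZ=0$, \eqref{N8} then gives $Y_H^0[v(\alpha)_i]=\sum_{l\neq i}K(Q,L)_{il}v(\alpha)_l=-\sum_{j\neq i}F_{ij}v(\alpha)_j V\!\bigl(\tfrac{q_j-q_i}{2}\bigr)$, which is \eqref{N12}.

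To conclude, put $x_i=q_i/2$ and impose \eqref{I4}, so that $F=\sum_\alpha v(\alpha)v(\alpha)^\dagger$ coincides with the composite spin \eqref{I1}; then $\dot x_i=F_{ii}$ is \eqref{N11} and the $c$-equation of \eqref{I2} with $\lambda_i=0$ is \eqref{N12}. Consistency of the $a$-equation of \eqref{I2} under $a_i^\alpha=\overline{v(\alpha)_i}$ is automatic: taking the complex conjugate of \eqref{N12} and using $\overline{F_{ij}}=F_{ji}$ (Hermiticity of $F$) together with $\overline{V(x)}=-V(-x)$ for real $x$ and real $\gamma$ (which follows from $\overline{\cot x}=\cot x$ and $\overline{\cot(x-\ri\gamma)}=\cot(x+\ri\gamma)$) reproduces it exactly, again with $\lambda_i=0$. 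The arbitrary $\lambda_i$ in \eqref{I2} are recovered by keeping $\cZ\in\u(n)_0$ nonzero in \eqref{N8}. I do not expect a genuine obstacle: the only mildly delicate point is the trigonometric collapse of the scalar factor in $K(Q,L)_{kl}$ to $-V(\phi/2)$, and once that identity is in hand the rest is bookkeeping with the reality ansatz and Hermitian matrices.
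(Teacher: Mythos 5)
Your argument is correct and follows essentially the same route as the paper: compute $\cV(L)=2\ri(e^{2\gamma}-1)L$, read off $\cV(L)_0$ to get \eqref{N11}, and simplify the off-diagonal $K(Q,L)$ to identify the coefficient $-V$. The one place where you diverge is the trigonometric simplification of $K_{kl}$: the paper invokes Hermite's cotangent identity \eqref{N15} to recast $K_{kl}$ directly as $F_{kl}\bigl[\cot\bigl(\tfrac{q_k-q_l}{2}\bigr)-\cot\bigl(\tfrac{q_k-q_l}{2}+\ri\gamma\bigr)\bigr]$, whereas you expand everything in exponentials via $\cot x=\ri(e^{2\ri x}+1)/(e^{2\ri x}-1)$ and verify the match by brute force; both are elementary and equivalent. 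Your explicit verification of the reality check (conjugating \eqref{N12} and using $\overline{F_{ij}}=F_{ji}$ together with $\overline{V(x)}=-V(-x)$) supplies detail the paper leaves to the reader, and your observation that the $\lambda_i$ of \eqref{I2} correspond to keeping $\cZ\in\u(n)_0$ nonzero in \eqref{N8} is a correct and useful remark.
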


\begin{proof}
In this case the definition \eqref{N1} gives
\be
\cV(L) = 2\ri (e^{2\gamma}-1) L.
\label{N14}\ee
Equation \eqref{N11} follows immediately from \eqref{N8} since $\cV(L)_{jj} = 2 \ri  F_{jj}$ by
 \eqref{H20} and we have $Q_j = \exp(\ri q_j)$.
Taking advantage of Hermite's cotangent identity,
\be
\cot(z-a_1) \cot(z-a_2) = -1 + \cot(a_1-a_2)\cot(z-a_1) + \cot(a_2-a_1) \cot(z-a_2),
\label{N15}\ee
it is not difficult to re-cast
 the off-diagonal matrix function $K(Q, L(Q,v))$ \eqref{N7} in the form
 \be
K_{kl} = F_{kl} \left[
\cot\left(\frac{q_k- q_l}{2}\right) - \cot\left(\frac{q_k- q_l}{2} + \ri \gamma\right)\right],
\label{N16}\ee
for all $ k\neq l$. This gives \eqref{N12} with \eqref{N13}. The validity of the last sentence
of the corollary can also be checked directly.
\end{proof}

\begin{remark}
The restriction of the Hamiltonian $H$ \eqref{goodH} to $\cM_0$ gives
\be
H(Q,v) = (e^{2\gamma} -1)  \tr( L(Q,v)) =  \sum_{j=1}^n F_{jj}.
\label{N17}\ee
We shall confirm in \S\ref{ssec:RP2} that the ensuing reduced Hamiltonian generates
the equations of motion \eqref{I8} via the reduced Poisson structure described
in coordinates using the gauge fixing condition \eqref{I7}.
\end{remark}

\section{The reduced Poisson structure} \label{sec:RP}

The main purpose of this section is to present the explicit form of the reduced Poisson structure in terms of the variables that feature
in the equations of motion \eqref{I8}. The first subsection contains a couple of auxiliary lemmae, in which we provide
explicit formulae for the Poisson brackets of the half-dressed and dressed spins, and the matrix entries of $g_R$ and $L$.
These  permit us to establish that the $\U(n)$ invariant integrals of motion \eqref{Eq:Int} form a closed polynomial Poisson algebra on
the unreduced phase space, which automatically descends to the reduced phase space. This interesting algebra is given by Proposition \ref{P:BrII}.
In the second subsection we utilize the Poisson brackets of another set of $\U(n)$ invariant functions in order to characterize the reduced Poisson structure.
We shall rely on the fact that the restriction of the Poisson brackets of $\U(n)$ invariant functions to a gauge slice in the `constraint surface'
must coincide with the Poisson brackets of the restricted functions calculated from the reduced Poisson structure.

All calculations required by this section are straightforward, but they are quite voluminous and not enlightening.
We strive to give just enough details to provide the gist of
these calculations, and so that an interested reader may reproduce them.
Some of these details are relegated to Appendix \ref{A:Glob2} and Appendix \ref{A:RedPr}.

\subsection{Some Poisson brackets before reduction} \label{ssec:RP1}

Using the results from Section \ref{sec:T}, the Poisson structure $\brM{\ , \ }$ on $\cM$ can be described in terms of the
(complex-valued) functions returning the entries of the matrices $(g_R,b_R,w^1,\ldots,w^d)$ and their complex
conjugates. Namely, we can use \eqref{T7}--\eqref{T8} with $K=g_R$ or $K=b_R$, then \eqref{T18}--\eqref{T19} to characterize
the Poisson structure restricted to functions on the Heisenberg double; for fixed $\alpha=1,\ldots,d$, the Poisson
brackets involving $w^\alpha$ are given by \eqref{T29}--\eqref{T30}.
The Poisson brackets between functions of $w^\alpha$ and functions of $g_R$ and $b_R$ vanish.
Our aim is to translate these relations to the matrices $ (g_R, L=b_R b_R^\dagger, v(1)=b_R v^1,\dots, v(d)=b_R v^d)$, which are more
convenient to understand the reduced phase space $\cM_{\red}$, see Section \ref{sec:H}. As a first step, we express the
 Poisson structure on the half-dressed spins
$v^\alpha=b_1\cdots b_{\alpha-1}w^\alpha$ defined in \eqref{H3}. We let $\br{\ , \ }:=\brM{\ , \ }$ for the rest of the section.

\begin{lemma} \label{L:halfv}
The Poisson brackets of the half-dressed spins are given by the following formulae
\bea
  &&\br{v^\alpha_i,v^\beta_k} =-\ic \, \sgn(k-i) v_k^\alpha v_i^\beta +\ic \, \sgn(\beta - \alpha) v_k^\alpha v_i^\beta\,,
  \label{Eq:Pvvhalf}\\
&&\br{v^\alpha_i,\bar v^\beta_k} =
\ic \delta_{ik} v_i^\alpha \bar{v}_k^\beta+2\ic \delta_{ik}\sum_{r>k}v_r^\alpha \bar{v}_r^\beta
 +\ic \delta_{\alpha \beta} v^\alpha_i\bar v^\beta_k + 2 \ic \delta_{\alpha \beta}\sum_{\mu<\alpha} v_i^\mu \bar{v}_k^\mu + 2 \ic \delta_{ik}\delta_{\alpha \beta}  \,,
 \label{Eq:Pvbarvhalf}
 \eea
where $1\leq i,k \leq n$ and $1\leq \alpha,\beta\leq d$. In particular, this defines a Poisson structure on  $\C^{nd} \simeq \R^{2nd}$.
\end{lemma}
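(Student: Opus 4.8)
The plan is to carry out the computation entirely inside the primary-spin Poisson algebra, exploiting that the half-dressed spins carry no Heisenberg-double data. By Definition~\ref{Def:31} one has $v^\alpha=b_1\cdots b_{\alpha-1}w^\alpha$ with $b_j:=\bb(w^j)$, so $v^\alpha$ is a function of $w^1,\dots,w^\alpha$ only, and the brackets in the lemma are determined by the structure $(\C^{n\times d},\{\ ,\ \}_\cW)$ introduced in~\S\ref{ssec:Prim} alone. Since, moreover, $(w^1,\dots,w^d)\mapsto(v^1,\dots,v^d)$ is a diffeomorphism of $\C^{n\times d}$ (the restriction to the spin variables of the analogue of Lemma~\ref{L:NewVar} noted just after it), the last assertion of the lemma is automatic once the two displayed formulae hold: \eqref{Eq:Pvvhalf}--\eqref{Eq:Pvbarvhalf} merely rewrite $\{\ ,\ \}_\cW$ in the new coordinates, and in particular inherit the Jacobi identity and, through \eqref{T32}, the reality property $\br{\bar v^\alpha_i,\bar v^\beta_k}=\overline{\br{v^\alpha_i,v^\beta_k}}$.

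To establish the formulae, set $S_k:=b_1\cdots b_k$ with $S_0:=\1_n$, so that $v^\alpha=S_{\alpha-1}w^\alpha$, and assume $\alpha\le\beta$ (the opposite case following by antisymmetry). First I would expand $\br{(S_{\alpha-1}w^\alpha)_i,(S_{\beta-1}w^\beta)_k}$ and $\br{(S_{\alpha-1}w^\alpha)_i,\overline{(S_{\beta-1}w^\beta)_k}}$ by the Leibniz rule. The bulk of the terms vanish: $S_{\alpha-1}$ involves only the copies $w^1,\dots,w^{\alpha-1}$, hence Poisson commutes with $w^\beta$ and with every factor of $S_{\beta-1}$ apart from the shared $b_1,\dots,b_{\alpha-1}$, and distinct copies Poisson commute. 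What survives is governed by a short list of single-copy brackets: the self-bracket $\br{\bb(w)_{jl},\bb(w)_{j'l'}}$ together with its conjugate versions, known explicitly since $\bb$ is a Poisson map into $(\B(n),\{\ ,\ \}_B)$ by Proposition~\ref{Pr:Zak}; the mixed bracket $\br{w_a,\bb(w)_{jl}}$ (and the one with $\bb$ replaced by $\overline{\bb}$), which I would read off directly from the explicit formulae \eqref{bb}, \eqref{T31} and the Zakrzewski brackets \eqref{T29}--\eqref{T30}, or else extract from the moment-map relation \eqref{T27} for $\bb$ combined with \eqref{bbdag}; and, only when $\alpha=\beta$, the brackets \eqref{T29}--\eqref{T30} of $w^\alpha$ with itself. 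Each of these is a short calculation.

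The substantive step is then to resum the resulting double sums over the entries of the $S_k$ into the closed expressions \eqref{Eq:Pvvhalf}--\eqref{Eq:Pvbarvhalf}. I would organise this as an induction on the number $d$ of spin copies. The base case $d=1$ is simply the check that the two formulae collapse to \eqref{T29}--\eqref{T30} after putting $v^1=w^1$. For the inductive step one peels off the outermost dressing factor: writing $v^1=w^1$ and $v^\alpha=b_1\tilde v^\alpha$ for $\alpha\ge2$, the $\tilde v^2,\dots,\tilde v^d$ are the half-dressed spins of the shorter chain $w^2,\dots,w^d$ and therefore satisfy \eqref{Eq:Pvvhalf}--\eqref{Eq:Pvbarvhalf} with shifted copy labels by the inductive hypothesis. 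Since $b_1$ Poisson commutes with every $\tilde v^\alpha$ and $\overline{\tilde v^\alpha}$, the only new ingredient is the pair of single-copy brackets of $w^1$ with $b_1$, and the products they generate are collapsed using $b_1 b_1^\dagger=\1_n+w^1(w^1)^\dagger=\1_n+v^1(v^1)^\dagger$, and more generally $S_k S_k^\dagger=\1_n+\sum_{\gamma=1}^k v^\gamma(v^\gamma)^\dagger$ (the truncation of \eqref{H7}, i.e.\ \eqref{invid} specialised to $b_R=\1_n$). I expect the main obstacle to be exactly this bookkeeping --- verifying that the corrections produced by the moment-map brackets reorganise, after telescoping, into precisely the two $\sgn$-terms of \eqref{Eq:Pvvhalf} and the $\delta_{ik}$, $\delta_{\alpha\beta}$ terms together with the truncated tails $\sum_{r>k}$ and $\sum_{\mu<\alpha}$ of \eqref{Eq:Pvbarvhalf}. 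Conceptually this encodes the fact that the Zakrzewski bracket is \emph{not} invariant under the left action $w\mapsto\bb(\cdot)\,w$, its defect being compensated exactly by the Poisson--Lie bracket on $\B(n)$; alternatively one could shortcut the computation by viewing $v^1,\dots,v^d$ as ``fusion'' coordinates and quoting the general brackets for such coordinates from \cite{F3,Kli,CF2,AO}, but the inductive route is self-contained.
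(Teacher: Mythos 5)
Your strategy is sound and rests on exactly the same raw ingredients as the paper's proof in Appendix \ref{A:Glob2} (namely the single-copy brackets of $w^\alpha$ with $b_\alpha=\bb(w^\alpha)$ and the $\B(n)$ brackets \eqref{T7}--\eqref{T8} for the entries of $b_\alpha$, i.e.\ the content of Lemma \ref{Lemb1}, plus the vanishing of cross-copy brackets, the identity $B^\gamma(B^\gamma)^\dagger=\1_n+\sum_{\mu\le\gamma}v^\mu(v^\mu)^\dagger$ coming from \eqref{bbdag}, and the diffeomorphism $(w^\alpha)\mapsto(v^\alpha)$ for the final assertion), but you organize the resummation differently. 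The paper first computes the closed-form intermediate brackets $\br{w^\alpha_i,B^\beta_{kl}}$ and $\br{B^\alpha_{ij},B^\beta_{kl}}$, $\br{B^\alpha_{ij},\overline{B}^\beta_{kl}}$ (Lemmae \ref{LemwB1}, \ref{LemBB1}) and then expands the four Leibniz terms in one go; you instead induct on $d$, peeling off the outermost factor $b_1$ and using that the Zakrzewski-type bracket of the shorter chain transforms covariantly when dressed by an independent Poisson--Lie factor. This works: e.g.\ writing \eqref{T7} as $\br{b_{ij},b_{kl}}=\ic\, b_{kj}b_{il}[\sgn(i-k)-\sgn(l-j)]$ one sees that the $\B(n)$ bracket of $b_1$ exactly compensates the non-invariance of the $\sgn$-part of \eqref{Eq:Pvvhalf}, and in the mixed bracket the $\delta_{jl}$-terms of the inductive hypothesis contract with $(b_1b_1^\dagger)_{ik}=\delta_{ik}+v^1_i\bar v^1_k$ to generate precisely the new $\mu=1$ term and the constant $2\ic\delta_{ik}\delta_{\alpha\beta}$ in \eqref{Eq:Pvbarvhalf}; the base case is \eqref{T29}--\eqref{T30}. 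Your approach trades the paper's bulky intermediate formulae for an inductive bookkeeping of comparable length, and its conceptual payoff is making the Poisson--Lie covariance mechanism explicit. Two small corrections to the write-up: in the inductive step the $w^1$--$b_1$ brackets are not the only new ingredient --- you also need the $b_1$--$b_1$ and $b_1$--$\overline{b}_1$ brackets \eqref{T7}--\eqref{T8} (which you do list earlier, so this is an imprecision rather than a gap); and of course the "bookkeeping" you defer is the actual content of the proof, so it must be carried out, but the cases above confirm it closes as you expect.
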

This result is proved in Appendix \ref{A:Glob2}.
From the reality of the Poisson bracket, we  have
\be
\br{\bar v^\alpha_i,\bar v^\beta_k}=+\ic \, \sgn(k-i) \bar v_k^\alpha \bar v_i^\beta -\ic \, \sgn(\beta - \alpha) \bar v_k^\alpha \bar v_i^\beta \,. \label{Eq:Pbarvbarvhalf}
\ee

\begin{remark}
 If we complexify the formulae of Lemma \ref{L:halfv} by introducing $a_{i\alpha}:=(v^\alpha_i)^{\C}$ and $b_{\alpha i}:=(\bar v^\alpha_i)^{\C}$,  we get a complex holomorphic Poisson structure on $\C^{2nd}$ given by
 \bea
  &&\br{a_{i\alpha },a_{k \beta}} =-\ic \, \sgn(k-i) a_{k\alpha} a_{i\beta} +\ic \, \sgn(\beta - \alpha) a_{k\alpha} a_{i\beta}\,,
  \label{Eq:AOp1}\\
    &&\br{b_{\alpha i},b_{\beta k}} =+\ic \, \sgn(k-i) b_{\alpha k} b_{\beta i} -\ic \, \sgn(\beta - \alpha) b_{\alpha k} b_{\beta i}\,,
  \label{Eq:AOp2}\\
&&\br{a_{i\alpha },b_{\beta k}} =
\ic \delta_{ik} a_{i\alpha} b_{\beta k} +2\ic \delta_{ik}\sum_{r>k} a_{r\alpha} b_{\beta r}
 +\ic \delta_{\alpha \beta} a_{i\alpha} b_{\beta k} + 2 \ic \delta_{\alpha \beta}\sum_{\mu<\alpha} a_{i\mu} b_{\mu k} + 2 \ic \delta_{ik}\delta_{\alpha \beta} .\quad
 \label{Eq:AOp3}
 \eea
 After appropriate rescaling, this reproduces
the \emph{minus} Poisson bracket introduced by
 Arutyunov and Olivucci in their treatment of the complex holomorphic  spin RS system by Hamiltonian reduction \cite{AO}.
Considering the analogous construction with the variables $v_{+,i}^\alpha:=v_i^{d-\alpha+1}$ instead,
 we obtain the \emph{plus} Poisson bracket introduced in \cite{AO}.
\end{remark}

From now on, we let $b=b_R,g=g_R$.
Using Lemma \ref{L:halfv} and the Poisson structure of the Heisenberg double, we can easily write the Poisson brackets involving the entries $v(\alpha)_i$ of the dressed spins $v(\alpha)=b_R v^\alpha$.

\begin{lemma} \label{L:Dressv}
The Poisson brackets of the dressed spins are given by the following formulae
\bea
&&\br{v(\alpha)_i,v(\beta)_k} =-\ic \, \sgn(k-i) v(\alpha)_k v(\beta)_i +\ic \, \sgn(\beta - \alpha) v(\alpha)_k v(\beta)_i\,, \label{Eq:Pvv}\\
&& \br{v(\alpha)_i,\overline {v}(\beta)_k} =
\ic \delta_{ik} v(\alpha)_i \overline{v}(\beta)_k+2\ic \delta_{ik}\sum_{r>k}v(\alpha)_r \overline{v}(\beta)_r
 +\ic \delta_{\alpha \beta} v(\alpha)_i\overline {v}(\beta)_k \nonumber \\
&& \qquad \qquad \qquad \quad \
+ 2 \ic \delta_{\alpha \beta}\sum_{\mu<\alpha} v(\mu)_i \overline{v}(\mu)_k + 2 \ic \delta_{\alpha \beta} (b b^\dagger)_{ik}  \label{Eq:Pvbarv}\,.
\eea
The Poisson brackets of the dressed spins and the matrices $b,g$ are given by the following formulae
\bea
&& \br{v(\alpha)_i,g_{kl}} =
-\ic \delta_{ik} v(\alpha)_i g_{kl} - 2 \ic \delta_{(i<k)} v(\alpha)_k g_{il}\,,  \label{Eq:PvgR}\\
&&\br{\overline{v}(\beta)_i,g_{kl}}=
-\ic \delta_{ik} \overline{v}(\beta)_i g_{kl} - 2 \ic \delta_{ik}\sum_{r>i} \overline{v}(\beta)_r g_{rl}\,,  \label{Eq:PbarvgR}\\
&&\br{v(\alpha)_i,b_{kl}} =
2\ic \delta_{(k<i)} v(\alpha)_k b_{il} + \ic \delta_{ik} v(\alpha)_k b_{il}
- 2 \ic \sum_{s<l} b_{ks}v_s^\alpha b_{il} - \ic b_{kl} v_l^\alpha b_{il}\,, \label{Eq:PvbR} \\
&&\br{\overline{v}(\beta)_i,b_{kl}} =
-\ic \delta_{ik} \overline{v}(\beta)_i b_{kl} - 2 \ic \delta_{ik}\sum_{r>k} \overline{v}(\beta)_r b_{rl}
+ \ic \bar b_{il} \bar v_l^\beta b_{kl} + 2 \ic \sum_{s<l} \bar b_{is}\bar v_l^\beta b_{ks}\,. \label{Eq:PbarvbR}
\eea
\end{lemma}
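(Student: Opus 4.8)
The plan is to obtain all six brackets from the single identity $v(\alpha)=b\,v^\alpha$ of \eqref{H3} (with $b=b_R$), the Poisson structure of the Heisenberg double recalled in Section~\ref{sec:T}, and Lemma~\ref{L:halfv}. The structural fact that makes the computation short is that $\cM=M\times\C^{n\times d}$ carries the \emph{product} Poisson bracket, so the half-dressed spins $v^\alpha$, being functions of $W$ alone, Poisson-commute with every matrix entry of $g=g_R$ and of $b=b_R$. Writing $v(\alpha)_i=\sum_a b_{ia}v^\alpha_a$ and applying the Leibniz rule, each bracket $\br{v(\alpha)_i,\ \cdot\ }$ splits into a ``Heisenberg-double part'', in which the bracket falls on the entries of $b$ (or $g$) and is contracted against the $v^\alpha_a$, plus a ``spin part'', in which the bracket falls on the $v^\alpha_a$ and is contracted against the entries of $b$; the two cross terms vanish by the commutativity just noted.

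For the brackets among the dressed spins, both parts contribute. In $\br{v(\alpha)_i,v(\beta)_k}$ one substitutes \eqref{T7} with $K=b$ for $\br{b_{ia},b_{kc}}$ and \eqref{Eq:Pvvhalf} for $\br{v^\alpha_a,v^\beta_c}$, then repeatedly uses $\sum_a b_{ka}v^\gamma_a=v(\gamma)_k$ together with upper-triangularity of $b$ to collapse the index sums. The Heisenberg-double part yields $\ic\big[\delta_{ik}+2\delta_{(i>k)}\big]v(\alpha)_k v(\beta)_i$ plus a double-index remainder, while the spin part yields $\ic\,\sgn(\beta-\alpha)v(\alpha)_k v(\beta)_i$ plus the $\sgn(a-c)$-terms; the remainder and those $\sgn$-terms telescope to $-\ic\,v(\alpha)_k v(\beta)_i$, and since $\delta_{ik}+2\delta_{(i>k)}-1=-\sgn(k-i)$ one arrives at \eqref{Eq:Pvv}. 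The bracket $\br{v(\alpha)_i,\overline v(\beta)_k}$ is handled identically, now with \eqref{T8} ($K=b$) and \eqref{Eq:Pvbarvhalf}; the only new feature is that the constant term $2\ic\,\delta_{ac}\delta_{\alpha\beta}$ of \eqref{Eq:Pvbarvhalf} contracts to $2\ic\,\delta_{\alpha\beta}\sum_a b_{ia}\overline b_{ka}=2\ic\,\delta_{\alpha\beta}(bb^\dagger)_{ik}$, which is exactly the inhomogeneous piece of \eqref{Eq:Pvbarv}.

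For the four brackets of a dressed spin with an entry of $g$ or $b$ the spin part vanishes outright, so only the Heisenberg-double part survives: $\br{v(\alpha)_i,g_{kl}}=\sum_a\br{b_{ia},g_{kl}}v^\alpha_a$, $\br{v(\alpha)_i,b_{kl}}=\sum_a\br{b_{ia},b_{kl}}v^\alpha_a$, and similarly for the conjugates. Inserting \eqref{T18}--\eqref{T19} and \eqref{T7}--\eqref{T8} (with $K=b$, and their obvious rearrangements), and using that $b$ is upper triangular to discard vanishing contributions, one contracts the sum over $a$; because the free column index $l$ is not summed, some sums do not collapse and produce genuine half-dressed quantities, e.g.\ the partial sum $\sum_{s<l}b_{ks}v^\alpha_s$ and the term $b_{kl}v^\alpha_l$ in \eqref{Eq:PvbR}, and the analogous $\overline b_{\cdot l}$, $\overline v^\beta_l$ contributions in \eqref{Eq:PbarvbR}. (The not-listed bracket $\br{\overline v(\alpha)_i,\overline v(\beta)_k}$ then follows by reality of $\br{\ ,\ }$ applied to \eqref{Eq:Pvv}.)

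The whole argument is bookkeeping; the only place that demands attention is the recombination in the second paragraph of the step-function factors $\sgn(\cdot)$ coming from Lemma~\ref{L:halfv} with the principal-gradation factors $\delta_{(i>k)}$, $\delta_{(j<l\leq k)}$ coming from \eqref{T7}--\eqref{T19}. No Poisson-geometric input beyond what is cited is needed, so the remaining routine manipulations, which parallel those of Appendix~\ref{A:Glob2}, are omitted.
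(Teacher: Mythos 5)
Your proposal is correct and takes essentially the same route as the paper, which obtains Lemma \ref{L:Dressv} by writing $v(\alpha)=b_R v^\alpha$ and combining Lemma \ref{L:halfv} with the Heisenberg-double brackets \eqref{T7}--\eqref{T8} and \eqref{T18}--\eqref{T19}, the product Poisson structure guaranteeing that the half-dressed spins commute with the entries of $g_R$ and $b_R$. The cancellation pattern you describe (the identity $\delta_{ik}+2\delta_{(i>k)}-1=-\sgn(k-i)$, the mutual cancellation of the double-index remainders, and the contraction of the constant term to $2\ic\,\delta_{\alpha\beta}(bb^\dagger)_{ik}$) is exactly the bookkeeping the paper leaves implicit.
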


Finally, we can express the Poisson structure in terms of the elements $ (g, L, v(1),\dots, v(d))$ where $L=b b^\dagger$. This is a direct application of Lemma \ref{L:Dressv} and the relations \eqref{T7}--\eqref{T8}, \eqref{T18}--\eqref{T19} of the Heisenberg double by using that $L_{kl}=\sum_r b_{ks}\bar b_{ls}$. Alternatively, one may use the formula \eqref{+PB2} to derive equations \eqref{Eq:gL}--\eqref{Eq:LL} below.
\begin{lemma} \label{L:vL}
The Poisson brackets involving $L$ are given by the following formulae
\bea
&& \br{v(\alpha)_i,L_{kl} }=
-\ic (2\delta_{(k>i)}+\delta_{ik}) v(\alpha)_k L_{il} + \ic \delta_{il} v(\alpha)_i L_{kl} + 2 \ic \delta_{il} \sum_{r>l} v(\alpha)_r L_{kr} \,, \label{Eq:vL} \\
&& \br{g_{ij},L_{kl} }= \ic (\delta_{ik}+\delta_{il}) g_{ij} L_{kl} + 2 \ic \delta_{(k<i)} g_{kj} L_{il} + 2 \ic \delta_{il} \sum_{r>i} L_{kr}g_{rj}\,, \label{Eq:gL} \\
&&\br{L_{ij},L_{kl} }= \ic [2 \delta_{(i>k)}+\delta_{ik}-2\delta_{(j>l)}-\delta_{lj}] L_{il}L_{kj} \nonumber \\
&& \qquad \qquad  \quad \ + \ic (\delta_{il}-\delta_{jk})L_{ij}L_{kl} + 2 \ic \delta_{il} \sum_{r>i} L_{kr} L_{rj} - 2 \ic \delta_{jk} \sum_{r> k} L_{ir} L_{rl}\,. \label{Eq:LL}
\eea
\end{lemma}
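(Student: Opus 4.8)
The statement to prove is Lemma~\ref{L:vL}, which gives the Poisson brackets among the entries of $L=bb^\dagger$, $g=g_R$, and the dressed spins $v(\alpha)$.

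\medskip

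The plan is to obtain all three formulae \eqref{Eq:vL}--\eqref{Eq:LL} by brute-force bilinear expansion, using the already-established brackets of Lemma~\ref{L:Dressv} together with the Heisenberg-double relations \eqref{T7}, \eqref{T8}, \eqref{T18}, \eqref{T19}. The key auxiliary fact is that $L_{kl}=\sum_s b_{ks}\overline b_{ls}$, so every bracket with an $L$-entry is computed from the Leibniz rule applied to this sum. Concretely, for \eqref{Eq:vL} I would write
\be
\br{v(\alpha)_i, L_{kl}} = \sum_s \br{v(\alpha)_i, b_{ks}}\,\overline b_{ls} + \sum_s b_{ks}\,\br{v(\alpha)_i,\overline b_{ls}},
\ee
then substitute \eqref{Eq:PvbR} and the conjugate of \eqref{Eq:PvbR} (equivalently \eqref{Eq:PbarvbR} with the appropriate index pattern, after recalling $v(\alpha)=b v^\alpha$), and carry out the $s$-summations. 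Many terms will recombine into $L_{il}$, $L_{kl}$, or partial sums $\sum_{r>l}v(\alpha)_r L_{kr}$, while the terms involving $v^\alpha_s=\sum_t (b^{-1})_{st}v(\alpha)_t$ should telescope against the $b$-factors so that no inverse of $b$ survives in the final answer. The same pattern applies to \eqref{Eq:gL}: expand $\br{g_{ij},L_{kl}}$ via $L_{kl}=\sum_s b_{ks}\overline b_{ls}$ and use the Heisenberg-double brackets \eqref{T18}, \eqref{T19} for $\br{g_{ij},b_{ks}}$ and $\br{g_{ij},\overline b_{ls}}$.

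\medskip

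For \eqref{Eq:LL}, the cleanest route — and the one the lemma statement hints at with its parenthetical remark — is \emph{not} to expand $L_{ij}=\sum_s b_{is}\overline b_{js}$ in terms of $b$-brackets (which would be the most painful), but rather to use the second model of the Heisenberg double: recall from \eqref{+PB2} that under $m_2=(\Xi_R,\Lambda_R\Lambda_R^\dagger)$ the symplectic Poisson bracket takes the explicit form $\{\ ,\ \}_+^2$ on $C^\infty(\U(n)\times\fP(n))$, and $L$ is precisely the $\fP(n)$-coordinate. So \eqref{Eq:LL} (and also \eqref{Eq:gL}) can be read off by feeding the evaluation functions $F=L_{ij}$, $H=L_{kl}$ (resp.\ $F=g_{ij}$) into \eqref{+PB2}, computing the $\u(n)$-valued and $\b(n)$-valued derivatives $d_2$, $D_1$, $D_1'$ of these matrix elements, and taking the pairings. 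This is a finite linear-algebra computation with the $R$-matrix structure $(\cdot)_{\u(n)}$ built into \eqref{+PB2}; it reproduces the classical quadratic exchange relation for $L$ on the Heisenberg double, which is standard.

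\medskip

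The main obstacle is purely bookkeeping: keeping track of the range restrictions (the $\delta_{(i>k)}$, $\delta_{(k<i)}$, and the $\sum_{r>i}$-type tails) when performing the $s$-summations, and verifying that all $b^{-1}$-dependence cancels so that the final formulae are polynomial in the entries of $(g,L,v)$ alone. I expect \eqref{Eq:vL} to be the most delicate of the three, precisely because it mixes the Zakrzewski-type spin brackets with the triangular structure of $b$; the cancellation of inverse-$b$ terms there is the one genuine thing to check rather than merely transcribe. Since the excerpt explicitly says these computations are "straightforward \dots\ but quite voluminous and not enlightening" and relegates details to Appendix~\ref{A:Glob2} and Appendix~\ref{A:RedPr}, I would present only the expansion templates above and the statement that the summations reduce to \eqref{Eq:vL}--\eqref{Eq:LL}, referring to the appendices for the term-by-term verification.
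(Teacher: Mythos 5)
Your proposal is correct and matches the paper's own (terse) argument: the paper states that Lemma~\ref{L:vL} follows by a direct application of Lemma~\ref{L:Dressv} together with the Heisenberg-double relations \eqref{T7}--\eqref{T8}, \eqref{T18}--\eqref{T19} and the identity $L_{kl}=\sum_s b_{ks}\overline b_{ls}$, with formula \eqref{+PB2} offered as an alternative route for \eqref{Eq:gL}--\eqref{Eq:LL}, which is exactly the mixture of templates you describe. One small refinement of your ``$b^{-1}$ telescoping'' remark: you never actually need to introduce $b^{-1}$ explicitly --- when you combine the two Leibniz pieces, the restricted sums $\sum_{r<s}$, $\sum_{r>s}$, and $\sum_{r=s}$ over $b_{kr}v^\alpha_r\, b_{is}\overline b_{ls}$ patch together into the unrestricted double sum, which factorizes directly as $v(\alpha)_k L_{il}$ via $v(\alpha)=b\,v^\alpha$, so no inverse of $b$ ever appears.
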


Now we present an interesting application of the above auxiliary results.
Recall that our `free Hamiltonians' \eqref{T34} Poisson commute with the functions $I_{\alpha \beta}^k$ defined in \eqref{Eq:Int},
and hence they Poisson commute with the elements of the polynomial algebra
\begin{equation} \label{Eq:Ical}
 \Ical=\R[\tr L^k,\Re(I_{\alpha\beta}^k),\Im(I_{\alpha\beta}^k) \mid 1\leq \alpha,\beta\leq d,\, k \geq 0]\,.
\end{equation}
The algebra $\Ical$ is finitely generated as a consequence of the Cayley-Hamilton theorem for $L$.
We also note that for an arbitrary non-commutative polynomial $\mathrm{P}$ obtained as a linear combination of products of the matrices $L$ and $v(\alpha)v(\beta)^\dagger$, $1\leq \alpha,\beta\leq d$, we have that $\tr(\mathrm{P})\in \Ical$  in view of the identity
\begin{equation}
 \tr\left(L^{a_0}v(\alpha_0)v(\beta_1)^\dagger L^{a_1} v(\alpha_1)v(\beta_2)^\dagger\cdots L^{a_{l}}v(\alpha_l)v(\beta_0)^\dagger L^{a_{l+1}}\right)
 =I_{\alpha_0 \beta_0}^{a_0+a_{l+1}} I_{\alpha_1 \beta_1}^{a_1} \cdots I_{\alpha_l \beta_l}^{a_l}\,.
\end{equation}
A key property of $\Ical$ is that it is a Poisson subalgebra of $C^\infty(\cM)$.
This follows from the next result, which can be proved by direct calculation.
\begin{proposition}
 \label{P:BrII}
 For any $M,N\geq0$ and $1\leq \alpha,\beta,\gamma,\epsilon\leq d$,
 \begin{equation}
  \begin{aligned} \label{Eq:BrII}
\br{I_{\alpha\beta}^M,I_{\gamma\epsilon}^N}=&
\,\,2 \ic \delta_{\alpha \epsilon} I_{\gamma\beta}^{M+N+1} - 2 \ic \delta_{\gamma\beta} I_{\alpha\epsilon}^{M+N+1} \\
&+\ic (\delta_{\alpha\epsilon} - \delta_{\gamma\beta}) I_{\alpha\beta}^M I_{\gamma\epsilon}^N
+2 \ic \delta_{\alpha\epsilon} \sum_{\mu<\alpha}  I_{\gamma\mu}^N I_{\mu\beta}^M
-2\ic \delta_{\gamma\beta} \sum_{\lambda<\beta} I_{\alpha\lambda}^M I_{\lambda\epsilon}^N \\
&+\ic\, \sgn(\gamma-\alpha) I_{\gamma\beta}^M I_{\alpha \epsilon}^N
-\ic\, \sgn(\epsilon-\beta) I_{\gamma\beta}^N I_{\alpha\epsilon}^M \\
&+\ic \left(\sum_{b=0}^{M-1}+\sum_{b=0}^{N-1} \right)
\left(I_{\gamma\beta}^b I_{\alpha\epsilon}^{M+N-b} - I_{\gamma\beta}^{M+N-b} I_{\alpha\epsilon}^b  \right)\,.
\end{aligned}
 \end{equation}
\end{proposition}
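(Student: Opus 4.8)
The plan is to verify the identity \eqref{Eq:BrII} by a direct computation starting from the auxiliary Poisson brackets collected in Lemma \ref{L:Dressv} and Lemma \ref{L:vL}. First I would write $I_{\alpha\beta}^M = v(\beta)^\dagger L^M v(\alpha) = \sum_{i,j} \overline{v}(\beta)_j (L^M)_{ji} v(\alpha)_i$ and expand $\br{I_{\alpha\beta}^M, I_{\gamma\epsilon}^N}$ by the Leibniz rule into four groups of terms according to which pair of factors is bracketed: (i) $\br{v(\alpha)_i, v(\epsilon)_k}$ and $\br{v(\alpha)_i, \overline{v}(\epsilon)_k}$ type brackets from \eqref{Eq:Pvv}--\eqref{Eq:Pvbarv}, together with their conjugates; (ii) the mixed brackets $\br{v(\alpha)_i, (L^N)_{kl}}$ and $\br{\overline{v}(\beta)_j, (L^N)_{kl}}$ obtained from \eqref{Eq:vL} by the derivation property of the Poisson bracket applied to powers of $L$; and (iii) the pure $\br{(L^M)_{ij}, (L^N)_{kl}}$ contribution coming from \eqref{Eq:LL}. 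The recurring technical device is the formula $\br{f, (L^N)_{kl}} = \sum_{a+b=N-1} (L^a)_{kr} \br{f, L_{rs}} (L^b)_{sl}$, which converts every bracket against a matrix power of $L$ into a sum over insertions.

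The key structural simplifications to exploit are the following. Every term produced will be a product of entries of $L$ and of the vectors $v(\cdot)$, $\overline{v}(\cdot)$, and after resumming the free matrix indices each such product collapses, via the trace identity displayed just before the proposition, into a product of at most two $I$-symbols of the form $I_{\cdot\cdot}^a I_{\cdot\cdot}^b$ or a single $I_{\cdot\cdot}^{a+b}$ (the latter arising precisely when the $\delta$-functions in the brackets force the two $v$-insertions to be adjacent, producing $v(\cdot)v(\cdot)^\dagger$ contracted through an uninterrupted block $L^{a+b}$). In particular: the $\delta_{il}$ and $2\delta_{il}\sum_{r>l}$ pieces of \eqref{Eq:vL}, the analogous pieces of \eqref{Eq:Pvbarv}, and the $2\ic\delta_{\alpha\beta}(bb^\dagger)_{ik} = 2\ic\delta_{\alpha\beta} L_{ik}$ term are what generate the raised-index terms $2\ic\delta_{\alpha\epsilon} I_{\gamma\beta}^{M+N+1}$ and $-2\ic\delta_{\gamma\beta}I_{\alpha\epsilon}^{M+N+1}$; the $\sgn$ terms in \eqref{Eq:Pvv} produce the $\sgn(\gamma-\alpha)$ and $\sgn(\epsilon-\beta)$ terms; the $\sum_{\mu<\alpha}$ and $\sum_{r>k}$ terms produce the ordered sums $\sum_{\mu<\alpha}$ and $\sum_{\lambda<\beta}$; and the telescoping double sum $\ic(\sum_{b=0}^{M-1} + \sum_{b=0}^{N-1})(\cdots)$ is the signature of the pure $\br{L,L}$ bracket \eqref{Eq:LL} inserted into the two blocks $L^M$ and $L^N$, after the interior commutator terms cancel in pairs. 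I would organize the bookkeeping by first treating $M=N=0$ separately (where \eqref{Eq:BrII} reduces to the brackets of Lemma \ref{L:Dressv} contracted once), then doing the general case, keeping the insertion index $b$ explicit throughout.

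The main obstacle I expect is purely combinatorial: making the insertion sums telescope correctly and confirming that all the ``interior'' contributions — those where a bracket is evaluated strictly inside a power of $L$, away from the endpoints carrying the $v$'s — cancel, leaving only the boundary terms recorded in \eqref{Eq:BrII}. Concretely, the $2\ic\delta_{(i>k)}$-type and $\delta_{ik}$-type pieces of \eqref{Eq:vL} and \eqref{Eq:LL} generate, upon insertion into $L^M$ or $L^N$, commutator-like expressions $(L^a L^{b'} - L^{a'} L^{b})$ with $a+b = a'+b' $ fixed; summing over the insertion point $b$ makes adjacent terms cancel and only the two extreme terms survive, which is exactly the $\sum_{b=0}^{M-1} + \sum_{b=0}^{N-1}$ structure. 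Verifying this telescoping, and that the leftover endpoint terms combine with the genuinely boundary brackets (those hitting the outermost $L$'s, i.e.\ the ones adjacent to a $v$) to give precisely the stated coefficients, is the delicate part. Since the excerpt itself says ``this can be proved by direct calculation,'' I would present the computation in a streamlined way — setting up the Leibniz expansion, invoking the insertion formula, and then tracking each of the seven groups of terms on the right-hand side of \eqref{Eq:BrII} to its source — rather than writing out every index contraction in full.
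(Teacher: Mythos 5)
Your plan is essentially the paper's own route: the paper proves Proposition \ref{P:BrII} by exactly this kind of direct calculation, expanding $I_{\alpha\beta}^M=v(\beta)^\dagger L^M v(\alpha)$ via the Leibniz rule and the brackets of Lemma \ref{L:Dressv} and Lemma \ref{L:vL}, with the insertion formula for powers of $L$ handling the $\{\,\cdot\,,L^N\}$ terms. Your identification of the sources of each group of terms (the $L_{ik}$ piece of \eqref{Eq:Pvbarv} giving the $M{+}N{+}1$ terms, the $\sgn$ and ordered sums from \eqref{Eq:Pvv}--\eqref{Eq:Pvbarv}, and the telescoping boundary sums from \eqref{Eq:vL}--\eqref{Eq:LL}) is consistent with that computation, so the proposal matches the paper's argument.
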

Using that $\overline{I_{\alpha\beta}^M} = I_{\beta\alpha}^M$, one can verify that the complex Poisson brackets in \eqref{Eq:BrII} enjoy the property
\begin{equation}
 \br{\overline{I_{\alpha\beta}^M},\overline{I_{\gamma\epsilon}^N}}=\overline{\br{I_{\alpha\beta}^M,I_{\gamma\epsilon}^N}},
\end{equation}
which together with \eqref{Eq:BrII} implies that $\Ical$ \eqref{Eq:Ical} is indeed a real Poisson algebra.
Since the elements of $\Ical$ are invariant with respect to the $\U(n)$ action on $\cM$, they descend to the reduced phase space.
We shall further inspect these integrals of motion in Section \ref{sec:DIS}.

\subsection{The reduced Poisson bracket in local coordinates} \label{ssec:RP2}

In this subsection we shall derive explicit formulae for the reduced Poisson structure,
restricting ourselves to an open dense subset $\ccM_\red^\reg$ of the reduced phase space.
More precisely, it will be more convenient to work on a covering space  of $\ccM_\red^\reg$
 that supports residual $S_n$ gauge transformations.

We start by introducing the open dense subset $\ccM^\reg_0\subset \cM^\reg_0$ \eqref{H41},
 which is defined as
\be
\ccM^\reg_0:= \{ (Q, L(Q,v), v)\in \cM_0^{\reg} \mid \sum_{1\leq \alpha \leq d} v(\alpha)_i \neq 0 \text{ for }i=1,\ldots,n\, \}\,.
\label{RP2:1}
\ee
The corresponding open dense subset of the reduced phase space is
\be
\ccM_\red^\reg := \ccM^\reg_0/\cN(n)\,.
\label{RP2:2}
\ee
Using that $S_n=\cN(n)/\T^n$, we can take
the  quotient in two steps.  Thus, similarly to \eqref{stages1},
 we have
\be
\ccM_0^\reg/\cN(n) =\left( \ccM_0^\reg/\T^n\right)/S_n.
\ee
For our purpose, we choose to  identify $ \ccM_0^\reg/\T^n$ with the following
subset of $\ccM_0^\reg$
\be \label{RP2:1bis}
\ccM_{0,+}^\reg  :=
\{ (Q, L(Q,v), v)\in \cM_0^\reg \mid \sum_{1\leq \alpha \leq d} v(\alpha)_i >0
\ \hbox{for}\  i=1,\dots, n\}.
\ee
Indeed, if $(Q, L, v)\in\ccM^\reg_0$ we can write the vector
$\sum_{1\leq \alpha \leq d} v(\alpha)$ as $(U_1 e^{\ic \phi_1}, \ldots, U_n e^{\ic \phi_n})^T$ with $U_i >0$, $\phi_i \in \R$ for
all $i$. Acting on $(Q,L,v)$ by $\tau = \diag(e^{-\ic \phi_1}, \ldots,e^{-\ic \phi_n})\in \T^n$ yields an element of $\ccM_{0,+}^\reg$,
 and it is clear that $\tau$ is the unique element of $\T^n$ with this property.
The upshot is the identification
\be
\check\cM_\red^\reg \equiv  \check\cM_{0,+}^\reg/S_n.
\ee

The main reason for introducing the particular gauge slice $\ccM_{0,+}^\reg$  for the \emph{free} $\T^n$ action
on $\ccM_0^\reg$ is that $S_n$ still acts on it in the obvious manner, by permuting
the $n$ entries of $Q$ and the components of each column vector $v(\alpha)\in \C^n$.
Similar `democratic gauge fixing' was employed in the previous papers dealing with
holomorphic systems \cite{AF,AO,CF2}.  The relation between the spaces just defined and those
given in Section \ref{sec:H} is summarized in Figure \ref{FigEmb}.

\begin{figure}
 \begin{diagram}
\ccM^\reg_{0,+}&& \rInto && \ccM^\reg_0 && \rInto && \cM^{\reg}_0 && \rInto && \cM_0 && \rInto && \Lambda^{-1}(e^\gamma \1_n)  \\
&\rdTo(4,2)_{^{-/S_n}} &&&\dTo_{^{-/ \cN(n)}} &&&&\dTo_{^{-/ \cN(n)}}  &&&& \dTo &&  & \ldTo(4,2)_{^{-/\U(n)}} &\\
&&&&\ccM_\red^\reg && \rInto && \cM^{\reg}_\red && \rInto && \cM_\red &&&&
\end{diagram}
\caption{(From right to left.) $\cM_0$ is the subspace \eqref{H18} of the constraint surface $\Lambda^{-1}(e^\gamma \1_n)$ where each point $(Q,b_R,v)$ satisfies that $Q \in \T^n$.
$\cM_0^{\reg}\subset \cM_0$ is the subspace \eqref{H31} where $Q\in \T^n_{\reg}$,
$\ccM^\reg_0\subset \cM_0^{\reg}$ is the subspace \eqref{RP2:1} where the vector $\sum_{\alpha=1}^d v(\alpha)$ has only nonzero entries,
while $\ccM^\reg_{0,+}\subset \ccM^\reg_0$ is the subspace \eqref{RP2:1bis} obtained by imposing to the vector $\sum_{\alpha=1}^d v(\alpha)$ to have positive entries.
The spaces appearing on the second line are the sets corresponding to the $\U(n)$-orbits inside $\Lambda^{-1}(e^\gamma \1_n)$.
\label{FigEmb}}
\end{figure}

Let
\be
\xi: \ccM_{0,+}^\reg \to \cM
\ee
be the tautological inclusion.  General principles of  reduction theory \cite{HT,OR}
ensure that the pull-back $\xi^* \Omega_\cM$ is
symplectic and satisfies $\xi^*\Omega_\cM = {\check \pi}^* (\Omega_\red)$, where
${\check \pi}: \ccM_{0,+}^\reg \to \cM_\red$ is the canonical projection and $\Omega_\red$ is
the reduced symplectic form.
We let $\{\ ,\ \}_\red$ denote the Poisson bracket  on $C^\infty(\ccM_{0,+}^\reg)$ that corresponds to
$\xi^* \Omega_\cM$ \eqref{OmcM}, and
note that it possesses the key property
\be\label{Eq:PBrel}
\xi^* \{ F_1, F_2\} = \{ \xi^* F_1, \xi^* F_2\}_\red,
\qquad
\forall F_1, F_2 \in C^\infty(\cM)^{\U(n)},
\ee
where $\{F_1,F_2\} := \{ F_1, F_2\}_\cM$ is the Poisson bracket associated with $\Omega_\cM$ \eqref{OmcM}.
We shall determine the form of this reduced Poisson bracket by applying the identity \eqref{Eq:PBrel} to
a judiciously chosen set of invariant functions.

\begin{remark}\label{Rem:Dirac}
The bracket $\{\ ,\ \}_\red$ is also known as the  Dirac bracket \cite{HT} associated with the gauge slice
$\ccM_{0,+}^\reg$.
To avoid any potential confusion,
we stress that our notation $\{\ ,\ \}_\red$ involves a slight abuse of terminology,
since not all elements of $C^\infty(\ccM_{0,+}^\reg)$ arise as
restrictions of elements of $C^\infty(\cM)^{\U(n)}$ (which carries the reduced Poisson algebra
in the strict sense).
For example, all those restricted $\U(n)$ invariants are $S_n$ invariant function on $\ccM_{0,+}^\reg$.
However, the Poisson algebra $(C^\infty(\ccM_{0,+}^\reg), \{\ ,\ \}_\red)$
encodes all information about $(C^\infty(\cM)^{\U(n)}, \{\ ,\ \})$,
since $\ccM_{0,+}^\reg$ projects onto a dense open subset of $\cM_\red$. We shall see shortly that
 it underlies the Hamiltonian interpretation of the spin RS equations of motion given by \eqref{I8}.
\end{remark}

In order to implement the above ideas, now we introduce the following $\U(n)$ invariant
elements of $C^\infty(\cM)$
\be \label{Eqf}
 f^{\alpha \beta}_m:= \tr (v(\alpha) v(\beta)^\dagger g_R ^m) = v(\beta)^\dagger  g_R^m v(\alpha)\,, \quad  f_m:= \tr (g_R^m)\,, \quad
 m \in \N,\,\, 1\leq \alpha,\beta \leq d\,.
\ee
\begin{lemma} \label{Lem:PBff}
 For any $M,N \in \N$ and $1\leq \alpha,\beta \leq d$ we have the following Poisson bracket
 relations in $C^\infty(\cM)^{\U(n)}$:
\bea
&&  \br{f_M,f_N}=0\,, \quad \br{f_M, \bar f_N}=0\,, \quad \br{\bar f_M, \bar f_N}=0\,, \label{Eq:ff} \\
&& \br{f_M^{\alpha \beta},f_N}=-2\ic N f_{M+N}^{\alpha \beta}\,. \label{Eq:fspf}
\eea
Furthermore, letting $\phi^{\mu \nu}(a,c):=\tr[v(\mu)v(\nu)^\dagger g_R^a L g_R^c]$ for $a,b \in \N$, we have
 \begin{equation}
 \begin{aligned} \label{Eq:fspfsp}
    \br{f_M^{\alpha \beta},f_N^{\gamma \epsilon}}=&
    \,\,2\ic \left(\sum_{a=1}^M - \sum_{a=1}^N \right)f_a^{\alpha \epsilon} f_{M+N-a}^{\gamma \beta}
    -\ic f_M^{\alpha \epsilon} f_N^{\gamma \beta} + \ic f_N^{\alpha \epsilon} f_M^{\gamma \beta}\\
    &+\ic \sgn(\gamma-\alpha) f_N^{\alpha \epsilon} f_M^{\gamma \beta} -\ic \sgn(\epsilon-\beta) f_M^{\alpha \epsilon} f_N^{\gamma \beta}
    +\ic (\delta_{\alpha \epsilon} - \delta_{\gamma \beta}) f_M^{\alpha \beta} f_N^{\gamma \epsilon} \\
    &+2\ic \delta_{\alpha \epsilon}\sum_{\mu<\alpha} f_N^{\gamma \mu} f_M^{\mu \beta}
    - 2 \ic \delta_{\gamma \beta} \sum_{\lambda<\beta} f_M^{\alpha \lambda} f_N^{\lambda \epsilon} \\
    &+2\ic \delta_{\alpha \epsilon} \phi^{\gamma \beta}(M,N) - 2 \ic \delta_{\gamma \beta} \phi^{\alpha \epsilon}(N,M)\,.
 \end{aligned}
 \end{equation}
\end{lemma}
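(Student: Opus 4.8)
The plan is to compute all brackets directly from the auxiliary formulae of \S\ref{ssec:RP1}, exploiting the fact that the Heisenberg double brackets of $g=g_R$ with itself are quadratic and "closed" in $g$ alone. First I would verify \eqref{Eq:ff}. Since $f_M=\tr(g^M)$, a bracket $\br{f_M,f_N}$ is a sum of brackets $\br{g_{ij},g_{kl}}$ contracted with appropriate powers of $g$; using \eqref{T7} with $K=g$ one finds the standard statement that the traces of powers of $g_R$ Poisson commute with one another (this is the classical $r$-matrix computation: $\br{\tr g^M,\tr g^N}$ is the trace of a commutator and vanishes). The mixed bracket $\br{f_M,\bar f_N}$ requires \eqref{T8} with $K=g$; here the purely $\delta_{ik}-\delta_{jl}$ part again assembles into a commutator trace, and the two "sum" terms $\sum_{\beta>i}g_{\beta j}\bar g_{\beta l}$, $\sum_{\alpha<j}g_{i\alpha}\bar g_{k\alpha}$ cancel after contraction because $g$ is unitary so $\sum_\beta g_{\beta j}\bar g_{\beta l}=\delta_{jl}$ — the upper/lower truncations pair up and telescope. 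The bracket $\br{\bar f_M,\bar f_N}$ then follows from reality of the Poisson bracket.

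Next, \eqref{Eq:fspf}: write $f_M^{\alpha\beta}=v(\beta)^\dagger g^M v(\alpha)$ and $f_N=\tr(g^N)$, and apply the Leibniz rule. The terms where the bracket lands on $v(\alpha)$ or $\bar v(\beta)$ against $g_{kl}$ use \eqref{Eq:PvgR}, \eqref{Eq:PbarvgR}; the terms where it lands on an internal $g_{ij}$ against $g_{kl}$ use \eqref{T7}. I expect the $\U(n)$-covariance to force everything to collapse: the pieces with the spins should recombine with the $g$-$g$ pieces so that only $f_{M+N}^{\alpha\beta}$ survives with coefficient $-2\ic N$. Concretely, one power of $g^N$ gets "absorbed" by shifting the $g$-word inside $f^{\alpha\beta}_M$, and the index-dependent truncations ($\delta_{(i<k)}$ etc.) telescope against the sums $\sum_{r>i}$ coming from \eqref{T7}. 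The combinatorial bookkeeping of these telescoping sums is the first place where care is needed, but it is routine.

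The substantial computation is \eqref{Eq:fspfsp}, the bracket of two "spin-decorated" traces $f_M^{\alpha\beta}=v(\beta)^\dagger g^M v(\alpha)$ and $f_N^{\gamma\epsilon}=v(\epsilon)^\dagger g^N v(\gamma)$. By the Leibniz rule this breaks into four families: $vv$ against $vv$ (use \eqref{Eq:Pvv}, \eqref{Eq:Pvbarv}), $vv$ against internal $g$ (use \eqref{Eq:PvgR}, \eqref{Eq:PbarvgR}), and internal $g$ against internal $g$ (use \eqref{T7}, \eqref{T8}). The $\sgn(\gamma-\alpha)$, $\sgn(\epsilon-\beta)$ and $\delta_{\alpha\epsilon}$, $\delta_{\gamma\beta}$ terms, as well as the flavour sums $\sum_{\mu<\alpha}$, $\sum_{\lambda<\beta}$, come entirely from the dressed-spin brackets of Lemma \ref{L:Dressv}; the $\phi^{\gamma\beta}(M,N)$, $\phi^{\alpha\epsilon}(N,M)$ terms arise from the $2\ic\delta_{\alpha\beta}(bb^\dagger)_{ik}=2\ic\delta_{\alpha\beta}L_{ik}$ contribution in \eqref{Eq:Pvbarv}, which when sandwiched between $g^M\cdots g^N$ produces exactly $\tr[v(\gamma)v(\beta)^\dagger g^M L g^N]$; the $\sum_{a=1}^M-\sum_{a=1}^N$ double sum is the telescoping residue from the internal $g$-$g$ brackets combined with the $\sum_{r>i}$ truncations in \eqref{T8}. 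The main obstacle — and the step I would budget the most time and appendix space for — is organizing this telescoping so that the off-diagonal truncated sums (the $\delta_{(i<k)}$, $\sum_{r>i}$, $\sum_{s<l}$ pieces scattered through \eqref{T7}--\eqref{T8} and \eqref{Eq:PvgR}--\eqref{Eq:PbarvgR}) collapse into the clean finite sums $\sum_{a=1}^M$ and $\sum_{a=1}^N$; this is a purely mechanical but lengthy resummation, and it is the kind of computation that \S\ref{ssec:RP1} explicitly says is "quite voluminous and not enlightening," so I would relegate it to Appendix \ref{A:RedPr}. A useful internal consistency check throughout is that $\overline{f_M^{\alpha\beta}}=f_M^{\beta\alpha}$ and $\overline{f_M}=f_{-M}$ on $\U(n)$, so the final formula must be compatible with complex conjugation and with the Jacobi identity; verifying antisymmetry of the right-hand side of \eqref{Eq:fspfsp} under $(M,\alpha,\beta)\leftrightarrow(N,\gamma,\epsilon)$ (using $\phi^{\gamma\beta}(M,N)$ versus $\phi^{\alpha\epsilon}(N,M)$ and $\sum_{a=1}^M-\sum_{a=1}^N$ flipping sign) is a strong partial check that I would carry out before declaring the computation finished.
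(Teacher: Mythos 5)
Your proposal follows essentially the same route as the paper's proof: the identities \eqref{Eq:ff} are treated as the standard fact about class functions of $g_R$, while \eqref{Eq:fspf} and \eqref{Eq:fspfsp} are obtained by the Leibniz rule from the brackets compiled in \S\ref{ssec:RP1} (Lemma \ref{L:Dressv} together with \eqref{T7}--\eqref{T8}), and you correctly identify the origin of each term, including the $\phi^{\gamma\beta}(M,N)$, $\phi^{\alpha\epsilon}(N,M)$ contributions coming from the $2\ic\delta_{\alpha\beta}L_{ik}$ part of \eqref{Eq:Pvbarv} and the telescoping that produces the $\sum_{a=1}^{M}-\sum_{a=1}^{N}$ sums. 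The paper's own proof is exactly this computation (stated with comparable brevity), so your plan is correct and not genuinely different.
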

\begin{proof}
The identities \eqref{Eq:ff} are well-known.
To establish \eqref{Eq:fspf} we use the decomposition (here $g=g_R$)
  \begin{equation}
  \begin{aligned}
  \br{f_M^{\alpha \beta},f_N}=&N\sum_{ijkl}\br{v(\alpha)_i ,g_{kl}} \overline{v}(\beta)_j g^M_{ji}g^{N-1}_{lk}
 + N\sum_{ijkl}\br{\overline{v}(\beta)_j,g_{kl}} g^M_{ji}v(\alpha)_i g^{N-1}_{lk}\\
 & +N \sum_{ijkl}\sum_{b=0}^{M-1}\br{g_{ij},g_{kl}} (g^{M-b-1}v(\alpha)v(\beta)^\dagger g^b)_{ji}g^{N-1}_{lk}\,,
  \end{aligned}
 \end{equation}
 then use for these three terms \eqref{Eq:PvgR}, \eqref{Eq:PbarvgR} and \eqref{T7} respectively. Some obvious cancellations yield \eqref{Eq:fspf}.
Finally, \eqref{Eq:fspfsp} only requires some of the Poisson brackets gathered in \S\ref{ssec:RP1} and it can be proved in a way similar to \eqref{Eq:fspf}.
\end{proof}

Convenient variables on $\ccM_{0,+}^\reg$ are provided by the evaluation functions $Q_j=e^{\ic q_j} \in \mathrm{U}(1)$
and the real and imaginary parts of the $v(\alpha)_j\in \C$. The latter are not all independent,
since they obey the gauge fixing conditions
\be  \label{Eq:cU}
\cU_j=\Re(\cU_j)>0\,, \quad \text{ with } \quad \cU_j:=\sum_{1\leq \alpha \leq d}v(\alpha)_j\,, \quad 1\leq j \leq n\,.
\ee
It is clear that all these functions belong to $C^\infty(\ccM_{0,+}^\reg)$ and their mutual Poisson brackets
completely determine $\{\ ,\ \}_\red$.

The pull-backs of the functions \eqref{Eqf} can be written in the local variables on $\ccM_{0,+}^\reg$ as
\be \label{Eqfbi}
\xi^* f^{\alpha \beta}_m =\sum_{i=1}^n v(\alpha)_i Q_i^m \overline{v}(\beta)_i\,, \quad
\xi^* f_m = \sum_{i=1}^n Q_i^m\,,
\ee
and  we note that
\be
\sum_{\beta}  \xi^* f^{\alpha \beta}_m =\sum_{i=1}^n \cU_i v(\alpha)_i Q_i^m\,, \quad
\sum_{\alpha} \xi^*  f^{\alpha \beta}_m  =\sum_{i=1}^n \cU_i \overline{v}(\beta)_i Q_i^m\,, \quad
\sum_{\alpha,\beta}  \xi^* f^{\alpha \beta}_m  =\sum_{i=1}^n \cU_i^2 Q_i^m\,.
\ee
In conjunction with Lemma \ref{Lem:PBff} and equation \eqref{Eq:PBrel}, these expressions can be used to determine the reduced Poisson
brackets of the variables $Q_j$ and $v(\alpha)$.
To state the result, we introduce the $n\times n$ matrix-valued functions $S^0$ and $R^\alpha$, $1\leq \alpha \leq d$, whose entries are given by
\bea
&&  S^0_{ij}=\frac14\sum_{\mu,\nu} \sgn(\nu-\mu) v(\nu)_i v(\mu)_j  \label{Eq:redSij}
-\frac14 \sum_{\mu} v(\mu)_i \overline{v}(\mu)_j
-\frac12 \sum_{\nu}\sum_{\mu<\nu} v(\mu)_i \overline{v}(\mu)_j  - \frac{d}{2} L_{ij},\quad \qquad  \\
&&  R^\alpha_{ij}=L_{ij}-\frac12  \sum_{\kappa} \sgn(\kappa-\alpha)v(\kappa)_i  v(\alpha)_j
  +\frac12 v(\alpha)_i \overline{v}(\alpha)_j + \sum_{\kappa< \alpha} v(\kappa)_i  \overline{v}(\kappa)_j.\quad \label{Eq:redRij}
\eea
We also define the matrix $S$ with entries $S_{ij}=S^0_{ij}-\overline{S}^0_{ij}$.

\begin{theorem} \label{Thm:redPB}
In terms of the functions $(Q_j=e^{\ic q_j},v(\alpha)_j)$ defined on $\ccM_{0,+}^\reg$, and using the formulae \eqref{H21} for $L$
and \eqref{Eq:cU} for $\cU_j$, we can write the
reduced Poisson bracket as
\be
\br{q_i,q_j}_\red=0\,, \quad  \label{Eq:red1}
\br{v(\alpha)_i,q_j}_\red=-\delta_{ij} v(\alpha)_i\,,
\ee
\bea
   & \br{v(\alpha)_i,v(\gamma)_j}_\red=  \ic\, \sgn(\gamma-\alpha) v(\alpha)_j v(\gamma)_i
   + \ic\frac{v(\alpha)_i}{\cU_i}\frac{v(\gamma)_j}{\cU_j} S_{ij}
  + \ic \frac{v(\gamma)_j}{\cU_j} R^\alpha_{ij}
  - \ic \frac{v(\alpha)_i}{\cU_i} R^\gamma_{ji}
  \nonumber \\
  &\quad +\frac12 \ic \delta_{(i \neq j)} \frac{Q_i+Q_j}{Q_i-Q_j}\left[2 v(\alpha)_j v(\gamma)_i+ v(\alpha)_iv(\gamma)_j -\frac{\cU_i}{\cU_j}v(\alpha)_j v(\gamma)_j -
  \frac{\cU_j}{\cU_i} v(\alpha)_i v(\gamma)_i \right]\,,
 \label{Eq:red2}
 \eea
 \bea
    &\br{v(\alpha)_i,\overline{v}(\epsilon)_j}_\red= \ic
    \delta_{\alpha \epsilon} \left(v(\alpha)_i \overline{v}(\epsilon)_j + 2 \sum_{\kappa< \alpha} v(\kappa)_i
    \overline{v}(\kappa)_{j}+2L_{ij}\right) \nonumber \\
  &\qquad +\frac12 \ic \delta_{(i \neq j)} \frac{Q_i+Q_j}{Q_i-Q_j}\left[- v(\alpha)_i
  \overline{v}(\epsilon)_j +\frac{\cU_i}{\cU_j}v(\alpha)_j \overline{v}(\epsilon)_j +
  \frac{\cU_j}{\cU_i} v(\alpha)_i \overline{v}(\epsilon)_i \right]    \label{Eq:red3}\\
  &- \ic\frac{v(\alpha)_i}{\cU_i}\frac{\overline{v}(\epsilon)_j}{\cU_j} S_{ij}
  -\ic \frac{\overline{v}(\epsilon)_j}{\cU_j} R^\alpha_{ij}
  - \ic \frac{v(\alpha)_i}{\cU_i} \overline{R}^\epsilon_{ji}\,. \qquad\qquad\qquad \qquad \nonumber
 \eea
The bracket $\br{-,-}_\red$ is invariant under  simultaneous permutations of the $n$ components
of the variables $q$ and $v(\alpha)$ for $\alpha=1,\dots, d$.
\end{theorem}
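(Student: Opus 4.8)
The plan is to pin down $\{\ ,\ \}_\red$ by exploiting the transfer identity \eqref{Eq:PBrel} for the $\U(n)$ invariant functions $f_m$ and $f_m^{\alpha\beta}$ of \eqref{Eqf}: their pull-backs to the gauge slice $\ccM_{0,+}^\reg$ are the explicit polynomials \eqref{Eqfbi} in the coordinates $Q_j=e^{\ic q_j}$ and $v(\alpha)_j$, while their Poisson brackets on $\cM$ are recorded in Lemma \ref{Lem:PBff}; combining the two yields a large system of identities among the reduced brackets. This system is in principle enough to determine $\{\ ,\ \}_\red$ completely, because the differentials $d(\xi^* f_m)$ ($m=1,\dots,n$) span the $dq$-directions by a Vandermonde argument ($Q_i$ distinct on $\T^n_\reg$), while for each site $i$ the differentials of the entries of $v_i^\bullet (v_i^\bullet)^\dagger$, with $v_i^\bullet:=(v(1)_i,\dots,v(d)_i)^T$, span the remaining $2d-1$ cotangent directions along the slice $\{\Im\cU_i=0\}$ — these entries being recovered from the $\xi^* f_m^{\alpha\beta}$ by another Vandermonde step. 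Hence the substance of the proof is to organize the identities into the form \eqref{Eq:red1}--\eqref{Eq:red3}.

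I would carry this out in three stages, using throughout Vandermonde arguments (if $\sum_i Q_i^m c_i=\sum_i Q_i^m d_i$ for all $m$ then $c_i=d_i$), the reality relations $\overline{Q_i}=Q_i^{-1}$ and $\overline{f_m^{\alpha\beta}}=f_m^{\beta\alpha}$, and — crucially — the defining property of the Dirac bracket attached to the slice, namely $\{\Im\cU_j,\,\cdot\,\}_\red\equiv 0$ for every $j$ (cf.\ Remark \ref{Rem:Dirac}). First, $\br{f_M,f_N}=0$ in \eqref{Eq:ff} gives at once $\br{Q_i,Q_j}_\red=0$, the first relation in \eqref{Eq:red1}. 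Second, expanding $\br{\xi^* f_M^{\alpha\beta},\xi^* f_N}_\red=-2\ic N\,\xi^* f_{M+N}^{\alpha\beta}$ from \eqref{Eq:fspf} and feeding in $\br{Q,Q}_\red=0$, the Vandermonde step forces $\br{v(\alpha)_i,Q_j}_\red$ to be a multiple of $v(\alpha)_i$; imposing $\{\Im\cU_i,Q_j\}_\red=0$ together with $\cU_i>0$ then kills the coefficient for $i\neq j$ and fixes it to $-\ic Q_i v(\alpha)_i$ for $i=j$, i.e.\ $\br{v(\alpha)_i,q_j}_\red=-\delta_{ij}v(\alpha)_i$, the second relation in \eqref{Eq:red1}. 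Third, the same treatment of \eqref{Eq:fspfsp} — in which the auxiliary functions $\phi^{\mu\nu}(a,b)$ pull back to $\sum_{i,k}\overline{v}(\nu)_i\, Q_i^a L_{ik} Q_k^b\, v(\mu)_k$ with $L$ substituted from \eqref{H21} — produces, after the site-by-site Vandermonde extraction, disentangling the dependence on the indices $(\alpha,\beta,\gamma,\epsilon)$, and a final use of $\{\Im\cU_j,\,\cdot\,\}_\red=0$ and reality, explicit expressions for $\br{v(\alpha)_i,v(\gamma)_j}_\red$ and $\br{v(\alpha)_i,\overline{v}(\epsilon)_j}_\red$; recognizing in them the combinations \eqref{Eq:redSij}, \eqref{Eq:redRij} and $S=S^0-\overline{S}^0$ gives \eqref{Eq:red2} and \eqref{Eq:red3}. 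I expect this third stage to be the main obstacle: the identities coming out of \eqref{Eq:fspfsp} are heavily overdetermined, and although their solvability is guaranteed a priori (the reduced bracket exists), arranging the many terms so that the answer collapses onto $S^0$ and $R^\alpha$, and checking that the gauge constraints remove every residual ambiguity, is a bulky bookkeeping exercise, most of which would be deferred to Appendix \ref{A:RedPr}.

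For the last assertion of the theorem I would argue structurally rather than by inspecting the formulae. By \eqref{stages1} and its `checked' analogue, the canonical projection $\check\pi\colon\ccM_{0,+}^\reg\to\ccM_\red^\reg$ is a principal $S_n$-bundle, with $S_n$ acting by simultaneous permutation of the $n$ entries of $Q$ and of the $n$ components of each column $v(\alpha)$; this action is precisely the restriction to the slice of the obvious $\U(n)$ action \eqref{obvA} by permutation matrices, which normalize $\T^n$. Since $\{\ ,\ \}_\red$ on $\ccM_{0,+}^\reg$ is the Poisson structure of the symplectic form $\xi^*\Omega_\cM=\check\pi^*\Omega_\red$ and every $\sigma\in S_n$ satisfies $\check\pi\circ\sigma=\check\pi$ (points of one $\U(n)$-orbit have a common image), we get $\sigma^*\bigl(\xi^*\Omega_\cM\bigr)=\sigma^*\check\pi^*\Omega_\red=\check\pi^*\Omega_\red=\xi^*\Omega_\cM$, so each $\sigma$ is a symplectomorphism of $(\ccM_{0,+}^\reg,\xi^*\Omega_\cM)$ and hence a Poisson map for $\{\ ,\ \}_\red$; this is exactly the claimed permutation invariance. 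As a cross-check the invariance is also visible in the formulae themselves: the $Q_i$, the $v(\alpha)_i$, the $\cU_i$, the entries $L_{ij}$ from \eqref{H21}, the sign factors, the matrices $S^0$ and $R^\alpha$ of \eqref{Eq:redSij}--\eqref{Eq:redRij}, and the rational factors $\tfrac{Q_i+Q_j}{Q_i-Q_j}$ all transform equivariantly under a simultaneous permutation of the site labels $1,\dots,n$, so the right-hand sides of \eqref{Eq:red1}--\eqref{Eq:red3} do too.
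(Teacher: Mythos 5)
Your proposal follows essentially the same route as the paper's Appendix \ref{A:RedPr}: both determine $\{\ ,\ \}_\red$ by applying the transfer identity \eqref{Eq:PBrel} to the invariant functions $f_m$, $f_m^{\alpha\beta}$ of \eqref{Eqf}, combining Lemma \ref{Lem:PBff} with the pull-backs \eqref{Eqfbi}, and extracting the brackets site by site through Vandermonde-type invertibility (using $Q\in\T^n_\reg$ and $\cU_j>0$). The only organizational difference is that the paper first pins down the intermediate brackets $\{\cU_i,q_j\}_\red$, $\{\cU_i,\cU_j\}_\red$, $\{v(\alpha)_i,\cU_j\}_\red$ by partial summation over the spin indices and invertibility of the matrices $\cE,\tilde\cE$ of Lemma \ref{L:Cinv}, whereas you extract the unsummed per-site systems and remove the residual ambiguity via the consistency $\{\Im\cU_j,\,\cdot\,\}_\red=0$ — both mechanisms are sound — and your structural derivation of the $S_n$-invariance from $\check\pi\circ\sigma=\check\pi$ and $\xi^*\Omega_\cM=\check\pi^*\Omega_\red$ is a valid argument for a point the paper leaves to inspection of the formulae.
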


The proof of this result is the subject of Appendix \ref{A:RedPr}.
Let us already mention that the reader can check the  reality condition
$\br{\overline{v}(\alpha)_i,v(\epsilon)_j}_\red  = \overline{\br{v(\alpha)_i,\overline{v}(\epsilon)_j}}_\red$.

\medskip

We know from Corollary \ref{Cor:vf} that the projection of the Hamiltonian vector field of $H=  (e^{2\gamma}-1)\, \tr (L)$
onto the gauge slice $\ccM_{0,+}^\reg$
leads to the equations of motion \eqref{I8}.
Of course, the corresponding reduced Hamiltonian must generate the same evolution equations
via the reduced Poisson bracket.  The reduced Hamiltonian is encoded
by the pull-back $\cH := \xi^*H$ on $\ccM_{0,+}^\reg$.
Thus, the next result shows the consistency of the computations performed in Section \ref{sec:N} and Section \ref{sec:RP}.

 \begin{corollary}\label{Cor:vfredPB}
 Consider the reduced Hamiltonian
 \be
 \cH(Q,v)=(e^{2\gamma}-1)\, \tr (L(Q,v)) = \sum_{k=1}^n F_{kk},
 \qquad F_{kk} = \sum_{\alpha=1}^d v(\alpha)_k \overline{v}(\alpha)_k,
 \label{redHam}\ee
  on the gauge slice $\ccM_{0,+}^\reg$.
Then the Hamiltonian vector field generated by $\cH$ via the reduced Poisson bracket of Theorem \ref{Thm:redPB}
reproduces the equations of motion \eqref{I8}--\eqref{I9}.
 \end{corollary}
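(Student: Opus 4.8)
The plan is to verify the two reduced evolution equations $\dot q_i := \{q_i,\cH\}_\red$ and $\dot v(\alpha)_i := \{v(\alpha)_i,\cH\}_\red$ directly from Theorem \ref{Thm:redPB}, writing $\cH = \sum_{\beta=1}^{d}\sum_{k=1}^{n} v(\beta)_k\overline v(\beta)_k$ by \eqref{redHam} and using the Leibniz rule; this is exactly the ``consistency'' between Sections \ref{sec:N} and \ref{sec:RP} alluded to above. The $q$-equation is immediate: from the second relation in \eqref{Eq:red1} and reality of the bracket one gets $\{q_i,v(\beta)_k\}_\red = \delta_{ik}v(\beta)_k$ and $\{q_i,\overline v(\beta)_k\}_\red = \delta_{ik}\overline v(\beta)_k$, hence $\{q_i,\cH\}_\red = 2\sum_\beta|v(\beta)_i|^2 = 2F_{ii}$, i.e. $\tfrac12\dot q_i = F_{ii}$ as in \eqref{I8}. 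For the spin equation I would expand $\{v(\alpha)_i,\cH\}_\red = \sum_{\beta,k}\bigl[\{v(\alpha)_i,v(\beta)_k\}_\red\,\overline v(\beta)_k + v(\beta)_k\,\{v(\alpha)_i,\overline v(\beta)_k\}_\red\bigr]$, substitute the explicit formulae \eqref{Eq:red2} and \eqref{Eq:red3} with $\gamma,\epsilon\to\beta$ and $j\to k$, and perform the $\beta$-summations. On the gauge slice $\ccM_{0,+}^\reg$ these are governed by $\sum_\beta v(\beta)_k = \cU_k = \sum_\beta\overline v(\beta)_k$ (real and positive) and $\sum_\beta v(\beta)_k\overline v(\beta)_l = F_{kl}$, together with the obvious analogues for the truncated sums $\sum_{\kappa<\alpha}$, $\sum_{\kappa>\alpha}$ occurring inside $S^0$, $R^\alpha$, $\overline R^\epsilon$ in \eqref{Eq:redSij}--\eqref{Eq:redRij}; afterwards every term is an explicit function of $Q$, $v(\alpha)$, $F$, $L$ and $\cU$.

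The off-diagonal part of \eqref{I8} should come out of a specific block of terms. The $\ic\,\sgn(\beta-\alpha)$ term of \eqref{Eq:red2} and the $\delta_{\alpha\beta}$-part of \eqref{Eq:red3}, once the $\beta$-sum is performed using that the contributions with $\beta<\alpha$, $\beta>\alpha$ and $\beta=\alpha$ add up to the full sum, combine into $\ic\sum_k v(\alpha)_k(F_{ik}+2L_{ik})$; and from \eqref{H21} one has the identity $F_{ik}+2L_{ik} = -\ic F_{ik}\cot\!\bigl(\tfrac{q_k-q_i}{2}-\ic\gamma\bigr)$, so this block equals $\sum_k F_{ik}\cot\!\bigl(\tfrac{q_k-q_i}{2}-\ic\gamma\bigr)v(\alpha)_k$. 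On the other hand the $2F_{ik}v(\alpha)_k$ piece of the $\tfrac{Q_i+Q_k}{Q_i-Q_k}$-term of \eqref{Eq:red2}, rewritten through $\tfrac{Q_i+Q_k}{Q_i-Q_k} = \ic\cot\tfrac{q_k-q_i}{2}$, contributes $-F_{ik}\cot\tfrac{q_k-q_i}{2}\,v(\alpha)_k$. For $k=l\neq i$ these two add to $-F_{il}\bigl[\cot\tfrac{q_l-q_i}{2}-\cot\bigl(\tfrac{q_l-q_i}{2}-\ic\gamma\bigr)\bigr]v(\alpha)_l = -F_{il}\,V\!\bigl(\tfrac{q_{li}}{2}\bigr)v(\alpha)_l$, precisely the summand in \eqref{I8}.

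What remains is to confirm that everything else — the $k=i$ remainder of that block, all the terms carrying the prefactor $v(\alpha)_i/\cU_i$ through $S$, $R^\alpha$, $\overline R^\epsilon$, the $\tfrac{\cU_j}{\cU_i}v(\alpha)_iv(\gamma)_i$ and $\tfrac{\cU_i}{\cU_j}v(\alpha)_j(\cdot)$ pieces of the $\tfrac{Q_i+Q_j}{Q_i-Q_j}$-terms, and the $L$-terms of \eqref{Eq:red3} left over after the cancellation above — organises into a single diagonal term $\ic\eta_i\,v(\alpha)_i$, all spurious off-diagonal pieces cancelling. Reading off $\eta_i$ and simplifying with the cotangent identities then reproduces \eqref{I9}; a built-in sanity check is that the $\ic\eta_i$ so obtained must be purely imaginary, which is nothing but tangency of $\{v(\alpha)_i,\cH\}_\red$ to the gauge locus $\{\Im\cU_i=0\}$. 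I expect this last bookkeeping — tracking which summands of $S^0$, $R^\alpha$, $\overline R^\epsilon$ collapse to $F_{kk}$, $F_{ki}$ or $\cU_k$, which telescope, and verifying all cancellations — to be the only genuine obstacle; it is long but entirely mechanical, and would be relegated to an appendix in the style of Appendix \ref{A:RedPr}. As a conceptually quicker cross-check one may instead invoke Remark \ref{Rem:Dirac}: the $\{\ ,\ \}_\red$-Hamiltonian field of $\cH=\xi^*H$ must equal the restriction to $\ccM_{0,+}^\reg$ of the projected free field $Y_H^0$ of \eqref{N8}, the diagonal freedom $\cZ=\diag(\ic\eta_1,\dots,\ic\eta_n)\in\u(n)_0$ being fixed by $Y_H^0[\Im\cU_i]=0$; since $\dot\cU_i=\ic\eta_i\cU_i+\sum_{l\neq i}K_{il}\cU_l$ with $K$ as in \eqref{N16}, reality of $\dot\cU_i$ forces $\ic\eta_i=-\tfrac12\sum_{l\neq i}\tfrac{\cU_l}{\cU_i}(K_{il}-\overline K_{il})$, and the identities $K_{il}=-F_{il}V(\tfrac{q_{li}}{2})$, $\overline K_{il}=F_{li}V(\tfrac{q_{il}}{2})$ turn this into \eqref{I9} while \eqref{N8} gives \eqref{I8}.
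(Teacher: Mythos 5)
Your proposal is correct and its main line is the same as the paper's: expand $\cH=\sum_k F_{kk}$ by Leibniz, feed in Theorem \ref{Thm:redPB}, and convert the combination $F_{ik}+\tfrac{Q_i+Q_k}{Q_i-Q_k}F_{ik}+2L_{ik}$ into $\ic^{-1}$ times $-F_{ik}V(\tfrac{q_k-q_i}{2})$ using \eqref{H21}; this is literally the paper's identity \eqref{Eq:FQQ}, and your treatment of the $q$-equation and of the off-diagonal spin terms (the $\sgn(\beta-\alpha)$ block of \eqref{Eq:red2} plus the $\delta_{\alpha\beta}$ block of \eqref{Eq:red3} summing to $\ic v(\alpha)_k(F_{ik}+2L_{ik})$) is exactly what happens in the paper. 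The one step you leave as ``mechanical bookkeeping'' --- that the $S$, $R^\alpha$, $\overline{R}^\epsilon$ and $\cU$-ratio terms collapse into a single diagonal contribution --- is precisely what the paper supplies by recording the closed intermediate formula for $\br{v(\alpha)_i,F_{kk}}_\red$: its second line is $-\tfrac12\ic\,\delta_{(i\neq k)}\,v(\alpha)_i\tfrac{\cU_k}{\cU_i}$ times the symmetrized combination, which after \eqref{Eq:FQQ} and the sum over $k$ is exactly $\ic\eta_i v(\alpha)_i$ with $\eta_i$ of \eqref{I9}; so no separate tangency argument is needed once that formula is in hand. Your fallback route via Remark \ref{Rem:Dirac} --- identifying the $\{\,,\,\}_\red$-flow of $\xi^*H$ with the restriction of $Y_H^0$ of \eqref{N8} and fixing $\cZ\in\u(n)_0$ by $Y_H^0[\Im\cU_i]=0$, which with $K_{il}=-F_{il}V(\tfrac{q_{li}}{2})$ from \eqref{N16} gives \eqref{I9} --- is sound and does close the gap, but note it leans on the general Dirac-bracket/gauge-fixing equivalence whose explicit verification is the stated purpose of this corollary (``consistency of Sections \ref{sec:N} and \ref{sec:RP}''), so the paper deliberately proves the statement by the direct computation rather than by invoking that equivalence.
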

\begin{proof}
 We get from \eqref{Eq:red1} that
 \be
\dot q_j := \br{q_j,\cH }_\red = \sum_k \br{q_j, F_{kk}}_\red  =2 F_{jj}\,,
 \ee
 which is just \eqref{I8}. To compute $\dot{v}(\alpha)_i$, we use that $F_{kl}=\sum_{\alpha=1}^d v(\alpha)_k \overline{v}(\alpha)_l$
 together with Theorem \ref{Thm:redPB} in order to obtain
  \begin{equation}
  \begin{aligned}
&\br{v(\alpha)_i,F_{kk}}_\red=\ic\delta_{(i \neq k)} v(\alpha)_k \left[F_{ik} + \frac{Q_i+Q_k}{Q_i-Q_k}F_{ik}+2L_{ik} \right] \\
&\quad-\frac12\ic\delta_{(i \neq k)} v(\alpha)_i \frac{\cU_k}{\cU_i} \left[\left(F_{ik} + \frac{Q_i+Q_k}{Q_i-Q_k}F_{ik}+2L_{ik}\right) + \left(F_{ki} + \frac{Q_k+Q_i}{Q_k-Q_i}F_{ki}+2L_{ki}\right)\right].
\end{aligned}
 \end{equation}
Noticing the identity
\begin{equation} \label{Eq:FQQ}
 \ic\left(F_{ik} + \frac{Q_i+Q_k}{Q_i-Q_k}F_{ik}+2 L_{ik}\right)=-F_{ik} V\left(\frac{q_k-q_i}{2}\right)\,,
\end{equation}
  where $V(x)$ is the potential \eqref{I5}, this allows us to write
   \begin{equation}
  \begin{aligned}
\br{v(\alpha)_i,F_{kk}}_\red =&-\delta_{(i \neq k)} v(\alpha)_k F_{ik}V\left(\frac{q_k-q_i}{2}\right) \\
&+\frac12\delta_{(i \neq k)} v(\alpha)_i \frac{\cU_k}{\cU_i} \left[F_{ik}V\left(\frac{q_k-q_i}{2}\right) + F_{ki} V\left(\frac{q_i-q_k}{2}\right) \right] \,.
\end{aligned}
 \end{equation}
 Summing  over $k$ precisely gives  $\dot{v}(\alpha)_i$ in  \eqref{I8} with \eqref{I9}.
\end{proof}

As a second consequence of Theorem \ref{Thm:redPB}, we can write down the reduced Poisson brackets of the
 `collective spins' $(F_{ij})$, which can be found in Appendix \ref{ssA:Coll2}.
 By using equation \eqref{H21}, then we can obtain the formula for the Poisson brackets of the entries of the
 Lax matrix on  $\ccM_{0,+}^\reg$, which implies that the symmetric functions of $L$ are in involution.
 This is in agreement with the fact that $\fH$ given in \eqref{T34} is an Abelian Poisson subalgebra of $C^\infty(\cM)$.
To present the desired formula, we use the matrix $S$ defined before Theorem \ref{Thm:redPB}. We also define
\bea
&&\! r_{12}:=\sum_{a\neq b} \frac{\ic Q_b}{Q_a-Q_b} E_{aa} \otimes \left(E_{bb} - \frac{\cU_b}{\cU_a} E_{ba} \right)
  -\sum_{a\neq b} \frac{\ic Q_a}{Q_a-Q_b} E_{ab} \otimes \left(\frac{\cU_a}{\cU_b} E_{bb}-2E_{ba} \right) \nonumber \\
&&\qquad +\sum_{a,b} \ic \frac{S_{ab}}{\cU_a \cU_b}\, E_{aa}\otimes E_{bb} + \ic \sum_a E_{aa}\otimes E_{aa}\,, \label{r12}
\eea
and
\bea
&&\! s_{12}:=\sum_{a\neq b} \frac{\ic Q_a}{Q_a-Q_b} E_{aa} \otimes \left(\frac{\cU_b}{\cU_a} E_{ab}-E_{bb} \right)
  + \sum_{a\neq b} \frac{\ic Q_a}{Q_a-Q_b}\frac{\cU_a}{\cU_b} E_{ab} \otimes E_{bb} \nonumber \\
&&\qquad -\sum_{a,b} \ic \frac{S_{ab}}{\cU_a \cU_b}\, E_{aa}\otimes E_{bb} + \frac12 \ic \sum_a E_{aa}\otimes E_{aa}\,, \label{s12}
\eea
where $E_{ab}$ is the  $n\times n$ elementary matrix  with only nonzero entry equal to $+1$ in position $(a,b)$.
\begin{proposition} \label{Pr:RUform}
 On the gauge slice  $\ccM_{0,+}^\reg$   \eqref{RP2:1bis}, the entries of the Lax matrix $L$ \eqref{H21} satisfy
 \begin{equation} \label{RUform}
 \br{L_1,L_2}_\red=r_{12} L_1L_2 + L_1L_2 t_{12} - L_1 s_{21} L_2 + L_2 s_{12} L_1 \,,
\end{equation}
where $t_{12}=-s_{12}+s_{21}-r_{12}$. This relation implies  that the functions $\tr (L^k)$ are in involution.
\end{proposition}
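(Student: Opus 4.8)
The plan is to obtain \eqref{RUform} by combining the reduced brackets already available with the relation between the Lax matrix and the collective spins, and then to read off involution of the $\tr(L^k)$ from the algebraic form of \eqref{RUform} alone. For the first part I would start from \eqref{H21}, writing $L_{ij}=\cI_{ij}F_{ij}$ with $\cI_{ij}:=(e^{2\gamma}Q_jQ_i^{-1}-1)^{-1}$ depending only on $q$ and $F_{ij}=\sum_\alpha v(\alpha)_i\overline v(\alpha)_j$. Since $\br{q_i,q_j}_\red=0$ by \eqref{Eq:red1}, the Leibniz rule gives
\[
\br{L_{ij},L_{kl}}_\red=\cI_{ij}\cI_{kl}\br{F_{ij},F_{kl}}_\red+F_{ij}\cI_{kl}\br{\cI_{ij},F_{kl}}_\red+\cI_{ij}F_{kl}\br{F_{ij},\cI_{kl}}_\red .
\]
Here $\br{F_{ij},F_{kl}}_\red$ is the collective-spin bracket computed in Appendix~\ref{ssA:Coll2} out of Theorem \ref{Thm:redPB}, while from \eqref{Eq:red1} together with reality of the reduced bracket one gets $\br{v(\alpha)_i,q_k}_\red=-\delta_{ik}v(\alpha)_i$ and $\br{\overline v(\alpha)_i,q_k}_\red=-\delta_{ik}\overline v(\alpha)_i$, hence $\br{F_{ij},q_k}_\red=-(\delta_{ik}+\delta_{jk})F_{ij}$. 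The chain rule in the $q$-variables then yields $\br{\cI_{ij},F_{kl}}_\red=(\partial_{q_k}+\partial_{q_l})\cI_{ij}\cdot F_{kl}$ with $\partial_{q_a}\cI_{ij}=-\ic(\delta_{aj}-\delta_{ai})(\cI_{ij}+\cI_{ij}^2)$, so every ingredient is explicit.

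The second step is to substitute these expressions and repackage the many index terms into operators on $\C^n\otimes\C^n$. Each term that appears is of one of four structural types according to the position of the two factors of $L$: a left factor $L_1L_2$, a right factor $L_1L_2$, or a ``sandwich'' $L_1(\cdot)L_2$ or $L_2(\cdot)L_1$; reading off the coefficient functions --- which are built from the $Q_a$, the $\cU_a=\sum_\alpha v(\alpha)_a$ and the matrix $S$ (the $\cU_a$ in the denominators originate from the $v$--$\overline v$ brackets \eqref{Eq:red3}, and the $R^\alpha$ and $S^0$ contributions recombine into $L$ and $S$ after the sums over $\alpha,\beta$ in $\br{F,F}_\red$) --- one obtains precisely the matrices $r_{12}$, $t_{12}$, $s_{12}$, $s_{21}$ of \eqref{r12}--\eqref{s12}, and in particular the relation $t_{12}=-s_{12}+s_{21}-r_{12}$ falls out. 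This repackaging is the only genuine obstacle: it is lengthy and requires care with the $\cU_a$-denominators and with the recombination of the $R^\alpha$ and $S^0$ terms; the rest is routine.

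For the involution statement, a standard application of the Leibniz rule to each factor of $\tr L^m$ and $\tr L^k$ gives, for $m,k\geq1$,
\[
\br{\tr L^m,\tr L^k}_\red=\sum_{a=0}^{m-1}\sum_{b=0}^{k-1}\tr_{12}\!\left(L_1^aL_2^b\,\br{L_1,L_2}_\red\,L_1^{m-1-a}L_2^{k-1-b}\right),
\]
where $\tr_{12}$ is the trace over $\C^n\otimes\C^n$ and $L_1=L\otimes\1$, $L_2=\1\otimes L$ commute. Inserting \eqref{RUform} and using cyclicity of $\tr_{12}$, each of its four terms collapses to the summation-independent quantity $\tr_{12}(L_1^mL_2^k\,x_{12})$ with $x_{12}=r_{12},\,t_{12},\,-s_{21},\,s_{12}$ respectively, so that
\[
\br{\tr L^m,\tr L^k}_\red=mk\,\tr_{12}\!\big(L_1^mL_2^k\,(r_{12}+t_{12}-s_{21}+s_{12})\big)=0 ,
\]
because $r_{12}+t_{12}-s_{21}+s_{12}=0$ by the definition of $t_{12}$. (Equivalently, \eqref{RUform} can be rewritten as $\br{L_1,L_2}_\red=[r_{12},L_1L_2]+L_1[L_2,s_{21}]+L_2[s_{12},L_1]$, from which each term visibly drops out of the trace.) This is in harmony with $\fH$ of \eqref{T34} being an Abelian Poisson subalgebra of $C^\infty(\cM)$.
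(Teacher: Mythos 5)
Your proposal follows essentially the same route as the paper: the Lax bracket is obtained from the collective-spin brackets of Lemma \ref{L:invFF} combined with \eqref{H21} via the Leibniz rule (your ingredients $\br{F_{ij},q_k}_\red=-(\delta_{ik}+\delta_{jk})F_{ij}$ and $\partial_{q_a}\cI_{ij}=-\ic(\delta_{aj}-\delta_{ai})(\cI_{ij}+\cI_{ij}^2)$ are correct), and the involution of the $\tr(L^k)$ then follows from the $r/s/t$ structure by the standard cyclic-trace argument, which you carry out correctly using $r_{12}+t_{12}+s_{12}-s_{21}=0$. The only part you leave as a claim, namely the repackaging of the index expressions into the specific matrices \eqref{r12}--\eqref{s12}, is exactly the lengthy direct computation that the paper likewise does not display, so your write-up matches the paper's proof in both method and level of detail.
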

In \eqref{RUform}, we used the standard notations $L_1=L \otimes \1_n$, $L_2=\1_n \otimes L$, and
$\br{L_1,L_2}_\red=\sum_{ijkl} \br{L_{ij},L_{kl}}_{\red} E_{ij}\otimes E_{kl}$, where the entries of $L$ are seen as
evaluation functions on $\ccM_{0,+}^\reg$.
\begin{remark}
The formulae of Theorem \ref{Thm:redPB} exhibit an interesting two-body structure in the
sense that the Poisson brackets of the basic variables with particle labels $i$ and $j$ close
on this subset of the variables. This is consistent with the
fact that the Hamiltonian \eqref{I10} is the sum of one-body terms, while the equations of motion
\eqref{I8}--\eqref{I9}
reflect two-body interactions. It should be stressed that this interpretation is based on viewing
$q_i$ and the dressed spin $v(-)_i$ as degrees of freedom belonging to particle $i$.
The same features hold in the complex holomorphic spin RS models as well \cite{AF,AO,CF2}.
It is also worth noting that
 the formulae of Theorem \ref{Thm:redPB} enjoy a  nice homogeneity property. Namely,
let us define
a $\Z^n$-valued weight
 $\wt{-}$   by setting
\begin{equation}
\wt{1} = \wt{ q_j} =\wt{ Q_j } = 0\,, \quad
 \wt{ \overline{v}(\alpha)_j } = \wt{ v(\alpha)_j} =\BE_j\,,
  \quad \text{for }  1\leq j \leq n \,,
\end{equation}
where $\BE_j\in \Z^n$ is $+1$ in its $j$-th entry and zero everywhere else. Extending this weight by
$ \wt{ f g } = \wt{ f } +
\wt{ g } $ for homogeneous elements $f,g$, we easily get that
 \be
  \wt{ \cU_j^{\pm1} } =\pm\BE_j\,,
  \quad  \wt{ F_{ij} } =\BE_i+\BE_j\,, \quad
  \wt{ L_{ij} } =\BE_i+\BE_j\,.
 \ee
We can then observe from  \eqref{Eq:red1}--\eqref{Eq:red3}
that the reduced Poisson bracket preserves this weight.
\end{remark}

\section{Degenerate integrability of the reduced system} \label{sec:DIS}

We discussed the degenerate integrability of the unreduced free system in \S\ref{ss:unredIS}, and now wish to show that this
property is inherited by the reduced system.
This is expected to hold not only  in view of the earlier
results on  holomorphic spin RS systems \cite{AO,CF2} and related  models  \cite{Res1,Res2,Res3}, but also on account
of a general result in reduction theory.
In fact, it is known (Theorem 2.16 in \cite{Zung}, see also \cite{J})
that the integrability of invariant Hamiltonians on a manifold
descends generically to the reduced space of Poisson reduction.
However, the pertinent spaces of group orbits  are typically not smooth manifolds.
The existing results provide strong motivation, but  do not help us directly
 to establish integrability in our concrete case.

Our goal is
to prove the  degenerate integrability of the reduced system in the
real-analytic category by explicitly displaying the required integrals of motion.
Specifically, we wish to show that the $n$ reduced Hamiltonians arising from the functions
\be
\tr(L^k), \qquad k=1,\dots, n,
\label{S1}\ee
are functionally independent, and that one can complement them to $(2nd -n)$ independent functions
using suitable
reduced integrals of motion that arise from the real and imaginary parts of the $\U(n)$ invariant functions
\be
I^k_{\alpha \beta} = v(\beta)^\dagger L^k v(\alpha).
\label{S2}\ee
These integrals of motion appeared before in Proposition \ref{P:BrII}.
As throughout the paper, we assume that $d\geq 2$.

The independence of functions means linear independence of their exterior derivatives
at generic points, and this can be translated into the
non-vanishing of a suitable Jacobian determinant.
For real-analytic functions,
the determinant at issue is also real-analytic, and hence it is generically non-zero if
it is non-zero at a single point.
Thus, by patching together analytic charts, one sees that on a connected real-analytic manifold
independence of real analytic functions follows from the linear independence of their derivatives at
a single point. We can use this observation since we know (see Remark \ref{Rem:EmbComplex}
 and Corollary \ref{Cor:Dense}) that $\cM_\red$ is a connected
real-analytic manifold.

\subsection{Construction of local coordinates}

Our first goal below is to construct local coordinates around certain points
of the reduced phase space in which the formulae of the integrals of motion
become simple. The coordinates will involve the eigenvalues of $L$, whereby
the Hamiltonians $\tr(L^k)$ acquire a trivial form.
We start by noting that the moment map constraint admits solutions for which
only a single one of the vectors $v(\alpha)$ is non-zero.  Concerning those elements
of $\Lambda^{-1}(e^\gamma \1_n)$,  the following useful result can be
obtained from (the proof of) Lemma 5.2 of \cite{FK2}.

\begin{lemma} \label{Lem:d1}
Consider any  $y \in \R^n$ whose components $ y_1,\dots, y_n$ satisfy the inequalities
\begin{equation} \label{Eq:yL0}
 y_i> e^{2\gamma}y_{i+1}
 \quad\forall i=1,\dots, n\quad \hbox{with}\quad y_{n+1}:= 0.
 \end{equation}
Then there exists $(g^0,L^0,v^0)\in \Lambda^{-1}(e^\gamma \1_n)$ such that $L^0=\diag(y_1,\ldots,y_n)$ and
$v(\alpha)^0=0$ for each $1\leq \alpha <d$ (where $d\geq 2$ and $\gamma >0$). For such elements all components
of the vector $v(d)^0$ are non-zero.
\end{lemma}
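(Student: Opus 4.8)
The plan is to reduce the assertion to a rank-one inverse eigenvalue problem for Hermitian matrices. Set $L^0:=\diag(y_1,\dots,y_n)$; iterating the hypothesis \eqref{Eq:yL0} together with $e^{2\gamma}>1$ shows $y_1>y_2>\dots>y_n>0$, so $L^0$ is positive definite and corresponds to $b_R=\diag(\sqrt{y_1},\dots,\sqrt{y_n})\in\B(n)$. Requiring $v(\alpha)^0=0$ for all $\alpha<d$ amounts to $\sum_{\alpha=1}^d v(\alpha)^0\,(v(\alpha)^0)^\dagger=uu^\dagger$, where $u:=v(d)^0\in\C^n$. By Remark \ref{Rem:EmbComplex}, a triple $(g^0,L^0,v^0)$ with $g^0\in\U(n)$ lies in $\Lambda^{-1}(e^\gamma\1_n)$ exactly when \eqref{momconstdag} holds, which in the present situation reads
\[
e^{2\gamma}(g^0)^{-1}L^0 g^0=L^0+uu^\dagger .
\]
Since the left-hand side has the same eigenvalues as $e^{2\gamma}L^0$, it suffices to produce a vector $u\in\C^n$ with \emph{all components nonzero} such that $L^0+uu^\dagger$ has eigenvalues $e^{2\gamma}y_1,\dots,e^{2\gamma}y_n$; then any unitary conjugating $e^{2\gamma}L^0$ into $L^0+uu^\dagger$ can serve as $g^0$.

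To construct such a $u$, I would carry out the standard secular-equation computation. Writing $|u_k|^2=m_k\ge 0$, the characteristic polynomial of $L^0+uu^\dagger$ equals $\prod_i(\lambda-y_i)-\sum_k m_k\prod_{i\ne k}(\lambda-y_i)$; demanding that it coincide with $\prod_i(\lambda-e^{2\gamma}y_i)$ and evaluating at $\lambda=y_k$ (where only the $k$-th summand survives) forces
\[
m_k=-\frac{\prod_{i=1}^n(y_k-e^{2\gamma}y_i)}{\prod_{i\ne k}(y_k-y_i)} .
\]
A sign count based on \eqref{Eq:yL0}, which yields $e^{2\gamma}y_i>y_k$ for $i<k$, $y_k-e^{2\gamma}y_k<0$, and $y_k>e^{2\gamma}y_i$ for $i>k$, shows that numerator and denominator have opposite signs, so $m_k>0$ for every $k$. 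With $u_k:=\sqrt{m_k}>0$ the two monic degree-$n$ polynomials agree at the $n$ distinct points $y_1,\dots,y_n$ and are therefore equal, so $L^0+uu^\dagger$ has the prescribed spectrum while $u$ has no vanishing entry. Setting $v(d)^0=u$ and picking $g^0$ as above completes the construction; the invertibility conditions \eqref{invcond}, equivalently \eqref{invid}, hold automatically because $L^0$ and $L^0+uu^\dagger$ are positive definite.

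There is no real obstacle here; the one point needing care is that the \emph{strict} inequalities in \eqref{Eq:yL0} are precisely what forces \emph{every} $m_k$ to be strictly positive --- equivalently, they encode the strict interlacing $e^{2\gamma}y_1>y_1>e^{2\gamma}y_2>y_2>\dots>e^{2\gamma}y_n>y_n$, the classical necessary and sufficient condition for a rank-one Hermitian update of $\diag(y_i)$ to have spectrum $\{e^{2\gamma}y_i\}$ with the updating vector having all entries nonzero. This is the same mechanism as in the $d=1$ case of Lemma 5.2 in \cite{FK2}, the only new feature being that the nonzero spin is now $v(d)$; alternatively one may simply quote that result.
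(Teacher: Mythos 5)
Your proof is correct and follows essentially the same route as the paper: reduce the moment map constraint on the constraint surface to the requirement that $L^0+uu^\dagger$ and $e^{2\gamma}L^0$ be isospectral, expand the characteristic polynomial of the rank-one update, evaluate the secular equation at $\lambda=y_k$ to solve for $|u_k|^2$, and then check positivity from the strict inequalities \eqref{Eq:yL0}. The only cosmetic differences are that you keep the numerator $\prod_i(y_k-e^{2\gamma}y_i)$ unfactored and argue by a sign count, whereas the paper factors out the $i=k$ term first; and you add the (accurate but redundant) remark that the invertibility conditions of Remark \ref{Rem:EmbComplex} are automatic.
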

\begin{proof}
Given $L^0=\diag(y_1,\ldots,y_n)$ and $v(1)^0=\ldots=v(d-1)^0=0$, we have to find $g^0\in \U(n)$ and $v(d)^0\in \C^n$ such that the moment map
constraint \eqref{momconst} holds. Using \eqref{momconstdag}, this means that
 \begin{equation}
  e^{2\gamma} (g^0)^{-1} L^0 g^0   = L^0 + v(d)^0 (v(d)^0)^\dagger.
\label{momconstd1}
 \end{equation}
 This is equivalent to the requirement that there exists $v(d)^0\in\C^n$
 such that $L^0 + v(d)^0 (v(d)^0)^\dagger$ and $e^{2\gamma}L^0$ have the same spectrum.
 But this holds if and only if we have the equality of polynomials in $\lambda$
 \begin{equation} \label{Eq:LL0spec}
  \det(L^0 + v(d)^0 (v(d)^0)^\dagger-\lambda \1_n)=\det( e^{2\gamma}L^0 -\lambda \1_n) = \prod_{k=1}^n (e^{2\gamma} y_k-\lambda)\,.
 \end{equation}
 We can expand the left-hand side as follows :
 \begin{equation}
  \begin{aligned}
    \det(L^0 + v(d)^0 (v(d)^0)^\dagger-\lambda \1_n) =& \det(L^0-\lambda \1_n) [1+(v(d)^0)^\dagger (L^0-\lambda \1_n)^{-1} v(d)^0] \\
    =&\prod_{k=1}^n (y_k-\lambda) + \sum_{j=1}^{n}|v(d)_j^0|^2 \prod_{k\neq j} (y_k-\lambda)\,.
  \end{aligned}
 \end{equation}
 Thus, we seek $v(d)^0\in \C^n$ such that
 \begin{equation} \label{Eq:LL0spec2}
  \prod_{k=1}^n (e^{2\gamma} y_k-\lambda)-\prod_{k=1}^n (y_k-\lambda) = \sum_{j=1}^{n}|v(d)_j^0|^2 \prod_{k\neq j} (y_k-\lambda)\,.
 \end{equation}
 Evaluating this identity at $\lambda=y_l$ yields
  \begin{equation} \label{Eq:LL0spec3}
  |v(d)_l^0|^2= (e^{2\gamma}-1)y_l   \prod_{k\neq l} \frac{e^{2\gamma} y_k-y_l}{y_k-y_l}\,,
 \end{equation}
 which is \emph{positive due to \eqref{Eq:yL0}}.
It now suffices to pick $v(d)^0$ whose components have moduli given by \eqref{Eq:LL0spec3}, while we pick for $g^0$ any unitary matrix diagonalizing $L^0 + v(d)^0 (v(d)^0)^\dagger$ into
$e^{2\gamma} L^0$.
\end{proof}

\begin{remark}
Notice that a completely gauge fixed normal form of the elements appearing in Lemma \ref{Lem:d1} can be obtained
by requiring all components of the vector $v(d)^0$ to be positive.  We also note in passing  that in the $d=1$ case
the set of possible (ordered) eigenvalues of $L$ in $(g_R,L,v)\in \Lambda^{-1}(e^\gamma \1_n)$ is given \cite{FK2} by the polyhedron in $\R^n$ specified by the
conditions  $y_i\geq e^{2\gamma}y_{i+1}$
 for all $i=1,\dots, n-1$ and $y_n>0$.
\end{remark}

Now we introduce two subsets of the inverse image of the `constraint surface'.
\begin{definition} \label{def:S1}
Denote
 \begin{equation} \label{Eq:Sspace}
 \cS=\{(g_R,L,v) \in \Lambda^{-1}(e^\gamma \1_n) \mid L=\diag(y_1,\ldots,y_n),\, y_i>y_{i+1},\, v(1)_i>0 \,\, \forall i \}.
\end{equation}
The open subset $\cS_1\subset \cS$ is defined by imposing the
further condition  that the matrix
\be
L_1:= L+\sum_{\alpha=1}^{d-1} v(\alpha)v(\alpha)^\dagger
\label{defL1}\ee
is conjugate to $\diag(\mu_1,\dots, \mu_n)$, where the $\mu_i$ satisfy the inequalities
\be
e^{2\gamma} y_i > \mu_i > e^{2\gamma}y_{i+1}
 \quad\forall i=1,\dots, n-1\quad \hbox{and}\quad e^{2\gamma} y_n > \mu_n.
\label{Eq:ymuL}\ee
\end{definition}
Note that $\cS$ is non-empty since we can apply the analogue of Lemma \ref{Lem:d1} to obtain elements of $\Lambda^{-1}(e^\gamma \1_n)$  for which only $v(1)$ is non-zero,
and $\cS_1$ is non-empty since for those elements $L_1=L$.
It is clear that $\cS$ can serve as a model of an open dense subset of the reduced phase space.
Below, we provide a full characterization of the elements of $\cS_1$.

Taking $y$ and $\mu$ subject to the inequalities in \eqref{Eq:ymuL}, define $\bV(y,\mu) \in \R^n$ by
 \begin{equation} \label{Eq:LL0plusspec3}
 \bV_l(y,\mu):= \left[ (e^{2\gamma}y_l-\mu_l)   \prod_{k\neq l} \frac{e^{2\gamma} y_k-\mu_l}{\mu_k-\mu_l}\right]^{\frac{1}{2}}\, \quad \forall l=1,\dots, n.
 \end{equation}
Observe that the function under the square root is positive; and the positive root is taken.

\begin{lemma} \label{Lem:Laci1}
For any $(g_R, L, v)\in \cS_1$ pick a matrix $g_1 \in \U(n)$ for which
\be
g_1 L_1 g_1^{-1} = \diag(\mu_1,\dots, \mu_n)
\label{L1g1}\ee
with $\mu$ satisfying \eqref{Eq:ymuL}.
Then $v(d)$ is of the form
\be
v(d) = g_1^{-1} \diag(\tau_1,\dots, \tau_n) \bV(y,\mu) \quad \hbox{with some}\quad \tau\in \T^n.
\label{vdeq}\ee
Furthermore, $g_R$ is of the form
\be
g_R = \diag(\Gamma_1,\dots, \Gamma_n) g_R^0 \quad\hbox{with some}\quad \Gamma \in \T^n,
\label{gr}\ee
where $g_R^0 \in \U(n)$ is a fixed solution of the constraint equation
\be
g_R^{-1} e^{2\gamma} L g_R = L_1 + v(d) v(d)^\dagger.
\label{gReq}\ee
Conversely, take any  positive definite $L=\diag(y_1,\dots, y_n)$ and $\C^n$ vectors $v(1),\dots, v(d-1)$ such that $L$ and $L_1$ given by \eqref{defL1} satisfy the spectral
conditions   \eqref{Eq:ymuL}, and all components of $v(1)$ are positive.
Choose a diagonalizer $g_1$ according to \eqref{L1g1} and define  $v(d)\in \C^n$  by the formula \eqref{vdeq} using an arbitrary $\tau\in \T^n$.
Then equation \eqref{gReq} admits solutions for $g_R$,  the general solution has the form \eqref{gr} with arbitrary $\Gamma \in \T^n$,
and all so
obtained triples $(g_R, L, v)$ belong to $\cS_1$.
\end{lemma}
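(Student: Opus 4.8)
The plan is to characterize $\cS_1$ by splitting the moment map constraint \eqref{momconstdag} into two successive spectral problems, exactly mirroring the proof of Lemma \ref{Lem:d1}. Starting from a point $(g_R,L,v)\in\cS_1$, I would first rewrite \eqref{momconstdag} in the form $e^{2\gamma}g_R^{-1}Lg_R = L + \sum_{\alpha=1}^{d}v(\alpha)v(\alpha)^\dagger = L_1 + v(d)v(d)^\dagger$, which is precisely \eqref{gReq}. This shows that $L_1 + v(d)v(d)^\dagger$ is conjugate (via $g_R$) to $e^{2\gamma}L = \diag(e^{2\gamma}y_1,\dots,e^{2\gamma}y_n)$, hence has that spectrum. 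Conjugating instead by the diagonalizer $g_1$ of $L_1$, the condition becomes that $\diag(\mu_1,\dots,\mu_n) + (g_1 v(d))(g_1 v(d))^\dagger$ has spectrum $\{e^{2\gamma}y_k\}$. Now the rank-one perturbation computation from Lemma \ref{Lem:d1} applies verbatim, with $y_k$ there replaced by $\mu_k$ and $e^{2\gamma}y_k$ playing the role of the target eigenvalues: evaluating the characteristic-polynomial identity at $\lambda=\mu_l$ gives $|(g_1 v(d))_l|^2 = \bV_l(y,\mu)^2$, the quantity in \eqref{Eq:LL0plusspec3}, which is positive precisely because of the interlacing inequalities \eqref{Eq:ymuL}. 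Therefore $g_1 v(d) = \diag(\tau_1,\dots,\tau_n)\bV(y,\mu)$ for some $\tau\in\T^n$, which is \eqref{vdeq}.

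Next I would pin down $g_R$. Fix any one solution $g_R^0$ of \eqref{gReq}; such a solution exists because both sides, as Hermitian matrices, have the same spectrum $\{e^{2\gamma}y_k\}$ (the left side is conjugate to $e^{2\gamma}L$, the right side is $L_1+v(d)v(d)^\dagger$, which we have just arranged to have this spectrum). If $g_R$ is any other solution, then $g_R (g_R^0)^{-1}$ conjugates $e^{2\gamma}L$ to itself; since $L=\diag(y_1,\dots,y_n)$ with $y_i>y_{i+1}$ has simple spectrum, its centralizer in $\U(n)$ is exactly $\T^n$, so $g_R (g_R^0)^{-1}=\diag(\Gamma_1,\dots,\Gamma_n)\in\T^n$, giving \eqref{gr}. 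Conversely, for the "converse" half of the statement I would run the same steps backwards: given $L=\diag(y_1,\dots,y_n)$ positive definite and $v(1),\dots,v(d-1)$ with $v(1)$ having positive components and $L_1$ satisfying \eqref{Eq:ymuL}, choosing $g_1$ as in \eqref{L1g1} and defining $v(d)$ by \eqref{vdeq}, the matrix $\diag(\mu)+(g_1 v(d))(g_1 v(d))^\dagger$ has characteristic polynomial $\prod_k(e^{2\gamma}y_k-\lambda)$ by the reverse of the expansion above (this uses the elementary fact that a monic polynomial is determined by its values at the $n$ distinct points $\mu_l$, together with the explicit formula \eqref{Eq:LL0plusspec3}); hence $L_1+v(d)v(d)^\dagger$ has spectrum $\{e^{2\gamma}y_k\}=\mathrm{spec}(e^{2\gamma}L)$, so \eqref{gReq} is solvable, its general solution is \eqref{gr}, and by construction every triple $(g_R,L,v)$ thus obtained satisfies \eqref{momconstdag}, i.e. lies in $\Lambda^{-1}(e^\gamma\1_n)$, and manifestly meets the defining conditions of $\cS_1$ in Definition \ref{def:S1}.

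The one point requiring a little care — and the main (mild) obstacle — is the polynomial-interpolation step in the converse direction: one must check that the rank-one-update determinant expansion
\begin{equation}
 \det\!\big(\diag(\mu) + u u^\dagger - \lambda\1_n\big) = \prod_{k=1}^n(\mu_k-\lambda) + \sum_{j=1}^n |u_j|^2\prod_{k\neq j}(\mu_k-\lambda)
\end{equation}
with $|u_l|^2=\bV_l(y,\mu)^2$ does reproduce $\prod_k(e^{2\gamma}y_k-\lambda)$ identically in $\lambda$. Both sides are monic polynomials of degree $n$ (the right-hand target) versus degree $\le n-1$ after subtracting $\prod_k(\mu_k-\lambda)$ from both, so it suffices to verify equality at the $n$ distinct values $\lambda=\mu_1,\dots,\mu_n$, which is exactly how $\bV_l(y,\mu)$ was defined in \eqref{Eq:LL0plusspec3}; this is the same Lagrange-interpolation argument used implicitly in Lemma \ref{Lem:d1}, so I would simply cite that proof and note the substitution $y\mapsto\mu$, $e^{2\gamma}y\mapsto e^{2\gamma}y$. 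Everything else is bookkeeping: the simplicity of $\mathrm{spec}(L)$ for the centralizer computation, and the observation that positivity of the components of $v(1)$ (inherited/imposed) together with $y_i>y_{i+1}$ places the triple in $\cS$, while the interlacing \eqref{Eq:ymuL} places it in $\cS_1$.
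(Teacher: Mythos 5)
Your proposal is correct and follows essentially the same route as the paper's proof: recasting the moment map constraint as \eqref{gReq}, conjugating by the diagonalizer $g_1$ and evaluating the rank-one-update characteristic polynomial at $\lambda=\mu_l$ to get $|(g_1 v(d))_l|^2=\bV_l(y,\mu)^2$ (positive by the interlacing \eqref{Eq:ymuL}), using the simple spectrum of $L$ to reduce the ambiguity in $g_R$ to $\T^n$, and reversing the argument for the converse. Your explicit Lagrange-interpolation step and centralizer remark are just spelled-out versions of what the paper leaves implicit ("turn the above arguments backwards"), so there is no substantive difference.
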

\begin{proof}
By using the definitions  \eqref{H17} of  $L$  and \eqref{defL1} of $L_1$, we can always recast the moment map constraint \eqref{momconstdag} in the form \eqref{gReq},
which implies the equality of characteristic polynomials
\begin{equation} \label{Eq:LLplusspec}
 \det( e^{2\gamma} L -\lambda \1_n)= \det(L_1 + v(d) v(d)^\dagger-\lambda \1_n).
 \end{equation}
 Since $L$ is diagonal for $(g,L,v)\in \cS_1$,
 we have
 \be
  \det(e^{2\gamma} L-\lambda \1_n) = \prod_j (e^{2\gamma} y_j - \lambda).
  \ee
  By using  \eqref{L1g1} and introducing
  \be
\tilde u:=g_1 v(d),
\label{tildu}\ee
we can write the polynomial on the right-hand side of \eqref{Eq:LLplusspec} as
\begin{equation}
 \begin{aligned}
\det(L_1-\lambda \1_n) [1+v(d)^\dagger (L_1-\lambda \1_n)^{-1} v(d)]
   =&\prod_{k=1}^n (\mu_k-\lambda) \, \left[ 1+ \sum_{j=1}^n \tilde{u}^\dagger_j \frac{1}{\mu_j-\lambda}\tilde{u}_j \right] \\
   =&\prod_{k=1}^n (\mu_k-\lambda) + \sum_{j=1}^{n}|\tilde{u}_j|^2 \prod_{k\neq j} (\mu_k-\lambda)\,.
 \end{aligned}
\end{equation}
 Thus \eqref{Eq:LLplusspec} evaluated at $\lambda=\mu_l$ yields
  \begin{equation} \label{Eq:tildeu}
  |\tilde{u}_l|^2= (e^{2\gamma}y_l-\mu_l)   \prod_{k\neq l} \frac{e^{2\gamma} y_k-\mu_l}{\mu_k-\mu_l} = \bV_l(y,\mu)^2,
 \end{equation}
 which is positive due to \eqref{Eq:ymuL}.
 We conclude from this and equation \eqref{tildu} that $v(d)$ has the form \eqref{vdeq}. The claim \eqref{gr} about the form of $g_R$ follows from \eqref{gReq}
  since $L$ is  diagonal and has distinct eigenvalues.

 The converse statement is proved by utilizing that the equality of the polynomials in $\lambda$ \eqref{Eq:LLplusspec} is \emph{equivalent}
 to the existence of a unitary matrix $g_R$  that solves the constraint equation \eqref{gReq}.
 Then we simply turn the above arguments backwards. The crux is that the spectral assumption  \eqref{Eq:ymuL} ensures the positivity
 of the expression in \eqref{Eq:LL0plusspec3}, whence $v(d)$ can be constructed starting from the vector $\tilde u = \diag(\tau_1,\dots, \tau_n)\bV(y,\mu)$.
\end{proof}

 From now on we  write
\be
v(\alpha)_j = v(\alpha)_j^\Re + \ri v(\alpha)_j^\Im
\quad\hbox{for}\quad
\alpha=2,\dots, d-1,
\label{vdec}\ee
with real-valued $v(\alpha)_j^\Re$, $v(\alpha)_j^\Im$.
In the next statement we summarize how Lemma \ref{Lem:Laci1}  gives us coordinates on $\cS_1$.

\begin{corollary}\label{C:coord}
Via the formulae of Lemma \ref{Lem:Laci1}  for $v(d)$ and $g_R$, the elements of $\cS_1$ are uniquely parametrized by the $2n(d-1)$ variables
\be\label{Eq:coord1}
y_j, v(1)_j, v(\alpha)_j^\Re,  v(\alpha)_j^\Im, \quad j=1,\dots, n, \quad \alpha=2,\dots, d-1\,,
\ee
together with the $2n$ variables
\be\label{Eq:coord2}
\tau_j \in \U(1), \quad \Gamma_j \in \U(1), \quad j=1,\dots, n\,.
\ee
The variables \eqref{Eq:coord1}  take values in an open subset of $\R^{2n(d-1)}$.
The matrix  elements of $g_1$ \eqref{L1g1} can be chosen to be real-analytic functions of the $2n(d-1)$ variables  \eqref{Eq:coord1},
and then the components of $v(d)$ \eqref{vdeq} are also real-analytic functions of these variables and the $\tau_j$.
Likewise, the matrix elements of $g_R^0$ can be chosen to be real-analytic functions of the variables  \eqref{Eq:coord1} and the $\tau_j$.
Consequently, the variables  \eqref{Eq:coord1}  together with $t_j$ and $\gamma_j$ in $\tau_j=e^{\ri t_j}$ and $\Gamma_j = e^{\ri \gamma_j}$
define a coordinate system on the open submanifold of the reduced phase space corresponding to $\cS_1$.
\end{corollary}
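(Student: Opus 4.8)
The plan is to verify that the parametrization furnished by Lemma \ref{Lem:Laci1} is both well-defined (i.e.\ the auxiliary choices can be made real-analytically) and that it yields genuine coordinates, by carefully unwinding the chain of constructions in that lemma. First I would treat the variables in \eqref{Eq:coord1}: the conditions defining $\cS_1$ — that $L=\diag(y_1,\dots,y_n)$ with $y_i>y_{i+1}$, that $v(1)_i>0$, and that the eigenvalues $\mu_i$ of $L_1 = L + \sum_{\alpha=1}^{d-1} v(\alpha)v(\alpha)^\dagger$ interlace the $e^{2\gamma} y_i$ as in \eqref{Eq:ymuL} — cut out an open subset of the space of tuples $(y_j, v(1)_j>0, v(\alpha)_j \in \C)$ for $\alpha=2,\dots,d-1$. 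Openness is because the $y_i>y_{i+1}$ and $v(1)_i>0$ inequalities are manifestly open, and the interlacing condition \eqref{Eq:ymuL} is open since the eigenvalues of the Hermitian matrix $L_1$ depend continuously on its entries; moreover the $\mu_j$ are simple on this set (strict interlacing), so they are real-analytic functions of the entries of $L_1$, hence of the coordinates \eqref{Eq:coord1}.

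Next I would address the choice of the diagonalizer $g_1$ in \eqref{L1g1}. Since the $\mu_j$ are simple, the spectral projections of $L_1$ onto its eigenspaces are real-analytic in the entries of $L_1$, and from them one can build a unitary $g_1$ whose rows are unit eigenvectors; locally (around a given point of $\cS_1$) this can be done real-analytically, e.g.\ by normalizing a fixed real-analytic choice of eigenvector for each simple eigenvalue. Then $\tilde u := g_1 v(d) = \diag(\tau_1,\dots,\tau_n)\bV(y,\mu)$ by \eqref{vdeq}, and since $\bV_l(y,\mu)>0$ (the expression under the root in \eqref{Eq:LL0plusspec3} is positive by the interlacing \eqref{Eq:ymuL}), the phases $\tau_j$ are exactly the phases of the components $\tilde u_j$, so $v(d) = g_1^{-1}\diag(\tau_1,\dots,\tau_n)\bV(y,\mu)$ is a real-analytic function of the coordinates \eqref{Eq:coord1} and of $\tau = (\tau_1,\dots,\tau_n)$. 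Likewise, having fixed $v(d)$, the matrix $L_1 + v(d)v(d)^\dagger$ is isospectral to $e^{2\gamma} L$, which is diagonal with distinct entries $e^{2\gamma}y_j$; a real-analytic choice of the unitary $g_R^0$ solving \eqref{gReq} is obtained by the same spectral-projection construction, using simplicity of the $e^{2\gamma}y_j$. Then $g_R = \diag(\Gamma_1,\dots,\Gamma_n)g_R^0$, and the converse half of Lemma \ref{Lem:Laci1} guarantees that every $(g_R,L,v)$ so built lies in $\cS_1$ and that this exhausts $\cS_1$, with the tuple $(y_j, v(1)_j, v(\alpha)_j^\Re, v(\alpha)_j^\Im, \tau_j, \Gamma_j)$ uniquely determined by the point: the $y_j$ and $v(\alpha)_j$ for $\alpha\le d-1$ are read off directly, the $\mu_j$ and hence $\bV(y,\mu)$ are then determined, $\tilde u = g_1 v(d)$ fixes $\tau_j$, and $g_R (g_R^0)^{-1}$ fixes $\Gamma_j$.

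Finally I would assemble these observations: the map sending a point of $\cS_1$ to the tuple \eqref{Eq:coord1}–\eqref{Eq:coord2} is a real-analytic bijection onto $(\text{open subset of }\R^{2n(d-1)}) \times \T^n \times \T^n$ with real-analytic inverse (the inverse being exactly the reconstruction formulae of Lemma \ref{Lem:Laci1} just shown to be real-analytic), so writing $\tau_j = e^{\ri t_j}$, $\Gamma_j = e^{\ri \gamma_j}$ gives a real-analytic coordinate chart. Because $\cS_1$ is an open subset of the model $\cS$ of a dense open piece of $\cM_\red$, and $\cM_\red$ is a real-analytic manifold (Remark \ref{Rem:EmbComplex}), this chart is a chart on the corresponding open submanifold of the reduced phase space; that it has the right dimension $2nd = 2n(d-1) + 2n$ is automatic. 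The main obstacle I anticipate is purely local-analytic bookkeeping: making sure that the eigenvector/diagonalizer choices for $g_1$ and $g_R^0$ can be made \emph{real-analytically} (not merely continuously) in a neighbourhood of a given point — this rests on the simplicity of the relevant spectra, which is built into the definition of $\cS_1$ via the strict interlacing inequalities \eqref{Eq:ymuL} and $y_i > y_{i+1}$, so there is no genuine difficulty, only care in phrasing; everything else is a direct transcription of Lemma \ref{Lem:Laci1}.
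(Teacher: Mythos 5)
Your proof is correct and follows essentially the same route as the paper's: openness of the parameter domain comes from continuity of the eigenvalues of $L_1$ under the strict inequalities defining $\cS_1$, analyticity of the $\mu_j$ comes from their simplicity, and the real-analytic choice of $g_1$ and $g_R^0$ rests on the standard fact that eigenvectors of a Hermitian matrix with simple spectrum can be chosen analytically. The paper's proof is considerably terser (it states these three facts and stops), while you spell out the uniqueness/bijectivity bookkeeping and flag the local nature of the eigenvector choice; the content is the same.
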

\begin{proof}
The variables  \eqref{Eq:coord1} run over an open set simply because the eigenvalues of $L_1$ depend continuously on them.
This dependence is actually analytic since those eigenvalues are all distinct.
Regarding the dependence of $g_1$ and $g_R^0$ on the variables, we use the well-known fact that the eigenvectors of
regular Hermitian matrices can be chosen
as analytic functions of the independent parameters of the matrix elements.
\end{proof}

\subsection{Degenerate integrability}

The reduced integrals of motion arising from \eqref{S1} and \eqref{S2} take a simple form in terms of our coordinates on $\cS_1$.
Relying on this, we shall inspect the following $2n(d-1)$ reduced integrals of motion:
\begin{equation}
\begin{aligned} \label{Eq:FctDim}
  \tr (L^k)=\sum_j y_j^k\,, \quad  &I_{1,1}^k=\sum_j v(1)_j^2 y^{k}_j\,, \\
 \Re[I_{\alpha,1}^k]=\sum_j v(1)_j y^{k}_j v(\alpha)_j^\Re,\quad  &\Im[I_{\alpha,1}^k]=\sum_j v(1)_j y^{k}_j v(\alpha)_j^\Im\,,
\end{aligned}
\end{equation}
where $k=1,\dots, n$ and $\alpha=2,\dots, d-1$, and the additional $2n$ integrals of motion supplied by the real and imaginary parts of
\be\label{Ikd}
I^k_{d,1}= \sum_j v(1)_j y_j^k v(d)_j
\quad\hbox{with}\quad
v(d) = g_1^{-1} \diag(\tau_1,\ldots,\tau_n) \bV.
\ee

\begin{proposition} \label{Pr:indpdt}
The $2n (d-1)$ reduced integrals of motion \eqref{Eq:FctDim}, which include the $n$ reduced Hamiltonians $\tr(L^k)$,
are functionally independent on $\cS_1$.
On each connected component of $\cS_1$, $n$ further integrals of motion may be selected  from the real and
imaginary parts of the functions \eqref{Ikd}
in such a way that together with \eqref{Eq:FctDim} they provide a set of
$2nd - n$ independent functions.
\end{proposition}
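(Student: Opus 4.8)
The plan is to compute the Jacobian of the map from the $2nd$ coordinates on $\cS_1$ constructed in Corollary \ref{C:coord} to the proposed integrals of motion, and to exhibit a single point where this Jacobian is nonzero; since everything in sight is real-analytic and $\cM_\red$ is connected and real-analytic (Remark \ref{Rem:EmbComplex}, Corollary \ref{Cor:Dense}), independence at one point of $\cS_1$ forces generic independence on all of $\cM_\red$. The coordinates are $y_j$, $v(1)_j$, $v(\alpha)_j^\Re$, $v(\alpha)_j^\Im$ (for $2\le\alpha\le d-1$), and the angles $t_j,\gamma_j$.

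First I would treat the $2n(d-1)$ functions \eqref{Eq:FctDim}. These depend only on the $2n(d-1)$ "radial" coordinates $y_j, v(1)_j, v(\alpha)_j^{\Re}, v(\alpha)_j^{\Im}$ and not on $\tau_j,\Gamma_j$. The key observation is that the Jacobian block decouples: $\det[\partial(\tr L^k)/\partial y_l]_{k,l=1}^n$ is (up to a nonzero constant and a Vandermonde-type factor $\prod_{i<j}(y_i-y_j)\prod_j y_j$) a Vandermonde determinant in the distinct $y_j$, hence nonzero on $\cS_1$; and for each fixed "species index" the remaining blocks have the schematic form $[\partial(\sum_j v(1)_j y_j^k c_j)/\partial c_l]=v(1)_l y_l^k$ where $c_j$ runs over $v(1)_j$ (with an extra factor $2$), $v(\alpha)_j^{\Re}$, $v(\alpha)_j^{\Im}$ respectively. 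Each such block is $\diag(v(1)_l)\cdot[y_l^k]$, again a Vandermonde times $\prod v(1)_l\neq 0$. Since $\tr L^k$ is independent of all the $v$'s and each $I$-family is "diagonal" in its own species, an appropriate ordering of rows and columns makes the full $2n(d-1)\times 2n(d-1)$ Jacobian block lower/upper triangular in these $n\times n$ blocks, so its determinant is a product of nonzero Vandermonde-type determinants. I would just check that at a generic point (e.g. all $v(1)_j$ equal to a common positive value and the $v(\alpha)$ small) no accidental degeneracy occurs.

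Next I would adjoin $n$ further functions from $\{\Re I^k_{d,1},\Im I^k_{d,1}\}$ to control the last $2n$ coordinates $t_j,\gamma_j$. The crucial point is that $\tr L^k$ and the functions in \eqref{Eq:FctDim} are \emph{independent of $\tau$ and $\Gamma$} entirely; note in particular that \eqref{Eq:coord1}--\eqref{Eq:coord2} show $\Gamma_j=e^{\ri\gamma_j}$ enters only through $g_R=\diag(\Gamma)g_R^0$, which does not affect $L$ or any $v(\alpha)$. This is the obstacle: the integrals \eqref{S1}--\eqref{S2} are built from $L$ and the $v(\alpha)$ only, so they cannot detect $\Gamma_j$ at all. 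Hence one cannot hope to get $2nd$ independent integrals — and indeed the statement only claims $2nd-n$, exactly accounting for the $n$ "invisible" directions $\gamma_j$. So the remaining task is: holding the \eqref{Eq:FctDim}-directions under control, show that $n$ of the functions $\Re I^k_{d,1},\Im I^k_{d,1}$ have derivatives that span a complementary $n$-dimensional space inside the span of $dt_1,\dots,dt_n$ (modulo the radial directions). Since $v(d)=g_1^{-1}\diag(e^{\ri t_j})\bV$, differentiating $I^k_{d,1}=\sum_j v(1)_j y_j^k v(d)_j$ in $t_l$ gives $\partial_{t_l} I^k_{d,1}=\ri\sum_j v(1)_j y_j^k (g_1^{-1})_{jl}e^{\ri t_l}\bV_l$, whose matrix (in $k$ versus $l$) is $[y_j^k]\cdot\diag(v(1)_j)\cdot(g_1^{-1})\cdot\diag(\ri e^{\ri t_l}\bV_l)$, a product of a Vandermonde matrix, an invertible diagonal, a unitary, and another invertible diagonal, hence \emph{invertible}. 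This invertibility is exactly what lets us solve for $n$ suitable real combinations of $\Re I^k_{d,1},\Im I^k_{d,1}$ whose $t$-derivatives are linearly independent.

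To finish, at a chosen base point I would pick the coordinates as above, verify $g_1$ and $g_R^0$ are analytic there (Corollary \ref{C:coord}), and assemble the full $(2nd-n)\times(2nd-n)$ Jacobian of $(\tr L^k, I^k_{1,1}, \Re I^k_{\alpha,1}, \Im I^k_{\alpha,1}, \text{[chosen $n$ from }\Re I^k_{d,1},\Im I^k_{d,1}])$ against the coordinates $(y_j, v(1)_j, v(\alpha)_j^\Re, v(\alpha)_j^\Im, t_j)$. By the triangular block structure described above its determinant factors into a product of Vandermonde determinants $\prod_{i<j}(y_i-y_j)$, diagonal factors $\prod_j v(1)_j$, $\prod_j\bV_j(y,\mu)$, and $\det g_1^{-1}$ (a unit-modulus number), all of which are nonzero on $\cS_1$. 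The fact that we may need to choose \emph{which} $n$ of the $2n$ real and imaginary parts of $I^k_{d,1}$ to use, and that this choice may differ on different connected components of $\cS_1$, is handled by the remark already in the text that independence of real-analytic functions is a pointwise-checkable open condition; on each component one picks the $n$ with nondegenerate sub-Jacobian, which exists precisely because the full $n\times 2n$ derivative matrix $[\partial(\Re I^k_{d,1},\Im I^k_{d,1})/\partial t_l]$ has rank $n$ by the invertibility of $[y_j^k]\diag(v(1)_j)g_1^{-1}\diag(\ri e^{\ri t_l}\bV_l)$. The main obstacle is purely bookkeeping — organizing the block structure so the Vandermonde factorization is transparent — rather than any conceptual difficulty; there is no hidden degeneracy because the "missing" $n$ functions correspond precisely to the directions $\gamma_j$ on which every $\U(n)$-invariant built from $L$ and the $v(\alpha)$ is constant.
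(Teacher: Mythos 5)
Your proposal is correct and follows essentially the same route as the paper's proof: the same block-triangular Jacobian analysis in the coordinates of Corollary \ref{C:coord} for the $2n(d-1)$ functions \eqref{Eq:FctDim}, and the same factorization $\partial I_{d,1}/\partial t = X\,\diag(v(1)_j)\,\partial v(d)/\partial t$ with the rank argument relating the complex matrix to the real and imaginary parts to select $n$ further functions on each connected component. The only cosmetic difference is your extra hedge about checking a generic point for the radial block, which is unnecessary since the distinctness of the $y_j$ and positivity of the $v(1)_j$ already make every diagonal block invertible on all of $\cS_1$.
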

\begin{proof}
We are going to prove functional independence by inspection of Jacobian determinants using the coordinates on
$\cS_1$ exhibited in Corollary \ref{C:coord}.
Let us first consider the functions given by \eqref{Eq:FctDim}.
If we order the $2n(d-1)$ functions as written in \eqref{Eq:FctDim} and also order the $2n(d-1)$  coordinates
as written in \eqref{Eq:coord1}, then the
corresponding  Jacobian matrix $J$ takes a block lower-triangular form, with $n\times n$ blocks.
The first diagonal block, $(\partial \tr L^k/\partial y_j)$,  is given by
$Y\in\Mat_{n\times n}(\R)$ with $Y_{kj}= k y_j^{k-1}$, while
all other diagonal blocks are given by
$XD_1$ with $X_{kj}=y^k_j$ and $D_1=\diag(v(1)_1,\ldots, v(1)_n)$, except the second one,
$(\partial I^k_{1,1}/\partial v(1)_j)$, which equals  $2 X D_1$.
 By the definition of $\cS_1$, the coordinates $y_j$ are positive and distinct while the $v(1)_j$
are positive, so that $X$, $Y$ and $D_1$ are invertible. Hence $J$ has rank $2n(d-1)$.

To continue, consider the $2n$ functions
\be
\Re(I_{d,1}^k),\, \Im(I_{d,1}^k) \quad \hbox{with}\quad  1\leq k \leq n.
\label{2nI}\ee
It is clear that any function $G$ taken from \eqref{Eq:FctDim} satisfies $\partial G/\partial t_j=0$.
So our claim will follow if there exists a subset of
$n$ functions  $F_1,\ldots,F_n$ from those in \eqref{2nI} for which the
Jacobian matrix $\left(\frac{\partial F_k}{\partial t_l}\right)_{kl}$ is invertible.

Note from Lemma \ref{Lem:Laci1} and Corollary \ref{C:coord} that
 \begin{equation}
 \frac{\partial v(d)_j}{\partial t_l}=\ic (g_1^{-1})_{jl} e^{\ic t_l} \bV_l\,,
\end{equation}
since $\bV$ and $g_1$ depend only on the variables \eqref{Eq:coord1}.
In particular, the matrix
\begin{equation}
 \frac{\partial v(d)}{\partial t} :=
 \left(\frac{\partial v(d)_j}{\partial t_l}\right)_{1\leq j,l\leq n}
\end{equation}
is invertible, because so are $g_1$, $\diag( e^{\ic t_1},\ldots, e^{\ic t_n})$ and $\diag(\bV_{1},\ldots,\bV_{n})$.

 If the $2n\times n$ real matrix
\begin{equation} \label{Eq:Mat}
 \left(\frac{\partial \Big(\Re(I_{d,1}^k), \Im(I_{d,1}^k)\Big) }{\partial t_l}\right)_{1\leq k,l\leq n}
\end{equation}
has rank $n$, then we are done. Assume by contradiction that  this matrix has rank less than $n$.  From \eqref{Ikd} we have
\begin{equation}
\frac{\partial I_{d,1}^k}{\partial t_l} = \sum_j v(1)_j y_j^k   \frac{\partial v(d)_j}{\partial t_l}\,,
\end{equation}
and therefore we can write the following equality of complex matrices
\begin{equation}
\frac{\partial I_{d,1}}{\partial t} := \left(\frac{\partial I_{d,1}^k }{\partial t_l}\right)_{1\leq k,l\leq n}
 = X \diag(v(1)_1,\ldots,v(1)_n) \frac{\partial v(d)}{\partial t}\,,
\end{equation}
where $X$ is given by $X_{kj}=y_j^k$ as before.  We have already established that all three factors in the above product of matrices
are invertible.
Thus $\partial I_{d,1}/\partial t $ is invertible, hence has rank $n$.

To finish the proof,
it suffices to remark that the complex matrix $\partial I_{d,1}/\partial t$ is a complex linear combination of the rows
of the matrix  given in \eqref{Eq:Mat}.
If the latter matrix has rank strictly less than $n$, then so does $\partial I_{d,1}/\partial t$, which gives a contradiction.
\end{proof}

Let us recall from Proposition \ref{P:BrII} that the unreduced phase space supports the
polynomial Poisson algebra $\Ical$ \eqref{Eq:Ical}, whose Poisson center contains
the polynomial algebra
\begin{equation} \label{Eq:Htr}
 \fH_{\mathrm{tr}}:=\R[\tr L^k, k \geq 0]\,.
\end{equation}
Since these Poisson algebras consist of $\U(n)$ invariant functions, they engender corresponding Poisson algebras
$\Ical^\red$ and  $\fH_{\mathrm{tr}}^\red$ over the reduced phase space $\cM_\red$.
Our final result is a direct consequence of Proposition \ref{Pr:indpdt}.

\begin{theorem}\label{Th:last}
The reduced polynomial algebras of functions $\fH_{\mathrm{tr}}^\red$ and $\Ical^\red$
inherited from $\fH_{\mathrm{tr}}$ \eqref{Eq:Htr} and $\Ical$ \eqref{Eq:Ical}
have functional dimension $n$ and $2nd - n$, respectively.
In particular,  on the phase space  $\cM_\red$ of dimension $2nd$, the Abelian Poisson algebra $\fH_{\mathrm{tr}}^\red$
yields a real-analytic,
degenerate integrable system with integrals of motion  $\Ical^\red$.
\end{theorem}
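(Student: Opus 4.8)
The plan is to deduce the theorem from Proposition~\ref{Pr:indpdt}, Proposition~\ref{P:BrII}, and the definition of (degenerate) integrability recalled in \S\ref{ss:unredIS}, using that $\cM_\red$ is a connected real-analytic symplectic manifold of dimension $2N$ with $N=nd$. The three things to establish are the two functional dimensions and the verification that the data $(\fH_{\mathrm{tr}}^\red,\Ical^\red)$ satisfies the definition of a degenerate integrable system.

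For the functional dimension of $\fH_{\mathrm{tr}}^\red$: the upper bound $n$ is automatic, since the Cayley--Hamilton theorem for the $n\times n$ matrix $L$ expresses $\tr(L^k)$ for $k>n$ as polynomials in $\tr(L),\dots,\tr(L^n)$; the matching lower bound is exactly the functional independence of $\tr(L^k)$, $k=1,\dots,n$, established on the gauge slice $\cS_1$ in Proposition~\ref{Pr:indpdt}. Because these functions are real-analytic and $\cM_\red$ is connected and real-analytic, independence on the open subset modelled by $\cS_1$ propagates to a dense open subset of $\cM_\red$, so $\mathrm{fdim}(\fH_{\mathrm{tr}}^\red)=n$. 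For $\Ical^\red$, Proposition~\ref{Pr:indpdt} similarly provides $2nd-n$ functionally independent elements (the functions \eqref{Eq:FctDim} together with $n$ real or imaginary parts of the $I_{d,1}^k$), hence $\mathrm{fdim}(\Ical^\red)\geq 2nd-n$. For the reverse inequality I would use the elementary fact that for a Poisson subalgebra $\fC$ of $C^\infty$ of a symplectic manifold of dimension $2N$ one has $\mathrm{fdim}(\fC)+\mathrm{fdim}(Z(\fC))\leq 2N$, where $Z(\fC)$ denotes the Poisson centre: at a generic point, if $V$ is the span of the differentials of $\fC$, then the differentials of central functions lie in $V\cap(\pi^\sharp)^{-1}(V^\circ)$, whose dimension equals $\dim\bigl(\pi^\sharp(V)\cap V^\circ\bigr)\leq\dim V^\circ=2N-\dim V$. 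Since Proposition~\ref{P:BrII} shows $\fH_{\mathrm{tr}}\subseteq Z(\Ical)$ on $\cM$, and Poisson brackets of $\U(n)$-invariant functions commute with restriction to the constraint surface, this inclusion descends to $\fH_{\mathrm{tr}}^\red\subseteq Z(\Ical^\red)$; therefore $\mathrm{fdim}(\Ical^\red)\leq 2N-\mathrm{fdim}(\fH_{\mathrm{tr}}^\red)=2nd-n$, and equality holds.

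It then remains to check the ingredients of the definition in \S\ref{ss:unredIS}. That $\fH_{\mathrm{tr}}^\red$ is an Abelian Poisson subalgebra is clear, since $\fH_{\mathrm{tr}}$ consists of $\U(n)$-invariant elements of the Abelian algebra $\fH$ of \eqref{T34} (one may also read it off from Proposition~\ref{Pr:RUform}), so its image in $C^\infty(\cM_\red)$ Poisson-commutes. Completeness of the flows generated by the reduced $\tr(L^k)$ follows from \eqref{T35}: on $\cM$ the corresponding Hamiltonian flow is given explicitly and is complete, it is $\U(n)$-equivariant, and it preserves $\Lambda^{-1}(e^\gamma\1_n)$ because $b_L$, $b_R$ and $W$ are constants of the motion along it; hence it descends to a complete flow on $\cM_\red$. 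That $\Ical^\red$ is itself a Poisson subalgebra of $C^\infty(\cM_\red)$ is Proposition~\ref{P:BrII} transported through the reduction, and all functions in sight are real-analytic by construction.

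Assembling these facts: $\fH_{\mathrm{tr}}^\red$ is an Abelian Poisson subalgebra of functional dimension $r=n$ all of whose elements generate complete flows, $\Ical^\red$ is a Poisson subalgebra of functional dimension $2N-r=2nd-n$, and its Poisson centre contains $\fH_{\mathrm{tr}}^\red$; since $d\geq 2$ gives $r=n<N=nd$, this is precisely a real-analytic degenerate integrable system in the sense of \S\ref{ss:unredIS}. The only substantive input is Proposition~\ref{Pr:indpdt}, which is already proved; the points I would be most careful about are the upper bound $\mathrm{fdim}(\Ical^\red)\leq 2nd-n$—in particular applying the Poisson-centre inequality on the smooth locus where $\cM_\red$ is genuinely symplectic and the generic-rank arguments are valid—and the bookkeeping that independence proved on the slice $\cS_1$ indeed transfers to $\cM_\red$ via connectedness and real-analyticity.
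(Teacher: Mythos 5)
Your argument is correct and takes essentially the same route as the paper: both combine the lower bounds $\mathrm{fdim}(\Ical^\red)\geq 2nd-n$ and $\mathrm{fdim}(\cZ(\Ical^\red))\geq n$ from Proposition~\ref{Pr:indpdt} with the inequality $\mathrm{fdim}(\Ical^\red)+\mathrm{fdim}(\cZ(\Ical^\red))\leq 2nd$, which you derive via the $\pi^\sharp$-annihilator argument while the paper gives the equivalent block-decomposition of the Poisson matrix. One small misattribution: the inclusion $\fH_{\mathrm{tr}}\subseteq\cZ(\Ical)$ does not follow from Proposition~\ref{P:BrII} (which only records the $I$--$I$ brackets) but from the fact, stated just before it and visible from the free flow~\eqref{T35}, that the Hamiltonians in $\fH$ preserve $b_L,b_R,W$; this does not affect the validity of your proof.
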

\begin{proof}
Let us consider $\Ical^\red$ and its Poisson center $\cZ(\Ical^\red)$.
Denote $r$ and $r_0$ the functional dimensions of these polynomial algebras of functions.
Observe from Proposition \ref{Pr:indpdt} that
\be
r\geq (2nd - n) \quad \hbox{and}\quad r_0 \geq n.
\label{ineq1}\ee
The second inequality holds since  $\fH_{\mathrm{tr}}^\red$ is contained in
$\cZ(\Ical^\red)$, and Proposition \ref{Pr:indpdt} implies that the functional dimension of $\fH_{\mathrm{tr}}^\red$ is $n$.

In a neighbourhood $U_0$ of a generic point of $\cM_\red$, we can choose a system of coordinates given by
$2nd$ functions $F_1,\dots, F_{2nd}$ such that
the first $r$ functions belong to $\Ical^\red$, of which the  first $r_0$ belong to $\cZ(\Ical^\red)$.
  In terms of such coordinates,  the Poisson matrix $P=(\br{F_i,F_j})_{i,j}$ can be decomposed into blocks as
\begin{equation}
P=\left( \begin{array}{ccc}
  0_{r_0\times r_0}& 0_{r_0\times(r-r_0)}& B \\
  0_{(r-r_0)\times r_0}& A& \ast \\
 -B^T&\,\,\ast\,\, & \ast
 \end{array} \right).
\end{equation}
This matrix must be non-degenerate since the reduced phase space is a symplectic manifold.
In particular, this implies that the $r_0$ rows of $B$ must be independent.
Then the number of independent columns of $B$ must be also $r_0$, which cannot be bigger than the number of columns.
This gives $r_0 \leq (2nd -r)$, or equivalently
\be
r_0 + r \leq 2nd.
\label{ineq2}\ee
By combining \eqref{ineq1} with \eqref{ineq2}, we obtain that $r_0 = n$  and $r = (2nd -n)$.
\end{proof}

We see from the above proof that $\cZ(\Ical^\red)$ and $\fH_{\mathrm{tr}}^\red$ have the same functional
dimension. Since $\fH_{\mathrm{tr}}^\red \subseteq\cZ(\Ical^\red)$, we expect that these polynomial algebras of functions
actually coincide.

\begin{remark}\label{Rem:actang}
Let us explain that our coordinates on $\cS_1$ are very close to action-angle variables.
To start, we recall that
the joint level surfaces of the integrals of motion of the unreduced free system are compact,
because (with the help of the variables $(g_R, L, W)$) they can be identified with
closed subsets of $\U(n)$. This compactness property is inherited by the reduced system.
If we restrict ourselves to the open subset of the reduced phase space parametrized by $\cS_1$,
then the connected components of the joint level surfaces of the elements of $\Ical^\red$
\eqref{Eq:Ical} are the $n$-dimensional `$\Gamma$-tori' obtained
by fixing all variables in \eqref{Eq:coord1} and \eqref{Eq:coord2} except the $\Gamma_j$.
Both the gauge slice $\cS$ \eqref{Eq:Sspace} and its subset
$\cS_1$ are invariant under the flow \eqref{T36}  of the Hamiltonian $H_k:= \frac{1}{2k} \tr(L^k)$,
for every $k=1,\dots, n$,  which gives
the following linear flow on the $\Gamma$-torus:
\be
\Gamma_j(t) =  \exp( \ri y_j^k t) \Gamma_j^0,\quad \forall j=1,\dots, n,
\label{linflow}\ee
where $\Gamma_j^0$ refers to the initial value.
This statement holds since for $H_k(L) \equiv h_k(b_R)$ one has
 $D h_k(b_R)  = \ri L^k$.
The flow \eqref{linflow} entails that on $\cS_1$ the variables
${\hat p}_j:=\frac{1}{2} \log y_j$  are canonical  conjugates to the angles $\gamma_j$ in
$\Gamma_j = e^{\ri \gamma_j}$, i.e.,
they satisfy  $\{ \gamma_j, {\hat p}_l\}_\red =  \delta_{jl}$.
\end{remark}

\section{Conclusion} \label{sec:Con}

In this paper we investigated a trigonometric real form of the spin
RS system \eqref{I2} introduced originally by  Krichever and Zabrodin \cite{KZ}
and studied subsequently in \cite{AO,CF2} in the complex holomorphic setting.
We have shown that this real form arises from Hamiltonian reduction
of a free system on a spin extended Heisenberg double of the $\U(n)$ Poisson--Lie group, and
exploited the reduction approach for obtaining a detailed characterization of its
main features.
In particular,
we presented two models of dense open subsets of the reduced phase space
where the system lives.  The model developed in \S\ref{ssec:prim} led to
an elegant description of the reduced symplectic form (Theorem \ref{Th:redsymp}), while
the equations of motion and the corresponding Hamiltonian are complicated in the pertinent
variables based on the
`primary spins'. On the other hand, the model studied in \S\ref{ssec:Hbis} and in Sections \ref{sec:N} and  \ref{sec:RP}
allowed us to recover the spin RS equations of motion \eqref{I8} from the projection
of a free flow (Corollary \ref{Cor:vf}), but the reduced Poisson brackets (Theorem \ref{Thm:redPB}) take  a relatively
complicated form in the underlying  `dressed spin' variables.
In our framework the solvability of the evolution equations by linear
algebraic manipulations emerges naturally (Remark \ref{Rem:sol}), and we also proved their
degenerate integrability by explicitly exhibiting the
required number of constants of motion (Theorem \ref{Th:last}).

A basic ingredient of the unreduced phase space that we started with was a $\U(n)$ covariant Poisson structure  on
$\C^n \simeq \R^{2n}$ that goes back to Zakrzewski \cite{Z},
for which we found the corresponding moment map
(Proposition \ref{Pr:A3}) and symplectic form (Proposition \ref{Pr:A5}).

We finish by highlighting a few open problems related to our current research.
As always in the reduction treatment of an integrable Hamiltonian system, one should
gain as complete an understanding of the global structure of the reduced phase space as possible.
The basic point is that the projections of free flows are automatically complete, but only on the full reduced phase space.
In the present case,
one should actually construct two global models of the reduced phase space: one fitted to the
system that we have studied, and another one that should be associated with its action-angle
dual.   Without going into details,  we refer to the literature \cite{FM, JHEP,Res1, RBanff} where it
is explained that  the integrable many-body systems  usually
come in dual pairs, and the same holds for their several spin extensions.
In our case, the commuting Hamiltonians of the dual system are expected to arise from the reduction
of the Abelian Poisson algebra $\hat \fH = \Xi_R^*(C^\infty(\U(n)))$, which is in some sense
dual to $\fH$ \eqref{T34} on which our system was built.

It could be interesting to explore generalizations of the construction employed in our study.
For example, one may obtain new variants of the trigonometric
spin RS model by
replacing some or all of the primary spins $w^\alpha$ by $z^\alpha$ subject to the Poisson
bracket described at the end of  Appendix \ref{sec:A} (Remark \ref{Rem:minZak}).
Generalization of our reduction in which the Heisenberg double is replaced by
a quasi-Hamiltonian double of the form $\U(n)\times \U(n)$  \cite{AMM}, and the primary spins are also modified suitably,
should
lead to compactified spin RS systems.
It should be possible to uncover a reduction picture behind the hyperbolic real form
of the spinless and spin RS models, too.
All these issues, as well as the questions of quantization and the reduction approach
to elliptic spin RS models, pose challenging problems for future work.

\bigskip
\noindent
{\bf Acknowledgements} We wish to thank J.~Balog for useful remarks on the manuscript.
We are also grateful to  A.~Alekseev and R.~Sjamaar
for  advise regarding proper moment maps.
The research of M.F. was partly supported by the travel grant ECR-1819-01 of the London Mathematical Society,
and a Rankin-Sneddon Research Fellowship of the University of Glasgow.
The work of L.F. was supported in part by the NKFIH research grant K134946.

\bigskip

\newpage
 \appendix
\section{Properties of the primary spin variables}
\label{sec:A}

In this appendix  we first elaborate the properties of the primary spin variables that
were summarized in Proposition \ref{Pr:Zak}. As was already mentioned, the pertinent
Poisson structure on $\C^n \simeq \R^{2n}$ is a special case of the $\U(n)$ covariant Poisson structures
due to Zakrzewski \cite{Z}.  Nevertheless, to make our text self-contained, we shall also
verify its Jacobi identity and covariance property. Then we present the corresponding moment
map and symplectic form, which have not been considered in previous work.

For any real function $F\in C^\infty(\C^n)$, we define its $\C^n$-valued `gradient' $\nabla F$ by
the equality\footnote{This is a \emph{symplectic} gradient associated with
the standard symplectic form,
$\omega(\xi,\eta)=\Im(\xi^\dagger \eta)$,  on
$\C^n \simeq \R^{2n}$.}
\be
\Im\left( (\nabla F(w))^\dagger V\right):= \dt F(w + t V),
\qquad
\forall w, V\in \C^n,
\ee
where the elements of $\C^n$ are viewed as column vectors.
We note that any real linear function on the real vector space $\C^n$ is of the form
\be
F_\xi(w):= \Im (\xi^\dagger w),
\ee
for some $\xi \in \C^n$, and for such function $\nabla F_\xi = \xi$.
Next we give a convenient presentation of Zakrzewski's Poisson bracket.

\begin{proposition} \label{Pr:A1}
For real functions $F, H\in C^\infty(\C^n)$,  let $\xi(w):= \nabla F(w)$ and $\eta(w):= \nabla H(w)$.
Then the following formula
\be\label{pb1}
\{F, H\}(w) = \Im\left( \xi(w)^\dag(w \eta(w)^\dag)_{\u(n)} w
 - \half\xi(w)^\dag w\eta(w)^\dag w - \half\xi(w)^\dag w w^\dag\eta(w) -
\xi(w)^\dag\eta(w) \right),
\ee
where the notation \eqref{T4} is used,
defines a Poisson bracket on $C^\infty(\C^n)$.
Equivalently to the formula \eqref{pb1}, the Hamiltonian vector field $V_H$ associated with $H\in C^\infty(\C^n)$
is given by
\be
V_H(w) = (w \eta(w)^\dag)_{\u(n)} w - \eta(w) -\half (\eta(w)^\dag w + w^\dag\eta(w))w,
\qquad
\eta(w)= \nabla H(w).
\label{HamZak}\ee
Extending the real Poisson bracket to complex functions by complex bilinearity,
the Poisson brackets of the component functions $w\mapsto w_i$ satisfy the explicit formulae \eqref{T29} and \eqref{T30}.
\end{proposition}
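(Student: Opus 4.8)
The plan is to treat the three claims in sequence, isolating the Jacobi identity as the only substantive point and reducing it to a finite check on the coordinate functions. The equivalence of \eqref{pb1} with \eqref{HamZak} is just a reformulation: since $\xi^\dag w$ and $w^\dag\eta$ are scalars, the argument of $\Im(\cdots)$ in \eqref{pb1} is $\xi(w)^\dag\bigl[(w\eta^\dag)_{\u(n)}w-\half(\eta^\dag w+w^\dag\eta)w-\eta\bigr]$ with $\eta=\nabla H(w)$, and by the defining property of the symplectic gradient this says precisely that the bracketed vector is the Hamiltonian vector field $V_H$, i.e. \eqref{HamZak}. The Leibniz rule is then automatic because \eqref{pb1} is $\R$-linear in $\nabla F$ and in $\nabla H$ and $\nabla$ is a derivation. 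For antisymmetry I would show that the part of the argument of $\Im(\cdots)$ symmetric under $\xi\leftrightarrow\eta$ has zero imaginary part. Using $\Im\tr(XY)=\langle X,Y\rangle$ one rewrites $\Im\bigl(\xi^\dag(w\eta^\dag)_{\u(n)}w\bigr)=\bigl\langle P_{\u(n)}(w\eta^\dag),\,w\xi^\dag\bigr\rangle$, and the isotropy of $\u(n)$ and $\b(n)$ with respect to $\langle\ ,\ \rangle$ yields the identity $\langle P_{\u(n)}X,Y\rangle+\langle P_{\u(n)}Y,X\rangle=\langle X,Y\rangle$; hence the symmetrisation of that term equals $\half\langle w\eta^\dag,w\xi^\dag\rangle=\half\Im\bigl((\xi^\dag w)(\eta^\dag w)\bigr)$, which cancels the (already symmetric) contribution $-\half\Im\bigl((\xi^\dag w)(\eta^\dag w)\bigr)$ of the term $-\half\xi^\dag w\,\eta^\dag w$. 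The last two terms of \eqref{pb1} symmetrise to manifestly real quantities, so they drop out under $\Im$, and antisymmetry follows.

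Next I would derive \eqref{T29}--\eqref{T30}. Extending the bracket $\C$-bilinearly, it is enough to compute $\{\Re w_l,\ \cdot\ \}$ and $\{\Im w_l,\ \cdot\ \}$; a one-line computation from the definition of $\nabla$ gives $\nabla\Re(w_l)=-\ic e_l$ and $\nabla\Im(w_l)=e_l$, the standard basis vectors. Plugging these into \eqref{HamZak}, and using the explicit action of the projection $P_{\u(n)}$ (along $\b(n)$, notation of \eqref{T4}) on the rank-one matrix $w e_l^\dag$ --- which produces the matrix whose only nonzero entries are $w_j$ in position $(j,l)$ for $j>l$, $\ic\Im(w_l)$ in position $(l,l)$, and $-\overline{w_k}$ in position $(l,k)$ for $k>l$ --- one reads off $V_{\Re w_l}$ and $V_{\Im w_l}$ componentwise. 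Then $\{w_i,w_l\}=\{w_i,\Re w_l\}+\ic\{w_i,\Im w_l\}$ and $\{w_i,\overline w_l\}=\{w_i,\Re w_l\}-\ic\{w_i,\Im w_l\}$, and after elementary rearrangement (separating the cases $i<l$, $i=l$, $i>l$ and recombining the $|w_k|^2$-sums into the $\sgn$-form) one obtains exactly \eqref{T29} and \eqref{T30}; the reality relation \eqref{T32} then follows by complex conjugation.

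Finally, the Jacobi identity. Since $\{F,H\}(w)$ depends only on the $1$-jets of $F$ and $H$ and the bracket is skew and satisfies the Leibniz rule, it suffices to verify the cyclic Jacobi sums for triples of the coordinate functions $w_1,\dots,w_n,\overline w_1,\dots,\overline w_n$, whose differentials span the complexified cotangent space everywhere. Using \eqref{T29}--\eqref{T30} and expanding by the Leibniz rule, the all-unbarred sum $\{\{w_i,w_j\},w_k\}+\mathrm{cyc}$ collapses to the combinatorial identity $\sum_{\mathrm{cyc}}\sgn(i-j)\bigl(\sgn(i-k)+\sgn(j-k)\bigr)=0$, and the two mixed cases $\{\{w_i,w_j\},\overline w_k\}+\mathrm{cyc}$ and $\{\{w_i,\overline w_j\},\overline w_k\}+\mathrm{cyc}$ reduce similarly to sign identities, now also involving the inhomogeneous pieces $\delta_{il}(2+|w|^2)$ and $\delta_{il}\sum_r\sgn(r-i)|w_r|^2$ of \eqref{T30}. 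Together with the previous paragraphs this shows that \eqref{pb1} is a Poisson bracket.

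The main obstacle is precisely the bookkeeping in the mixed Jacobi sums: the inhomogeneous terms of \eqref{T30} are the ones sensitive to the ordering of the indices, and their interplay with the step functions $\sgn$ has to be tracked carefully through the Leibniz expansion. (Alternatively one could invoke that \eqref{pb1} belongs to Zakrzewski's family of $\U(n)$-covariant brackets \cite{Z}, where the Jacobi identity is known, but the direct check keeps the treatment self-contained.)
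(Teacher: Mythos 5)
Your overall plan is sound and your reduction of each claim is logically valid, but the route you take differs from the paper's in two places, and the most important step is left undone.

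For antisymmetry you invoke the isotropy relation $\langle P_{\u(n)}X,Y\rangle+\langle P_{\u(n)}Y,X\rangle=\langle X,Y\rangle$ of the Manin triple and show the symmetric part of \eqref{pb1} is real. The paper instead rewrites the first two terms via a trace identity, $\Im\bigl(\xi^\dag(w\eta^\dag)_{\u}w-\half\xi^\dag w\eta^\dag w\bigr)=\half\Im\tr\bigl((w\eta^\dag)_{\u}(w\xi^\dag)_{\b}-(w\xi^\dag)_{\u}(w\eta^\dag)_{\b}\bigr)$, whose right-hand side is manifestly antisymmetric. Both are correct; yours is perhaps slightly more conceptual since it isolates the isotropy as the mechanism. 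The derivation of \eqref{T29}--\eqref{T30} from the Hamiltonian vector fields of $\Re w_l$, $\Im w_l$ follows the paper closely, with the same gradients $\nabla\Re w_l=-\ic e_l$, $\nabla\Im w_l=e_l$.

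For the Jacobi identity the difference is more substantial: the paper works with general linear functions $F_\xi$, $F_\eta$, $F_\zeta$ and carries the entire verification at the matrix level — a long but closed computation ending with $\Im\tr\bigl(w\xi^\dag[w\eta^\dag,w\zeta^\dag]\bigr)+\mathrm{cyc.}=0$ — while you reduce to the complex coordinate functions $w_i$, $\bar w_i$ and the sign/delta identities coming from \eqref{T29}--\eqref{T30}. Your reduction is legitimate (the Jacobiator of a bivector is a trivector, so it suffices to evaluate it on any frame of $1$-forms; and the fourth case $\{\{\bar w_i,\bar w_j\},\bar w_k\}+\mathrm{cyc.}$ follows from the first by conjugation because the bracket is real). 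The unbarred identity you state, $\sum_{\mathrm{cyc}}\sgn(i-j)\bigl(\sgn(i-k)+\sgn(j-k)\bigr)=0$, is correct. However, you explicitly flag the mixed Jacobi sums — where the inhomogeneous terms $\ic\delta_{il}(2+|w|^2)$ and $\ic\delta_{il}\sum_r\sgn(r-i)|w_r|^2$ interact with the step functions — as "the main obstacle" and do not carry them out. That is precisely where the content of the proposition lives, and it is the only part that cannot be waved through on structural grounds. In this respect the paper's choice of $F_\xi$ is advantageous: the isotropy and trace identities let the cancellations be done once and for all at the matrix level, whereas the coordinate route forces a case analysis in the indices that you have not completed. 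To close the gap you would need to exhibit the mixed computations, or else switch to the paper's linear-function argument (or, as you note, cite Zakrzewski directly, at the cost of self-containedness).
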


\begin{proof}
The antisymmetry of the  last two terms of \eqref{pb1} is obvious, while
the antisymmetry of the sum of the first and second terms is seen from the identity
\be
\Im\left( \xi^\dag(w \eta^\dag)_{\u} w
 -  \half\xi^\dag w\eta^\dag w\right)=
  \half \Im \,\tr \left(   (w \eta^\dag)_{\u} (w \xi^\dag)_\b -(w \xi^\dag)_{\u} (w \eta^\dag)_\b \right),
 \ee
where we used constant $\xi$ and $\eta$ for simplicity.
Here and below, the subscripts    $\u$ and $\b$ stand for $\u(n)$ and $\b(n)$.

Regarding the Jacobi identity, it is enough to verify it for
linear functions $F_\xi$, $F_\eta$ and $F_\zeta$  for arbitrary $\xi,\eta,\zeta \in \C^n$.
In this verification we may use the formula \eqref{HamZak},
since  this
expresses the identity $\{F,H\}(w) = \Im (\xi(w)^\dagger V_H(w))$, and does not rely on the Jacobi identity.

We start by calculating  the gradient of $\{ F_\xi, F_\eta\}$ from \eqref{pb1}, and find
\be
\left(\nabla \{ F_\xi, F_\eta\}(w)\right)^\dagger =
\xi^\dag(w\eta^\dag)_\k - \eta^\dag(w\xi^\dag)_\k + \half\eta^\dag w\xi^\dag - \half\xi^\dag w \eta^\dag -
\half (w^\dag\eta)\xi^\dag + \half (w^\dag\xi)\eta^\dag.
\ee
Combining this with $V_{{F_\zeta}}(w)$
from \eqref{HamZak}, we have to inspect
\be
\ba
{\mathcal{J}}(w): &= \{\{ F_\xi , F_\eta\}, F_\zeta\}(w) + \hbox{cycl. perm.} \\
&=\Im \left[ \left(\nabla \{ F_\xi, F_\eta\}(w)\right)^\dagger
\left( (w\zeta^\dag)_\k w - \zeta - \half(\zeta^\dag w)w - \half(w^\dag\zeta)w\right)\right] + \hbox{c.p.}
\ea
\ee
By spelling this out, we obtain
$$
\ba
{\mathcal{J}(w)}&=
\Im\Bigl[ \eta^\dag(w\xi^\dag)_\k\zeta - \xi^\dag(w\eta^\dag)_\k\zeta + \half(\xi^\dag w)\eta^\dag\zeta -
\half(\eta^\dag w)\xi^\dag\zeta + \half(w^\dag\eta)\xi^\dag\zeta - \half(w^\dag\xi)\eta^\dag\zeta + \hbox{c.p.}\Bigr] \\
&+
\Im\Bigl[\xi^\dag(w\eta^\dag)_\k(w\zeta^\dag)_\k w - \eta^\dag(w\xi^\dag)_\k(w\zeta^\dag)_\k w
+ \half\eta^\dag w\xi^\dag(w\zeta^\dag)_\k w - \half\xi^\dag w\eta^\dag(w\zeta^\dag)_\k w \\
&\qquad\qquad\qquad
+ \half(w^\dag\xi)\eta^\dag(w\zeta^\dag)_\k w  -  \half(w^\dag\eta)\xi^\dag(w\zeta^\dag)_\k w +\hbox{ c.p.} \\
&\qquad -
\half(\zeta^\dag w)\xi^\dag(w\eta^\dag)_\k w + \half(\zeta^\dag w)\eta^\dag(w\xi^\dag)_\k w
- \quarter (\eta^\dag w)(\xi^\dag w)(\zeta^\dag w) + \quarter (\xi^\dag w)(\eta^\dag w)(\zeta^\dag w) \\
&\qquad\qquad\qquad
+ \quarter (w^\dag\eta)(\xi^\dag w)(\zeta^\dag w) - \quarter (w^\dag\xi)(\eta^\dag w)(\zeta^\dag w)
+ \hbox{c.p.} \\
&\qquad -
\half(w^\dag\zeta)\xi^\dag(w\eta^\dag)_\k w + \half(w^\dag\zeta)\eta^\dag(w\xi^\dag)_\k w
- \quarter (w^\dag\zeta)(\eta^\dag w)(\xi^\dag w) + \quarter (w^\dag\zeta)(\xi^\dag w)(\eta^\dag w) \\
&\qquad\qquad\qquad
+ \quarter (w^\dag\eta)(w^\dag\zeta)(\xi^\dag w) - \quarter (w^\dag\xi)(w^\dag\zeta)(\eta^\dag w)
+ \hbox{c.p.}\Bigr]
\ea
$$
After making several self-evident cancellations, and using cyclic permutations to reorganize terms
in a convenient way, we get
\[
\ba
{\mathcal{J}(w)}=&\Im
\Bigl(\zeta^\dag(w\eta^\dag)_\k\xi - \xi^\dag(w\eta^\dag)_\k\zeta + \half(\xi^\dag w - w^\dag\xi)\eta^\dag\zeta -
\half(\eta^\dag w - w^\dag\eta)\xi^\dag\zeta + \hbox{c.p.}\Bigr)\\
+&\Im\,\tr
\Bigl(w\xi^\dag\bigl[(w\eta^\dag)_\k,(w\zeta^\dag)_\k\bigr] + (\eta^\dag w) w\xi^\dag (w\zeta^\dag)_\k  -
(\xi^\dag w) w\eta^\dag (w\zeta^\dag)_\k  + \hbox{c.p.}\Bigr).
\ea
\]
It is not difficult to see that the first line gives zero. Rearranging the second line, we have
\[
\ba
{\mathcal{J}(w)}&=\Im\,\tr
\Bigl(w\xi^\dag\bigl[(w\eta^\dag)_\k,(w\zeta^\dag)_\k\bigr] + w\eta^\dag w\xi^\dag (w\zeta^\dag)_\k  -
 w\xi^\dag w\eta^\dag (w\zeta^\dag)_\k  + \hbox{c.p.}\Bigr) \\
& =
\Im\,\tr
\Bigl(-w\xi^\dag\bigl[(w\eta^\dag)_\b,(w\zeta^\dag)_\k\bigr] +  \hbox{c.p.}\Bigr)\\
& =
- \Im\,\tr\Bigl((w\xi^\dag)_\k\bigl( [(w\eta^\dag)_\b,(w\zeta^\dag)_\k] + [w\eta^\dag,(w\zeta^\dag)_\b]\bigr)
\\
& \qquad\qquad\qquad
+
(w\xi^\dag)_\b\bigl( [(w\eta^\dag)_\b,(w\zeta^\dag)_\k] + [(w\eta^\dag)_\k, w\zeta^\dag]\bigr)\Bigr)\\
& =
-\Im\,\tr\Bigl(
(w\xi^\dag)_\k\bigl( [(w\eta^\dag)_\b,(w\zeta^\dag)_\k] + [(w\eta^\dag)_\k,(w\zeta^\dag)_\b] +
[(w\eta^\dag)_\b,(w\zeta^\dag)_\b]\bigr)\\
& \qquad\qquad\qquad
+
(w\xi^\dag)_\b\bigl( [(w\eta^\dag)_\b,(w\zeta^\dag)_\k] + [(w\eta^\dag)_\k,(w\zeta^\dag)_\k] +
 [(w\eta^\dag)_\k,(w\zeta^\dag)_\b]\bigr)\Bigr)\\
& =
\Im\,\tr \bigl(w\xi^\dag [ w\eta^\dag, w\zeta^\dag]\bigr) +\hbox{cycl. perm.} =0.\\
\ea
\]

Having verified the Jacobi identity, it remains to calculate the Poisson brackets of the components of $w$ and their
complex conjugates.
Let $e_k$ $(k=1,\dots, n)$ denote the canonical basis of $\C^n$.
One obtains by  tedious calculation that the Hamiltonian vector fields of the linear
functions given by the real and imaginary parts of the components $w_k$ have the following form:
\[
V_{\Re w_k}(w) =
\left\{\ba
&
\ri \Re(w_k) w_k \mk + \ri \sum_{r>k}(w_kw_r\mr + |w_r|^2\mk) + \ri\mk - \half\ri(w_k-\overline w_k)w\qquad k<n,\\
& \ri \Re(w_n)w_n\mn + \ri\mn - \half\ri(w_n-\overline w_n)w
\quad \qquad\qquad\qquad\qquad\qquad\qquad k=n.
\ea\right.
\]
and
\[
V_{\Im w_k}(w) =
\left\{\ba
&
\ri \Im(w_k) w_k\mk + \sum_{r>k}(w_kw_r\mr - |w_r|^2\mk) - \mk - \half (w_k+\overline w_k)w\qquad k<n,\\
& \ri \Im( w_n) w_n\mn - \mn - \half (w_n+\overline w_n)w
\,\,\,\, \qquad\qquad\qquad\qquad\qquad\qquad k=n.
\ea\right.
\]
By using these, one can  check that the formulae \eqref{T29} and \eqref{T30} follow.
If desired, the reader can supply the details.
\end{proof}

\medskip
The bracket \eqref{pb1} has the nice property that the natural action of
$\U(n)$ on $\C^n$ is Poisson \cite{Z}, and
this can also be checked using linear functions $F_\xi$.
To this end, for any $g\in \U(n)$ and $w\in \C^n$ we define the
functions $F_\xi(g\,\cdot\,)\in C^\infty(\C^n)$ and $F_\xi(\,\cdot\,w)\in C^\infty(\U(n))$ by
\be
F_\xi(g\,\cdot\,)(w) = F_\xi(gw) = F_\xi(\,\cdot\,w)(g).
\ee
Then an easy calculation gives that
\be
\{F_\xi,F_\eta\}(gw) - \{F_\xi(g\,\cdot\, ),F_\eta(g\,\cdot\, )\}(w)
\ee
is equal to
\be
\Im\tr\bigl(gw\xi^\dag(gw\eta^\dag)_{\u(n)} - w\xi^\dag g (w\eta^\dag g)_{\u(n)}\bigr),
\ee
which in turn is equal to the value at $g$
of the Poisson bracket \eqref{T10} of the functions $F_\xi(\,\cdot\, w)$ and $F_\eta(\,\cdot\,w)$ on $\U(n)$.
The last equality follows using $D F_\xi(\,\cdot\,w)(g)= (g w \xi^\dag)_{\b(n)}$ and elementary manipulations.
Thus, we have
\be
\{F_\xi,F_\eta\}(gw) = \{F_\xi(g\,\cdot\, ),F_\eta(g\,\cdot\, )\}(w) + \{F_\xi(\,\cdot\,w),F_\eta(\,\cdot\,w)\}_U(g),
\ee
which means that the map $\U(n) \times \C^n \ni (g,w) \mapsto gw \in \C^n$ is indeed a Poisson map.

Let us recall the diffeomorphism
\be
b \mapsto b b^\dagger
\label{BtoP}\ee
from the group $\B(n)$ to the space $\fP(n)$ of positive definite Hermitian matrices.
By this diffeomorphism, the Poisson structure \eqref{T11} on $\B(n)$ is converted into a Poisson structure
on $\fP(n)$, which is given by the first term of \eqref{+PB2}, i.e.
\be
\{f, h\}_\fP (L)=4 \langle L d f(L), \left(L d h (L)\right)_{\u(n)} \rangle
\label{pb2}\ee
for all $f,h \in C^\infty(\fP(n))$.
Here,   the $\u(n)$-valued
derivatives $d f$ and $dh$ are defined by \eqref{T22}.

\begin{proposition} \label{Pr:A2}
With respect to the brackets \eqref{pb1} and \eqref{pb2}, the map
\be
\Phi:w\mapsto \1_n+ ww^\dag
\label{Phimap}\ee
from $\C^n$ to $\fP(n)$ is Poisson.
\end{proposition}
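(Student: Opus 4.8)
The plan is to verify that $\Phi$ is Poisson through its effect on Hamiltonian vector fields: $\Phi$ is a Poisson map if and only if, for every $h\in C^\infty(\fP(n))$, the Hamiltonian vector field of $\Phi^*h$ on $(\C^n,\{\,,\,\})$ and the Hamiltonian vector field of $h$ on $(\fP(n),\{\,,\,\}_\fP)$ are $\Phi$-related. This reduces the proposition to a comparison of two explicit formulae, namely \eqref{HamZak} for $V_H$ on $\C^n$ and the bracket \eqref{pb2} on $\fP(n)$, and avoids having to single out a generating set of functions by hand.

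First I would record $d\Phi_w(V)=Vw^\dagger+wV^\dagger$ for $V\in\C^n$, and compute the gradient of a pulled-back function. Using the definition of $\nabla$, the definition \eqref{T22} of $dh$, and the skew-Hermiticity of $dh(L)\in\u(n)$, one gets $\nabla(\Phi^*h)(w)=-2\,dh(\Phi(w))\,w$. Substituting $\eta:=\nabla(\Phi^*h)(w)$ into \eqref{HamZak} triggers two cancellations: the term $\tfrac12(\eta^\dagger w+w^\dagger\eta)w$ vanishes because $w^\dagger dh(L)w$ is purely imaginary, and, writing $L:=\Phi(w)$ so that $w\eta^\dagger=2ww^\dagger dh(L)=2(L-\1_n)dh(L)$, the $\u(n)$-part of $w\eta^\dagger$ contributes a summand $-2\,dh(L)w$ which cancels $-\eta=2\,dh(L)w$. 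What remains is $V_{\Phi^*h}(w)=2\,(L\,dh(L))_{\u(n)}\,w$. Pushing this forward by $d\Phi_w$, using $ww^\dagger=L-\1_n$ once more together with the skew-Hermiticity of $(L\,dh(L))_{\u(n)}$, collapses everything into a commutator:
\[
d\Phi_w\!\bigl(V_{\Phi^*h}(w)\bigr)=2\bigl[\,(L\,dh(L))_{\u(n)}\,,\,L\,\bigr],\qquad L=\Phi(w).
\]

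It then remains to identify the Hamiltonian vector field of $h$ on $(\fP(n),\{\,,\,\}_\fP)$. Writing a tangent vector to $\fP(n)$ as a Hermitian matrix $\dot L$, the defining relation $\langle df(L),\dot L\rangle=\{f,h\}_\fP(L)=4\langle L\,df(L),(L\,dh(L))_{\u(n)}\rangle$ must hold for all $f$, hence for every value $df(L)\in\u(n)$. Since $\u(n)$ is a Lagrangian subspace of $(\gl,\langle\,,\,\rangle)$, after cyclic rearrangement this forces $\dot L-4(L\,dh(L))_{\u(n)}L\in\u(n)$; imposing that $\dot L$ be Hermitian then pins down the skew-Hermitian ambiguity and yields exactly $\dot L=2[\,(L\,dh(L))_{\u(n)},L\,]$. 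Comparison with the previous display shows that $V_{\Phi^*h}$ and the vector field $L\mapsto 2[(L\,dh(L))_{\u(n)},L]$ are $\Phi$-related, which proves the proposition. This is consistent with the observation that, by \eqref{bbdag}, $\Phi$ factors as the Poisson diffeomorphism $b\mapsto bb^\dagger$ (described just before the proposition) composed with the map $\bb$ of Proposition \ref{Pr:Zak}. The only genuine difficulty is bookkeeping with the pairing $\langle\,,\,\rangle=\Im\tr$ and with the skew/Hermitian splitting: because $L\,dh(L)$ is not itself Hermitian, the projection $(\cdot)_{\u(n)}$ and an explicit symmetrization must be tracked carefully on the $\fP(n)$ side, whereas on the $\C^n$ side the cancellations in \eqref{HamZak} make the computation short.
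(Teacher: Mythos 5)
Your proof is correct, but it takes a genuinely different route from the paper. The paper verifies the Poisson condition directly on pullbacks of linear coordinate functions $f_X(L)=\langle X,L\rangle$, $X\in\u(n)$: it inserts $\nabla\Phi^*(f_X)(w)=-2Xw$ into \eqref{pb1}, rearranges the result into $4\Im\tr\bigl((\1_n+ww^\dagger)X((\1_n+ww^\dagger)Y)_{\u(n)}\bigr)=\{f_X,f_Y\}_\fP(\Phi(w))$, and invokes that the $f_X$ serve as coordinates. You instead prove the equivalent but more structural statement that Hamiltonian vector fields are $\Phi$-related: substituting $\eta=\nabla(\Phi^*h)(w)=-2\,dh(L)w$ into \eqref{HamZak} (with $L=\Phi(w)$), exploiting $ww^\dagger=L-\1_n$ and the skew-Hermiticity of $dh(L)$, you collapse everything to $V_{\Phi^*h}(w)=2(L\,dh(L))_{\u(n)}w$ and push forward to $2[(L\,dh(L))_{\u(n)},L]$, then separately extract from \eqref{pb2} that this commutator is precisely the Hamiltonian vector field of $h$ on $\fP(n)$. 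The algebraic cancellations on the $\C^n$ side are the same in both approaches; what yours adds is an explicit computation of the Hamiltonian flow on $\fP(n)$ (requiring the isotropy of $\u(n)$ and the Hermitian-part bookkeeping you rightly flag), and what it buys is a cleaner geometric picture: the target flow is conjugation by $(L\,dh(L))_{\u(n)}$. Both derivations are correct and of comparable length. One small caveat: your closing observation that $\Phi$ factors as $\bb$ followed by $b\mapsto bb^\dagger$ should be read only as a consistency check, not as an alternative proof, since in the paper the Poisson property of $\bb$ is \emph{deduced} from Proposition \ref{Pr:A2} rather than established independently.
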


\begin{proof}
Let $X,Y\in\u(n)$ and consider the pull-backs $\Phi^*(f_X)$ and $\Phi^*(f_Y)$ of the functions
$f_X(L):=\langle X,L\rangle$ and $f_Y(L) := \langle Y,L\rangle$. We have
\be
\Phi^*(f_X)(w) = \Im(w^\dag Xw) + \Im\tr(X)\quad\hbox{and}\quad \Phi^*(f_Y)(w) = \Im(w^\dag Yw) + \Im\tr(Y).
\ee
Using the formula \eqref{pb1} with $(\nabla \Phi^* (f_X))(w) = - 2 Xw$ and  similar for $f_Y$,
we can compute
\bea
&\{\Phi^*(f_X),\Phi^*(f_Y)\}(w) = 4 \Im \Bigl( w^\dag X(ww^\dag Y)_\u w + w^\dag X Yw - \half w^\dag Xww^\dag Yw +
\half w^\dag X ww^\dag Y w\Bigr) \nonumber \\
 &=
4\Im\tr \Bigl( (\1_n+ww^\dag)X\bigl((\1_n +ww^\dag)Y\bigr)_{\u(n)} \Bigr) = \{ f_X, f_Y\}_\fP(\Phi(w)).
\eea
Here, we have taken into account that, for example, $\Im\tr(XY) =0$ for $X,Y\in \u(n)$.
The statement follows since the linear functions of the form $f_X$ can serve as coordinates on $\fP(n)$.
\end{proof}

Let $\bb: \C^n \to \B(n)$ be the map determined by the condition
\be
\Phi = \bb \bb^\dagger.
\label{Phibb}\ee
It follows from Proposition \ref{Pr:A2} that this is a Poisson map with respect to the Poisson brackets \eqref{pb1} on $\C^n$
and \eqref{T11} on $\B(n)$.

\begin{proposition} \label{Pr:A3}
The map $\bb$ defined by \eqref{Phibb} with \eqref{Phimap}
is the moment map for the Poisson action \eqref{T26} of $\U(n)$ on $\C^n$.
According to \eqref{T27}, this means that we have
\be
 \Im\left( (\nabla F(w))^\dagger Xw\right) = \Im \tr \left(X \{F, \bb \}(w) \bb(w)^{-1}\right),
\quad \forall X\in \u(n),\,w\in \C^n,\, F\in C^\infty(\C^n).
\label{propA3}\ee
\end{proposition}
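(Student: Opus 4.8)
The plan is to verify the defining identity \eqref{propA3} directly, after reducing it to linear test functions. Fix $w\in\C^n$. Both sides of \eqref{propA3}, regarded as functions of $F$, are derivations of $C^\infty(\C^n)$ at $w$: the left-hand side is the value at $F$ of the fundamental vector field $\hat X$ of the action \eqref{T26}, namely $\hat X(w)=Xw$, while the right-hand side is $\Im\tr\bigl(X\{F,\bb\}(w)\bb(w)^{-1}\bigr)$, which is $\R$-linear in $F$ and depends on $F$ only through $\{F,\bb\}(w)$, itself a derivation in $F$. Hence it is enough to check \eqref{propA3} for the linear functions $F=F_\xi$ with $F_\xi(w):=\Im(\xi^\dag w)$, $\xi\in\C^n$, whose differentials span the cotangent space at every point. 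For such $F_\xi$ one has $\nabla F_\xi\equiv\xi$, so the left-hand side of \eqref{propA3} equals $\Im(\xi^\dag Xw)=\Im\tr(Xw\xi^\dag)$.

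Next I would express $\{F_\xi,\bb\}$ in terms of the Hamiltonian vector field $V_{F_\xi}$ of \eqref{HamZak}. With the excerpt's convention $\{F,H\}(w)=dF(V_H)$, antisymmetry gives $\{F_\xi,\bb_{ij}\}=-\{\bb_{ij},F_\xi\}=-d\bb_{ij}(V_{F_\xi})$, i.e.\ $\{F_\xi,\bb\}(w)=-d\bb_w\bigl(V_{F_\xi}(w)\bigr)$. Writing $\beta:=\bb(w)$ and $L:=\beta\beta^\dag=\1_n+ww^\dag$ (by \eqref{Phibb} and \eqref{Phimap}), the element
\be
\theta:=\{F_\xi,\bb\}(w)\,\beta^{-1}=-d\bb_w\bigl(V_{F_\xi}(w)\bigr)\,\beta^{-1}
\ee
lies in $\b(n)$, since $\bb$ takes values in $\B(n)$. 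Differentiating $\bb\bb^\dag=\1_n+ww^\dag$ along $V:=V_{F_\xi}(w)$ and using $d\bb(V)=\theta\beta$ (up to the sign above) one gets $\theta L+L\theta^\dag=-\bigl(Vw^\dag+wV^\dag\bigr)$, while the right-hand side of \eqref{propA3} becomes $\langle X,\theta\rangle=\Im\tr(X\theta)$ in the pairing \eqref{T2}.

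It then remains to identify $\theta$. The $\R$-linear map $\b(n)\to\{\text{Hermitian matrices}\}$, $\theta\mapsto\theta L+L\theta^\dag$, is a bijection: if $\theta L+L\theta^\dag=0$, conjugating by $\beta$ turns this into $\psi+\psi^\dag=0$ with $\psi:=\beta^{-1}\theta\beta\in\b(n)$, forcing $\psi=0$, hence $\theta=0$; equality of (real) dimensions gives surjectivity. So it suffices to exhibit one solution, and the natural guess is $\theta=P_{\b(n)}(w\xi^\dag)$. To verify it, set $A:=w\xi^\dag$, $A_\u:=P_{\u(n)}(A)$ (so $A_\u^\dag=-A_\u$ and $P_{\b(n)}(A)=A-A_\u$), expand $P_{\b(n)}(A)L+L\,P_{\b(n)}(A)^\dag$ with $L=\1_n+ww^\dag$, and separately expand $-(Vw^\dag+wV^\dag)$ using $V=A_\u w-\xi-\Re(\xi^\dag w)\,w$ from \eqref{HamZak}; both collapse to $w\xi^\dag+\xi w^\dag+2\,\Re(\xi^\dag w)\,ww^\dag+[ww^\dag,A_\u]$. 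Thus $\theta=P_{\b(n)}(w\xi^\dag)$, and since $X\in\u(n)$ is $\langle\,,\,\rangle$-orthogonal to $A_\u\in\u(n)$ by isotropy of $\u(n)$, we obtain $\Im\tr(X\theta)=\Im\tr\bigl(XP_{\b(n)}(w\xi^\dag)\bigr)=\Im\tr(Xw\xi^\dag)=\Im(\xi^\dag Xw)$, which is exactly the left-hand side. The only delicate points are the bookkeeping with the non-orthogonal decomposition $\gl=\u(n)+\b(n)$ — in particular that $P_{\b(n)}$ is not $\dagger$-symmetric — and keeping track of the sign conventions for the Hamiltonian vector field and for the right-logarithmic derivative $\theta$; once these are fixed the remaining computation is short.
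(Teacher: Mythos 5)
Your proposal is correct and follows essentially the same route as the paper's proof: restrict to the linear functions $F_\xi$, use the Hamiltonian vector field \eqref{HamZak}, differentiate $\bb\bb^\dagger=\1_n+ww^\dagger$, and conclude via the isotropy of $\u(n)$, with $\{F_\xi,\bb\}\bb^{-1}=(w\xi^\dagger)_{\b(n)}$ agreeing (up to the sign from swapping the bracket's arguments) with the paper's $\beta_F=-(w\xi^\dagger)_{\b(n)}$. The only cosmetic difference is that you guess this value and verify it using the bijectivity of $\theta\mapsto\theta L+L\theta^\dagger$, whereas the paper solves for it directly from $\{\Phi,F\}=V_F w^\dagger+wV_F^\dagger$, invoking the same uniqueness via the diffeomorphism $b\mapsto bb^\dagger$.
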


\begin{proof}
For ease of notation, we verify the relation for linear functions $F_\xi$ on $\C^n$, which is sufficient.
For this, we have to calculate the $\b(n)$-valued function
\be
\beta_F := \{ \bb, F\} \bb^{-1}\,, \quad F:=F_\xi.
\ee
Since \eqref{BtoP} is a diffeomorphism, $\beta_F$ is uniquely determined by
\be
\{ \Phi, F\} = \beta_F \Phi + \Phi \beta_F^\dagger,
\ee
and this can be calculated as follows.
First, we rearrange  the expression \eqref{HamZak} of the Hamiltonian vector field
in the form
\be
V_F(w)= \half (\xi^\dag w - w^\dag\xi)w -\xi - (w\xi^\dag)_{\b(n)} w.
\ee
Then, as $(\xi^\dag w - w^\dag\xi)\in\ri\R$, we obtain
\be
\ba
\{ \Phi, F\}(w)  &= V_F(w) w^\dag + w (V_F(w))^\dag \\
&=-(w\xi^\dag)_{\b(n)} ww^\dag - ww^\dag(w\xi^\dag)_{\b(n)}^\dag - \xi w^\dag - w\xi^\dag\\
&=
-(w\xi^\dag)_{\b(n)}\Phi(w) - \Phi(w)(w\xi^\dag)_{\b(n)}^\dag + \bigl((w\xi^\dag)_{\b(n)}-w\xi^\dag\bigr) +
\bigl((w\xi^\dag)_{\b(n)}^\dag -\xi w^\dag\bigr).
\ea
\ee
But the last two terms cancel,  and hence we see that
\be
\beta_F(w) = - \left( w \xi^\dagger\right)_{\b(n)}.
\ee
By using this,  the right-hand-side of \eqref{propA3} becomes
\be
- \Im \tr ( X \beta_F(w)) = \Im \tr (X w \xi^\dagger) = \Im (\xi^\dagger X w),
\ee
whereby the proof is complete.
\end{proof}

\begin{remark}
We had no need for the explicit formula of $\bb(w)$ in the above, but in some other calculations
it is needed. The reader can verify directly that it obeys equation \eqref{bb}.
\end{remark}

\begin{remark} \label{Rem:Tact}
The maximal torus $\T^n < \U(n)$ is a Poisson subgroup with vanishing Poisson bracket, and therefore
the restriction
of the $\U(n)$ action to $\T^n$ gives an ordinary Hamiltonian action. One can identify the dual
Poisson--Lie group
of $\T^n$ with $\B(n)_0$, the group of positive diagonal matrices, with zero Poisson bracket.
Then the corresponding group valued moment map is provided by $w \mapsto \bb(w)_0$,
which is the diagonal part of $\bb(w)$. Writing
\be
\bb(w)_0 = \exp (\phi(w)),
 \ee
 we get the ordinary moment map $w \mapsto \phi(w) \in \b(n)_0$,
 where $\b(n)_0$ (the space of real diagonal matrices) is identified with the linear dual of $\u(n)_0$.
\end{remark}

The following proposition represents one of the  side results of the paper.

\begin{proposition} \label{Pr:A5}
The Poisson bracket \eqref{pb1} is symplectic and,
 with $\cG_j= 1 + \sum_{k=j}^n \vert w_j\vert^2$,
 the corresponding symplectic form on $\C^n$ is given by
\be
\Omega_{\C^n} =
  \frac\ri2\sum_{k=1}^n\frac1{\cG_k} dw_k\wedge d\overline w_k
+ \frac\ri4\sum_{k=1}^{n-1}\frac1{\cG_k\cG_{k+1}}
d\cG_{k+1}
  \wedge
 \left(\overline w_k dw_k - w_k d\overline w_k\right).
 \ee
\end{proposition}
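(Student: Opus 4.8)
The plan is to verify that the $2$-form $\Omega_{\C^n}$ written in the statement is the inverse of the Poisson bivector defined by \eqref{pb1}, and that it is non-degenerate. Concretely, I would show two things: first, that $\Omega_{\C^n}$ is a closed $2$-form (this is elementary, since it is visibly exact up to reorganization — see below); second, that for every $H\in C^\infty(\C^n)$ the Hamiltonian vector field $V_H$ given by the explicit formula \eqref{HamZak} of Proposition \ref{Pr:A1} satisfies the defining relation $\iota_{V_H}\Omega_{\C^n}=dH$. Once these are established, non-degeneracy follows because the map $H\mapsto V_H$ is surjective onto a spanning set of vector fields at each point (the linear functions $F_\xi$ already produce, via \eqref{HamZak}, a full $2n$-dimensional space of values $V_{F_\xi}(w)$ at any fixed $w$, as one reads off from the formulae for $V_{\Re w_k}$ and $V_{\Im w_k}$ worked out in the proof of Proposition \ref{Pr:A1}), and then $\Omega_{\C^n}$ pairs non-degenerately with the tangent space; a symplectic form inverting a given Poisson structure is automatically unique, so this pins down $\Omega_{\C^n}$.

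For the closedness, I would rewrite the second sum using $d\cG_{k+1}\wedge(\overline w_k\,dw_k-w_k\,d\overline w_k)$ and note that $\overline w_k\,dw_k-w_k\,d\overline w_k = -d(\ )$ only up to the $|w_k|^2$ piece; more efficiently, observe that $\cG_k=\cG_{k+1}+|w_k|^2$, so $d\cG_k\wedge d\overline w_k\wedge dw_k$ terms telescope, and a direct computation of $d\Omega_{\C^n}$ collapses to zero. Alternatively — and this is probably the cleanest route — I would exhibit a primitive: set $\vartheta:=\tfrac{\ri}{2}\sum_k \cG_k^{-1}\overline w_k\,dw_k$ or a similar one-form built from $\log \cG_k$, and check $d\vartheta=\Omega_{\C^n}$ by a short calculation using $d\log\cG_k=\cG_k^{-1}\sum_{r\geq k}(\overline w_r\,dw_r+w_r\,d\overline w_r)$; closedness is then immediate.

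The substantive step is the verification $\iota_{V_H}\Omega_{\C^n}=dH$. It suffices to check this on the linear functions $H=F_\xi$, equivalently on $H=\Re w_k$ and $H=\Im w_k$ for each $k$, since their differentials span $T^*_w\C^n$ and both sides are $C^\infty$-linear in $H$ in the appropriate sense. So I would take the explicit vector fields $V_{\Re w_k}$ and $V_{\Im w_k}$ recorded in the proof of Proposition \ref{Pr:A1}, contract them into $\Omega_{\C^n}$, and check that the result equals $d(\Re w_k)$ and $d(\Im w_k)$ respectively. This is a direct but somewhat lengthy computation in coordinates: one must track how the diagonal `$(w_k+\overline w_k)w$' and `$\sum_{r>k}|w_r|^2 e_k$' contributions interact with the factors $\cG_k^{-1}$ and with the cross-term $d\cG_{k+1}\wedge(\overline w_k\,dw_k-w_k\,d\overline w_k)$. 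The bookkeeping around the nested sums $\cG_k=1+\sum_{r\geq k}|w_r|^2$ — in particular the interplay between index ranges $r>k$ in $V_H$ and the telescoping structure of the $\cG_k$ in $\Omega_{\C^n}$ — is where the argument is most error-prone, and I expect this to be the main obstacle; everything else (closedness, non-degeneracy, reduction to linear functions) is routine. Since the paper's style here is to invite the reader to supply such details, I would present the primitive $\vartheta$, state the contraction identity, indicate the one or two key cancellations, and leave the remaining coordinate manipulations as a verification.
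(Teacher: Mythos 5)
Your strategy is correct, but it is genuinely different from the paper's. You take the stated $\Omega_{\C^n}$ as a candidate and verify it pointwise by contracting the explicit Hamiltonian vector fields $V_{\Re w_k}$, $V_{\Im w_k}$ of \eqref{HamZak} against it; the paper instead \emph{derives} the formula by inverting the Poisson tensor: on the dense open set where all $|w_j|>0$ it passes to the variables $\varphi_i$ (from $w_j=e^{\ri\varphi_j}|w_j|$) and $\ln\cG_k$, in which the bracket matrix becomes the constant block matrix $2\bigl(\begin{smallmatrix}0&A\\-A^T&0\end{smallmatrix}\bigr)$ with $A=\1_n+B+\dots+B^{n-1}$ and $A^{-1}=\1_n-B$, inverts it, obtains \eqref{Om1}, rewrites it in terms of $dw_k,d\overline w_k$, and finally extends the inverse relationship to all of $\C^n$ because both tensors are regular everywhere. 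The paper's change of variables absorbs in one stroke exactly the bookkeeping you flag as the danger point (the telescoping of the $\cG_k$'s is the triangular matrix $A$), and it also produces near-Darboux coordinates that are reused later; your route avoids the density/extension step because the contraction identity is checked at every point, but its entire content is the lengthy coordinate computation that you leave unperformed, which here is essentially the whole proposition.

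Three caveats. First, mind the sign convention: in this paper $dH=\Omega(\,\cdot\,,V_H)$, i.e.\ $\iota_{V_H}\Omega=-dH$, so if you verify $\iota_{V_H}\Omega_{\C^n}=dH$ with the $V_H$ of \eqref{HamZak} you will find an overall sign discrepancy. Second, your proposed primitive does not work: for $\vartheta=\frac{\ri}{2}\sum_k\cG_k^{-1}\overline w_k\,dw_k$ the term $\frac{\ri}{2}\sum_k\cG_k^{-1}dw_k\wedge d\overline w_k$ in $d\vartheta$ comes out with the wrong sign, and after antisymmetrizing and fixing the sign the coefficient of $dw_k\wedge d\overline w_k$ becomes $\cG_{k+1}/\cG_k^2$ rather than $1/\cG_k$, so $d\vartheta\neq\Omega_{\C^n}$. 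This is harmless because closedness need not be checked at all: once the contraction identity holds for $H=\Re w_k,\Im w_k$, the form $\Omega_{\C^n}$ is pointwise the inverse of the Poisson bivector, and closedness of the inverse of a non-degenerate bivector is equivalent to the Jacobi identity already established in Proposition \ref{Pr:A1}. Third, your non-degeneracy argument should be streamlined: the claim that the values $V_{F_\xi}(w)$ visibly span a $2n$-dimensional space at every $w$ is precisely the non-degeneracy of the bracket and is not obvious by inspection away from special points; it is also unnecessary, since the contraction identity on the spanning set of differentials already gives $\Omega^\flat\circ\pi^\sharp=\mathrm{id}$ on each cotangent space, which forces both maps to be bijective and hence both the bracket and $\Omega_{\C^n}$ to be non-degenerate.
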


 \begin{proof}
 We start from the coordinate form of the Poisson bracket, copied here for convenience:
 \be\label{wPB}
 \ba
 \{w_i,w_k\} &= \ri\,\sgn(i-k)w_iw_k\\
 \{w_i,\overline w_l\} &= \ri\delta_{il}(2+|w|^2) + \ri w_i\overline w_l + \ri\delta_{il}\sum_{r=1}^n\sgn(r-l) |w_r|^2.
 \ea
 \ee
  We shall first invert the Poisson tensor
 on the dense open submanifold on which all $\vert w_j \vert >0$, where we can use
  the parametrization $w_j = e^{\ri \varphi_j} \vert w_j\vert$.

 Let us consider
\be
\{w_i, |w_k|^2\} = \ri\,\sgn(i-k)|w_k|^2w_i +\ri|w_k|^2w_i +\ri\delta_{ik}(2+|w|^2)w_i +\ri\delta_{ik}w_i\sum_{r=1}^n\sgn(r-k)|w_r|^2,
\label{A26}\ee
from which we easily obtain
\be
\{|w_i|^2,|w_k|^2\}=0.
\label{A27}\ee
Using this, and restricting now to our submanifold, the relation \eqref{A26} implies
\be
\{e^{\ri\varphi_i},|w_k|^2\} = \{ \frac{w_i}{|w_i|} , |w_k|^2 \} =
\ri[1-\delta_{ik}+\sgn(i-k)]|w_k|^2 e^{\ri\varphi_i} + 2\ri\delta_{ik} \cG_k e^{\ri\varphi_i}.
\label{A28}\ee
Plainly, we have the identity
\be
\{ w_j , w_k \} + e^{2\ri \varphi_j} e^{2\ri \varphi_k} \{ \overline w_j , \overline w_k \} =
2 \vert w_j w_k \vert \{  e^{\ri \varphi_j} ,  e^{\ri \varphi_k} \}.
\label{A29}\ee
The left-hand side can be checked to vanish, and thus we get
\be
\{  e^{\ri \varphi_j} ,  e^{\ri \varphi_k} \}=0.
\label{A30}\ee

 It is convenient to change variables, noting that the map $(|w_1|^2,\dots,|w_n|^2) \mapsto (\cG_1,\dots,\cG_n)$ is invertible. Then,
 it is elementary to derive from \eqref{A28} the relation
 \be
 \{e^{\ri\varphi_i}, \cG_k\} =
 \left\{\ba &2\ri \cG_k e^{\ri\varphi_i},\quad k\leq i\\ &0,
\ \ \ \quad\qquad k>i\ea\right.
 \ee
 that can be also written  as
 \be
 \{ \varphi_i, \ln \cG_k\} = \left\{\ba &2,\quad k\leq i\\ &0,\quad k>i \ea \right.
 \ee
 This means that the matrix of Poisson brackets, in the variables $ \varphi_i, \ln \cG_k $ has the form
 \be
 P = 2\left(\begin{array}{cc}0 & A \\ -A^T & 0\end{array}\right)
 \ee
 with
 \be
 A = \1_n + B + B^2 + \dots+B^{n-1},
 \ee
 where $B$ is the nilpotent matrix having the entries $B_{ik} = \delta_{i,k+1}$.
Both $A$ and $P$ are invertible, and their inverses are
\be
   A^{-1}=\1_n-B \quad\hbox{and}\quad
   P^{-1}  = \frac{1}{2} \left(\begin{array}{cc}0 & -(A^{-1})^T \\ A^{-1} & 0\end{array}\right).
\ee
Consequently, we obtain the symplectic form\footnote{In our convention  the wedge  does not contain $\half$ and
$dH = \Omega(\,\cdot \,, V_H)$ with the Hamiltonian vector field $V_H$.}
 ($x^\alpha$ represent the local variables  $\varphi_i$ and $\ln \cG_k$)
\be
\Omega = \frac{1}{2} \sum_{\alpha,\beta=1}^{2n}(P^{-1})_{\alpha\beta} dx^\alpha\wedge dx^\beta\\
=
\half\sum_{k=1}^{n-1} [d\ln \cG_k-d\ln \cG_{k+1}]\wedge d\varphi_k + \half d\ln \cG_n\wedge d\varphi_n.
\label{Om1}\ee

If we now substitute the  identities
 \be
 d \ln \cG_k - d \ln \cG_{k+1}= \frac{ \cG_{k+1} d \vert w_k \vert^2 - \vert w_k\vert^2 d \cG_{k+1}}{\cG_k \cG_{k+1}}
 \ee
 and
 \be
 d\varphi_k = (2\ri |w_k|^2)^{-1}(\overline w_k dw_k - w_kd\overline w_k),
 \ee
 then $\Omega$ \eqref{Om1} takes the form
\be
\Omega =
  \frac\ri2\sum_{k=1}^n\frac1{\cG_k} dw_k\wedge d\overline w_k
+ \frac\ri4\sum_{k=1}^{n-1}\frac1{\cG_k\cG_{k+1}} d \cG_{k+1}
  \wedge
 \left(\overline w_k dw_k - w_k d\overline w_k\right).
 \label{Om2}\ee
 It is clear that both the original Poisson tensor corresponding to \eqref{wPB}
 and $\Omega$ \eqref{Om2} are regular over the whole of $\C^n$.
 As a result, their inverse relationship extends from the dense open submanifold (where $|w_j|>0$ for all $j$) to the full phase space.
 \end{proof}

\begin{remark}\label{Rem:minZak}
The image of the map $w\mapsto \1_n + w w^\dagger$ is the union of the $\U(n)$ orbits in $\fP(n)$ passing through
the degenerate diagonal matrices
\be
\diag (1 + R^2, 1\dots, 1), \qquad R\geq 0.
\ee
For any fixed $R>0$, the orbit is a symplectic leaf in $\fP(n)$ of dimension $2(n-1)$;
$R=0$ corresponds to a trivial symplectic leaf.
The union of the orbits consisting of the conjugates of the matrices
\be
\diag (1 - r^2, 1,\dots, 1),
\qquad
0 \leq  r < 1
\ee
is the image of the  map
\be
z \mapsto \1_n - z z^\dagger
\label{map-}\ee
from
\be
\cB(1):= \{ z \in\C^n \mid |z|^2 <1 \}
\ee
to $\fP(n)$. In fact, the open ball $\cB(1)$, identified as a subset of $\R^{2n}$, can be equipped with the Poisson bracket
\be\label{pb-}
 \ba
 \{z_i,z_k\} &= \ri\,\sgn(i-k)z_iz_k\\
 \{z_i,\overline z_l\} &= \ri(|z|^2 -2)\delta_{il} + \ri z_i\overline z_l + \ri\delta_{il}\sum_{r=1}^n\sgn(r-l)|z_r|^2
 \ea
 \ee
with respect to which the map \eqref{map-} is Poisson. This is also a special case of the Poisson structures
found in \cite{Z}.
 The analogue of Proposition \ref{Pr:A3}  holds for the map
$\bb_-: \cB(1) \to \B(n)$ defined by
\be
\1_n - z z^\dagger = \bb_-(z) \bb_-(z)^\dagger.
\label{bb-}\ee
The Poisson map $\bb_-$ can be used to introduce variants of our reduction.
Concretely, one may replace one or more of the $\bb$ factors in \eqref{H1} by $\bb_-$, and study the reduced system.
The restriction $\gamma >0$ in the moment map constraint \eqref{momconst} then might not be necessary.
Let us also note that one obtains a Poisson pencil on $\C^n$ if one
replaces the last term of \eqref{pb1} by $-\lambda \Im \left(\xi(w)^\dagger \eta(w)\right)$ for any real parameter $\lambda$,
and the formula  \eqref{pb-} corresponds to $\lambda=-1$.
\end{remark}

\section{Proof of Lemma \ref{L:halfv}} \label{A:Glob2}

In this section, we work over $(\C^{n\times d}, \{\ ,\ \}_\cW)$ with the primary spins $(w^\alpha)$, see \S\ref{ssec:Prim}.
We set $\br{\ , \ }:=\{\ ,\ \}_\cW$ to simplify notations.

As noted in \S\ref{ssec:H}, the half-dressed spins $v^\alpha$ can be defined in $\C^{n\times d}$ in terms of the primary spins.
It is convenient to introduce the matrices $b_\alpha=\bb(w^\alpha)$ and $B^\alpha=b_1\cdots b_\alpha$, so that
\be \label{Eq:hvA}
v^\alpha = B^{\alpha-1} w^\alpha\,.
\ee
Remark that $B^\alpha$ is related to the matrix $B_\alpha$ introduced in \eqref{H2} by $B_\alpha = b_R B^\alpha$.
We also note the following lemma, which follows from Proposition \ref{Pr:Zak} by straightforward computations.
\begin{lemma} \label{Lemb1}
 For any $1\leq \alpha \leq d$, $1\leq i , j , l \leq n$,
 \bea
  &&  \br{w^\alpha_i,(b_\alpha)_{jl}}=\,\ic \,[\delta_{ij}+2 \delta_{(i>j)}]\,w^\alpha_j (b_\alpha)_{il}  \,,
  \label{Eq:wb1}\\
&& \br{\overline{w}^\alpha_i,(b_\alpha)_{jl}}=-\ic \delta_{ij} \overline{w}^\alpha_i (b_\alpha)_{il}
-2\ic \delta_{ij} \sum_{k=j+1}^l \overline{w}^\alpha_k (b_\alpha)_{kl} \,.
 \label{Eq:wb2}
 \eea
 Furthermore, the Poisson bracket evaluated on $((b_\alpha)_{ij},(\overline{b}_\alpha)_{ij})$ is given by \eqref{T7}--\eqref{T8}.
\end{lemma}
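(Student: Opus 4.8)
The statement splits into two parts: the brackets \eqref{Eq:wb1}--\eqref{Eq:wb2} of a primary spin (and its conjugate) with the entries of its own Zakrzewski matrix $b_\alpha=\bb(w^\alpha)$, and the claim that the self-brackets of the entries of $b_\alpha$ reproduce \eqref{T7}--\eqref{T8}. I would dispose of the second part first: by Proposition \ref{Pr:Zak} (equivalently Propositions \ref{Pr:A2}--\ref{Pr:A3}) the map $\bb\colon\C^n\to\B(n)$ is Poisson with $\B(n)$ carrying the bracket $\{\ ,\ \}_B$ inherited from the Drinfeld double, and since $\B(n)$ is a Poisson submanifold of $(\GL,\{\ ,\ \}_-)$ the brackets of the matrix entries of $b\in\B(n)$ are the restrictions of \eqref{T7}--\eqref{T8} with $K=b$. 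Pulling back along $\bb$ on the $\alpha$-th copy of $\C^n$ gives exactly the asserted relations for $(b_\alpha)_{ij}$ and $\overline{(b_\alpha)}_{ij}$.

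For the first part I would argue directly from the explicit formula \eqref{bb}, writing $b_{jj}=\sqrt{\cG_j/\cG_{j+1}}$ and $b_{il}=w_i\overline w_l/\sqrt{\cG_l\cG_{l+1}}$ for $i<l$, and using the Leibniz rule with the coordinate brackets \eqref{T29}--\eqref{T30}. The only nontrivial ingredient is $\br{w^\alpha_i,\cG_j}=\sum_{k\ge j}\br{w^\alpha_i,|w^\alpha_k|^2}$, which follows from the identity for $\br{w_i,|w_k|^2}$ recorded in the proof of Proposition \ref{Pr:A5}; after collecting terms this gives $\br{w_i,\log\cG_j}$ as the sum of a $w_i$-proportional piece and a $\delta_{(i\ge j)}$ piece, whence closed forms for $\br{w_i,\log b_{jj}}$ and $\br{w_i,b_{il}}$. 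It then remains to reorganise the $\sgn$ and $\delta$ contributions and check that they assemble into $\ic[\delta_{ij}+2\delta_{(i>j)}]\,w_j(b_\alpha)_{il}$; the conjugate formula \eqref{Eq:wb2} is obtained the same way, starting from $\br{\overline w_i,w_k}$, $\br{\overline w_i,\overline w_k}$ and $\br{\overline w_i,|w_k|^2}$.

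An alternative, more conceptual route for the first part uses the moment map characterisation: from the proof of Proposition \ref{Pr:A3} one has, for any real linear $F_\xi(w)=\Im(\xi^\dagger w)$, the identity $\br{F_\xi,\bb}=\bigl(w\xi^\dagger\bigr)_{\b(n)}\bb$. Taking $\xi=-\ic\, e_i$ and $\xi=e_i$ recovers $\Re w_i$ and $\Im w_i$, and combining by $\R$-bilinearity yields $\br{w_i,\bb}$ and $\br{\overline w_i,\bb}$ in terms of the $\b(n)$-projections of the rank-one matrices $w e_i^\dagger$ and $\ic\, w e_i^\dagger$ --- here one must remember that $(\ )_{\b(n)}$ is only $\R$-linear, so $(\ic\, M)_{\b(n)}\ne\ic\, M_{\b(n)}$ in general. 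Evaluating these projections (strictly upper part doubled, real part on the diagonal) reproduces \eqref{Eq:wb1}--\eqref{Eq:wb2}, and provides a useful cross-check of the direct computation.

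I expect the main obstacle to be purely organisational: getting the index ranges exactly right so that the computation lands on the precise combinations $\delta_{ij}+2\delta_{(i>j)}$ and $\sum_{k=j+1}^{l}$ appearing in the statement --- in particular, tracking when the diagonal $\delta$-term in $\br{w_i,|w_k|^2}$ contributes to $\br{w_i,\cG_j}$ (i.e. whether $i\ge j$) and handling the boundary terms stemming from the convention $\cG_{n+1}=1$. There is no conceptual difficulty here; it is a finite bookkeeping exercise, which is presumably why the paper calls it ``straightforward''.
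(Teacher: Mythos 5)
Your proposal is correct on all counts. The second part of the lemma is indeed immediate from Proposition \ref{Pr:Zak}: $\bb$ is a Poisson map into $(\B(n),\{\ ,\ \}_B)$, which is a Poisson submanifold of the Drinfeld double, so the entry brackets of $b_\alpha$ are exactly \eqref{T7}--\eqref{T8} restricted to $\B(n)$. For the first part, your direct route via \eqref{bb}, \eqref{T29}--\eqref{T30} and the Leibniz rule is exactly the ``straightforward computation from Proposition \ref{Pr:Zak}'' that the paper invokes without spelling out. Your moment-map route is a genuinely different (and arguably cleaner) argument: starting from the identity $\br{F_\xi,\bb}=(w\xi^\dagger)_{\b(n)}\bb$ established in the proof of Proposition \ref{Pr:A3}, taking $\xi=-\ic e_i$ and $\xi=e_i$ and recombining by complex bilinearity gives
\be
\br{w_i,\bb} = \Bigl(\ic\, w_i E_{ii} + 2\ic \sum_{k<i} w_k E_{ki}\Bigr)\bb,
\qquad
\br{\overline w_i,\bb} = \Bigl(-\ic\, \overline w_i E_{ii} - 2\ic \sum_{k>i} \overline w_k E_{ik}\Bigr)\bb,
\ee
and reading off the $(j,l)$ entries (using that $b_{kl}=0$ for $k>l$) reproduces \eqref{Eq:wb1}--\eqref{Eq:wb2} without ever touching the square roots in \eqref{bb}. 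You are right to flag that $(\cdot)_{\b(n)}$ is only $\R$-linear; that is the one place where carelessness would wreck the coefficients. The only small quibble is your parenthetical gloss of the $\b(n)$-projection as ``strictly upper part doubled, real part on the diagonal'': that description is accurate for the Hermitian matrices $we_i^\dagger\pm e_iw^\dagger$ but not for the non-Hermitian $we_i^\dagger$ itself, where the correct rule is to add the conjugate transpose of the strictly lower part to the upper part; this does not affect the argument since the displayed formulae above are what one actually gets.
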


Next, we need to describe the Poisson brackets between the matrix entries of $(B^\alpha,w^\alpha)$,
which appear in the decomposition \eqref{Eq:hvA}.
To write them down, we introduce the matrices
\be
B^{\alpha;\gamma}=b_\alpha \cdots b_\gamma\,, \quad 1\leq \alpha \leq \gamma \leq d\,,
\ee
which are such that $B^{1;\alpha}=B^\alpha$ and $B^{\alpha;\alpha}=b_\alpha$. We also set $B^{\alpha+1;\alpha}:=\1_n$ and $B^0:=\1_n$.

\begin{lemma} \label{LemwB1}
For any $1\leq \alpha,\beta \leq d$, $1\leq i ,k,l \leq n$,
\bea
&& \br{w^\alpha_i,B^\beta_{kl}}= - \ic  \delta_{(\alpha \leq \beta)}  B^{\alpha-1}_{ki} w_i^\alpha B^{\alpha;\beta}_{il}
   +2\ic  \delta_{(\alpha \leq \beta)}  \sum_{k' \leq i} B^{\alpha-1}_{kk'} w_{k'}^\alpha B^{\alpha;\beta}_{il} \,, \\
&& \br{\overline{w}^\alpha_i,B^\beta_{kl}}=
  +\ic \delta_{(\alpha \leq \beta)}  B^{\alpha-1}_{ki} \overline{w}_i^\alpha B^{\alpha;\beta}_{il}
- 2\ic  \delta_{(\alpha \leq \beta)}   B^{\alpha-1}_{ki} \sum_{i\leq u} \overline{w}_u^\alpha B^{\alpha;\beta}_{ul} \,.
\eea
\end{lemma}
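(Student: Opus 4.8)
The plan is to reduce the whole computation to a single use of Lemma~\ref{Lemb1}. First I would write $B^\beta=b_1b_2\cdots b_\beta$ and expand $\br{w^\alpha_i,B^\beta_{kl}}$ by the Leibniz (derivation) property of the Poisson bracket; this produces a sum over $\gamma=1,\dots,\beta$ of terms in which $\br{w^\alpha_i,\cdot}$ acts on a single factor $b_\gamma=\bb(w^\gamma)$, the remaining factors being passive. Since $\bb(w^\gamma)$ is a function of $w^\gamma$ alone and the $d$ copies $w^1,\dots,w^d$ pairwise Poisson commute by the definition of $\{\ ,\ \}_\cW$ in~\S\ref{ssec:Prim}, every term with $\gamma\neq\alpha$ vanishes, and a term with $\gamma=\alpha$ is present only when $\alpha\leq\beta$: this is the origin of the factor $\delta_{(\alpha\leq\beta)}$. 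Gathering the factors lying to the left and to the right of $b_\alpha$ into $B^{\alpha-1}=b_1\cdots b_{\alpha-1}$ and $B^{\alpha+1;\beta}=b_{\alpha+1}\cdots b_\beta$ (with the conventions $B^0=\1_n$ and $B^{\alpha+1;\alpha}=\1_n$, which cover the boundary cases $\alpha=1$ and $\alpha=\beta$), one is left with
\be
\br{w^\alpha_i,B^\beta_{kl}}=\delta_{(\alpha\leq\beta)}\sum_{j,j'}B^{\alpha-1}_{kj}\,\br{w^\alpha_i,(b_\alpha)_{jj'}}\,B^{\alpha+1;\beta}_{j'l},
\ee
and likewise for $\overline{w}^\alpha_i$.

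Next I would substitute the explicit expressions of Lemma~\ref{Lemb1}. In the holomorphic case, \eqref{Eq:wb1} gives $\br{w^\alpha_i,(b_\alpha)_{jj'}}=\ic[\delta_{ij}+2\delta_{(i>j)}]w^\alpha_j(b_\alpha)_{ij'}$; the sum over $j'$ collapses via $\sum_{j'}(b_\alpha)_{ij'}B^{\alpha+1;\beta}_{j'l}=(b_\alpha B^{\alpha+1;\beta})_{il}=B^{\alpha;\beta}_{il}$, the $\delta_{ij}$ part of the sum over $j$ yields $B^{\alpha-1}_{ki}w^\alpha_i B^{\alpha;\beta}_{il}$, and the $2\delta_{(i>j)}$ part yields $2\sum_{k'<i}B^{\alpha-1}_{kk'}w^\alpha_{k'}B^{\alpha;\beta}_{il}$. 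Recombining the diagonal term with this sum as $2\sum_{k'\leq i}(\cdots)$, at the price of an extra $-\ic$ on the isolated $k'=i$ contribution, reproduces the stated first identity. The second identity follows verbatim from \eqref{Eq:wb2} in place of \eqref{Eq:wb1}; here one uses in addition that $b_\alpha$ is upper triangular, so that the truncated sum $\sum_{k=i+1}^{l}$ appearing in \eqref{Eq:wb2}, once multiplied by $B^{\alpha+1;\beta}$ and summed over the inner index, becomes the unrestricted $\sum_{k>i}$. The same regrouping then produces the term $-2\ic\,B^{\alpha-1}_{ki}\sum_{i\leq u}\overline{w}^\alpha_u B^{\alpha;\beta}_{ul}$ together with the compensating $+\ic\,B^{\alpha-1}_{ki}\overline{w}^\alpha_i B^{\alpha;\beta}_{il}$.

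This argument is pure bookkeeping, so I do not anticipate a genuine difficulty; the one place that calls for care is keeping the index ranges and signs straight when rewriting $\sum_{k'<i}$ together with the diagonal term as $\sum_{k'\leq i}$ (and its conjugate analogue), and checking that the conventions $B^0=\1_n$, $B^{\alpha+1;\alpha}=\1_n$, $B^{1;\beta}=B^\beta$ make the resulting formulae valid also in the boundary cases $\alpha=1$ and $\alpha=\beta$.
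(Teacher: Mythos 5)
Your proposal is correct and follows essentially the same route as the paper's proof: expand the bracket over the factors of $B^\beta$ so that only the $b_\alpha$ factor contributes (whence $\delta_{(\alpha\leq\beta)}$), substitute \eqref{Eq:wb1}--\eqref{Eq:wb2} from Lemma \ref{Lemb1}, collapse the inner sum into $B^{\alpha;\beta}$, and regroup the diagonal term to pass from the strict to the non-strict sums. The index bookkeeping, including the use of upper-triangularity of $b_\alpha$ in the conjugate case, is handled correctly.
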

\begin{proof}
 By construction, for $\beta \neq \alpha$ we have $\br{w^\alpha_i,w^\beta_k}=0$, hence $\br{w^\alpha_i,(b_\beta)_{kl}}=0$. We get that
\be
 \br{w^\alpha_i,B^\beta_{kl}}=0,\,\, \alpha<\beta\,; \quad
 \br{w^\alpha_i,B^\beta_{kl}}=\sum_{k \leq l' \leq l} \br{w^\alpha_i,B^\alpha_{kl'}}B^{\alpha+1;\beta}_{l'l}\,, \,\, \beta > \alpha\,.
\ee
When $\beta=\alpha$, $\br{w^\alpha_i,(b_\alpha)_{kl}}$ is given by \eqref{Eq:wb1} and we get
\be
\br{w^\alpha_i,B^\alpha_{kl}}=
 -\ic B^{\alpha-1}_{ki} w_i^\alpha (b_\alpha)_{il}
  +2\ic \sum_{k' \leq i} B^{\alpha-1}_{kk'} w_{k'}^\alpha (b_\alpha)_{il}\,,
\ee
from which the first identity can be obtained. The second case is proved in the same way.
\end{proof}

\begin{lemma} \label{LemBB1}
For any $1\leq \alpha,\beta \leq d$,  $1\leq i,j,k,l\leq n$,
\begin{equation*}
 \begin{aligned}
\br{B^\alpha_{ij},B^\beta_{kl}}\stackrel{\alpha \leqslant \beta}{=}&
- 2 \ic  B^{\alpha}_{kj}\sum_{r > j}B^{\alpha}_{ir}  B^{\alpha+1;\beta}_{rl}
- \ic B^{\alpha}_{kj} B^{\alpha}_{ij}  B^{\alpha+1;\beta}_{jl}
+ 2 \ic \delta_{(i>k)} B^\alpha_{kj} B^\beta_{il}+ \ic  \delta_{ik} B^\alpha_{kj} B^\beta_{il}\,, \\
   \br{B^\alpha_{ij},\overline{B}^\beta_{kl}}\stackrel{\alpha \leqslant \beta}{=}&
-\ic B_{ij}^\alpha \overline{B}_{kj}^\alpha \overline{B}_{jl}^{\alpha+1;\beta}
- 2 \ic \sum_{s<j} B_{is}^\alpha \overline{B}_{ks}^\alpha \overline{B}_{jl}^{\alpha+1;\beta}
+\ic \delta_{ik} B_{ij}^\alpha \overline{B}_{kl}^\beta + 2 \ic \delta_{ik} \sum_{r>k} B_{rj}^\alpha \overline{B}_{rl}^\beta\,, \\
   \br{B^\alpha_{ij},\overline{B}^\beta_{kl}}\stackrel{\alpha \geqslant \beta}{=}&
-\ic B_{il}^\beta \overline{B}_{kl}^\beta B_{lj}^{\beta+1;\alpha}
- 2 \ic \sum_{s<l} B_{is}^\beta \overline{B}_{ks}^\beta B_{lj}^{\beta+1;\alpha}
+\ic \delta_{ik} B_{ij}^\alpha \overline{B}_{kl}^\beta + 2 \ic \delta_{ik} \sum_{r>k}  B_{rj}^\alpha \overline{B}_{rl}^\beta\,.
 \end{aligned}
\end{equation*}
\end{lemma}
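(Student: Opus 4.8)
The plan is to reduce both identities to the Poisson brackets \eqref{T7}--\eqref{T8} of a single Heisenberg--double variable. The key preliminary point is that, for every $\alpha$, the matrix $B^\alpha=b_1\cdots b_\alpha$ defines a Poisson map $(\C^{n\times d},\{\ ,\ \}_\cW)\to(\B(n),\{\ ,\ \}_B)$: each factor $b_\gamma=\bb(w^\gamma)$ is a Poisson map into $\B(n)$ by Proposition \ref{Pr:A3} (recall that $\B(n)$ is a Poisson submanifold of the Drinfeld double), the distinct copies $w^1,\dots,w^d$ pairwise Poisson commute, and $\B(n)$ is a Poisson--Lie group, so that the pointwise product of Poisson maps into $\B(n)$ with mutually Poisson-commuting matrix entries is again a Poisson map into $\B(n)$ (this is the same reasoning that proves the Poisson property of $\Lambda$ in \eqref{H1}). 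Hence the entries of $B^\alpha$ and $\overline{B}^\alpha$ obey precisely the relations \eqref{T7}--\eqref{T8} with $K$ replaced by $B^\alpha$, the upper-triangularity of $B^\alpha$ merely killing some terms; this is the $\alpha=\beta$ case of the lemma, which could alternatively be obtained by induction on $\alpha$ from Lemma \ref{Lemb1} and the Leibniz rule.

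Granting this, the case $\alpha<\beta$ is bookkeeping. For $\alpha\le\beta$ one has the block factorization $B^\beta=B^\alpha B^{\alpha+1;\beta}$, and $B^\alpha$ depends only on $w^1,\dots,w^\alpha$ whereas $B^{\alpha+1;\beta}$ depends only on $w^{\alpha+1},\dots,w^\beta$; since distinct copies Poisson commute, every bracket of an entry of $B^\alpha$ or $\overline{B}^\alpha$ with an entry of $B^{\alpha+1;\beta}$ or $\overline{B}^{\alpha+1;\beta}$ vanishes. The Leibniz rule then gives
\[
\br{B^\alpha_{ij},B^\beta_{kl}}=\sum_m \br{B^\alpha_{ij},B^\alpha_{km}}\,B^{\alpha+1;\beta}_{ml},\qquad
\br{B^\alpha_{ij},\overline{B}^\beta_{kl}}=\sum_m \br{B^\alpha_{ij},\overline{B}^\alpha_{km}}\,\overline{B}^{\alpha+1;\beta}_{ml}.
\]
Inserting \eqref{T7}--\eqref{T8} for $B^\alpha$ on the right-hand sides and resumming with $B^\alpha B^{\alpha+1;\beta}=B^\beta$ and $\overline{B}^\alpha\,\overline{B}^{\alpha+1;\beta}=\overline{B}^\beta$ yields the first two displayed formulae verbatim; the factor $\delta_{ik}$ lets the partial row-sum $\sum_{r>i}$ coming from \eqref{T8} be rewritten as $\sum_{r>k}$. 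The complementary ordering $\alpha>\beta$ of the first formula follows immediately from antisymmetry.

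The third formula, valid for $\alpha\ge\beta$, requires no separate argument: applying complex conjugation and the reality property $\overline{\br{F,H}}=\br{\overline{F},\overline{H}}$ to the second formula (which has $\alpha\le\beta$) produces $\br{\overline{B}^\alpha_{ij},B^\beta_{kl}}$ for $\alpha\le\beta$, and antisymmetry gives $\br{B^\beta_{kl},\overline{B}^\alpha_{ij}}=-\br{\overline{B}^\alpha_{ij},B^\beta_{kl}}$; a relabelling of indices turns this into the stated identity. The only step that is not purely mechanical is the preliminary observation that $B^\alpha$ satisfies the Heisenberg--double relations \eqref{T7}--\eqref{T8}; I expect this — or, in the alternative, the inductive verification that replaces it — to be the main point, while everything downstream is resummation of finite sums.
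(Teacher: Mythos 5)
Your proof is correct, and it takes a genuinely different (and somewhat more conceptual) route than the paper. The paper proves the lemma by expanding both $B^\alpha$ and $B^\beta$ through their common factors $b_\gamma$ for $\gamma\le\min(\alpha,\beta)$, writing
$\br{B^\alpha_{ij},B^\beta_{kl}} = \sum_{1\le\gamma\le\alpha}\sum_{i',j',k',l'}(B^{\gamma-1})_{ii'}(B^{\gamma-1})_{kk'}\br{(b_\gamma)_{i'j'},(b_\gamma)_{k'l'}}B^{\gamma+1;\alpha}_{j'j}B^{\gamma+1;\beta}_{l'l}$,
inserting Lemma \ref{Lemb1} for the elementary brackets, and resumming. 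You instead promote to the status of a key preliminary the structural fact that $B^\alpha:\C^{n\times d}\to\B(n)$ is a Poisson map (each $b_\gamma=\bb(w^\gamma)$ is Poisson, the copies commute, and pointwise products of Poisson maps with commuting entries into a Poisson--Lie group are Poisson), so the entries of $B^\alpha$ satisfy \eqref{T7}--\eqref{T8}; then a single Leibniz step with the coarser factorization $B^\beta=B^\alpha B^{\alpha+1;\beta}$ and the vanishing of the mixed brackets reproduces the first two displayed formulae, and the third follows by complex conjugation, antisymmetry, and relabelling. Both are sound; the paper's argument is fully explicit at the level of matrix entries, while yours buys a shorter derivation by invoking the Poisson-map observation that the paper itself states only as an a posteriori remark after the lemma ("in the case $\beta=\alpha$, the Poisson brackets ... take the usual form \eqref{T7}--\eqref{T8}"). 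One small point: the Poisson property of $\bb$ is more directly a consequence of Proposition \ref{Pr:A2} together with the diffeomorphism \eqref{BtoP}, rather than of Proposition \ref{Pr:A3} (which asserts the moment-map identity); this is a mis-citation, not a gap.
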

\begin{proof}
 For the first equality, we have for $\alpha \leq \beta$ that
\be
 \br{B^\alpha_{ij},B^\beta_{kl}}
 =\sum_{1\leq \gamma\leq \alpha} \sum_{i',j',k',l'}(B^{\gamma-1})_{ii'} (B^{\gamma-1})_{kk'}
 \br{(b_\gamma)_{i'j'},(b_\gamma)_{k'l'}} B^{\gamma+1;\alpha}_{j'j} B^{\gamma+1;\beta}_{l'l}\,.
\ee
A similar expansion holds for $\br{B^\alpha_{ij},\overline{B}^\beta_{kl}}$. It then suffices to use Lemma \ref{Lemb1}.
\end{proof}
Note that in the case $\beta=\alpha$, the Poisson brackets from Lemma \ref{LemBB1} take
the usual form \eqref{T7}--\eqref{T8} on $\B(n)$.
We can also see that we can use $\beta=0$ in Lemma \ref{LemwB1} and $\alpha,\beta=0$ in Lemma \ref{LemBB1}, since
in such cases the Poisson bracket vanishes on $B^0=\1_n$.

\medskip

We can now  prove Lemma \ref{L:halfv} using Lemmae \ref{LemwB1} and \ref{LemBB1}. We will use
the definition of the half-dressed spins  given by \eqref{Eq:hvA}.
To show \eqref{Eq:Pvvhalf} we need to write
\begin{equation}
\begin{aligned} \label{Eq:vv1}
  \br{v^\alpha_i,v^\beta_k}=&\sum_{j,l}\br{B^{\alpha-1}_{ij} w^\alpha_j, B^{\beta-1}_{kl} w^\beta_l} \\
  =&
  \sum_{j,l}\br{B^{\alpha-1}_{ij} , B^{\beta-1}_{kl} } w^\alpha_j w^\beta_l
+\sum_{j,l}\br{B^{\alpha-1}_{ij} ,  w^\beta_l}  w^\alpha_j B^{\beta-1}_{kl} \\
&+\sum_{j,l}\br{w^\alpha_j, B^{\beta-1}_{kl} }  B^{\alpha-1}_{ij}  w^\beta_l
+\sum_{j,l}\br{w^\alpha_j, w^\beta_l}  B^{\alpha-1}_{ij} B^{\beta-1}_{kl}
  \end{aligned}
\end{equation}
where we assume $\alpha \leq \beta$ without loss of generality. We can then use Lemmae \ref{LemwB1}  and \ref{LemBB1} to show that
\be \label{Eq:vv2}
\br{v^\alpha_i,v^\alpha_k}=-\ic \, \sgn(k-i) v_{k}^\alpha v_i^\alpha\,; \quad
\br{v^\alpha_i,v^\beta_k} =-\ic \, \sgn(k-i) v_k^\alpha v_i^\beta +\ic v_k^\alpha v_i^\beta\,,\,\, \alpha < \beta\,.
\ee
By antisymmetry, \eqref{Eq:vv2} implies that \eqref{Eq:Pvvhalf} holds.

The Poisson bracket \eqref{Eq:Pvbarvhalf} is computed in the same way, and requires to remark in the case $\alpha=\beta$ that
\be
\sum_{s}  B^{\gamma}_{is} \overline{B}^{\gamma}_{ks} =\sum_{\mu=1}^\gamma v^\mu_i \bar{v}^\mu_k + \delta_{ik}\,.
\ee
This identity is equivalent to $B^\gamma (B^\gamma)^\dagger=\sum_{\mu=1}^\gamma v^\mu (v^\mu)^\dagger + \1_n$,
which is obtained by induction on $\gamma$ using   \eqref{bbdag}; it becomes \eqref{H7} when $\gamma=d$.

\section{Proof of Theorem \ref{Thm:redPB}} \label{A:RedPr}

Recall that we work over the gauge slice  $\ccM_{0,+}^\reg$ \eqref{RP2:1bis} and wish to compute the reduced
 Poisson brackets $\{\ ,\ \}_\red$
of the basic evaluation functions  $Q_j=e^{\ic q_j} \in \U(1)$ and $v(\alpha)_j\in \C$,
where the latter obey the relations \eqref{Eq:cU}.
Our fundamental tool will be the identity \eqref{Eq:PBrel}, which concerns  $\U(n)$ invariant functions
on $\cM$ and their pull-backs on  $\ccM_{0,+}^\reg$.
Knowing the left-hand side of \eqref{Eq:PBrel}, we will be able to determine the reduced Poisson brackets.
In the particular case at hand, we consider  the invariant functions
$f_m,f_m^{\alpha \beta}\in C^\infty(\cM)$ defined by \eqref{Eqf}.
Their Poisson brackets on $\cM$ are given by Lemma \ref{Lem:PBff}, and their restrictions (pull-backs) to $\ccM_{0,+}^\reg$
are displayed in \eqref{Eqfbi}.
The point is that the
right-hand side of \eqref{Eq:PBrel} can be also expressed through the reduced Poisson brackets
of the basic variables on $\ccM_{0,+}^\reg$, which enables us to
derive the explicit formulae of Theorem \ref{Thm:redPB}.

We begin by giving an auxiliary lemma, which will be used below.

\begin{lemma} \label{L:Cinv}
 The $n\times n$ matrices $\cE,\tilde{\cE}$ given by
 \begin{equation}
  \cE_{kl}=Q_l^k \quad \text{and} \quad \tilde{\cE}_{kl}=Q_l^k \cU_l
 \end{equation}
are invertible on $\ccM_{0,+}^\reg$.
\end{lemma}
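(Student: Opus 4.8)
The plan is to recognize both matrices as essentially Vandermonde-type matrices and exploit the regularity of $Q$. First I would treat $\cE$. Writing $Q_l = e^{\ic q_l}$, the matrix $\cE = (Q_l^k)_{1\le k,l\le n}$ has columns indexed by $l$ and rows indexed by the power $k$, which ranges over $1,\dots,n$ rather than $0,\dots,n-1$. Factoring out $Q_l$ from the $l$-th column gives $\cE = V \cdot \diag(Q_1,\dots,Q_n)$, where $V_{kl} = Q_l^{k-1}$ is the standard Vandermonde matrix in the nodes $Q_1,\dots,Q_n$. Since $\det V = \prod_{k<l}(Q_l - Q_k)$ and $\prod_l Q_l \ne 0$, we get $\det \cE = \left(\prod_{l=1}^n Q_l\right)\prod_{k<l}(Q_l - Q_k)$. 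On $\ccM_{0,+}^\reg$ we have $Q \in \T^n_\reg$ by definition of $\cM_0^\reg$ (see \eqref{H30}, \eqref{H41}, \eqref{RP2:1bis}), hence $Q_k \ne Q_l$ for $k \ne l$, so every factor is nonzero and $\cE$ is invertible.

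Next I would handle $\tilde\cE$. By exactly the same column-factorization, $\tilde\cE_{kl} = Q_l^k \cU_l = (Q_l \cU_l) Q_l^{k-1}$, so $\tilde\cE = V \cdot \diag(Q_1 \cU_1,\dots,Q_n \cU_n)$ with the same Vandermonde $V$. Therefore $\det\tilde\cE = \left(\prod_{l=1}^n Q_l \cU_l\right)\prod_{k<l}(Q_l - Q_k)$. On the gauge slice $\ccM_{0,+}^\reg$ the defining condition \eqref{Eq:cU} forces $\cU_l = \sum_{1\le\alpha\le d} v(\alpha)_l > 0$ for all $l$, in particular $\cU_l \ne 0$; combined with $Q_l \ne 0$ and the regularity of $Q$ this shows $\det\tilde\cE \ne 0$, so $\tilde\cE$ is invertible as well.

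There is essentially no obstacle here: the statement is a bookkeeping lemma and the only content is to notice the Vandermonde structure and to invoke the two standing hypotheses that characterize $\ccM_{0,+}^\reg$, namely $Q \in \T^n_\reg$ and $\cU_j > 0$. The one point worth a sentence of care is the shift in the exponent range (powers $1$ through $n$ rather than $0$ through $n-1$), which is exactly what produces the harmless extra factor $\prod_l Q_l$ (respectively $\prod_l Q_l\cU_l$) and does not affect invertibility since those factors never vanish on the gauge slice. I would state the explicit determinant formulas in the write-up, as they make the argument self-checking, and then conclude invertibility of $\cE$ and $\tilde\cE$ on $\ccM_{0,+}^\reg$.
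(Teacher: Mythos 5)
Your proof is correct and follows essentially the same route as the paper: both factor $\cE$ as a Vandermonde matrix times a diagonal matrix and $\tilde\cE$ as $\cE$ times $\diag(\cU_1,\dots,\cU_n)$, then invoke $Q\in\T^n_\reg$ and $\cU_j>0$ on $\ccM_{0,+}^\reg$. Writing out the explicit determinants is a harmless extra detail, not a different argument.
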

\begin{proof}
 We can write that $\cE=V Q$ with $Q=\diag(Q_1,\ldots,Q_n)$ and $V=(V_{kl})$, $V_{kl}=Q_k^{l-1}$,
which is a Vandermonde matrix. Since $Q\in \T^n_{\reg}$ on $\ccM_{0,+}^\reg$, both $V$ and $Q$ are invertible.
We also have that $\tilde{\cE}=\cE D$ where $D=\diag(\cU_1,\ldots,\cU_n)$. As $\cU_j>0$ on $\ccM_{0,+}^\reg$,
$\tilde{\cE}$ is also invertible.
\end{proof}

\subsubsection*{Deriving \eqref{Eq:red1}} \label{ssA:Red1}

\begin{lemma} \label{L:invQQ}
 For any $i,j=1,\ldots,n$, $\br{q_i,q_j}_{\red}=0$.
\end{lemma}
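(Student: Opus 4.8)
The plan is to feed the identity \eqref{Eq:PBrel} with the simplest available family of $\U(n)$ invariant functions, namely the traces $f_m=\tr(g_R^m)$ from \eqref{Eqf}. By the first relation in \eqref{Eq:ff} we have $\br{f_M,f_N}=0$ on $\cM$ for all $M,N$, so \eqref{Eq:PBrel} immediately yields $\br{\xi^* f_M,\xi^* f_N}_\red=0$ on $\ccM_{0,+}^\reg$. The whole content of the lemma is then extracted by expanding this vanishing bracket in the local variables.

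Concretely, I would use \eqref{Eqfbi}, which gives $\xi^* f_m=\sum_{i=1}^n Q_i^m$ with $Q_i=e^{\ic q_i}$; since these functions depend only on the $q$-variables, the chain rule gives
\begin{equation*}
0=\br{\xi^* f_M,\xi^* f_N}_\red=\sum_{k,l=1}^n (\ic M Q_k^M)(\ic N Q_l^N)\br{q_k,q_l}_\red=-MN\sum_{k,l=1}^n Q_k^M Q_l^N \br{q_k,q_l}_\red
\end{equation*}
for all $M,N\geq 1$. Restricting to $1\leq M,N\leq n$ and writing $C_{kl}:=\br{q_k,q_l}_\red$ together with the matrix $\cE_{mk}=Q_k^m$ of Lemma \ref{L:Cinv}, this reads $\cE\, C\, \cE^T=0$ as an identity of matrix-valued functions on $\ccM_{0,+}^\reg$. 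Since $\cE$ is invertible on $\ccM_{0,+}^\reg$ by Lemma \ref{L:Cinv}, multiplying by $\cE^{-1}$ on the left and by $(\cE^T)^{-1}$ on the right forces $C=0$, i.e. $\br{q_i,q_j}_\red=0$ for all $i,j$.

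There is essentially no obstacle in this argument; the only point worth stating carefully is that $\cE$, $C$ are matrices of functions and that the manipulation $\cE C\cE^T=0\Rightarrow C=0$ is carried out pointwise, which is legitimate precisely because $\cE$ is invertible at every point of the gauge slice. I would present this as the proof of Lemma \ref{L:invQQ}, noting in passing that the same mechanism — evaluating \eqref{Eq:PBrel} on suitable invariant functions and inverting $\cE$ (or $\tilde\cE$) — is the template for deriving the remaining reduced brackets \eqref{Eq:red1}--\eqref{Eq:red3} in the rest of this appendix.
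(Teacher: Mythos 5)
Your proposal is correct and coincides with the paper's own proof: both use \eqref{Eq:ff}, \eqref{Eq:PBrel} and \eqref{Eqfbi} to obtain $-MN\sum_{k,l}Q_k^M Q_l^N\br{q_k,q_l}_\red=0$, recast it for $M,N=1,\dots,n$ as $\cE\,\hat U^{(0)}\,\cE^T=0$, and conclude $\hat U^{(0)}=0$ from the pointwise invertibility of $\cE$ guaranteed by Lemma \ref{L:Cinv}. No gaps.
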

\begin{proof}
 From \eqref{Eq:ff} and \eqref{Eqfbi} we get for any $M,N \in \N$,
 $$0=\xi^\ast \br{f_M,f_N}
 = \br{\xi^\ast f_M,\xi^\ast f_N}_{\red}=-MN\,\sum_{i,j=1}^n e^{\ic M q_i}e^{\ic N q_j} \br{q_i,q_j}_{\red}\,.$$
 Considering this equality for $M,N=1,\ldots,n$, this is equivalent to
 \begin{equation*}
  \cE \, \hat{U}^{(0)} \, \cE^T=0_{n \times n}\,,
 \end{equation*}
where $\hat{U}^{(0)} \in \Mat_{n \times n}(\C)$ is given by $\hat{U}^{(0)}_{kl}=\br{q_k,q_l}_{\red}$.
By Lemma \ref{L:Cinv}, $\cE$ is invertible on $\ccM_{0,+}^{\reg}$ so that $\hat{U}^{(0)}$ is the zero matrix.
\end{proof}

\begin{lemma} \label{L:invQV}
  For any $i,j=1,\ldots,n$,
  \begin{equation}
\br{\cU_i,q_j}_{\red}=-\delta_{ij}\cU_i\,, \quad
\br{v(\alpha)_i,q_j}_{\red}=-\delta_{ij} v(\alpha)_i\,, \quad
\br{\overline{v}(\alpha)_i,q_j}_{\red}=-\delta_{ij} \overline{v}(\alpha)_i\,.
  \end{equation}
\end{lemma}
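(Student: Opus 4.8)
The plan is to mimic exactly the strategy used for Lemma \ref{L:invQQ}, now feeding the identity \eqref{Eq:PBrel} with the mixed pair $(f_N^{\alpha\beta}, f_M)$ rather than $(f_M, f_N)$. First I would pull back the relation \eqref{Eq:fspf}, namely $\{f_M^{\alpha\beta}, f_N\} = -2\ic N f_{M+N}^{\alpha\beta}$, under $\xi$, using \eqref{Eqfbi}: the left-hand side becomes $\{\sum_i v(\alpha)_i Q_i^M \overline v(\beta)_i, \sum_j Q_j^N\}_\red$, and the right-hand side becomes $-2\ic N \sum_i v(\alpha)_i Q_i^{M+N}\overline v(\beta)_i$. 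Expanding the Poisson bracket on the left with the Leibniz rule and using $\{Q_i^M, Q_j^N\} = -MN Q_i^M Q_j^N \{q_i,q_j\}_\red$, together with Lemma \ref{L:invQQ} which already kills the $\{q_i,q_j\}_\red$ terms, only the brackets $\{v(\alpha)_i, q_j\}_\red$ and $\{\overline v(\beta)_i, q_j\}_\red$ survive. Writing $\{v(\alpha)_i,q_j\}_\red = c(\alpha)_{ij}$ and $\{\overline v(\beta)_i,q_j\}_\red = \bar c'(\beta)_{ij}$ (a priori unknown), the pulled-back identity reads, for every $M,N\geq 1$ and every $\alpha,\beta$,
\begin{equation}
\sum_{i,j} \ic N\, Q_i^M Q_j^N\, \overline v(\beta)_i v(\alpha)_i\bigl(c(\alpha)_{ij}/v(\alpha)_i + \bar c'(\beta)_{ij}/\overline v(\beta)_i\bigr) = -2\ic N \sum_i v(\alpha)_i Q_i^{M+N}\overline v(\beta)_i .
\end{equation}

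Next I would strip off the dependence on $M$ and $N$ exactly as in the proof of Lemma \ref{L:invQQ}: canceling the common factor $\ic N$, the statement becomes an equality of the form $\cE A \cE^T = \cE D \cE^T$ (or more precisely $\tilde{\cE}\,(\cdot)\,\cE^T$-type products once the $\cU$-normalizations are tracked), where $\cE_{kl} = Q_l^k$ is the Vandermonde-type matrix shown invertible on $\ccM_{0,+}^\reg$ in Lemma \ref{L:Cinv}. Invertibility of $\cE$ then forces an entrywise identity: for each fixed $i$ and each $j$, the coefficient of $Q_i^M Q_j^N$ on both sides must agree. When $i\neq j$ the right-hand side contributes nothing, giving $c(\alpha)_{ij}/v(\alpha)_i + \bar c'(\beta)_{ij}/\overline v(\beta)_i = 0$ for all $\alpha,\beta$; when $i = j$ it gives $c(\alpha)_{ii}/v(\alpha)_i + \bar c'(\beta)_{ii}/\overline v(\beta)_i = -2$. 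To disentangle the two unknown families I would also use the analogous identity obtained from $\{f_M^{\alpha\beta},\bar f_N\}$ (or exploit the reality relation $\{\overline v(\alpha)_i,q_j\}_\red = \overline{\{v(\alpha)_i,q_j\}_\red}$ combined with $\{q_i,q_j\}_\red$ being real), which will show $c(\alpha)_{ij}$ and $\bar c'(\beta)_{ij}$ separately vanish for $i\neq j$ and are each $-\delta_{ij}v(\alpha)_i$, respectively $-\delta_{ij}\overline v(\beta)_i$, up to a purely imaginary ambiguity. That residual imaginary ambiguity — a bracket $\{v(\alpha)_i,q_i\}_\red$ differing from $-v(\alpha)_i$ by an imaginary multiple of $v(\alpha)_i$ — is fixed by recalling the gauge-fixing constraint \eqref{Eq:cU}: summing over $\alpha$ gives $\{\cU_i, q_j\}_\red$, and since $\cU_i$ is real and positive on the slice, $\{\cU_i,q_j\}_\red$ must be real, which pins down the phase and yields $\{\cU_i,q_j\}_\red = -\delta_{ij}\cU_i$ and hence the stated formulas.

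The main obstacle I anticipate is precisely this last point: the functions $v(\alpha)_i$ individually are \emph{not} restrictions of $\U(n)$-invariant functions on $\cM$ (only $S_n$-invariant combinations and the invariants $f_m^{\alpha\beta}$, $L_{ij}$ etc.\ are), so \eqref{Eq:PBrel} does not directly compute $\{v(\alpha)_i, q_j\}_\red$ — it only computes brackets of the composite invariants. One must therefore argue that the system of equations coming from all $(M,N,\alpha,\beta)$ determines the individual brackets uniquely given the constraints \eqref{Eq:cU}, i.e.\ that the Dirac bracket on the slice is genuinely well-defined on the coordinate functions and the gauge condition removes the would-be ambiguity. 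Concretely this means: the $2n-$many gauge conditions $\Im\cU_j = 0$ together with the moment map conditions are second-class on the slice, the Dirac bracket exists, and evaluating it on $q_j$ and $v(\alpha)_j$ is legitimate; then the invariance identities above over-determine these brackets consistently. Once that is granted the rest is the routine Vandermonde-inversion argument already rehearsed in Lemmas \ref{L:invQQ} and \ref{L:Cinv}, so I would keep the exposition of this lemma short, citing those two lemmas and the general Dirac-bracket formalism of \cite{HT}, and only spell out the entrywise comparison that produces the $-\delta_{ij}$.
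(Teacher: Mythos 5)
Your proposal is correct and rests on the same machinery as the paper's proof -- the identity \eqref{Eq:PBrel} applied to the pair $(f_M^{\alpha\beta},f_N)$ together with \eqref{Eq:fspf}, Lemma \ref{L:invQQ} and the Vandermonde invertibility of Lemma \ref{L:Cinv} -- but the final extraction of the individual brackets is organized differently. The paper first sums \eqref{Eq:fspf} over \emph{all} spin indices $\alpha,\beta$, so that the pulled-back invariant becomes $\sum_i \cU_i^2 Q_i^M$ and the only unknown left after Leibniz is $\br{\cU_i,q_j}_\red$, which is then determined outright; next it sums over $\beta$ only, so that $\sum_i\cU_i v(\alpha)_i Q_i^M$ appears and the already-known $\br{\cU_i,q_j}_\red$ isolates $\br{v(\alpha)_i,q_j}_\red$; the third identity follows by conjugation. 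This triangular ordering means no ambiguity ever arises. You instead keep $(\alpha,\beta)$ fixed, obtain the entrywise relations $\overline v(\beta)_i\br{v(\alpha)_i,q_j}_\red + v(\alpha)_i\br{\overline v(\beta)_i,q_j}_\red=-2\delta_{ij}v(\alpha)_i\overline v(\beta)_i$, and then must solve a small bilinear system whose general solution carries a purely imaginary ambiguity $\ri t_{ij}v(\alpha)_i$; your resolution -- sum over $\alpha$ and use that $\br{\cU_i,q_j}_\red$ is real because $\cU_i$ and $q_j$ are real functions on the slice and $\{\ ,\ \}_\red$ is a real bracket -- is valid and uses the gauge condition \eqref{Eq:cU} in a different (and slightly more roundabout) way than the paper, which exploits it only to rewrite the spin-summed invariants in slice coordinates. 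Two small repairs: avoid dividing by $v(\alpha)_i$ or $\overline v(\beta)_i$ in your intermediate display, since individual components may vanish on $\ccM_{0,+}^\reg$ (only $\cU_i>0$ is guaranteed, which supplies some $\beta_0$ with $v(\beta_0)_i\neq0$, and that is all the solving of the system needs); and the ``main obstacle'' you anticipate is not actually an obstacle, because $\{\ ,\ \}_\red$ is by definition the Poisson bracket of the symplectic form $\xi^*\Omega_\cM$ on the slice (cf.\ Remark \ref{Rem:Dirac}), so brackets of the coordinate functions $q_j$, $v(\alpha)_j$ are well defined a priori and no separate second-class/Dirac-bracket analysis is required -- \eqref{Eq:PBrel} is then simply a tool for computing them, exactly as in the paper.
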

\begin{proof}
 From \eqref{Eq:fspf}, after summing over all $\alpha,\beta$ we get for any $M,N \in \N$
 $$\sum_{i,j}\br{\cU_i^2 e^{\ic M q_i}, e^{\ic N q_j}}_{\red} =
 \sum_{\alpha,\beta}\br{\xi^\ast f_M^{\alpha \beta},\xi^\ast f_N}_{\red} =
 -2\ic N \sum_{\alpha,\beta} \xi^\ast f_{M+N}^{\alpha \beta} =
 -2\ic N \sum_i \cU_i^2 e^{\ic (M+N) q_i}\,.$$
 Using Lemma \ref{L:invQQ}, we obtain
  $$\sum_{i,j} e^{\ic M q_i}\cU_i e^{\ic N q_j}\br{\cU_i , q_j}_{\red} =
 -\sum_i \cU_i^2 e^{\ic (M+N) q_i}\,.$$
 We can rewrite this for $N,M=1,\ldots,n$ as
  \begin{equation*}
  \tilde{\cE} \, \hat{U}^{(1)} \, \cE^T=\tilde{\cE} \, {U}^{(1)} \, \cE^T\,,
 \end{equation*}
 where the $n\times n$ matrices are given by
 $\hat{U}^{(1)}_{kl}=\br{\cU_k,q_l}_{\red}$, ${U}^{(1)}_{kl}=-\delta_{kl}\cU_k$.
By Lemma \ref{L:Cinv}, both $\cE$ and $\tilde{\cE}$ are invertible. Hence $\hat{U}^{(1)}={U}^{(1)}$.

 For the second identity, we use \eqref{Eq:fspf} with summation over all $\beta$, and we get for any $M,N \in \N$
  $$\sum_{i,j}\br{\cU_i v(\alpha)_i e^{\ic M q_i}, e^{\ic N q_j}}_{\red} =
 -2\ic N \sum_i v(\alpha)_i\cU_i e^{\ic (M+N) q_i}\,.$$
 Now that the first identity is proved, we can use it to get
   $$\sum_{i,j}e^{\ic M q_i}\cU_i e^{\ic N q_j} \br{v(\alpha)_i , q_j}_{\red} =
 -\sum_i v(\alpha)_i \cU_i e^{\ic (M+N) q_i}\,.$$
 As before, we write this for $N,M=1,\ldots,n$ as
  \begin{equation*}
 \tilde{\cE} \, \hat{U}^{(2)} \, \cE^T=\tilde{\cE} \,  {U}^{(2)} \, \cE^T\,,
 \end{equation*}
 where the $n\times n$ matrices are given by
 $\hat{U}^{(2)}_{kl}=\br{v(\alpha)_k,q_l}_{\red}$, ${U}^{(2)}_{kl}=-\delta_{kl}v(\alpha)_k$.
 Again by invertibility of $\cE$ and $\tilde{\cE}$, we get $\hat{U}^{(2)}={U}^{(2)}$.

 The last identity follows from the second one by complex conjugation.
\end{proof}

From now on, we do not provide complete proofs of the different results that are stated.
They can be successively obtained by direct computations in the same way as we got Lemmae \ref{L:invQQ} and \ref{L:invQV}.

\subsubsection*{Deriving \eqref{Eq:red2}} \label{ssA:Red2}

We first need two preliminary lemmae.

\begin{lemma} \label{L:invVV} For any $i,j=1,\ldots,n$,
 \begin{equation}
 \begin{aligned} \label{Eq:LinvVV}
    \br{\cU_i,\cU_j}_{\red}=& \,\,\frac12 \ic \delta_{(i\neq j)} \frac{Q_i+Q_j}{Q_i-Q_j}\cU_i\cU_j
  +\frac14 \ic \sum_{\mu,\nu} \sgn(\nu-\mu) \left[v(\nu)_i v(\mu)_j - \overline{v}(\nu)_i \overline{v}(\mu)_j   \right] \\
  &+\frac14 \ic \sum_{\mu}\left[ v(\mu)_i \overline{v}(\mu)_j - v(\mu)_j \overline{v}(\mu)_i \right] +  \frac{d}{2} \ic (L_{ij}-L_{ji})  \\
  &+\frac12 \ic \sum_{\nu}\sum_{\mu<\nu}\left[ v(\mu)_i \overline{v}(\mu)_j - v(\mu)_j \overline{v}(\mu)_i \right]\,.
 \end{aligned}
 \end{equation}
\end{lemma}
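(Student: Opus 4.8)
The proof follows the pattern of the proofs of Lemmae \ref{L:invQQ} and \ref{L:invQV}: the reduced bracket $\{\cU_i,\cU_j\}_\red$ will be extracted from the relation \eqref{Eq:PBrel} applied to a suitable family of $\U(n)$ invariant functions, and then read off by inverting a Vandermonde-type matrix. Concretely, the plan is to apply \eqref{Eq:PBrel} to the invariant functions $f_M^{\alpha\beta}$ and $f_N^{\gamma\epsilon}$ of \eqref{Eqf} and sum over $\alpha,\beta,\gamma,\epsilon\in\{1,\dots,d\}$. Using the identity $\sum_{\alpha,\beta}\xi^* f_m^{\alpha\beta}=\sum_i\cU_i^2 Q_i^m$ recorded before Theorem \ref{Thm:redPB}, the right-hand side of \eqref{Eq:PBrel} becomes $\{\sum_i\cU_i^2 Q_i^M,\sum_j\cU_j^2 Q_j^N\}_\red$; expanding this by the Leibniz rule and inserting the brackets $\{q_i,q_j\}_\red=0$ and $\{\cU_i,q_j\}_\red=-\delta_{ij}\cU_i$ already established in Lemmae \ref{L:invQQ} and \ref{L:invQV}, one isolates the unknown term $4\,\cU_i\cU_j\{\cU_i,\cU_j\}_\red\,Q_i^M Q_j^N$ together with an explicit diagonal ($i=j$) contribution proportional to $(M-N)\sum_i\cU_i^4 Q_i^{M+N}$.

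On the left-hand side, the plan is to sum the formula \eqref{Eq:fspfsp} of Lemma \ref{Lem:PBff} over the four spin indices and pull the result back via \eqref{Eqfbi}, repeatedly using $\sum_\gamma v(\gamma)_i=\cU_i$ (which is real on $\ccM_{0,+}^\reg$). The Kronecker and sign terms of \eqref{Eq:fspfsp} then collapse into the combinations $\sum_{\mu,\nu}\sgn(\nu-\mu)v(\nu)_i v(\mu)_j$, $\sum_\mu v(\mu)_i\overline v(\mu)_j$ and $\sum_\nu\sum_{\mu<\nu}v(\mu)_i\overline v(\mu)_j$ appearing in \eqref{Eq:LinvVV} (together with their complex conjugates coming from the mirror terms with $\sgn(\epsilon-\beta)$ and $\delta_{\gamma\beta}$); the two $\phi$ terms, after summation, produce the factor $d$ and the entries $L_{ij}$ through the definition of $\phi^{\mu\nu}$ and the formula \eqref{H21} for $L$; and the telescoping term $2\ic(\sum_{a=1}^M-\sum_{a=1}^N)$, evaluated by the elementary identity $\sum_{a=1}^M Q_i^a Q_j^{M-a}=Q_i\frac{Q_i^M-Q_j^M}{Q_i-Q_j}$, cancels the diagonal contribution from the right-hand side and, after antisymmetrization in $i\leftrightarrow j$, yields precisely the coefficient $\frac{Q_i+Q_j}{Q_i-Q_j}$ (with the $\delta_{(i\neq j)}$ restriction) of \eqref{Eq:LinvVV}.

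Equating the two sides gives, for all $M,N=1,\dots,n$, an identity of the form $\tilde\cE\,\hat U\,\tilde\cE^T=\tilde\cE\,U\,\tilde\cE^T$, where $\tilde\cE_{kl}=Q_l^k\cU_l$ is the matrix of Lemma \ref{L:Cinv}, $\hat U_{ij}=\{\cU_i,\cU_j\}_\red$, and $U$ is the right-hand side of \eqref{Eq:LinvVV} (after symmetrizing over $M\leftrightarrow N$ and using $\{\cU_i,\cU_j\}_\red=-\{\cU_j,\cU_i\}_\red$ to fix the antisymmetric part). Since $\tilde\cE$ is invertible on $\ccM_{0,+}^\reg$ by Lemma \ref{L:Cinv}, one concludes $\hat U=U$, which is the claimed formula \eqref{Eq:LinvVV}.

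The main obstacle is purely organizational bookkeeping: carrying out the fourfold spin sum of \eqref{Eq:fspfsp}, keeping careful track of which of $M,N$ and of $i,j$ occupies each slot, recognizing every resulting block as being of the form $\tilde\cE(\,\cdot\,)\tilde\cE^T$ so that Lemma \ref{L:Cinv} applies, and in particular checking that the $i=j$ contributions from the telescoping term and from the Leibniz expansion of the right-hand side cancel exactly, leaving the $\frac{Q_i+Q_j}{Q_i-Q_j}$ coefficient only on the off-diagonal. As the paper notes, once this computation is understood, the remaining identities of Appendix \ref{A:RedPr} — and hence Theorem \ref{Thm:redPB} itself — follow by the same method with the same invertible matrices $\cE$ and $\tilde\cE$.
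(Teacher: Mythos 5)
Your proposal is correct and follows essentially the same route as the paper: the paper's proof likewise sums \eqref{Eq:fspfsp} over all four spin indices, pulls back via \eqref{Eq:PBrel} and \eqref{Eqfbi}, feeds in the already established brackets of Lemmae \ref{L:invQQ} and \ref{L:invQV} through the Leibniz expansion of $\br{\sum_i \cU_i^2 Q_i^M,\sum_j \cU_j^2 Q_j^N}_\red$, and concludes by the invertibility of $\tilde{\cE}$ from Lemma \ref{L:Cinv}. The bookkeeping you spell out (the geometric-sum evaluation of the telescoping term producing $\frac{Q_i+Q_j}{Q_i-Q_j}$, the cancellation of the diagonal $(M-N)$ contributions, and the $\phi$ terms yielding $d$ and $L_{ij}$) is exactly what the paper compresses into ``elementary manipulations''.
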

\begin{proof}
 It suffices to use \eqref{Eq:fspfsp} where we sum over all $\alpha,\beta,\gamma,\epsilon$.
 After elementary manipulations, we arrive at
   \begin{equation}
\begin{aligned}
  \sum_{i,j} Q_i^M\cU_i Q_j^N \cU_j \br{\cU_i,\cU_j}_{\red}
=&\sum_{i,j}Q_i^M\cU_i Q_j^N \cU_j \, {U}^{(3)}_{ij} \,,
\end{aligned}
\end{equation}
where ${U}^{(3)}_{ij}$ is  the right-hand side of \eqref{Eq:LinvVV}.
We can then write the equalities with $N,M=1,\ldots,n$ as
  \begin{equation}
   \tilde{\cE} \, \hat{U}^{(3)} \, \tilde{\cE}^T=\tilde{\cE} \,  {U}^{(3)} \, \tilde{\cE}^T\,,
 \end{equation}
 where the $n\times n$ matrix $\hat{U}^{(3)}$ is given by  $\hat{U}^{(3)}_{kl}=\br{\cU_k,\cU_l}_{\red}$.
 By invertibility of $\tilde  \cE$, this proves the claim \eqref{Eq:LinvVV}.
\end{proof}

\begin{lemma} \label{L:invValV} For any $i,j=1,\ldots,n$,
 \begin{equation}
 \begin{aligned} \label{Eq:ValV}
    \br{v(\alpha)_i,\cU_j}_{\red}=& \,\,\frac12 \ic \delta_{(i\neq j)} \frac{Q_i+Q_j}{Q_i-Q_j}v(\alpha)_j \cU_i
  +\frac12 \ic \sum_\kappa \sgn(\kappa-\alpha) v(\alpha)_j v(\kappa)_i \\
  &-\frac14 \ic\frac{v(\alpha)_i}{\cU_i} \sum_{\mu,\nu} \sgn(\nu-\mu) \left[v(\nu)_i v(\mu)_j + \overline{v}(\nu)_i \overline{v}(\mu)_j   \right] \\
  &+\frac12 \ic v(\alpha)_i \overline{v}(\alpha)_{j}
  -\frac14 \ic \frac{v(\alpha)_i}{\cU_i}\sum_{\mu}\left[ v(\mu)_i \overline{v}(\mu)_j + v(\mu)_j \overline{v}(\mu)_i \right]   \\
  & + \ic \sum_{\kappa< \alpha} v(\kappa)_i \overline{v}(\kappa)_j
  -\frac12 \ic \frac{v(\alpha)_i}{\cU_i}\sum_{\nu}\sum_{\mu<\nu}\left[ v(\mu)_i \overline{v}(\mu)_j + v(\mu)_j \overline{v}(\mu)_i \right] \\
  &+\frac12 \ic \left[ 2L_{ij} - d \frac{v(\alpha)_i}{\cU_i} (L_{ij}+L_{ji})\right]\,.
 \end{aligned}
 \end{equation}
\end{lemma}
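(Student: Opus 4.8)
The plan is to derive Lemma~\ref{L:invValV} by applying the master identity \eqref{Eq:PBrel} to the invariant functions $f_m^{\alpha\beta}$ and $f_m$ of \eqref{Eqf}, exactly as in the proofs of Lemmae~\ref{L:invQQ}--\ref{L:invVV} that precede it. The key input is the Poisson bracket \eqref{Eq:fspfsp} of Lemma~\ref{Lem:PBff}, now \emph{summed over $\gamma$ and $\epsilon$ only} (keeping $\alpha,\beta$ free), which produces on the left-hand side a combination of $\br{f_M^{\alpha\beta},f_N^{\gamma\epsilon}}$ that, after pulling back via $\xi$ and using \eqref{Eqfbi} together with $\sum_\beta \xi^*f_m^{\alpha\beta}=\sum_i \cU_i v(\alpha)_i Q_i^m$, becomes $\sum_{i,j} Q_i^M v(\alpha)_i\, Q_j^N \cU_j\, \br{v(\alpha)_i,\cU_j}_\red$ up to terms already known from Lemmae~\ref{L:invQV} and \ref{L:invQQ}. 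The right-hand side of \eqref{Eq:fspfsp}, after the $\gamma,\epsilon$ summation and the substitutions $\xi^*f_m^{\mu\nu}=\sum_i v(\mu)_i Q_i^m\overline v(\nu)_i$, $\xi^*f_m=\sum_i Q_i^m$, and $\xi^*\phi^{\mu\nu}(a,c)=\sum_{i}(g_R^a L g_R^c)$-type sums collapsing via diagonality of $Q=g_R$ on $\ccM_{0,+}^\reg$ to $\sum_i Q_i^{a+c} L_{\cdot\cdot}$, is re-expressed as $\sum_{i,j} Q_i^M v(\alpha)_i\, Q_j^N \cU_j\, U^{(4)}_{ij}$, where $U^{(4)}_{ij}$ is precisely the right-hand side of \eqref{Eq:ValV}.

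The decisive structural observation making this work is the same one used repeatedly: writing the resulting equalities for all $M,N=1,\dots,n$ and invoking Lemma~\ref{L:Cinv}, the two Vandermonde-type matrices $\cE$ (with $\cE_{kl}=Q_l^k$) and $\tilde\cE$ (with $\tilde\cE_{kl}=Q_l^k\cU_l$) are invertible on $\ccM_{0,+}^\reg$; so an identity of the form $\cE^{(v(\alpha))}\,\hat U^{(4)}\,\tilde\cE^{\,T}=\cE^{(v(\alpha))}\,U^{(4)}\,\tilde\cE^{\,T}$ forces $\hat U^{(4)}=U^{(4)}$, where $\hat U^{(4)}_{kl}=\br{v(\alpha)_k,\cU_l}_\red$ and the left factor is the matrix with entries $Q_l^k v(\alpha)_l$, which equals $\cE$ post-multiplied by $\diag(v(\alpha)_1,\dots,v(\alpha)_n)$ --- invertible only off the locus $v(\alpha)_i=0$. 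Here one must be slightly careful: unlike $\cU_i>0$, a single component $v(\alpha)_i$ may vanish on $\ccM_{0,+}^\reg$. The clean fix is to first derive the identity of the \emph{functions} $\sum_{i,j}(\cdots)$ and conclude the matrix identity only after dividing out the common factor $v(\alpha)_i$ on the dense open subset where all $v(\alpha)_i\neq0$; since both sides of \eqref{Eq:ValV} multiplied by $\cU_i Q_i^M$ are polynomial (hence continuous) in the variables, the identity extends to all of $\ccM_{0,+}^\reg$ by continuity. Alternatively, as the authors do for the earlier lemmae, one observes that the matrix equation already holds with the factor $v(\alpha)_l$ present on both sides, so invertibility of $\cE$ and $\tilde\cE$ alone suffices without ever dividing.

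The bulk of the work is the bookkeeping in matching the summed-and-pulled-back right-hand side of \eqref{Eq:fspfsp} against the twelve terms of \eqref{Eq:ValV}. Concretely: the term $2\ic(\sum_{a=1}^M-\sum_{a=1}^N)f_a^{\alpha\epsilon}f_{M+N-a}^{\gamma\beta}$ summed over $\gamma,\epsilon$ contributes the $\frac{Q_i+Q_j}{Q_i-Q_j}$ piece after the standard geometric-series resummation $\sum_{a=1}^M Q_i^a Q_j^{M+N-a}+\dots$; the $\sgn(\gamma-\alpha)$ and $\delta$-terms produce the $\sgn(\kappa-\alpha)$, $\delta_{\alpha\epsilon}$ and $\delta_{\gamma\beta}$ sums over $\mu,\nu,\kappa$; and the $\phi^{\gamma\beta},\phi^{\alpha\epsilon}$ terms, which by diagonality of $g_R$ on the slice reduce to sums involving $L_{ij}$ and $L_{ji}$, yield the final $\frac12\ic[2L_{ij}-d\frac{v(\alpha)_i}{\cU_i}(L_{ij}+L_{ji})]$ line. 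The main obstacle --- really the only nontrivial point --- is getting these resummations and index contractions to assemble \emph{exactly} into the stated closed form, in particular correctly isolating the $\frac{v(\alpha)_i}{\cU_i}$-weighted pieces (which come from the $\gamma$- or $\epsilon$-summed factor producing a $\cU$) from the unweighted ones; this is purely mechanical but error-prone, which is why the paper says ``we do not provide complete proofs'' and leaves it as a direct computation parallel to Lemmae~\ref{L:invQQ}--\ref{L:invVV}. I would present the reduction to the matrix identity in full and then state that the term-by-term matching is routine, exhibiting one or two representative terms.
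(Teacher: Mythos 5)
Your overall strategy is the paper's: apply \eqref{Eq:PBrel} to suitable sums of the invariants $f_M^{\alpha\beta}$, $f_N^{\gamma\epsilon}$, feed in Lemma \ref{Lem:PBff}, and conclude via the invertible matrices of Lemma \ref{L:Cinv}. But your bookkeeping of which spin indices are summed, and hence of the weight multiplying the unknown bracket, contains a genuine gap. The paper sums \eqref{Eq:fspfsp} over $\beta,\gamma,\epsilon$, keeping only $\alpha$ free; on the gauge slice this gives $\br{\sum_i \cU_i v(\alpha)_i Q_i^M,\, \sum_j \cU_j^2 Q_j^N}_\red$, and two things happen. First, the unknown $\br{v(\alpha)_i,\cU_j}_\red$ appears with spectator weight $\cU_i Q_i^M\,\cU_j Q_j^N$, not $v(\alpha)_i Q_i^M\,\cU_j Q_j^N$ as you assert, because the $\beta$-sum produces $\sum_\beta \overline v(\beta)_i=\cU_i$ (real on $\ccM_{0,+}^\reg$). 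Second, the same $\beta$-sum converts the other a priori unknown, $\br{\overline v(\beta)_i,\cU_j}_\red$, into the already established bracket $\br{\cU_i,\cU_j}_\red$ of Lemma \ref{L:invVV}. You never invoke Lemma \ref{L:invVV} (you list only Lemmas \ref{L:invQQ} and \ref{L:invQV} as known input), yet it is precisely that contribution, moved to the right-hand side and divided by the common weight $\cU_i\cU_j$, which produces the $\tfrac{v(\alpha)_i}{\cU_i}$-weighted pieces of \eqref{Eq:ValV}; attributing them to ``the $\gamma$- or $\epsilon$-summed factor'' is incorrect. And if you really keep $\beta$ free, as you first state, then $\br{\overline v(\beta)_i,\cU_j}_\red$ remains an independent unknown and the argument does not close at all.

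Because of the misplaced weight you then manufacture a difficulty that does not exist and ``solve'' it with an argument that is partly invalid. With the correct contraction, the identities for $M,N=1,\dots,n$ read $\tilde\cE\,\hat U^{(4)}\,\tilde\cE^{\,T}=\tilde\cE\,U^{(4)}\,\tilde\cE^{\,T}$, and Lemma \ref{L:Cinv} finishes the proof: no factor $\diag(v(\alpha)_1,\dots,v(\alpha)_n)$ ever enters, so there is no need to avoid the locus $v(\alpha)_i=0$, no density argument, and no continuity extension. Your proposed ``alternative'' fix is moreover not valid as stated: from $\cE\,D_\alpha(\hat U^{(4)}-U^{(4)})\,\tilde\cE^{\,T}=0$ with $D_\alpha=\diag(v(\alpha)_1,\dots,v(\alpha)_n)$ one can only conclude $D_\alpha(\hat U^{(4)}-U^{(4)})=0$, which says nothing about rows where $v(\alpha)_i=0$; even the density of the set where all $v(\alpha)_i\neq 0$ would itself require justification. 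So the plan must be corrected — sum over $\beta$ as well, record the weight $\cU_i$, and include Lemma \ref{L:invVV} among the inputs — before the (admittedly mechanical) term-by-term matching can be carried out.
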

\begin{proof}
 It suffices to use \eqref{Eq:fspfsp} after summing over $\beta,\gamma,\epsilon$. We arrive at
      \begin{equation}
 \begin{aligned} \label{Eq:ValV4}
 \sum_{i,j}  Q_i^M \cU_i Q_j^N  \cU_j   \br{v(\alpha)_i , \cU_j}_{\red}
=\sum_{i,j}Q_i^M\cU_i Q_j^N \cU_j \, {U}^{(4)}_{ij}\,,
 \end{aligned}
 \end{equation}
where ${U}^{(4)}_{ij}$ is  the right-hand side of \eqref{Eq:ValV}.
We can then write the equalities with $N,M=1,\ldots,n$ as
  \begin{equation*}
  \tilde{\cE} \, \hat{U}^{(4)} \, \tilde{\cE}^T= \tilde{\cE} \,  {U}^{(4)} \, \tilde{\cE}^T\,,
 \end{equation*}
 where the $n\times n$ matrix $\hat{U}^{(4)}$ is given by  $\hat{U}^{(4)}_{kl}=\br{v(\alpha)_k,\cU_l}_{\red}$.
 By invertibility of $\tilde \cE$, we obtain the equality \eqref{L:invValV}.
\end{proof}

Summing over $\beta,\epsilon$ in \eqref{Eq:fspfsp} and using the previous results, we can get
 \begin{equation} \label{Eq:ValVgamLast}
\sum_{i,j} Q_i^M \cU_i Q_j^N \cU_j \br{ v(\alpha)_i  , v(\gamma)_j }_{\red}
=\sum_{i,j}Q_i^M\cU_i Q_j^N \cU_j \,{U}^{(5)}_{ij}\,,
\end{equation}
 where ${U}^{(5)}_{ij}$ is  the right-hand side of \eqref{Eq:red2}.
We can then write the equalities \eqref{Eq:ValVgamLast} with $N,M=1,\ldots,n$ as
  \begin{equation*}
  \tilde{\cE} \, \hat{U}^{(5)} \, \tilde{\cE}^T=\tilde{\cE} \,  {U}^{(5)} \, \tilde{\cE}^T\,,
 \end{equation*}
 where the $n\times n$ matrix $\hat{U}^{(5)}$ is given by  $\hat{U}^{(5)}_{kl}=\br{v(\alpha)_k,v(\gamma)_l}_{\red}$.
 By invertibility of $\tilde \cE$, this implies that \eqref{Eq:red2} holds.

\subsubsection*{Deriving \eqref{Eq:red3}} \label{ssA:Red3}

By antisymmetry and complex conjugation, we get $\br{\cU_i,\overline{v}(\epsilon)_j}_{\red}$ from Lemma \ref{L:invValV}.
 We can then use the previous results as well as \eqref{Eq:fspfsp} after summing over $\beta,\gamma$ in order to get
    \begin{equation} \label{Eq:ValVepsFin}
\sum_{i,j} Q_i^M \cU_i Q_j^N  \cU_j  \br{ v(\alpha)_i  , \overline{v}(\epsilon)_j }_{\red}
=\sum_{i,j}Q_i^M\cU_i Q_j^N \cU_j {U}^{(6)}_{ij}\,,
 \end{equation}
 where ${U}^{(6)}_{ij}$ is  the right-hand side of \eqref{Eq:red3}.
We can then write the equalities \eqref{Eq:ValVepsFin} with $N,M=1,\ldots,n$ as
  \begin{equation*}
  \tilde{\cE} \, \hat{U}^{(6)} \, \tilde{\cE}^T=\tilde{\cE} \,  {U}^{(6)} \, \tilde{\cE}^T\,,
 \end{equation*}
 where the $n\times n$ matrix $\hat{U}^{(6)}$ is given by
 $\hat{U}^{(6)}_{kl}=\br{v(\alpha)_k,\overline{v}(\epsilon)_l}_{\red}$.
 By invertibility of $\tilde{\cE}$, we can conclude that \eqref{Eq:red3} holds.

\section{Poisson brackets of collective spins}  \label{ssA:Coll2}

Recall the matrix $(S_{ij})$ defined before Theorem \ref{Thm:redPB}.
The reduced Poisson brackets of the so-called collective spins $F$ \eqref{H21} can be computed in the following form.
  \begin{lemma} \label{L:invFF}
Denoting $q_{ab}:=q_a-q_b$, the following identity holds on $\ccM_{0,+}^\reg$
 \begin{equation*}
  \begin{aligned}
\br{&F_{ij},F_{kl}}_{\red} =
\,\,\ic \left(\frac{S_{ik}}{\cU_i \cU_k}-\frac{S_{lj}}{\cU_l \cU_j} + \frac{S_{kj}}{\cU_k \cU_j} -
\frac{S_{il}}{\cU_i \cU_l} \right) F_{ij} F_{kl} \\
&+\frac12 \left[\delta_{(i \neq k)} \cot(\frac{q_{ik}}{2}) +
\delta_{(j\neq l)}  \cot(\frac{q_{jl}}{2}) + \delta_{(k \neq j)}  \cot(\frac{q_{kj}}{2}) +
\delta_{(i \neq l)}  \cot(\frac{q_{li}}{2})\right] \, F_{ij}F_{kl}\\
&+\left[ \delta_{(i \neq k)} \cot(\frac{q_{ik}}{2}) +
\delta_{(j \neq l)}  \cot(\frac{q_{jl}}{2}) -  \cot(\frac{q_{jk}}{2}-\ic \gamma) +
\cot(\frac{q_{li}}{2} - \ic \gamma) \right] \, F_{il}F_{kj} \\
&+\frac12 \left[\delta_{(k \neq i)} \cot(\frac{q_{ki}}{2}) -  \cot(\frac{q_{li}}{2}-\ic \gamma) \right]\frac{\cU_k}{\cU_i} F_{ij}F_{il}
+\frac12 \left[\delta_{(j \neq k)} \cot( \frac{q_{jk}}{2} ) + \cot( \frac{q_{lj}}{2}-\ic \gamma ) \right] \frac{\cU_k}{\cU_j} F_{ij} F_{jl} \\
& +\frac12 \left[\delta_{(i\neq k)}\cot( \frac{q_{ki}}{2} ) + \cot( \frac{q_{jk}}{2}-\ic \gamma ) \right] \frac{\cU_i}{\cU_k} F_{kj} F_{kl}
+\frac12 \left[\delta_{(i \neq l)} \cot( \frac{q_{il}}{2} ) - \cot( \frac{q_{jl}}{2} - \ic \gamma) \right] \frac{\cU_i}{\cU_l} F_{lj}F_{kl} \\
&+\frac12 \left[ \delta_{(i \neq l)} \cot( \frac{q_{il}}{2} ) - \cot( \frac{q_{ik}}{2} - \ic \gamma )\right] \frac{\cU_l}{\cU_i} F_{ij} F_{ki}
+\frac12 \left[ \delta_{(l \neq j)} \cot( \frac{q_{lj}}{2} ) + \cot( \frac{q_{jk}}{2} - \ic \gamma) \right] \frac{\cU_l}{\cU_j} F_{ij} F_{kj} \\
&+\frac12 \left[ \delta_{(j \neq k)} \cot( \frac{q_{jk}}{2} ) + \cot( \frac{q_{ki}}{2} - \ic \gamma )\right] \frac{\cU_j}{\cU_k} F_{ik} F_{kl}
+\frac12 \left[ \delta_{(j \neq l)} \cot( \frac{q_{lj}}{2} ) - \cot( \frac{q_{li}}{2} - \ic \gamma ) \right] \frac{\cU_j}{\cU_l} F_{il} F_{kl}
\end{aligned}
 \end{equation*}
 \end{lemma}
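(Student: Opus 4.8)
The plan is to derive the formula by a direct computation based on the Leibniz rule together with the reduced Poisson brackets of Theorem~\ref{Thm:redPB}. First I would write $F_{ab} = \sum_{\alpha=1}^d v(\alpha)_a \overline{v}(\alpha)_b$ and expand
\begin{align*}
\{F_{ij},F_{kl}\}_\red = \sum_{\alpha,\gamma=1}^d\Bigl(&
\overline{v}(\alpha)_j\,\overline{v}(\gamma)_l\,\{v(\alpha)_i,v(\gamma)_k\}_\red
+ \overline{v}(\alpha)_j\,v(\gamma)_k\,\{v(\alpha)_i,\overline{v}(\gamma)_l\}_\red \\
&+ v(\alpha)_i\,\overline{v}(\gamma)_l\,\{\overline{v}(\alpha)_j,v(\gamma)_k\}_\red
+ v(\alpha)_i\,v(\gamma)_k\,\{\overline{v}(\alpha)_j,\overline{v}(\gamma)_l\}_\red\Bigr).
\end{align*}
The brackets $\{\overline{v}(\alpha)_j,v(\gamma)_k\}_\red$ and $\{\overline{v}(\alpha)_j,\overline{v}(\gamma)_l\}_\red$ I would then replace by the complex conjugates of \eqref{Eq:red3} and \eqref{Eq:red2} respectively, using the reality property recorded after Theorem~\ref{Thm:redPB}, so that the whole expression is written through \eqref{Eq:red2}--\eqref{Eq:red3} and their conjugates only.

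Next I would substitute these formulae and carry out the double sum over $\alpha,\gamma$. The key point, and the reason the matrices $S^0$, $S$ and $R^\alpha$ of \eqref{Eq:redSij}--\eqref{Eq:redRij} were introduced in precisely that form, is that after summation the nested sums $\sum_\alpha\sum_{\kappa<\alpha}(\cdots)$ and the $\sgn$-terms recombine: one repeatedly uses $\sum_\alpha v(\alpha)_a\overline{v}(\alpha)_b = F_{ab}$, $\sum_\alpha v(\alpha)_a = \cU_a$, and the definitions of $S$ and $R^\alpha$. I expect that after this step all the spin dependence organizes itself into the two products $F_{ij}F_{kl}$ and $F_{il}F_{kj}$, together with the seven combinations involving the ratios $\cU_a/\cU_b$ that appear in the statement, with the matrix $S$ surviving only in the coefficient of $F_{ij}F_{kl}$.

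The last step is the trigonometric cleanup. I would use $\tfrac{Q_a+Q_b}{Q_a-Q_b} = -\ri\cot\tfrac{q_{ab}}{2}$ and rewrite each occurrence of $L_{ab} = F_{ab}/(e^{2\gamma}Q_bQ_a^{-1}-1)$ via $\bigl(e^{2\gamma}e^{\ri(q_b-q_a)}-1\bigr)^{-1} = -\tfrac12 - \tfrac{\ri}{2}\cot\bigl(\tfrac{q_{ba}}{2}-\ri\gamma\bigr)$; the recurring block $\ri\bigl(F_{ab}+\tfrac{Q_a+Q_b}{Q_a-Q_b}F_{ab}+2L_{ab}\bigr) = -F_{ab}\,V(\tfrac{q_{ba}}{2})$ is already packaged by \eqref{Eq:FQQ} and \eqref{N16}. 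Products of two cotangents, which arise when an $S$-term meets a $\tfrac{Q_a+Q_b}{Q_a-Q_b}$-term, are reduced by Hermite's cotangent identity \eqref{N15}. Matching the outcome with the claimed formula finishes the proof. As a built-in consistency check I would verify the antisymmetry $\{F_{ij},F_{kl}\}_\red = -\{F_{kl},F_{ij}\}_\red$ and the reality property $\overline{\{F_{ij},F_{kl}\}_\red} = \{F_{ji},F_{lk}\}_\red$, compatible with $\overline{F_{ab}}=F_{ba}$.

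The main obstacle is purely the volume and organization of the algebra: there is no conceptual difficulty, but the expansion produces of the order of a hundred terms and the cancellations leading to the compact form are delicate, in particular the interplay between the $\sgn$-terms of \eqref{Eq:red2}, the $R^\alpha$ and $S$ contributions, and the subsequent trigonometric simplifications. I would keep the computation under control by grouping terms according to which pair of $F$'s and which ratio $\cU_a/\cU_b$ multiplies them, and by treating the $S$-dependent part separately, since $S$ is the only object from Theorem~\ref{Thm:redPB} surviving to the final answer.
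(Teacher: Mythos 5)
Your proposal is correct and takes essentially the same approach as the paper: the paper gives no details beyond stating, immediately after the lemma, that it ``follows from Theorem \ref{Thm:redPB} by direct calculation,'' and your Leibniz-rule expansion of $\{F_{ij},F_{kl}\}_\red$ in terms of \eqref{Eq:red2}--\eqref{Eq:red3} and their conjugates, followed by resummation over $\alpha,\gamma$ and the substitutions $\frac{Q_a+Q_b}{Q_a-Q_b}=-\ri\cot\frac{q_{ab}}{2}$ and $(e^{2\gamma}Q_bQ_a^{-1}-1)^{-1}=-\tfrac12-\tfrac{\ri}{2}\cot(\tfrac{q_{ba}}{2}-\ri\gamma)$, is precisely that direct calculation. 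One small observation: Hermite's identity \eqref{N15} is not actually needed here, since each bracket from Theorem \ref{Thm:redPB} is at most linear in $L$ and in $\frac{Q_a+Q_b}{Q_a-Q_b}$, so no products of cotangents ever arise in the Leibniz expansion.
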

 This follows from Theorem \ref{Thm:redPB} by direct calculation.
The reader can easily check  the reality condition
$\br{F_{ji},F_{lk}}_{\red}=\br{\overline{F}_{ij},\overline{F}_{kl}}_{\red}=\overline{ \br{F_{ij},F_{kl}} }_{\red} $.
Taking $i=j$ and $k=l$ in Lemma \ref{L:invFF}, everything cancels out except for the third line, which can be rewritten as follows:
\be
\br{ F_{jj},F_{kk}}_{\red} =
 F_{jk} F_{kj}  \frac{2\cot(\frac{q_{jk}}{2})}{1+\sinh^{-2}(\gamma)\,\sin^2(\frac{q_{jk}}{2})}\,, \quad \text{for }j\neq k\,.
\label{r1}\ee
Let us now assume that $d=1$, so that $F_{jk} F_{kj} = F_{jj} F_{kk}$.
Note that the formula of $L$ \eqref{H21} shows that $F_{jj} >0$.
Motivated by the form of the equations of motion \eqref{I8} and the spinless Hamiltonian \eqref{I:Hplus}, we
 make the change of variables
 \be
 F_{jj} = e^{2\theta_j}\prod_{i\neq j} \left[1+ \frac{\sinh^2\gamma}
{1 + \sin^2\frac{q_i - q_j}{2}}\right]^{\frac{1}{2}}.
\label{r3}\ee
Using \eqref{Eq:red1} and \eqref{r1}, it turns out that $(q_j, \theta_j)$ are Darboux variables, and we recover the standard chiral RS Hamiltonian \eqref{I:Hplus} for $\cH = \sum_j F_{jj}$.


\begin{thebibliography}{99}

\addcontentsline{toc}{section}{References}

    \setlength{\parskip}{0em}

 \bibitem{AM}
 A.Yu.~Alekseev and  A.Z.~Malkin,
 {\it Symplectic structures associated to Lie--Poisson groups},
   Comm. Math. Phys. {\bf 162} (1994) 147-173;
 \href{https://arxiv.org/abs/hep-th/9303038}{\tt arXiv:hep-th/9303038}

\bibitem{AMM}
A.~Alekseev, A.~Malkin and E.~Meinrenken,
{\it Lie group valued moment maps},
J. Differential Geom. {\bf 48} (1998) 445-495;
 \href{https://arxiv.org/abs/dg-ga/9707021}{\tt arXiv:dg-ga/9707021}

 \bibitem{A}
 G.~Arutyunov,
Elements of Classical and Quantum Integrable Systems, Springer, 2019

 \bibitem{AF}
 G.E.~Arutyunov and  S.A.~Frolov,
 {\it On the Hamiltonian structure of the spin Ruijsenaars--Schneider model},
  J. Phys. A {\bf 31} (1998) 4203-4216;
  \href{https://arxiv.org/abs/hep-th/9703119}{\tt arXiv:hep-th/9703119}


\bibitem{AO}
G.~Arutyunov and E.~Olivucci,
{\it Hyperbolic spin Ruijsenaars--Schneider model from Poisson reduction},
Proc. Steklov Inst. Math. {\bf 309} (2020) 31-45;
 \href{https://arxiv.org/abs/1906.02619}{\tt  arXiv:1906.02619}


\bibitem{AR}
S.~Arthamonov and N.~Reshetikhin,
{\it Superintegrable systems on moduli spaces of flat connections},
 \href{https://arxiv.org/abs/1909.08682}{\tt  arXiv:1909.08682}

 \bibitem{BH}
 H.W.~Braden and N.W.~Hone,
 {\it Affine Toda solitons and systems of Calogero--Moser type},
Phys. Lett. B {\bf 380} (1996) 296-302;
  \href{https://arxiv.org/abs/hep-th/9603178}{\tt arXiv:hep-th/9603178}


\bibitem{Cal}
 F.~Calogero,
{\it Solution of the one-dimensional N-body problem with quadratic and/or
inversely quadratic pair potentials},
J. Math. Phys. {\bf 12} (1971) 419-436

\bibitem{CF1}
O.~Chalykh and M.~Fairon,
{\it Multiplicative quiver varieties and generalised Ruijsenaars-–Schneider models},
J. Geom. Phys. {\bf 121} (2017) 413-437;
 \href{https://arxiv.org/abs/1704.05814}{\tt arXiv:1704.05814}

 \bibitem{CF2}
O.~Chalykh and M.~Fairon,
{\it On the Hamiltonian formulation of the trigonometric spin Ruijsenaars--Schneider system},
 Lett. Math. Phys. (2020);
 \href{https://arxiv.org/abs/1811.08727}{\tt arXiv:1811.08727}


\bibitem{CBS}
W.~Crawley-Boevey and P.~Shaw,
{\it Multiplicative preprojective algebras, middle convolution and the Deligne--Simpson problem},
Adv. Math. {\bf 201} (2006) 180-208;
 \href{https://arxiv.org/abs/math/0404186}{\tt arXiv:math/0404186}

\bibitem{E}
P.~Etingof, Calogero--Moser Systems and Representation Theory,
 European Mathematical Society, 2007



\bibitem{EFK}
 P.I.~Etingof, I.B.~Frenkel and A.A.~Kirillov Jr.,
{\it Spherical functions on affine Lie groups},
Duke Math. J. {\bf 80} (1995) 59-90;
 \href{https://arxiv.org/abs/hep-th/9407047}{\tt  arXiv:hep-th/9407047}

\bibitem{Fai}
M.~Fairon, {\it Spin versions of the complex trigonometric Ruijsenaars--Schneider model from cyclic quivers},
 J. Integrable Syst. {\bf 4} (2019) xyz008, 55 pp;
  \href{https://arxiv.org/abs/1811.08717}{\tt arXiv:1811.08717}


\bibitem{F1}
L.~Feh\'er, {\it Poisson--Lie analogues of spin Sutherland models},
Nucl. Phys. B {\bf 949} (2019) 114807, 26 pp;
 \href{https://arxiv.org/abs/1809.01529}{\tt arXiv:1809.01529}


\bibitem{F3}
L.~Feh\'er,
{\it Reduction of a bi-Hamiltonian hierarchy on {$T^*{\rm U}(n)$} to spin Ruijsenaars-Sutherland models},
 Lett. Math. Phys. {\bf 110} (2020) 1057-1079;
 \href{https://arxiv.org/abs/1908.02467}{\tt arXiv:1908.02467}

\bibitem{FA}
L.~Feh\'er and V.~Ayadi,
{\it Trigonometric Sutherland systems and their Ruijsenaars duals from symplectic reduction},
  J. Math. Phys. {\bf 51} (2010), 103511, 30 pp.;
 \href{https://arxiv.org/abs/1005.4531}{\tt arXiv:1005.4531}

\bibitem{FG}
L.~Feh\'er and T.~G\"{o}rbe,
{\it On a Poisson--Lie deformation of the {${\rm BC}_n$} Sutherland system},
 Nucl. Phys. B {\bf 901} (2015) 85-114;
 \href{https://arxiv.org/abs/1508.04991}{\tt arXiv:1508.04991}



\bibitem{FK2}
L.~Feh\'er and C.~Klim\v c\'\i k,
{\it Poisson--Lie interpretation of trigonometric Ruijsenaars duality},
Commun. Math. Phys. {\bf 301} (2011) 55-104;
  \href{https://arxiv.org/abs/0906.4198}{\tt arXiv:0906.4198}

\bibitem{FM}
 L.~Feh\'er and I.~Marshall,
 {\it Global description of action-angle duality
  for a Poisson--Lie deformation of the trigonometric
  {${\rm BC}_n$} Sutherland system}, Annales Henri Poincar\'{e} {\bf 20} (2019) 1217-1262;
 \href{https://arxiv.org/abs/1710.08760}{\tt arXiv:1710.08760}


\bibitem{FP2}
L.~Feh\'er and B.G.~Pusztai,
{\it A class of Calogero type reductions of free motion on a simple Lie group},
 Lett. Math. Phys. {\bf 79} (2007) 263-277;
 \href{https://arxiv.org/abs/math-ph/0609085}{\tt arXiv:math-ph/0609085}

\bibitem{FP3}
L.~Feh\'er and B.G.~Pusztai,
{\it Hamiltonian reductions of free particles under polar actions of compact Lie groups},
 Theor. Math. Phys. {\bf 155} (2008) 646-658;
 \href{https://arxiv.org/abs/0705.1998}{\tt arXiv:0705.1998}

\bibitem{JHEP}
V.~Fock,  A.~Gorsky, N.~Nekrasov and V.~Rubtsov,
{\it Duality in integrable systems and gauge theories},
JHEP 07(2000)028;  \href{https://arxiv.org/abs/hep-th/9906235}{\tt arXiv:hep-th/9906235}


\bibitem{GH}
J.~Gibbons and T.~Hermsen,
{\it A generalisation of the Calogero--Moser system},
Physica D {\bf 11} (1984) 337-348

\bibitem{HT}
M.~Henneaux and C.~Teitelboim,
Quantization of Gauge Systems, Princeton University Press, 1992

\bibitem{Hi}
 J.~Hilgert, K.-H.~Neeb and W.~Plank,
{\it Symplectic convexity theorems and coadjoint orbits},
Compositio Mathematica {\bf 94} (1994) 129-180


\bibitem{J}
B.~Jovanovic,
{\it Symmetries and integrability},
Publ. Institut Math. {\bf 49} (2008) 1-36;
 \href{https://arxiv.org/abs/0812.4398}{\tt arXiv:0812.4398}

\bibitem{KLOZ}
 S.~Kharchev, A.~Levin, M.~Olshanetsky and  A.~Zotov,
 {\it Quasi-compact Higgs bundles and Calogero--Sutherland systems with two types spins},
 J. Math. Phys. {\bf 59} (2018) 103509, 36~pp;
  \href{https://arxiv.org/abs/1712.08851}{\tt arXiv:1712.08851}

\bibitem{Kli}
C.~Klim\v c\'\i k,
{\it On moment maps associated to a twisted Heisenberg double},
 Rev. Math. Phys. {\bf 18} (2006) 781-821;
  \href{https://arxiv.org/abs/math-ph/0602048}{\tt  arXiv:math-ph/0602048}

\bibitem{KS}
L.I.~Korogodski and Y.S.~Soibelman,
Algebras of Functions on Quantum Groups: Part I,
American Mathematical Society, 1998

\bibitem{Kri}
 I.~Krichever,
{\it Elliptic Solutions to Difference Nonlinear Equations and Nested Bethe Ansatz Equations},
pp. 249-271; in: Calogero-Moser-Sutherland Models (ed. by J.F. van Diejen, L. Vinet), CRM Series in Mathematical Physics,
Springer, New York (2000);
\href{https://arxiv.org/abs/solv-int/9804016}{\tt arXiv:solv-int/9804016}

\bibitem{KZ}
 I.~Krichever and A.~Zabrodin,
{\it Spin generalization of the Ruijsenaars--Schneider model, non-abelian 2D
Toda chain and representations of Sklyanin algebra},
Russian Math. Surveys {\bf 50} (1995) 1101-1150;
 \href{https://arxiv.org/abs/hep-th/9505039}{\tt arXiv:hep-th/9505039}


\bibitem{Li}
L.-C.~Li,
{\it Poisson involutions, spin Calogero--Moser systems associated
with symmetric Lie subalgebras and the symmetric space spin Ruijsenaars--Schneider models,}
Commun. Math. Phys. {\bf 265} (2006) 333-372;
 \href{https://arxiv.org/abs/math-ph/0506025}{\tt arXiv:math-ph/0506025}

\bibitem{LX}
L.-C.~Li and P.~Xu,
{\it A class of integrable spin Calogero--Moser systems},
Commun. Math. Phys. {\bf 231} (2002) 257-286;
 \href{https://arxiv.org/abs/math/0105162}{\tt arXiv:math/0105162}

\bibitem{Lu1}
J.-H.~Lu, Multiplicative and affine Poisson structures on Lie groups, Ph.D. Thesis,
 University of California, Berkeley (1990) 74 pp

\bibitem{Lu}
J.-H.~Lu,
{\it Momentum mappings and reduction of Poisson actions},
 pp. 209-226; in: Symplectic Geometry, Groupoids, and Integrable Systems, Springer, 1991

\bibitem{MF}
A.S.~Mischenko and A.T.~Fomenko,
{\it  Generalized Liouville method for integrating Hamiltonian systems},
Funct. Anal. Appl. {\bf 12} (1978) 113-125


 \bibitem{M}
J.~Moser,
{\it Three integrable Hamiltonian systems connected with isospectral deformations},
Adv. Math. {\bf 16} (1975) 197-220


\bibitem{Nekh}
N.N.~Nekhoroshev,
{\it  Action-angle variables and their generalizations},
Trans. Moscow Math. Soc. {\bf 26} (1972) 180-197


\bibitem{N}
N.~Nekrasov,
{\it Infinite-dimensional algebras, many-body systems and gauge theories},
pp. 263-299 in: Moscow Seminar in Mathematical Physics,
AMS Transl. Ser. 2, Vol.~191, American Mathematical Society, 1999

\bibitem{OR}
J.-P.~Ortega and T.~Ratiu,
Momentum Maps and Hamiltonian Reduction, Birkh\"auser, 2004


\bibitem{P}
M.~Penciak,
{\it Spectral description of the spin Ruijsenaars--Schneider system},
 \href{https://arxiv.org/abs/1909.08107}{\tt  arXiv:1909.08107}

\bibitem{RaSu}
O.~Ragnisco and Yu.B.~Suris,
{\it Integrable discretizations of the spin Ruijsenaars--Schneider models},
J. Math. Phys. {\bf 38} (1997) 4680-4691;
 \href{https://arxiv.org/abs/solv-int/9605001}{\tt arXiv:solv-int/9605001}

\bibitem{Res1}
N.~Reshetikhin,
{\it Degenerate integrability of spin Calogero--Moser systems and the
duality with the spin Ruijsenaars systems},
Lett. Math. Phys. {\bf 63} (2003) 55-71;
  \href{https://arxiv.org/abs/math/0202245}{\tt arXiv:math/0202245}


\bibitem{Res2}
 N.~Reshetikhin,
{\it Degenerately integrable systems},
J. Math. Sci. {\bf 213} (2016) 769-785;
 \href{https://arxiv.org/abs/1509.00730}{\tt arXiv:1509.00730}

\bibitem{Res3}
N.~Reshetikhin,
{\it Spin Calogero--Moser models on symmetric spaces},
 \href{https://arxiv.org/abs/1903.03685}{\tt  arXiv:1903.03685}

\bibitem{ReSt}
N.~Reshetikhin and J.~Stokman,
{\it N-point spherical functions and asymptotic boundary KZB equations},
 \href{https://arxiv.org/abs/2002.02251}{\tt  arXiv:2002.02251}

\bibitem{Rud}
G.~Rudolph and M.~Schmidt,
Differential Geometry and Mathematical Physics. Part I. Manifolds, Lie Groups and
Hamiltonian Systems, Springer, 2013

\bibitem{RBanff}
S.N.M.~Ruijsenaars,
{\it Systems of Calogero--Moser type}, pp. 251-352;
in: Proceedings of the 1994 CRM-Banff Summer School: Particles and Fields, Springer, 1999

\bibitem{RS}
S.N.M.~Ruijsenaars and H.~Schneider,
{\it A new class of integrable systems and its relation to solitons},
Ann. Phys. {\bf 170} (1986) 370-405

\bibitem{SS}
  V.~Schomerus and E.~Sobko,
 {\it From spinning conformal blocks to matrix Calogero--Sutherland models},
 JHEP04(2018)052;
  \href{https://arxiv.org/abs/1711.02022}{\tt arXiv:1711.02022}

\bibitem{STS}
  M.A.~Semenov-Tian-Shansky,
{\it Dressing transformations and Poisson group actions},
Publ. RIMS {\bf 21} (1985) 1237-1260

\bibitem{Sol}  F.L.~Soloviev,
{\it On the {H}amiltonian form of the equations of the elliptic spin  {R}uijsenaars--{S}chneider  model},
 Russian Math. Surveys {\bf 64} (2009) 1142-1144;
 \href{https://arxiv.org/abs/0808.3875}{\tt arXiv:0808.3875}

\bibitem{S}
B.~Sutherland,
{\it Exact results for a quantum many-body problem in one dimension},
Phys. Rev. A {\bf 4} (1971) 2019-2021


\bibitem{vDV}
J.F.~van Diejen and L.~Vinet (Editors), Calogero--Moser--Sutherland Models, Springer, 2000


\bibitem{Z}
S.~Zakrzewski,
{\it Phase spaces related to standard classical $r$-matrices},
J. Phys. A: Math. Gen. {\bf 29} (1996) 1841-1857;
 \href{https://arxiv.org/abs/q-alg/9511002}{\tt arXiv:q-alg/9511002}


\bibitem{Zo}
A.V.~Zotov,
{\it Relativistic interacting integrable elliptic tops},
 Teoret. Mat. Fiz. {\bf 201} (2019), 175-192;
 \href{https://arxiv.org/abs/1910.08246}{\tt arXiv:1910.08246}

\bibitem{Zung}
N.T.~Zung,
{\it Torus actions and integrable systems},
pp. 289-328 in: Topological Methods in the Theory of Integrable Systems,
Camb. Sci. Publ., 2006;
 \href{https://arxiv.org/abs/math/0407455}{\tt arXiv:math/0407455}



\end{thebibliography}
 \end{document}